\def\llncs{0}
\def\fullpage{1}
\def\anonymous{0}
\def\authnote{1}
\def\notxfont{0}
\def\submission{0}%when submit 30page version to conference, it is 1. For full version, it is 0.

\def\extendedabstract{0}

\ifnum\submission=1
\def\llncs{1}
\fi

\ifnum\llncs=1
\documentclass[envcountsect,a4paper,runningheads,10pt]{llncs}%default is 10pt
\else
	\documentclass[letterpaper,hmargin=1.05in,vmargin=1.05in,11pt]{article}
			\ifnum\fullpage=1
		\usepackage{fullpage}
		\fi
\fi

\usepackage{CJKutf8}
\usepackage[%
  colorlinks=true,
  citecolor=blue,
  pagebackref=true
]{hyperref}

\usepackage{amsmath, amsfonts, amssymb, mathtools,amscd}

\usepackage{amsthm}

\usepackage{lmodern}
\usepackage[T1]{fontenc}
\usepackage[utf8]{inputenc}

\usepackage{etex}

\usepackage{arydshln} % In order to use \hdashline
\usepackage{url}
\usepackage{ifthen}
\usepackage{bm}
\usepackage{multirow}
\usepackage[dvips]{graphicx}

\usepackage[usenames]{color}
\usepackage{xcolor,colortbl} % Leave out in case the usepackage cannot be found. Not critical
\usepackage{threeparttable}
\usepackage{comment}
\usepackage{paralist,verbatim}
\usepackage{cases}
\usepackage{booktabs}
\usepackage{braket}
\usepackage{cancel} 
\usepackage{ascmac} 
\usepackage{framed}
\usepackage{authblk}
\usepackage{pifont}
\usepackage{qcircuit}
\usepackage{tikz}
\usetikzlibrary{cd}
\definecolor{darkblue}{rgb}{0,0,0.6}
\definecolor{darkgreen}{rgb}{0,0.5,0}
\definecolor{maroon}{rgb}{0.5,0.1,0.1}
\definecolor{dpurple}{rgb}{0.2,0,0.65}

\usepackage[capitalise,noabbrev]{cleveref}
\usepackage[absolute]{textpos}
\usepackage[final]{microtype}
\usepackage[absolute]{textpos}
\usepackage{everypage}
\DeclareMathAlphabet{\mathpzc}{OT1}{pzc}{m}{it}

\usepackage{algorithmic}
\usepackage{algorithm}
\usepackage{here}

\usepackage{thm-restate}
\usepackage[normalem]{ulem}

\newtheoremstyle{thicktheorem}%
{\topsep}
{\topsep}
{\itshape}{}%
{\bfseries}%
{.}
{ }%
{\thmname{#1}\thmnumber{ #2}%
		\thmnote{ (#3)}%
}

\newtheoremstyle{remark}%name
{\topsep}
{\topsep}
	{}%body font
	{}%indent amount
	{}%theorem head font
	{.}%punctuation after theorem head
	{ }%space after theorem head
	{\textit{\thmname{#1}}\thmnumber{ #2}%theorem head specs
			\thmnote{ (#3)}%
	}

\ifnum\llncs=0
	\theoremstyle{thicktheorem}
	\newtheorem{theorem}{Theorem}[section]
	\newtheorem{lemma}[theorem]{Lemma}
	\newtheorem{corollary}[theorem]{Corollary}
	\newtheorem{proposition}[theorem]{Proposition}
	\newtheorem{definition}[theorem]{Definition}

        \newtheorem{construction}[theorem]{Construction}

	\theoremstyle{remark}
	\newtheorem{claim}[theorem]{Claim}
	\newtheorem{remark}[theorem]{Remark}

\else
\fi
	\crefname{theorem}{Theorem}{Theorems}
	\crefname{assumption}{Assumption}{Assumptions}
	\crefname{construction}{Construction}{Constructions}
	\crefname{corollary}{Corollary}{Corollaries}
	\crefname{conjecture}{Conjecture}{Conjectures}
	\crefname{definition}{Definition}{Definitions}
	\crefname{exmaple}{Example}{Examples}
	\crefname{experiment}{Experiment}{Experiments}
	\crefname{counterexample}{Counterexample}{Counterexamples}
	\crefname{lemma}{Lemma}{Lemmata}
	\crefname{observation}{Observation}{Observations}
	\crefname{proposition}{Proposition}{Propositions}
	\crefname{remark}{Remark}{Remarks}
	\crefname{claim}{Claim}{Claims}
	\crefname{fact}{Fact}{Facts}
	\crefname{note}{Note}{Notes}

\ifnum\llncs=1
 \crefname{appendix}{App.}{Appendices}
 \crefname{section}{Sec.}{Sections}
\else
\fi

\ifnum\llncs=1
\renewcommand*{\backref}[1]{}
\else
	\renewcommand*{\backref}[1]{(Cited on page~#1.)}
	\ifnum\notxfont=1
	\else
		\usepackage{newtxtext}
	\fi
\fi

\usepackage{fancyhdr}

\ifnum\authnote=0  %%%%% Remove comments %%%%%
\newcommand{\mor}[1]{}
\newcommand{\shogo}[1]{}
\newcommand{\takashi}[1]{}
\newcommand{\fuyuki}[1]{}
\newcommand{\minki}[1]{}

\else
\newcommand{\shogo}[1]{$\ll$\textsf{\color{darkgreen} Shogo: { #1}}$\gg$}

\DeclareRobustCommand{\Erase}{\bgroup\markoverwith{\textcolor{red}{\rlap{\rule[0.7ex]{2pt}{0.4pt}}\rule[0.3ex]{2pt}{0.4pt}}}\ULon}

\fi

\newcommand{\CZ}{\mathrm{CZ}}

\newcommand{\Support}{\mathrm{supp}}

\newcommand{\Tr}{\mathrm{Tr}}

\newcommand{\good}{\mathsf{Good}}

\newcommand{\cA}{\mathcal{A}}
\newcommand{\cB}{\mathcal{B}}

\newcommand{\cD}{\mathcal{D}}
\newcommand{\cE}{\mathcal{E}}
\newcommand{\cF}{\mathcal{F}}
\newcommand{\cG}{\mathcal{G}}
\newcommand{\cH}{\mathcal{H}}
\newcommand{\cI}{\mathcal{I}}
\newcommand{\cK}{\mathcal{K}}

\newcommand{\cN}{\mathcal{N}}
\newcommand{\cO}{\mathcal{O}}
\newcommand{\cP}{\mathcal{P}}

\newcommand{\cS}{\mathcal{S}}
\newcommand{\cT}{\mathcal{T}}
\newcommand{\cU}{\mathcal{U}}

\newcommand{\cX}{\mathcal{X}}
\newcommand{\cY}{\mathcal{Y}}
\newcommand{\cZ}{\mathcal{Z}}

\newcommand{\identitymap}{\mathrm{id}}

\def\makeuppercase#1{
\expandafter\newcommand\csname tl#1\endcsname{\widetilde{#1}}
}

\def\makelowercase#1{
\expandafter\newcommand\csname tl#1\endcsname{\widetilde{#1}}
}

\newcommand{\N}{\mathbb{N}}

\newcommand{\C}{\mathbb{C}}

\newcommand{\F}{\mathbb{F}}

\newcommand{\regA}{\mathbf{A}}
\newcommand{\regB}{\mathbf{B}}
\newcommand{\regC}{\mathbf{C}}
\newcommand{\regD}{\mathbf{D}}
\newcommand{\regE}{\mathbf{E}}

\newcommand{\regR}{\mathbf{R}}

\newcommand{\regX}{\mathbf{X}}

\newcommand{\secp}{\lambda}

\newcommand{\weight}{\mathrm{wt}}

\newcommand{\B}{\entity{B}}

\newcommand{\LPN}{\mathrm{LPN}}
\newcommand{\XOR}{\mathrm{XOR}}

\newcommand*{\sk}{\keys{sk}}
\newcommand*{\pk}{\keys{pk}}

\newcommand*{\keys}[1]{\mathsf{#1}}

\newcommand*{\algo}[1]{\ensuremath{\mathsf{#1}}}

\newcommand*{\entity}[1]{\mathcal{#1}}
\newenvironment{boxfig}[2]{\begin{figure}[#1]\fbox{\begin{minipage}{0.97\linewidth}
                        \vspace{0.2em}
                        \makebox[0.025\linewidth]{}
                        \begin{minipage}{0.95\linewidth}
            {{
                        #2 }}
                        \end{minipage}
                        \vspace{0.2em}
                        \end{minipage}}}{\end{figure}}

\newcommand{\bit}{\{0,1\}}

\newcommand{\trit}{\{0,1,2\}}

\newcommand{\KeyGen}{\algo{KeyGen}}

\newcommand{\Enc}{\algo{Enc}}

\newcommand{\Dec}{\algo{Dec}}

\newcommand{\PRC}{\algo{PRC}}
\newcommand{\PRFC}{\algo{PRFC}}

\newcommand{\negl}{{\mathsf{negl}}}

\newcommand{\poly}{{\mathrm{poly}}}

\DeclareMathOperator*{\Exp}{\mathbb{E}}

\newcommand{\Ber}{\mathrm{Ber}}

\newcommand{\XMAPSTO}[1]{\xmapsto{\makebox[3.5em]{\hfil$#1$\hfil}}}

\newcommand{\phys}{\mathrm{P}}
\newcommand{\logic}{\mathrm{L}}
\newcommand{\QECC}{\algo{QECC}}

\usetikzlibrary{decorations.markings}
\tikzset{
  cross/.style={
    postaction={decorate,decoration={markings,
    mark=at position 0.45 with {\draw[-,line width=1pt] (-10pt,-10pt) -- (10pt,10pt);\draw[-,line width=1pt] (-10pt,10pt) -- (10pt,-10pt);}}}
  }
}

\makeatletter
\DeclareRobustCommand
  \myvdots{\vbox{\baselineskip4\p@ \lineskiplimit\z@
    \hbox{.}\hbox{.}\hbox{.}}}
\makeatother

\newcommand{\ketbra}[2]{\lvert #1 \rangle\mkern-3mu \langle #2 \rvert}

\theoremstyle{definition}

\newtheorem{assumption}{Assumption}

\title{Quantum Pseudorandom Error-Correcting Codes}

\ifnum\anonymous=1
\ifnum\llncs=1
\author{\empty}\institute{\empty}
\else
\author{}
\fi
\else
\ifnum\llncs=1
\author{
}
\institute{
	Yukawa Institute for Theoretical Physics, Kyoto University
}
\else
\author[1]{Min-Hsiu Hsieh
\thanks{Email: \texttt{min-hsiu.hsieh@foxconn.com}}} 

\author[2,1]{ Shogo Yamada
\thanks{Email: \texttt{shogo.yamada@yukawa.kyoto-u.ac.jp}}}

\affil[1]{Hon Hai Research Institute, Taipei, Taiwan}
\affil[2]{{Yukawa Institute for Theoretical Physics, Kyoto University, Kyoto, Japan}
}

\fi %%%%% END OF LNCS branch
\fi

\date{\today}

\begin{document}
\begin{CJK}{UTF8}{ipxm}
%日本語コメント用

%macros for this paper
\newcommand{\GraphSample}{\algo{GraphSample}}
\newcommand{\Recover}{\algo{Recover}}
\newcommand{\BitDec}{\algo{BitDec}}
\newcommand{\PhaseDec}{\algo{PhaseDec}}

\maketitle
\begin{abstract}
Pseudorandom error-correcting codes (PRCs), introduced by Christ and Gunn [CRYPTO 2024], are classical error-correcting codes whose codewords are computationally indistinguishable from uniformly random strings.
We initiate the study of quantum analogues of PRCs.
We define quantum pseudorandom error-correcting codes (QPRCs).
Assuming that Learning Parity with Noise (LPN) is hard for $2^{O(\sqrt{n})}$-time quantum algorithms, we construct QPRCs whose encodings are indistinguishable from Haar-random isometries.
We call these pseudorandom isometric error-correcting codes (PRICs), and our construction is robust to all $o(n \frac{\log \log n}{\log n})$-local quantum noise, where $n$ is the number of physical qubits.
Under the same assumption, we also construct QPRCs whose encodings are indistinguishable from the completely depolarizing channel and that are robust to all $\alpha n$-local quantum noise for some constant $\alpha>0$.
The latter QPRCs give a direct quantum analogue of classical PRCs.

To construct PRICs, we develop two technical ingredients of independent interest.
First, we introduce a new classical cryptographic primitive which we call pseudorandom functional error-correcting codes (PRFCs) and construct them under the same LPN assumption.
PRFCs are pseudorandom functions equipped with a polynomial-time decoder that uses the secret key to recover the input even after a constant fraction of the output bits are flipped.
Second, we develop a new efficient decoding procedure within the codeword-stabilized (CWS) framework introduced in [Cross, Smith, Smolin, and Zeng, IEEE ISIT 2008], which is a generic framework for constructing quantum error-correcting codes by combining (possibly nonlinear) classical error-correcting codes and graphs.
For CWS codes based on any efficiently decodable classical code that corrects the induced bit errors, including nonlinear codes, our decoding procedure corrects all $o(n \frac{\log \log n}{\log n})$-local quantum noise, with high probability over our sampled graphs.
This gives a general way to construct keyed quantum error-correcting codes with efficient decoders by using nonlinear classical error-correcting codes as a black box, and resolves the open problem of finding a general method for efficiently decoding such CWS codes based on nonlinear classical codes, raised in [Li, Dumer, Grassl, and Pryadko, Physical Review A 2010].

\end{abstract}

\ifnum\submission=0
\clearpage
\newpage
\setcounter{tocdepth}{2}
\tableofcontents
\newpage
\fi
\section{Introduction}
\label{sec:introduction}

Pseudorandomness and error correction are two central themes in theoretical computer science, yet they impose competing requirements.
Pseudorandom objects are efficiently generated and computationally indistinguishable from uniformly random ones, and they play crucial roles in many fields, including cryptography and complexity theory.
Error-correcting codes encode classical messages into longer bit strings from which the original messages can be recovered after noise.\footnote{Unless stated otherwise, error-correcting codes considered in this work have polynomial-time encoding and decoding algorithms.}
Error correction relies on structure for decoding, whereas pseudorandomness requires hiding that structure from the adversary.

Nevertheless, Christ and Gunn~\cite{C:ChrGun24} achieved pseudorandomness and error correction simultaneously in the classical setting by constructing pseudorandom error-correcting codes (PRCs).
Informally, PRCs are error-correcting codes whose codewords are computationally indistinguishable from uniformly random strings.

This classical combination has applications in both cryptography and coding theory.
Cryptographic applications include watermarking for language models, covert communication between AI agents without a pre-shared key, and noise-robust steganography~\cite{C:ChrGun24,STOC:AACDG25,PNR26}.
In coding theory, Lu, Silbak, and Wichs~\cite{LSW26} used PRCs as a key ingredient to construct the first error-correcting codes achieving Shannon capacity against polynomial-time adversaries with access to the encoder.

In the quantum setting, error correction and pseudorandomness also have central roles independently.
Quantum error-correcting codes (QECCs) encode quantum states into a larger register to protect them against noise~\cite{Shor95,CalderbankShor96,Steane96}.
Quantum pseudorandom primitives include pseudorandom states (PRSs), pseudorandom unitaries (PRUs), and pseudorandom isometries (PRIs)~\cite{C:JiLiuSon18,EC:AGKL24,FOCS:MPSY24,STOC:MaHua25}, which are computationally indistinguishable from their Haar-random counterparts.
Here, Haar-random refers to uniform sampling of pure states, unitaries, or isometries.
These primitives have applications in cryptography~\cite{C:JiLiuSon18,C:MorYam22,EC:AGKL24}, learning theory~\cite{SHH24,Quantum:ZhaoEtAl24}, and theoretical physics~\cite{SHH24,SMLBH25}.

Combining pseudorandomness and error correction in the quantum setting introduces additional structural challenges.
To recover a state entangled with a reference system after noise, a QECC must correct phase errors as well as bit errors, requiring structure beyond that needed to decode classical messages.
At the same time, the encoding operation must be indistinguishable from Haar-random quantum operations, which are more complex than classical random functions, even on superposition inputs.

This leads to the following question:
\begin{center}
    \it Can we construct QECCs whose encodings are pseudorandom as quantum operations?
\end{center}

\ifnum\extendedabstract=0
\subsection{This Work}
\else
\paragraph{This work.}
\fi

In this work, we define quantum pseudorandom error-correcting codes (QPRCs) as a quantum variant of PRCs and construct them under the Learning Parity with Noise (LPN) assumption~\cite{Pie12}, which is a standard cryptographic assumption.
Informally, for an error $\cE$ and an ideal distribution $\cD$, we define a QPRC as a pair of quantum polynomial-time (QPT) algorithms: an encoder $\Enc$ that encodes $n_\logic$ logical qubits into $n_\phys$ physical qubits using a key $\sk$, and a decoder $\Dec$ that decodes $n_\phys$ physical qubits to $n_\logic$ logical qubits using $\sk$.\ifnum\extendedabstract=0\footnote{The formal definition also includes a key generation algorithm.
For details, see \cref{def:PK-QPRC,def:SK-QPRC}.} 
\else
\footnote{The formal definition also includes a key generation algorithm. 
For details, see the full version.}
\fi 
We require the following robustness and pseudorandomness properties.
\begin{itemize}
    \item Robustness: the channel $\Exp_{\sk}[\Dec(\sk,\cdot)\circ\cE\circ\Enc(\sk,\cdot)]$ is negligibly close to the identity channel in diamond distance.

    \item Pseudorandomness: no QPT adversary can distinguish oracle access to $\Enc(\sk,\cdot)$ from oracle access to a channel sampled from $\cD$.
\end{itemize}

We call QPRCs that are pseudorandom with respect to Haar-random isometries pseudorandom isometric error-correcting codes (PRICs).
As a main result, we construct PRICs robust to local quantum noise under the LPN assumption.
Informally, $t$-local quantum noise includes arbitrary noise acting on any set of at most $t$ physical qubits.

\ifnum\extendedabstract=0
\begin{theorem}[\cref{coro:PRIC_from_assumption}, informal]\label{thm:PRIC_intro}
    Assume that LPN is hard for $2^{O(\sqrt{n})}$-time quantum algorithms.
    Then, for every function $t=o(n_\phys\log\log n_\phys/\log n_\phys)$, there exist PRICs robust to any $t$-local quantum noise.
    In addition, these PRICs use $n_\phys=O(n_\logic)$ physical qubits for $n_\logic$ logical qubits.
\end{theorem}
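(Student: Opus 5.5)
The construction is a keyed codeword-stabilized (CWS) code. Its classical ingredient is a pseudorandom functional error-correcting code (PRFC) $F_k:\bit^{n_\logic}\to\bit^{n_\phys}$ built from LPN. The encoder will be
\[
\Enc(\sk,\cdot):\ \ket{x}\mapsto (-1)^{g(x)}\,U_G\,X^{F_k(x)}\ket{0^{n_\phys}},
\]
that is, $\ket{x}\mapsto$ a graph state on $G$ with $Z$/$X$ pattern determined by $F_k(x)$, up to a random phase. Here the key $\sk=(k,G,\ldots)$ also contains a random graph $G$ drawn by $\GraphSample$, a random Pauli or Clifford twirl, and a pseudorandom phase function $g$. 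Setting $n_\phys=O(n_\logic)$ comes from the PRFC having constant rate. The proof then splits into three steps: constructing PRFCs, proving pseudorandomness against Haar isometries, and proving robustness via CWS decoding.

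\textbf{Step 1 (PRFC from LPN).} Let $F_k(x)$ be a codeword of a Christ--Gunn style LPN-based PRC whose randomness is derived from a PRF evaluated on $x$, with the input $x$ embedded so that it can be recovered. With the secret key, decoding strips the noise: a constant fraction of bit flips is tolerated by using sparse secret parity checks and then brute-forcing or list-decoding over a $\sqrt n$-size block. This brute-force step is exactly where $2^{O(\sqrt n)}$-hardness of LPN is used, following the Christ--Gunn parameter regime. Pseudorandomness of $F_k$ as a function follows from PRF security plus PRC pseudorandomness via a standard hybrid argument.

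\textbf{Step 2 (pseudorandomness).} Replace $F_k$ by a truly random function, and $g$ by a random function. The map $\ket x\mapsto (-1)^{g(x)}C\ket{f(x)}$ with random $f$, random $g$, and $C$ a random Clifford twirl layered with the graph is then a random-function-based isometry. I would invoke the known path-recording or PRI results (\cite{FOCS:MPSY24,STOC:MaHua25}), which show that such "random function plus random phase plus Clifford" constructions are indistinguishable from Haar isometries for polynomially many queries, including adaptive and inverse-free access. The expected technical care is ensuring that the graph unitary, which is itself a Clifford, does not break the argument. I would handle this by absorbing it into the random Clifford.

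\textbf{Step 3 (robustness), the main obstacle.} Decompose any $t$-local noise into Paulis $X^aZ^b$ supported on $t$ qubits. Conjugating through the graph state, $Z^b X^a$ acts on the CWS codeword as a pure bit flip $X^{a}$ together with the induced pattern $b\oplus \Gamma_G a$ on the classical string $F_k(x)$. The decoder first measures the syndrome-like quantity $\Gamma_G a$ coherently, by using the random graph to identify $a$ from the pattern. Here $\BitDec$ applies the PRFC decoder coherently to recover $x$ and the flip pattern, and $\PhaseDec$ uncomputes.

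The hard part is showing that for a random sparse-enough graph $G$, every $a$ of weight $\le t$ is uniquely and efficiently recoverable from $\Gamma_G a$ plus the bounded-weight induced classical error. I would try a counting and union bound: the number of error supports is $\binom{n}{t}4^t\approx 2^{t\log n}$. I would choose $G$ with degree about $\log n/\log\log n$ and argue that distinct low-weight $a$'s yield far-apart $\Gamma_G a$ except with probability $2^{-\Omega(n\cdot\ldots)}$. Balancing these gives the threshold $t=o(n\log\log n/\log n)$.

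Finally, the decoding must be coherent and must fail with negligible amplitude uniformly over the superposition of $x$. Averaging over $\sk$ then gives diamond-distance closeness to the identity, using the PRFC's worst-case-over-$x$ decoding guarantee.
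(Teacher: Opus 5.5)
Your overall architecture matches the paper: a PRFC built from a PRC with PRF-derived randomness, a keyed CWS layer with a Pauli twirl, and bit decoding followed by phase correction. However, Steps 2 and 3 each have a genuine gap.

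\textbf{Step 2 fails: your encoder is not pseudorandom.} On a basis input, $\Enc(\sk,\ket{x})=\pm C\,U_G\ket{F_k(x)}$, where $U_G$ and the twirl $C$ are Clifford and $\ket{F_k(x)}$ is a computational-basis state. So every output on a basis input is a stabilizer state. Querying $\ket{0^{n_\logic}}$ six times gives six copies of one stabilizer state. Stabilizer testing (Bell-difference sampling) accepts these with probability one, but accepts a Haar-random state with probability bounded away from one. The results you cite do not cover your map. The PFC ensemble of \cite{STOC:MaHua25} applies the random Clifford \emph{before} the phase and permutation. The PF isometry of \cite{FOCS:MPSY24} needs a uniform superposition over $\ell=\omega(\log\secp)$ auxiliary bits. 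A Clifford applied \emph{after} a classical function, which is the order CWS decoding forces, does not help. The missing idea is to encode $x\|y$ for a superposed $y\in\bit^\ell$. The paper's inner map is $\ket{x}\mapsto 2^{-\ell/2}\sum_y(-1)^{f_k(x\|y)}\ket{\pi_{k'}(x\|y\|0^r)}\ket{g_{\sk'}(x\|y)}$, with $r\ge n_\logic+\ell+\omega(\log\secp)$. This padding makes $z\mapsto\pi(z\|0^r)\|g(z)$ statistically a random injection, so the map becomes a PF isometry. The outer $PU_G$ is then removed by left invariance of the Haar measure, not absorbed into a random Clifford.

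\textbf{Step 3 aims at an impossible and inefficient target.} You try to show that every $a$ of weight $\le t$ is uniquely recoverable from $e=b\oplus Aa$. This is false once the degree is at most $t$. The errors $X_j$ (that is, $a=\varepsilon_j$, $b=0$) and $Z^{A\varepsilon_j}$ (that is, $a=0$, $b=A\varepsilon_j$) produce the same $e$. Yet the first needs the correction $(-1)^{c_j}$ and the second needs none. So one can only hope for success per fixed error with high probability over $G$. That suffices, because the Pauli twirl turns key-independent noise into a key-independent mixture of fixed Paulis.

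More fundamentally, a counting and uniqueness argument gives no efficient decoder. Searching over supports is superpolynomial for $t=\omega(\log n)$, which is exactly the problem left open in \cite{LDGP10}. The paper's key ingredient is explicit:
\begin{itemize}
\item $G$ is bipartite, formed as the XOR of $d=\Theta(\min\{\log^2 n,\gamma n/t\})$ random perfect matchings.
\item With high probability it has unique-neighbor expansion $|\Gamma(S)|\ge\frac34 d|S|$ for all $|S|\le 2t$.
\item For the fixed $v$, each vertex has at most $d/16$ neighbors in $\Support(v)$ (local overlap).
\item A bit-flipping decoder is run separately on $e_L=Bu_R\oplus v_L$ and $e_R=B^\top u_L\oplus v_R$.
\end{itemize}
The threshold $t=o(n\log\log n/\log n)$ comes from two requirements. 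First, $(d+1)t\le pn$, so the induced bit error stays inside the PRFC radius. Second, $d\log(n/t)=\omega(\log n)$, so the failure probability is negligible. Your choice $d\approx\log n/\log\log n$ gives only polynomially small failure.

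\textbf{Step 1 issues.} The $2^{O(\sqrt n)}$ hardness is not used for a brute-force decoding step. It is needed because the Christ--Gunn LPN secret has dimension $\Theta(\log^2 n)$. Also, your ``standard hybrid'' skips how classical-query PRC security gives security against superposition queries. The paper handles this with small-range distributions, a $4q$-wise independent hash, and a PRP applied to $x$.
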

\else
\begin{theorem}\label{thm:PRIC_intro}
    Assume that LPN is hard for $2^{O(\sqrt{n})}$-time quantum algorithms.
    Then, for every function $t=o(n_\phys\log\log n_\phys/\log n_\phys)$, there exist PRICs robust to any $t$-local quantum noise.
    In addition, these PRICs use $n_\phys=O(n_\logic)$ physical qubits for $n_\logic$ logical qubits.
\end{theorem}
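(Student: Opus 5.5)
The plan is to build the PRIC from two ingredients, following the roadmap in the abstract. The first is a pseudorandom functional error-correcting code (PRFC) $f_k:\{0,1\}^{n_\logic}\to\{0,1\}^{n_\phys}$ built from LPN. The second is a codeword-stabilized (CWS) code defined by a randomly sampled graph $G$ on $n_\phys$ vertices together with the classical code $\{f_k(x)\}_x$. The secret key is $\sk=(k,G)$. Encoding maps $\ket{x}$ to $\ket{x}\ket{0^{n_\phys}}$, coherently computes $f_k(x)$, uncomputes $x$ with the PRFC decoder, and finally applies the graph-state operations. The result is
\[
\Enc(\sk,\ket{x}) \;=\; Z^{f_k(x)}\ket{G} \quad\text{(up to the chosen CWS convention).}
\]
I also compose with a random Clifford or Pauli twirl where needed.

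For pseudorandomness I take three steps.
\begin{itemize}
\item Replace $f_k$ by a truly random function $F$. This is justified by PRFC security against quantum superposition queries, which holds under the subexponential LPN assumption via the standard argument that PRFs secure against $2^{O(\sqrt n)}$ adversaries also resist quantum queries.
\item Note that the resulting isometry $\ket{x}\mapsto Z^{F(x)}\ket{G}$ is a random binary phase-state isometry with a random function. Replacing $F$ by a random injective function costs only $O(q^2/2^{n_\phys})$ by a collision bound. Since $n_\phys=O(n_\logic)$ with a large enough constant, this is negligible.
\item Invoke the known result that isometries of the form $\ket{x}\mapsto$ (random phase/permutation state) are indistinguishable from Haar-random isometries. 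This is the PRI analysis of Ma--Huang style path-recording, adapted to isometries. If it does not directly apply, I would insert an extra random Clifford twirl on the output, encoded into the key.
\end{itemize}

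For robustness, a $t$-local error expands into Paulis supported on at most $t$ qubits. In the CWS framework each Pauli $X^aZ^b$ acts on the codeword as a classical bit flip $Z^{b\oplus \Gamma a}$ (with $\Gamma$ the adjacency matrix) together with a phase. Decoding then proceeds in two stages. First, recover the induced classical error from $f_k(x)\oplus b\oplus\Gamma a$ using the PRFC decoder, which tolerates a constant fraction of flips. Second, use the new decoding procedure to identify $a$ and undo the phase.

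The main obstacle is showing that, for a random graph, the weight of $b\oplus\Gamma a$ stays within the PRFC's decodable radius. This also requires the map $a\mapsto\Gamma a$ on low-weight $a$ to be efficiently invertible, so that the phase error can be recovered coherently without collapsing the logical superposition. My first attempt would be to sample $G$ with bounded degrees $d=\Theta(\log n/\log\log n)$. The argument would then run as follows.
\begin{itemize}
\item Argue that $\Gamma a$ has weight at most $dt$, and that $dt$ is below a constant fraction of $n_\phys$ exactly when $t=o(n\log\log n/\log n)$; this is where the threshold comes from.
\item Argue that, with high probability over $G$, neighborhoods of small sets are nearly disjoint, so that $a$ can be read off uniquely from the syndrome $\Gamma a$ by a local peeling decoder.
\item Combine by a union bound over supports, and conclude that the averaged channel is negligibly close to the identity in diamond norm.
\end{itemize}

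Finally, I would instantiate the PRFC from LPN, which gives the main result \cref{coro:PRIC_from_assumption}.
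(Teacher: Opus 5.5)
\textbf{Gap 1: the encoder is not pseudorandom.} The pseudorandomness part fails outright, because the isometry you build is not pseudorandom at all. For every computational-basis input, $\Enc(\sk,\ket{x})=Z^{f_k(x)}\ket{G}=U_G\ket{f_k(x)}$ is a stabilizer state, and composing with a random Pauli or Clifford keeps it one. An adversary can query $\ket{0^{n_\logic}}$ a constant number of times and run the Gross--Nezami--Walter stabilizer test. That test accepts every stabilizer state with probability $1$ and a Haar-random state with probability bounded away from $1$, so no twirl encoded in the key can help.

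The mistake is calling $\ket{x}\mapsto Z^{F(x)}\ket{G}$ a random phase-state isometry. In the computational basis its phase is $y\mapsto F(x)\cdot y+q_G(y)$, which is quadratic in the superposed index $y$ rather than a random function of it. So MPSY/Ma--Huang-type results do not apply.

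The missing idea is to put the randomness inside a superposition \emph{before} the CWS layer. The paper's inner map is $\ket{x}\mapsto 2^{-\ell/2}\sum_{y\in\bit^\ell}(-1)^{f_k(x\|y)}\ket{\pi_{k'}(x\|y\|0^r)}\ket{g_{\sk'}(x\|y)}$, with $\ell=\omega(\log\secp)$, a PRF phase, a PRP and the PRFC. After switching to ideal objects, \cref{lem:pi||f_vs_pi'} (using $r\ge n_\logic+\ell+\omega(\log\secp)$) makes $z\mapsto\pi(z\|0^r)\|g(z)$ a uniformly random injection. The inner map is then a PF isometry, so \cref{thm:PF_isometry} applies. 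The outer $PU_G$ is absorbed by left invariance of the Haar measure, so the CWS layer costs nothing for security.

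Two smaller points belong here too. Keeping $\pi_{k'}(x\|y\|0^r)$ in its own register makes the encoder an exact isometry; uncomputing $x$ with the PRFC decoder is only correct per input with high probability over the key. Also, quantum-query security of the PRFC does not follow from subexponential hardness by a standard argument, since classical-query security does not imply quantum-query security. The paper derives it from a classical-query PRC via small-range distributions and $4q$-wise independent hashing.

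\textbf{Gap 2: the phase-recovery argument ignores the $Z$-part of the error.} The decoder sees $e=b\oplus\Gamma a$, not $\Gamma a$, and peeling on $\Gamma a$ ignores the unknown $b$ of weight up to $t$. Besides unique-neighbor expansion for all sets of size at most $2t$, you need a local-overlap condition: no vertex may have more than a small constant fraction of its neighbors in $\Support(b)$.

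This condition cannot hold for all $b$ at once. If $b$ is the neighborhood of vertex $i$, then $U_G^\dagger Z^bX_iU_G=Z_i$ produces the same $e=0$ as no error. So worst-case correction already fails at weight about $d/2$, and no union bound over supports can work.

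The paper instead proves the good-graph conditions with overwhelming probability over $G$ for each \emph{fixed} $b$. The random Pauli in the key, via \cref{lem:Pauli_twirling_for_local_noise}, turns arbitrary local noise into a key-independent mixture of fixed Paulis, and this is what makes a per-error guarantee sufficient.

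Your degree choice $d=\Theta(\log n/\log\log n)$ also falls short. It gives failure probabilities of order $(t/n)^{\Theta(d)}$, which are only inverse-polynomial when $t$ is near $n\log\log n/\log n$. You need $d=\omega(\log n/\log\log n)$ together with $dt\le\gamma n$. The compatibility of these two constraints is what produces the threshold $t=o(n\log\log n/\log n)$.
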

\fi

\Cref{thm:PRIC_intro} shows that strong pseudorandomness (computational indistinguishability from Haar-random isometries), correction of almost-linearly many local errors, and a constant encoding rate can be achieved simultaneously.
Correcting more errors generally requires more redundancy, yet a constant rate limits the number of physical qubits to $O(n_\logic)$. Within this constraint, efficient decoding must exploit structure in the code. Strong pseudorandomness creates the opposing demand that the encoding hide this structure from adversaries, making the simultaneous guarantee far from immediate. 

Among standard quantum pseudorandomness notions, PRICs achieve the strongest pseudorandomness that still allows correction of local quantum noise.
A natural choice is to require the encoding operation to be computationally indistinguishable from Haar-random unitaries, as in PRUs.
However, such encodings have equally sized input and output spaces, leaving no redundancy for error correction.
Consequently, there is physically meaningful one-local quantum noise that they cannot correct.\ifnum\extendedabstract=0\footnote{This impossibility is formalized in \cref{thm:impossibility_w_small_redandancy}.}
\fi
 Correcting this noise requires more physical qubits than logical qubits, which is permitted by isometries.
In addition, the pseudorandomness of PRICs is the same as that of PRIs and stronger than that of PRSs, because it allows adversaries to adaptively query the encoding operation on arbitrary quantum inputs.

For the completely depolarizing channel, we also construct public-key QPRCs under the same LPN assumption.
The completely depolarizing channel discards its input and outputs the maximally mixed state, and is equivalent to the channel applying a fresh Haar-random isometry for each execution.
In the public-key setting, $\Enc$ takes a public key $\pk$ in place of $\sk$, while $\Dec$ still takes the secret key $\sk$, and pseudorandomness must hold even when the adversary is given the public key.

\ifnum\extendedabstract=0
\begin{theorem}[\cref{coro:QPRC}, informal]\label{thm:QPRC_intro}
    Let $\cD$ be the distribution that outputs the completely depolarizing channel with probability one.
    Assume that LPN is hard for $2^{O(\sqrt{n})}$-time quantum algorithms.
    Then, for some constant $\alpha>0$, there exist public-key QPRCs that are pseudorandom with respect to $\cD$ and robust to all $\alpha n_\phys$-local quantum noise.
    In addition, these QPRCs use $n_\phys=O(n_\logic)$ physical qubits for $n_\logic$ logical qubits.
\end{theorem}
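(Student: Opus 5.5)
The plan is to build the public-key QPRC as a quantum one-time pad whose key is hidden inside a classical PRC codeword, all protected by a QECC. \textbf{Encoding.} On input an $n_\logic$-qubit state $\rho$, $\Enc(\pk,\cdot)$ does the following.
\begin{enumerate}
\item Sample fresh uniform strings $a,b\in\bit^{m}$.
\item Encode $\rho$ with a fixed, efficiently decodable, constant-rate stabilizer code $\QECC$ that corrects $\alpha' n$ adversarial local errors, for instance an asymptotically good CSS code with an efficient decoder; no key is needed here.
\item Apply the Pauli $X^aZ^b$ to the $m=O(n_\logic)$ physical qubits of this encoding.
\item Append the computational-basis state $\ket{c}$, where $c\leftarrow\PRC.\Enc(\pk,(a,b))$ is a public-key classical PRC codeword of length $O(m)$.
\end{enumerate}
We use the public-key LPN-based PRC of Christ--Gunn, which tolerates a constant fraction of bit flips under the assumed $2^{O(\sqrt n)}$ hardness and has constant rate. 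The physical register therefore has $n_\phys=O(n_\logic)$ qubits.

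\textbf{Pseudorandomness.} The argument is a two-step hybrid. First, replace $c$ by a uniform string using PRC pseudorandomness against an adversary holding $\pk$. The reduction simulates every oracle query: it obtains codewords from the PRC encoding oracle, or computes them itself from $\pk$, and it chooses $a,b$ on its own. Second, once $c$ is independent of $a,b$, the fresh uniform Pauli $X^aZ^b$ twirls the code register into the maximally mixed state. This holds independently for each query, even when the input is entangled with the adversary's reference system. The output is then the maximally mixed state on the code register tensored with a uniform classical string, i.e.\ the completely depolarizing channel. The adversary can adaptively query quantum states, but every query uses fresh randomness, so a standard hybrid over the polynomially many queries finishes this part.

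\textbf{Robustness.} An $\alpha n_\phys$-local channel $\cE$ acts on a set $S$ of at most $\alpha n_\phys$ qubits, and we expand it in Paulis on $S$. The decoder proceeds as follows.
\begin{enumerate}
\item Measure the classical register in the computational basis. This decoheres $\cE$ on that register into bit flips of weight at most $|S|$, while on the quantum register $\cE$ still acts on at most $|S|$ qubits.
\item Run $\PRC.\Dec(\sk,\cdot)$ to recover $(a,b)$; this succeeds except with negligible probability once $\alpha$ is below the PRC's error tolerance.
\item Undo $X^aZ^b$. Since the conjugated noise $X^aZ^b\,\cE\,X^aZ^b$ still acts on the same qubits, it remains $|S|$-local.
\item Run the $\QECC$ decoder.
\end{enumerate}

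\textbf{Main obstacle.} The step needing the most care is robustness against noise that correlates the two registers. Here measuring the classical register can collapse the quantum register's noise into a branch-dependent Kraus operator, and the PRC decoding failure probability must be controlled in every branch. The plan is to argue that each branch has bit-flip weight at most $|S|$. The PRC guarantee holds for every fixed error pattern, or for adversarial patterns chosen independently of the key, so the failure probability is negligible uniformly over branches. Because the QECC corrects any operator supported on $S$, the final state is negligibly close to $\rho$ in diamond norm. Finally, we choose $\alpha$ below both the PRC tolerance scaled by the length ratio and $\alpha'$ scaled accordingly.
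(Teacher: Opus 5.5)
Your proposal is correct and is essentially the paper's proof of \cref{thm:PKQPRC}: the same construction as \cref{const:PKQPRC}, pseudorandomness via PRC security followed by the exact Pauli twirl, and robustness via the key-independent, bounded-flip classical channel induced on the PRC register, followed by QECC decoding. One small imprecision: under \cref{def:local_noise}, a $t$-local channel need not act on a fixed set $S$, since each Kraus operator is only required to be a combination of Paulis of weight at most $t$, possibly with different supports; every step of your argument still goes through verbatim with ``weight at most $t$'' in place of ``supported on $S$'', which is how the paper phrases it.
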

\else
\begin{theorem}\label{thm:QPRC_intro}
    Let $\cD$ be the distribution that outputs the completely depolarizing channel with probability one.
    Assume that LPN is hard for $2^{O(\sqrt{n})}$-time quantum algorithms.
    Then, for some constant $\alpha>0$, there exist public-key QPRCs that are pseudorandom with respect to $\cD$ and robust to all $\alpha n_\phys$-local quantum noise.
    In addition, these QPRCs use $n_\phys=O(n_\logic)$ physical qubits for $n_\logic$ logical qubits.
\end{theorem}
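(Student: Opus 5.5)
We would build the public-key QPRC by combining a constant-rate quantum error-correcting code with the quantum one-time pad, and hiding the pad keys inside a classical public-key PRC. Concretely, on input an $n_\logic$-qubit state $\rho$ and public key $\pk$, $\Enc$ first encodes $\rho$ into a good stabilizer code: an asymptotically good CSS code (or concatenated/expander-based family) of constant rate and linear distance with an efficient decoder correcting $\alpha' m$-local errors. It then samples uniform Pauli keys $(a,b)\in\bit^{2m}$ and applies $X^aZ^b$. Finally, it encodes the key string $(a,b)$ with a public-key classical PRC built from LPN in the Christ--Gunn framework, robust to a constant fraction of bit flips and with constant rate in the $2^{O(\sqrt{n})}$-hard LPN regime. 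The resulting codeword $c$ is prepared as the computational basis state $\ket{c}$. The physical register is the one-time-padded code block together with $\ket{c}$, so $n_\phys=O(n_\logic)$.

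For pseudorandomness against an adversary holding $\pk$, we argue with two hybrids. First, we replace each PRC codeword by a fresh uniform string; this is indistinguishable by PRC pseudorandomness given $\pk$. The reduction is efficient because it can simulate the quantum part given the keys, and multiple queries follow by a hybrid or by the multi-query definition. Once $c$ is uniform and independent of $(a,b)$, the pad keys are uniform and used only once. Averaging over $(a,b)$ turns the first block into the maximally mixed state, even when entangled with the adversary's reference. Averaging over uniform $c$ turns $\ket{c}$ into the maximally mixed state. Each query therefore implements the completely depolarizing channel, and since fresh randomness is used per query, the adaptive oracle is exactly $\cD$.

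For robustness, let $\cE$ be $\alpha n_\phys$-local. Expanding $\cE$ in the Pauli basis on its support, the analysis reduces to Pauli errors of weight at most $\alpha n_\phys$, which is legitimate because the decoder will measure syndromes and apply corrections, so error discretization applies. $\Dec$ first measures the classical register in the computational basis and obtains $c\oplus e_X$ with $|e_X|\le \alpha n_\phys$; $Z$ errors there are irrelevant. For $\alpha$ small enough relative to the PRC's error tolerance, PRC decoding with $\sk$ recovers $(a,b)$ except with negligible probability. $\Dec$ then undoes $X^aZ^b$, which maps the residual Pauli error to another Pauli of the same support (up to phase), and runs the QECC decoder, which corrects it because $\alpha n_\phys\le \alpha' m$ after adjusting constants. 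Measuring the classical register disturbs nothing, since it is in a product with the data.

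The main obstacle is the robustness bookkeeping in diamond norm with a negligible bound. The PRC decoding failure probability must hold for every adversarially chosen error, not just random noise, so we need PRC robustness against worst-case bounded-weight flips. We also need the key-decoding failure event to be controlled uniformly across the Pauli branches of $\cE$, including coherent superpositions of errors. We would handle this by purifying $\cE$ and noting that, conditioned on the Kraus/Pauli branch, both the PRC error pattern and the data error are fixed with weight at most $\alpha n_\phys$. A union-free argument then gives failure at most $\negl$ for each branch, and hence in diamond distance.
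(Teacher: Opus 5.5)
Your proposal matches the paper's proof. \cref{const:PKQPRC} is the same construction: a constant-rate QECC, then a Pauli one-time pad whose key is encoded by a public-key PRC. Pseudorandomness uses the same two hybrids: PRC codewords are replaced by uniform strings, and then the one-time-pad identity gives the completely depolarizing channel exactly. Robustness is proved the same way, by PRC-decoding the measured key register and then running the QECC decoder.

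The per-branch bookkeeping in your last paragraph needs one correction. Within a single Kraus operator the Pauli components act coherently, so the distribution of the flip pattern seen on $\regC$ can depend on the codeword $c$ (for example, under reset-type noise). Conditioning on a "fixed error pattern per branch" is therefore not legitimate. The clean route, as in \cref{claim:sigma_vs_sigma'}, has two steps. First, treat the measured $\regC$ outcome as the output of an induced classical channel that flips at most $t\le pm$ bits, and invoke PRC robustness to arbitrary codeword-dependent $p$-bounded channels; this is the worst-case robustness you already identified as necessary. Second, compare the decoder's output with correct pad removal, so that the QECC faces the genuine $t$-local channel $(X^xZ^z)^\dagger\circ\cE_{x,z}\circ(X^xZ^z)$.
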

\fi

These QPRCs, whose encodings are indistinguishable from the completely depolarizing channel, are direct quantum analogues of classical PRCs and support public-key encoding while correcting a constant fraction of local quantum noise at a constant encoding rate.
The ideal oracle for classical PRCs ignores its input and returns a fresh uniformly random string on each query.
Likewise, measuring the output of the completely depolarizing channel in the computational basis gives a fresh uniformly random string, independently of the input.
This also permits public-key encoding, although PRICs cannot have public-key encoders even without noise.\ifnum\extendedabstract=0\footnote{This impossibility is formalized in \cref{prop:no_public_key_haar_qprc}.}
\else
\footnote{We prove this impossibility in the full paper.}
\fi
Our QPRCs in \cref{thm:QPRC_intro} correct errors on a constant fraction of the physical qubits, whereas the guarantee in \cref{thm:PRIC_intro} covers almost-linearly many errors under the stronger pseudorandomness.

\ifnum\extendedabstract=0
\subsection{Technical Contribution}
\else
\paragraph{Technical contribution.}
\fi

\ifnum\extendedabstract=0
Our construction of PRICs relies on two new ingredients, each of independent interest.
The first is a new classical cryptographic primitive, which we call a pseudorandom functional error-correcting code (PRFC).
PRFCs are pseudorandom functions (PRFs) equipped with a polynomial-time decoder that uses a key $k$ to recover $x$ from $f_k(x)$ even after a constant fraction of the bits of $f_k(x)$ are flipped, and we construct them under the LPN assumption.
\fi

\ifnum\extendedabstract=1
Our construction of PRICs relies on two new ingredients, each of independent interest.
The first is a new classical cryptographic primitive, which we call a pseudorandom functional error-correcting code (PRFC), and the second is a general construction of keyed QECCs from any classical code, including any nonlinear code, as a black box.
PRFCs enable us to construct PRICs that correct local Pauli \(X\) errors, but the PRICs cannot correct other Pauli errors, including phase errors.
The method behind our second result then extends this construction to correct general local quantum noise while preserving its indistinguishability from Haar-random isometries.

We first explain PRFCs.
PRFCs are pseudorandom functions (PRFs) equipped with a polynomial-time decoder that uses a key $k$ to recover $x$ from $f_k(x)$ even after a constant fraction of the bits of $f_k(x)$ are flipped.
We construct them under the LPN assumption.
\fi

\ifnum\extendedabstract=0
\begin{theorem}[\cref{coro:LPN_imply_PRFC}, informal]\label{thm:PRFC_intro}
    If LPN is hard for $2^{O(\sqrt{n})}$-time quantum algorithms, then PRFCs exist.
    In addition, these PRFCs encode $n$ bits to $m=O(n)$ bits. 
\end{theorem}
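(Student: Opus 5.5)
Our plan is to build PRFCs from an LPN-based classical PRC combined with a PRF, in the spirit of Christ and Gunn~\cite{C:ChrGun24}. We treat a PRFC as a keyed family $f_k\colon\bit^n\to\bit^m$ with two properties. First, $f_k$ is indistinguishable from a uniformly random function under quantum (or at least classical adaptive) queries. Second, there is a decoder $\Dec(k,y')$ that returns $x$ whenever $y'$ is within Hamming distance $\delta m$ of $f_k(x)$.

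\textbf{Construction.} The codeword for $x$ has two parts.
\begin{enumerate}
\item A \emph{derandomized} LPN-based PRC encoding. We encode $x$ under a secret LPN code: a hidden random generator matrix $G\in\F_2^{m'\times n'}$ that has a planted sparse parity-check structure (low-weight rows of a matrix $H$ with $HG=0$).
\item The randomness of this encoding, namely the LPN secret $s$ and the noise $e$, is replaced by the output of a quantum-secure PRF $F_{k'}(x)$. PRFs follow from LPN, since LPN gives OWFs.
\end{enumerate}
Concretely, we set
\[
f_k(x)=\bigl(Gs\oplus e\bigr)\oplus \mathrm{mask},\qquad (s,e)=F_{k'}(x),
\]
arranged so that $x$ is embedded in $s$. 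One option is $s=(x,r)$ with $r$ pseudorandom. Another is to prepend a classical ECC encoding of $x$ XORed with a pseudorandom pad that has enough redundancy. The weight of $e$ is chosen as $m^{-\Theta(1)}$ or a suitable constant, so that the sparse parity checks let the decoder first detect and correct noise. It then solves a linear system for $s$ and reads off $x$.

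\textbf{Steps.}
\begin{enumerate}
\item Pseudorandomness: switch $F_{k'}$ to a truly random function. Each query then yields a fresh LPN sample $Gs\oplus e$ with independent $(s,e)$.
\item Apply the planted-sparse-parity variant of LPN: codewords of a code $G$ with a hidden sparse dual are indistinguishable from random. This follows from standard LPN together with the sparse-dual assumption implied by LPN hardness against $2^{O(\sqrt n)}$ time, as in~\cite{C:ChrGun24}. The subexponential hardness is what allows $t=\Theta(\sqrt{n})$-sparse checks and polynomially many samples.
\item A hybrid over queries replaces each output with uniform. A quantum-query adversary against a random function with freshly random outputs is handled by standard small-range or compressed-oracle arguments, or by directly noting that the ideal object is a random function.
\item Robustness: for a corruption of $\delta m$ bits plus the internal noise $e$, we use belief propagation or majority over the sparse checks, as in the PRC decoder, to correct errors. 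Then we solve for $s$ by Gaussian elimination and recover $x$. Rate $m=O(n)$ follows from choosing $n'=\Theta(n)$ and $m'=O(n')$.
\end{enumerate}

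\textbf{Main obstacle.} The hardest step is reconciling a \emph{constant} fraction of adversarial bit flips with LPN noise and sparse checks. Christ--Gunn PRCs tolerate only $1/2-\epsilon$ random (BSC) errors at low rate, or handle adversarial errors only with weaker parameters. Sparse checks of weight $\Theta(\sqrt n)$ fail against a constant fraction of adversarial flips. We plan to avoid decoding the LPN part under heavy noise. Instead, we use the key to pseudorandomly permute and mask the output of a good linear-time decodable code (e.g.\ an expander code or Reed--Solomon concatenation) applied to $(x,r)$. The pseudorandom permutation and mask depend on the key only, not on $x$, so the adversarial errors become effectively random relative to the code. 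To make the output pseudorandom even for repeated structure, we XOR in a term that is keyed by $x$ but LPN-decodable.

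We expect the delicate part to be proving that the combined map is a PRF while still decodable. In particular, a fixed key-dependent permutation plus mask applied to a linear code is \emph{not} pseudorandom across many queries, because linear relations persist. We will therefore inject per-input LPN noise at a small constant rate, which the outer code can absorb. We will then argue indistinguishability via LPN with a secret generator, at the $2^{O(\sqrt n)}$ hardness level, so that dimension $\Theta(\sqrt n)$ suffices for polynomial-time adversaries while keeping the rate constant.
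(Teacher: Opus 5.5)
There is a genuine gap. You never fix a construction that is both decodable at a constant noise rate and provably pseudorandom under quantum queries, and the steps you do commit to would fail.

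\textbf{Decoding.} Recovering $s$ by belief propagation on the planted sparse checks, followed by Gaussian elimination, does not work at constant noise. A check of weight $\Theta(\log n)$, as in Christ--Gunn, is violated with probability only $\tfrac12-\tfrac12(1-2p)^{\Theta(\log n)}=\tfrac12-n^{-\Omega(1)}$. That bias suffices for the aggregate threshold test of the zero-bit decoder, but not for correcting individual bits. Note also that the $2^{O(\sqrt n)}$ hardness refers to LPN with secret length $\Theta(\log^2 n)$, not to $\Theta(\sqrt n)$-sparse checks. Your ``main obstacle'' is also misdiagnosed. PRFC robustness is required only against a fixed, key-independent channel. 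The post-quantum Christ--Gunn PRCs already tolerate every $p$-bounded channel with $p<1/2$ at constant rate (\cref{thm:post-quantum_PRC}). So there is no need to reopen the LPN layer, or to design a masked expander-code scheme whose pseudorandomness you then leave open.

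\textbf{The missing idea.} The paper derandomizes a secret-key PRC used as a black box, as in \cref{const:PRFC}: $\Enc(\sk,x)=\PRC.\Enc(\sk',\pi_k(x);f_{k'}(x)\oplus r)$, with a uniform $r$ stored in the key. The uniform $r$ makes the internal randomness exactly uniform for each fixed $x$. Hence robustness and the rate $m=O(n)$ transfer statistically from the PRC. Your direct substitution $(s,e)=F_{k'}(x)$ would instead need PRF security to transfer robustness, which fails for the possibly inefficient noise channels allowed here.

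\textbf{Pseudorandomness.} Your Step~1 (``each query yields a fresh LPN sample with independent $(s,e)$'') has three problems. It does not make sense for superposition queries. The samples are not independent once $x$ is embedded in $s$. And PRC/LPN security only provides polynomially many classical samples. The paper's reduction works as follows:
\begin{itemize}
\item It queries the PRC encoder classically on $t$ uniformly random messages $z_i$.
\item It answers $\cA$'s quantum queries with $g(x)=y_{h(x)}$ for a $4q$-wise independent $h$, which is exact by \cref{lem:sim_with_hash}.
\item It bridges to both the real and the ideal PRFC using Zhandry's small-range bound $27q^3/t$ (\cref{lem:small_range_oracle}), function-to-permutation switching at cost $cq^3/2^n$ (\cref{lem:function_vs_permutation}, which is where $n=\omega(\log\secp)$ enters), and PRF/PRP security, with $t$ polynomial in $q$ and the inverse advantage.
\end{itemize}
This reduction encodes a random message for each input, and the PRP $\pi_k$ is exactly what reconciles that with decodability. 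Your Step~3 gestures at small-range or compressed-oracle arguments but supplies nothing of this kind.
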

\else
\begin{theorem}\label{thm:PRFC_intro}
    If LPN is hard for $2^{O(\sqrt{n})}$-time quantum algorithms, then PRFCs exist.
\end{theorem}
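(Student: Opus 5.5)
The plan is to build a PRFC by composing a secret keyed linear code with LPN noise and a pseudorandom function. The PRF does two jobs: it supplies fresh LPN-style randomness for each input, and it makes repeated queries consistent.

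\textbf{Construction.} The key $k$ consists of three parts:
\begin{itemize}
\item a uniformly random matrix $A\in\F_2^{m\times n'}$;
\item a secret linear structure that admits efficient decoding (I would follow the Christ--Gunn approach of planting sparse parity checks, or alternatively use a secret permutation/mask);
\item a PRF key $s$ for a PRF built from LPN.
\end{itemize}
On input $x\in\bit^n$, the function first derives pseudorandom values $r_x=\mathrm{PRF}_s(x)$ and a noise vector $e_x\sim\Ber_\eta^m$. It then outputs
\[ f_k(x)=A\binom{x}{r_x}+e_x \]
with the sparse-parity-check trapdoor planted in $A$. Here $x$ is embedded as part of the LPN secret, so $n'=n+O(n)$ and $m=O(n)$.

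\textbf{Pseudorandomness.} I would argue by hybrids.
\begin{enumerate}
\item Replace the PRF by a truly random function. This step uses only PRF security, which itself follows from LPN.
\item Observe that the adversary's queries are now answered by fresh LPN samples with the same matrix $A$, each with independent random secret component $r_x$ and independent noise. Distinct inputs get independent secrets; repeated inputs are consistent by construction.
\item Invoke the LPN assumption with multiple secrets (a standard hybrid over the number of queries) to replace each output by a uniform string. The planted sparse parity checks are handled as in Christ--Gunn.
\end{enumerate}
This is where the $2^{O(\sqrt n)}$ hardness enters: the noise rate must be low enough (roughly $n^{-1/2}$-sparse checks / noise $\Theta(1/\sqrt n)$ after accounting for planted structure) that decoding remains possible, and LPN must still be hard at that rate.

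\textbf{Decoding.} Given $y=f_k(x)+z$ with $|z|\le cm$, the decoder uses the trapdoor to recover the full secret $(x,r_x)$, not just a detection bit. Two options:
\begin{itemize}
\item Replace the parity-check trapdoor by concatenating the LPN output with a structured efficiently decodable code under a secret permutation and mask derived from the PRF.
\item Use the known secret $A$ together with a low-noise decoder (e.g.\ solving a linear system on a subset of coordinates guessed to be noise-free, repeated).
\end{itemize}
After recovering a candidate, the decoder re-evaluates $f_k(x)$ to check it, which rules out wrong candidates.

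\textbf{Main obstacle.} The hardest step is reconciling correction of a \emph{constant} fraction of adversarial flips with pseudorandomness under only $2^{O(\sqrt n)}$-hard LPN. Low-noise LPN decoding tolerates only small noise, so adversarial constant-rate errors must be absorbed by an outer layer. I would try the following two-level encoding:
\begin{itemize}
\item First encode $x$ with a constant-rate, constant-distance efficiently decodable code $C$, such as an expander code.
\item Apply a PRF-derived secret permutation and one-time pad, so that the adversary's errors look random relative to the inner structure.
\item Then mask with LPN-pseudorandom bits $A s_x+e_x$, where $s_x$ comes from the PRF. Decoding strips the mask knowing $A$ and $s_x$, and the residual Bernoulli noise plus the permuted adversarial errors stay below $C$'s decoding radius with high probability.
\end{itemize}
The difficulty is that stripping the mask needs $s_x$, which depends on the unknown $x$. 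I would resolve this by making the permutation and mask depend only on the key, not on $x$, and by using the independent per-query LPN noise only through a key-level secret. The remaining work is to verify that the multi-query indistinguishability argument still goes through under that choice.
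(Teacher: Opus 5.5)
There is a genuine gap: none of your candidate constructions is shown to be both efficiently decodable from a constant fraction of bit flips and pseudorandom, and the fix you propose at the end destroys pseudorandomness.

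Planted $\Theta(\log n)$-sparse parity checks $P$ with $PA=0$ give only a \emph{detector}. The syndrome $P(f_k(x)\oplus z)=P(e_x\oplus z)$ carries no information about $(x,r_x)$. Under constant-rate noise, each check's parity is only $n^{-\Theta(1)}$-biased, which is enough to detect aggregate structure (this is how Christ--Gunn use it) but not to decode. Your other decoding option, guessing a noise-free set of $\Theta(n)$ coordinates, succeeds only with probability $2^{-\Theta(n)}$.

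In the two-level design you correctly identify the circularity: the mask needs $s_x$, which needs $x$. But if the permutation $\sigma$ and the mask $M$ depend only on the key, outputs on different inputs become correlated. For instance, take a linear outer code $C$ and $f_k(x)=\sigma(C(x))\oplus M\oplus e_x$. Then $f_k(x)\oplus f_k(x')\oplus f_k(x'')\oplus f_k(x\oplus x'\oplus x'')$ is the XOR of four noise vectors. That string is visibly biased toward low weight unless $e_x$ is so close to uniform that $C$ can no longer decode. Pseudorandomness requires per-input randomness, and the decoder must recover that randomness without knowing $x$; your proposal leaves exactly this unsolved.

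You have also misread the hypothesis: it is not a low-noise assumption. $2^{O(\sqrt n)}$-hardness in the secret length is what makes constant-noise LPN with secret length $\Theta(\log^2 N)$ hard for $\poly(N)$-time adversaries.

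The missing idea is that Christ and Gunn already solved this problem for \emph{randomized} codes. Their constant-rate PRC encodes a short seed robustly via zero-bit PRC blocks, which are decodable by sparse-check detection. It then masks an ordinary error-correcting encoding of the message with $\mathrm{PRG}$ of the seed. The paper takes such a secret-key PRC as a black box and derandomizes it:
$\Enc(\sk,x)=\PRC.\Enc(\sk',\pi_k(x);f_{k'}(x)\oplus r)$.
Here $\pi_k$ is a PRP, $f_{k'}$ is a PRF, and $r$ is a uniform offset. The offset makes the coins uniform for each fixed $x$ even when the PRF key is fixed, so robustness is inherited directly from the PRC.

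Finishing with ``a standard hybrid over the number of queries'' would also not suffice. PRFC security is against superposition queries, whereas PRC and LPN security only give classically sampled outputs. The paper bridges this gap with three tools:
\begin{itemize}
\item Zhandry's small-range distributions: a random function is $27q^3/t$-close to one taking only $t$ sampled values.
\item A $4q$-wise independent hash that simulates the index map.
\item The function-versus-permutation bound.
\end{itemize}
Together these let the reduction make only $t$ classical PRC queries on uniformly random messages. This is also why the PRP on the message is needed.
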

\fi

PRFCs satisfy the pseudorandomness of PRFs, rather than that of PRCs, while retaining error correction.
PRCs require that oracle access to their encoding operation be computationally indistinguishable from access to an operation that outputs a fresh uniformly random string on each query.
In contrast, PRFCs require that oracle access to $f_k$ be computationally indistinguishable from access to a single uniformly random function.
This form of pseudorandomness is standard in cryptography, and recent work~\cite{PNR26} has enabled new applications by combining error correction with a form of pseudorandomness different from that of PRCs.
We therefore hope that PRFCs will enable new applications.

Our second technical result gives a black-box transformation from classical codes, including nonlinear ones, into keyed QECCs with efficient decoding for errors on almost linearly many physical qubits.

\ifnum\extendedabstract=0
\begin{theorem}[\cref{thm:better-keyed-cws}, informal]\label{thm:Graph_for_CWS_intro}
    For every $t=o(n_{\phys}\log\log n_{\phys}/\log n_{\phys})$, there exists a pair of QPT algorithms $(\Enc^{(\cdot)},\Dec^{(\cdot)})$ such that $\Exp_{\sk}[\Dec^C(\sk,\cdot)\circ\cE\circ\Enc^C(\sk,\cdot)]$ is negligibly close to the identity map in the diamond distance for any $t$-local error $\cE$ if a classical code $C$ corrects a constant fraction of bit errors.
\end{theorem}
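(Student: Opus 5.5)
The construction will be a codeword-stabilized (CWS) code whose graph is part of the key. Let $\sk=(G,\text{code key})$, where $G$ is a random graph on $n_\phys$ vertices drawn by a procedure $\GraphSample$. We use the CWS correspondence: for the graph state $\ket{G}$ with stabilizers $S_v=X_v Z_{N(v)}$, set
\[
\Enc^C(\sk,\ket{x}) = Z^{C(x)}\ket{G}.
\]
A Pauli error $E=X^aZ^b$ acting on this codeword equals, up to phase, the operator $Z^{b\oplus \Gamma a}$, where $\Gamma$ is the adjacency matrix of $G$. It therefore induces the classical bit error $\mathrm{Cl}_G(E)=b\oplus\Gamma a$ on the word $C(x)$, viewed in the "$Z$-basis after undoing the graph." By linearity it suffices to handle $t$-local errors, so expand the noise in Pauli operators supported on some set $S$ with $|S|\le t$. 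A $t$-local channel is a mixture or superposition of Paulis on a fixed set $S$. Since $\Dec$ does not know $S$, we need a decoder that coherently handles every Pauli on an unknown set of size at most $t$.

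The decoder proceeds in two stages.

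\textbf{Stage 1 (bit decoding).} Undo the graph-state preparation, i.e.\ apply the controlled-$Z$ pattern of $G$ and Hadamards, so that the register holds $\ket{C(x)\oplus \mathrm{Cl}_G(E)}$ up to a residual operator. Then run the classical decoder of $C$ coherently to obtain $x$ together with the error syndrome $e=\mathrm{Cl}_G(E)$. This needs $\weight(b\oplus\Gamma a)$ to be at most a constant fraction of $n_\phys$, because that is the regime in which $C$ corrects. So we want $G$ to have bounded-degree-like behaviour. Take $G$ to be a random sparse graph with average degree $d\approx \log n_\phys/\log\log n_\phys$; then $\weight(\Gamma a)\le d|S|$, and $t=o(n\log\log n/\log n)$ keeps $d\,t=o(n)$.

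\textbf{Stage 2 (phase decoding).} Knowing $e=b\oplus\Gamma a$, we must recover $(a,b)$, or at least $E$ modulo the stabilizer, in order to undo it. Given $e$, the support of $a$ is exactly the set of vertices whose neighbourhood pattern appears in $e$ outside the set $S$. We need a structural property of $G$: with high probability, for every pair $a\neq a'$ supported on small sets, the vectors $\Gamma a$ and $\Gamma a'$ differ on many coordinates outside $S\cup S'$. This is a unique-neighbour or lossless-expansion property. Given it, $\PhaseDec$ recovers $a$ greedily: a vertex $v$ lies in $\mathrm{supp}(a)$ iff most of $N(v)\setminus S$ is flipped in $e$. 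Then $b=e\oplus\Gamma a$. Degree $\Theta(\log n/\log\log n)$ is exactly what makes neighbourhoods of size-$t$ sets almost disjoint while keeping the density low, which explains the threshold $t=o(n\log\log n/\log n)$. Because the mapping $E\mapsto(x,e)\mapsto(a,b)$ is computed coherently and is independent of $x$, applying $X^aZ^b$ and uncomputing $e$ disentangles the syndrome and returns the logical state. The diamond-norm bound then follows from the bound for each fixed $S$ together with a union bound over graph failures. We define failure as failure for some set $S$, and the probability over $\GraphSample$ is negligible, so the expected channel over $\sk$ is negligibly close to identity.

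The main obstacle is the graph lemma. We must show that a random graph of degree about $\log n/\log\log n$ simultaneously satisfies two things for all $S$ with $|S|\le t$: the greedy identification of $\mathrm{supp}(a)$ is unambiguous, and the total weight of $e$ stays within $C$'s correction radius. For this I would first try a configuration-model or random left-regular bipartite-style argument. The plan is a union bound over $\binom{n}{t}$ sets against a $\exp(-\Omega(d\,t\log(n/(dt))))$ collision probability, and I would tune $d$ so that the exponent wins with negligible slack.
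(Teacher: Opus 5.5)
Your plan has a genuine gap at the step where one sampled graph must handle every Pauli of weight at most $t$ on every unknown set $S$ at once (you declare failure ``for some set $S$'' and union-bound). No graph can do this in most of the stated range. For any vertex $i$, \cref{lem:error_propagation} gives $U_G^\dagger X_iZ_{N(i)}U_G=\pm Z_i$. So this Pauli induces the bit error $e=0$, exactly as the identity does, but acts on the code as the logical phase $\ket{x}\mapsto(-1)^{C(x)_i}\ket{x}$, which is nontrivial for every non-constant coordinate of $C$.

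Split $N(i)=N_1\sqcup N_2$. The errors $X_iZ_{N_1}$ and $Z_{N_2}$ each have weight at most $\lceil(\deg(i)+1)/2\rceil$ and induce the same bit error (supported on $N_2$). They differ by exactly this logical phase, so no decoder, coherent or not, corrects both. Your Stage~1 needs a maximum-degree bound $d$ with $dt=o(n)$; an average degree does not give $\weight(\Gamma a)\le d|S|$. Then $d+1<2t$ as soon as $t\gtrsim\sqrt n$, and with your choice $d\approx\log n/\log\log n$ already for every $t\gtrsim\log n/\log\log n$.

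The same example breaks your structural property: $a=\varepsilon_i$, $a'=0$, $S=\{i\}\cup N_1$, $S'=N_2$ leave no difference outside $S\cup S'$. Your test ``most of $N(v)\setminus S$ is flipped'' also uses $S$, which $\Dec$ does not know. Without $S$, the errors $Z_{N(i)}$ and $X_i$ produce the same $e$ and opposite correct answers.

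What is missing is that the theorem is only an average-over-key statement for a fixed, key-independent channel, and the proof has to exploit exactly that. The paper adds an independent uniformly random Pauli $P$ to the key: $\Enc_\sk=PU_GV_C$, and $\Dec_\sk$ begins with $U_G^\dagger P^\dagger$. By \cref{lem:Pauli_twirling_for_local_noise}, the key-averaged effect of any such $t$-local channel becomes a probabilistic mixture of fixed Paulis $Z^vX^u$ of weight at most $t$. This also removes coherent superpositions of Paulis, where a per-error success probability would not by itself control interference among the failing components. The paper then proves $\Pr_G[\Recover(G,v\oplus Au)=u]\ge1-\negl(n)$ for each fixed such Pauli and concludes by convexity.

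Only the unique-neighbor expansion must hold for all small sets simultaneously, and there your union bound over sets is the right tool. The condition that tames the physical $Z$-part $v$ (your $b$) is that $v$ meets at most $d/16$ neighbors of each vertex. It is needed only for the one fixed $v$, so it costs a union bound over $2n$ vertices, with failure probability $2n(16\mathrm e\,t/n)^{d/16}$. Making this negligible forces $d=\omega(\log n/\log\log n)$, not $\Theta(\log n/\log\log n)$. Reconciling that with $dt\le\gamma n$ is what yields the threshold $t=o(n\log\log n/\log n)$.

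The paper also makes two design choices your sketch lacks. It uses a bipartite graph (the XOR of $d$ random perfect matchings), so that $e_L=Bu_R\oplus v_L$ and $e_R=B^\top u_L\oplus v_R$ decouple. And it uses an iterative bit-flipping decoder whose potential $\weight(Bz)$ drops by more than $d/8$ per flip, because expansion guarantees some good vertex to flip, not a correct one-shot majority decision at every vertex.
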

\else
\begin{theorem}\label{thm:Graph_for_CWS_intro}
    For every $t=o(n_{\phys}\log\log n_{\phys}/\log n_{\phys})$, there exists a pair of QPT algorithms $(\Enc^{(\cdot)},\Dec^{(\cdot)})$ such that $\Exp_{\sk}[\Dec^C(\sk,\cdot)\circ\cE\circ\Enc^C(\sk,\cdot)]$ is negligibly close to the identity map in the diamond distance for any $t$-local error $\cE$ if a classical code $C$ corrects a constant fraction of bit errors.
\end{theorem}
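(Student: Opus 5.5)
We would prove \cref{thm:Graph_for_CWS_intro} with a keyed codeword-stabilized (CWS) construction. The key $\sk$ consists of a randomly sampled graph $G$ on $n_\phys$ vertices, drawn from a distribution $\GraphSample$ that we design, together with any key the classical code $C$ needs. Let $\mathrm{CZ}_G$ denote the product of $\CZ$ gates along the edges of $G$. The encoder maps $\ket{x}$ to $\ket{C(x)}$ coherently, using the efficient classical encoder and uncomputing $x$ with the classical decoder. It then applies a Hadamard layer and $\mathrm{CZ}_G$, producing $\sum_y (-1)^{\langle C(x),y\rangle}\ket{y}$ up to normalization. This is exactly the graph-state picture in which codewords are $Z^{C(x)}\ket{G}$.

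In the CWS framework, a Pauli error $X^aZ^b$ acting on $Z^{c}\ket{G}$ is equivalent, up to phase, to the $Z$-error $Z^{b\oplus \Gamma a}$, where $\Gamma$ is the adjacency matrix. A $t$-local channel is a combination of Paulis supported on a set $S$ with $|S|\le t$, so we may linearize: it suffices to show that every Pauli supported on $S$ is corrected coherently and without learning which Pauli occurred.

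Decoding proceeds in two stages. First, in $\BitDec$, we undo $\mathrm{CZ}_G$ and the Hadamards. This turns the state into $\ket{C(x)\oplus b\oplus\Gamma a}$ carrying a residual phase. Since $|b|\le t$ and $a$ is also supported on $S$, the effective bit error $e=b\oplus\Gamma a$ has weight at most $t+\sum_{v\in S}\deg(v)$. We therefore want $G$ to have degree about $d$ with $td=o(n)$. We then run the classical decoder of $C$ coherently to recover $x$ and compute $e$ into an ancilla.

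Second, in $\PhaseDec$, we must remove the dependence of the residual state on $a$, which is not captured by $e$ alone. From $e$ we must recover $(a,b)$ restricted to $S$ in order to undo the phase $(-1)^{\langle C(x)\oplus\cdots, a\rangle}$. The plan is to choose $G$ so that the map $(a,b)\mapsto b\oplus\Gamma a$ is injective, and efficiently invertible, on pairs supported on any small set. A random sparse graph, for example a random $d$-regular graph with $d=\Theta(\log n/\log\log n)$, should have neighborhoods of different vertices that are nearly disjoint. Then $e$ reveals, for each vertex $v$ with $a_v=1$, most of $N(v)$, and we can read off $a$ by counting the overlap of $e$ with each neighborhood, after which $b=e\oplus\Gamma a$. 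Once $a$ and $b$ are known, we apply the correction $X^aZ^b$ up to a known phase, which we fix using the recovered $C(x)$. We then uncompute $e$ and $x$ so that the output is $\ket{x}$, disentangled from the error. Degenerate ambiguities are harmless because they act trivially on the code.

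The main obstacle is the graph property, which drives the choice of parameters. Reading off $a$ reliably requires that no vertex outside $S$ has more than a small fraction of its neighbors inside the union of the neighborhoods of $S$. It also requires that vertices in $S$ are not masked by $b$ or by overlaps. Together these force $d\gtrsim \log n/\log\log n$ for a union bound over all $\binom{n}{t}$ sets, with failure probability negligible and controlled by tail bounds on neighborhood intersections. Meanwhile $td=o(n)$ must hold so that $C$ still corrects $e$. Combining the two constraints gives exactly $t=o(n\log\log n/\log n)$. We would prove the expansion and unique-neighbor lemma by a first-moment union bound and then take the expectation over $\sk$ to obtain the diamond-norm bound.
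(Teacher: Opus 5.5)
Your skeleton (encode $\ket{x}\mapsto\ket{C(x)}$, apply the graph-state unitary, decode the bit error classically, then recover the phase from $e$) matches the paper's. The step you call the main obstacle, however, cannot be carried out as stated.

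\textbf{The worst-case property you want does not exist.} You ask for a graph on which $(a,b)\mapsto b\oplus\Gamma a$ is injective, up to harmless degeneracies, on pairs supported on every set of size at most $t$, via a union bound over all $\binom{n}{t}$ supports. No graph has this property once $t>d$.

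Take any vertex $w$. The pairs $(\varepsilon_w,0)$ and $(0,\Gamma\varepsilon_w)$ both lie inside $\{w\}\cup N(w)$, a set of size at most $d+1$, and both produce $e=\Gamma\varepsilon_w$. They differ by the graph-state stabilizer $K_w=X_wZ^{N(w)}$. On $Z^{C(x)}\ket{G}$ this acts as $(-1)^{C(x)_w}$, which is a nontrivial logical phase. So this is not a degeneracy that ``acts trivially on the code.''

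Consequently the CWS code has distance at most $d+1=O(n/t)$, since you need $td=o(n)$. For a fixed graph, no decoder can correct all weight-$t$ Paulis. The same example breaks your requirement that, for every $S$, no outside vertex has many neighbors in the union of the neighborhoods of $S$: put into $S$ one further neighbor of each neighbor of $w$. Robustness at this radius is therefore necessarily average-case. Each error must be fixed independently of the key and corrected with high probability over the key.

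\textbf{What the paper adds.} The key is $\sk=(G,P)$ with a uniformly random Pauli $P$, and $\Enc_\sk=PU_GV_C$. By \cref{lem:Pauli_twirling_for_local_noise}, averaging over $P$ turns any key-independent $t$-local channel into a probabilistic mixture of fixed Paulis $X^aZ^b$ of weight at most $t$. This also covers Kraus operators that coherently superpose weight-$\le t$ Paulis with different supports. The paper's definition of local noise allows these, and your reduction to a single support $S$ does not handle them.

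For a fixed error, the sampled bipartite graph (an XOR of $d$ random perfect matchings) needs only two properties.
\begin{itemize}
\item Unique-neighbor expansion for all sets of size at most $2t$. Here a union bound does work: a fixed set of size $s$ has a small neighborhood with probability decaying like $(O(ds)/n)^{ds}$, which beats the number of sets.
\item A local-overlap bound relative to the \emph{fixed} $\Support(b)$, which fails with probability at most $2n(16\mathrm{e}\,t/n)^{d/16}$.
\end{itemize}
Given these, the $X$-part $a$ may be arbitrary. The iterative bit-flipping $\Recover$, whose residual signal weight drops by more than $d/8$ per flip, then returns $a$ exactly.

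The radius $t=o(n\log\log n/\log n)$ comes from needing $d\log(n/t)=\omega(\log n)$ in that tail together with $dt=O(n)$. It does not come from a union bound over supports.

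The one-shot counting decoder you sketch would also be weaker, even for fixed errors. A vertex outside $\Support(a)$ has each neighbor in $N(\Support(a))$ with probability about $dt/n$. False positives then occur with probability about $\exp(-\Theta(d\log(n/(dt))))$ per vertex, and with $dt=o(n)$ this only reaches $t=o(n/\log n)$.
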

\fi

We realize this transformation within the codeword-stabilized (CWS) framework~\cite{CSSZ08}.
A standard approach to constructing QECCs from classical codes uses stabilizer codes, but it requires the underlying classical code to be linear.
CWS provides a way to construct QECCs from a classical code with a graph, and includes stabilizer codes as a special case.
Unlike the stabilizer-code case, CWS also permits nonlinear classical codes, so our transformation can use such codes.

\cref{thm:Graph_for_CWS_intro} resolves the efficient-decoding problem for CWS codes raised by~\cite{LDGP10} in the keyed setting.
The original CWS construction yields an efficient quantum encoder from an efficient classical encoder~\cite{CSSZ08}, but efficient quantum decoding was not known to follow from efficient classical decoding.
Even the best known general decoder for CWS codes~\cite{LDGP10} takes superpolynomial time when correcting errors on $\omega(\log n_\phys)$ qubits.
In our keyed construction, decoding remains efficient for errors on almost linearly many physical qubits.

\ifnum\extendedabstract=0
\subsection{Discussion and Future Work}
We compare classical PRCs and PRFCs with QPRCs for the completely depolarizing channel and PRICs, respectively.
PRCs and PRFCs encode bit strings to be robust to classical noise, whereas QPRCs and PRICs encode quantum states to be robust to quantum noise.
Another point of comparison is the operation from which each encoder must be computationally indistinguishable.
For PRCs, the ideal operation ignores its input and returns a fresh uniformly random bit string on each execution; for QPRCs with complete depolarization, it discards its input and outputs the maximally mixed state.
In contrast, PRFCs and PRICs are compared with a uniformly random function and a Haar-random isometry, respectively, each sampled once and held fixed across executions.
\Cref{tab:ideal-oracles} summarizes these correspondences.

\begin{table}[h]
\centering
\renewcommand{\arraystretch}{1.3}
\begin{tabular*}{0.75\linewidth}{@{\extracolsep{\fill}}lcc@{}}
\hline
Encoded input & Uniform output & Fixed random map \\
\hline
Classical bit strings & PRC~\cite{C:ChrGun24} & PRFC (\cref{thm:PRFC_intro}) \\
Quantum states & QPRC (\cref{thm:QPRC_intro}) & PRIC (\cref{thm:PRIC_intro}) \\
\hline
\end{tabular*}
\caption{What the four types of codes encode and the operations from which their encoders must be computationally indistinguishable. QPRCs in the table are pseudorandom with respect to the completely depolarizing channel.}
\label{tab:ideal-oracles}
\end{table}

We next compare the output lengths and achieved robustness of these four constructions, focusing on bit-flip noise for the classical codes and local quantum noise for the quantum codes.
For the classical codes, $n$ and $m$ denote the input and output bit lengths.
For the quantum codes, $n_\logic$ and $n_\phys$ denote the logical and physical qubit counts.
Recoverability requires $m\ge n$ and $n_\phys\ge n_\logic$, so the linear output lengths are optimal up to constant factors.
Noise that flips a constant fraction of the classical output bits has weight $\Theta(m)$, while the corresponding quantum noise has weight $\Theta(n_\phys)$.
Since noise weight cannot exceed the output length, the linear robustness achieved by PRCs, PRFCs, and QPRCs is optimal in asymptotic order.
\Cref{tab:code-parameters} summarizes these output-length and robustness guarantees.
It remains open whether there exist PRICs robust to $\Theta(n_\phys)$-local quantum noise, even when $n_\phys$ is allowed to grow superlinearly with $n_\logic$.

\begin{table}[h]
\centering
\renewcommand{\arraystretch}{1.3}
\begin{tabular*}{0.84\linewidth}{@{\extracolsep{\fill}}lcc@{}}
\hline
Construction & Output length & Correctable noise \\
\hline
PRC~\cite{C:ChrGun24} & $m=\Theta(n)$ & Noise flipping $\Theta(m)$ bits \\
PRFC (\cref{thm:PRFC_intro}) & $m=\Theta(n)$ & Noise flipping $\Theta(m)$ bits \\
QPRC (\cref{thm:QPRC_intro}) & $n_\phys=\Theta(n_\logic)$ & $\Theta(n_\phys)$-local quantum noise \\
PRIC (\cref{thm:PRIC_intro}) & $n_\phys=\Theta(n_\logic)$ & $o(n_\phys\log\log n_\phys/\log n_\phys)$-local quantum noise \\
\hline
\end{tabular*}
\caption{Achievable output lengths and noise-robustness guarantees. The classical entries concern noise flipping a constant fraction of output bits, and the quantum entries concern local quantum noise. QPRCs in the table are pseudorandom with respect to the completely depolarizing channel.}
\label{tab:code-parameters}
\end{table}

One may hope to extend classical watermarking and steganography applications to the quantum setting, but this is nontrivial due to the difference between classical and quantum randomness.
The watermarking approach~\cite{C:ChrGun24} replaces part of the internal randomness used by a probabilistic algorithm in a language model with PRC codewords.
The randomness intrinsic to quantum computation arises from measurement, whose outcomes cannot in general be freely chosen in the same way as classical random bits.
In addition, the robust steganography construction in~\cite{C:ChrGun24} uses rejection sampling to embed PRC codewords into samples from a cover distribution.
Here, rejection sampling draws samples from the cover distribution and keeps those that satisfy a test determined by the bit to be embedded.
For quantum states, implementing an analogous test involves measurement, and conditioning on acceptance needs postselection in general.
An analysis specific to the quantum setting is therefore needed to establish both efficient implementation and indistinguishability from the cover states.
We leave open even the question of whether corresponding quantum applications are possible.

Quantum analogues of the coding applications also require further work on robustness guarantees beyond those established in this paper.
A central feature of the capacity-achieving codes of Lu, Silbak, and Wichs~\cite{LSW26} is that their robustness holds even after the adversary has obtained encodings of chosen messages under the same key.
In the classical secret-key adaptive-robustness game of~\cite{STOC:AACDG25}, the challenger records each queried message and its returned codeword before the adversary submits a possibly corrupted challenge codeword.
Adaptive robustness requires that no efficient adversary can, except with negligible probability, submit a challenge codeword that is within the allowed Hamming distance of a recorded codeword but does not decode to the corresponding message.
Even formulating a quantum version of this game is nontrivial because no-cloning prevents even a computationally unbounded challenger from keeping copies of arbitrary unknown inputs and outputs without disturbing the adversary's registers.
We leave the formulation and construction of adaptively robust quantum codes for future work.

Beyond extending classical applications, we ask whether QPRCs and PRICs can enable applications specific to quantum settings.
One possible direction is to model quantum information encoding in chaotic systems, including black holes.
Hayden and Preskill~\cite{HaydenPreskill07} modeled black-hole dynamics by Haar-random unitaries and studied the error-correcting properties of the resulting encodings.
Two related lines of work are particularly relevant here.
One studies the error-correcting properties of encodings constructed from random unitaries and develops decoding methods for them~\cite{arXiv:YosKit17,PRX:YosYao19,PRR:NBFG25}.
The other studies the relationship between quantum pseudorandomness and chaotic physical systems~\cite{JHEP:KimTanPre20,EngelhardtFolkestadLevineVerheijdenYang25,SMLBH25}.
Since PRICs combine quantum pseudorandomness with efficient recovery, we hope that they can serve as models of chaotic systems that capture both properties, and we leave the investigation of this possibility for future work.

\subsection{Organization}

The rest of the paper develops the definitions and constructions introduced above.
In \cref{sec:overview}, we give a technical overview of the definitions and constructions, followed by a comparison with related work.
In \cref{sec:preliminaries}, we collect the notation, noise models, quantum error-correcting codes, CWS codes, and cryptographic primitives used later.
In \cref{sec:QPRC_def}, we define QPRCs including PRICs.
In \cref{sec:QPRC_const}, we construct QPRCs for a completely depolarizing channel.
In \cref{sec:PRFC}, we define PRFCs and construct them.
In \cref{sec:PRIC_const}, we construct PRICs.
Then, \cref{sec:better_graph} gives the graph sampler, the efficient phase-recovery algorithm, and the keyed CWS construction from classical codes used to explain the independent coding-theoretic result above.

\fi

\section{Technical Overview}
\label{sec:overview}

In this section, we give a high-level overview of our constructions.
We first construct public-key QPRCs for the completely depolarizing ideal in \cref{subsec:overview-depolarizing}.
We then introduce PRFCs and use them to construct PRICs that correct only local Pauli $X$ errors in \cref{subsec:overview-bit-errors}.
Finally, we use CWS codes to lift this construction to PRICs that correct general local quantum noise in \cref{subsec:overview-general-errors}.

Before these constructions, we recall the definition of post-quantum PRCs.
A PRC is a tuple of classical polynomial-time algorithms $(\KeyGen,\Enc,\Dec)$ that encodes $n$-bit messages as $m$-bit strings.\footnote{The formal definition also includes soundness, which we omit from this overview because the constructions below rely on robustness and pseudorandomness. For the formal definition, see \cref{def:PKPRC,def:SKPRC}.}
In the public-key setting, $\KeyGen(1^\secp)$ outputs $(\sk,\pk)$, $\Enc(\pk,x)$ outputs an $m$-bit codeword $c$, and $\Dec(\sk,c)$ returns an $n$-bit message or $\bot$.
In the secret-key setting, both $\Enc$ and $\Dec$ use $\sk$.
Robustness requires recovery of each fixed message after the specified classical noise.
Throughout this overview, we consider classical noise that flips a constant fraction of the $m$ bits.
Pseudorandomness requires that, for any QPT adversary $\cA$ making only classical queries,
\begin{align}
\Bigl|\Pr_{(\sk,\pk)\gets\KeyGen(1^\secp)}[1\gets\cA^{\Enc(\pk,(\cdot))}(1^\secp,\pk)]
-\Pr_{(\sk,\pk)\gets\KeyGen(1^\secp)}[1\gets\cA^{\cU}(1^\secp,\pk)]\Bigr|
\le\negl(\secp),
\end{align}
where $\Enc(\pk,(\cdot))$ uses its internal randomness for each query, and $\cU$ is an oracle that ignores its input and returns a fresh uniform $m$-bit string on every query.
For secret-key PRCs, robustness uses $\Enc(\sk,x)$ in place of $\Enc(\pk,x)$, and the same pseudorandomness is required against any QPT adversaries without $\sk$ and making only classical queries to $\Enc(\sk,(\cdot))$ or $\cU$.

All our constructions use post-quantum PRCs as a black box, which exist under the LPN assumption against $2^{O(\sqrt{n})}$-time quantum adversaries.\footnote{For post-quantum PRCs, there is another assumption that combines the hardness of planted XOR and LPN against QPT adversaries. For details, see \cref{assumption:for_PRC,sec:post_quantum_CG24}.}
\cite{C:ChrGun24} gives public-key and secret-key PRCs under the hardness of LPN against subexponential-time classical adversaries.
Under the corresponding quantum hardness of LPN, the same construction and proof yield post-quantum public-key and secret-key PRCs because we consider QPT adversaries making only classical oracle queries.
We give the details in \cref{sec:post_quantum_CG24} for completeness.
In what follows, we omit the qualifier ``post-quantum'' when referring to these PRCs.

\subsection{QPRCs for the Completely Depolarizing Channel}
\label{subsec:overview-depolarizing}
We first construct public-key QPRCs whose encodings are pseudorandom with respect to the completely depolarizing channel.
Here, the completely depolarizing channel outputs the $n_\phys$-qubit maximally mixed state for any $n_\logic$-qubit input state.

The construction combines QECCs and PRCs by using a quantum one-time pad.
This is similar to the standard approach of encrypting a quantum state~\cite{C:BroJef15}.
Fix an error bound $t$.
Let $(\QECC.\Enc,\QECC.\Dec)$ be a QECC that corrects $t$-local quantum noise, and let $(\PRC.\KeyGen,\PRC.\Enc,\PRC.\Dec)$ be a public-key PRC that corrects up to $t$ bit errors; the QPRC uses the PRC key pair.
The QPRC encoder samples fresh Pauli keys $x,z$, computes $c_{x\|z}\gets\PRC.\Enc(\pk_\PRC,x\|z)$, and applies the following map:
\begin{align}
\ket{\psi}_\regA\longmapsto (X^xZ^z)_\regB\QECC.\Enc_{\regA\to\regB}\ket{\psi}_\regA\otimes\ket{c_{x\|z}}_\regC.
\label{eq:QPRC-encoding-overview}
\end{align}
Here $x\|z$ denotes concatenation.
The security of PRCs, together with the argument of~\cite{C:BroJef15}, shows that this channel is indistinguishable from the completely depolarizing channel.

The QPRC decoder measures $\regC$, recovers $x\|z$ with the PRC decoder, removes the Pauli pad from $\regB$, and applies the QECC decoder.
Robustness to local quantum noise follows from protecting both the quantum state and the Pauli key against local errors.
It suffices to establish robustness against local Pauli errors, since this implies robustness against general local quantum noise.
Let $P_{\regB\regC}=Q_\regB\otimes Q'_\regC$ be a Pauli error of weight at most $t$ acting on the encoded registers $\regB\regC$.
After measuring $\regC$ in the computational basis, the phase component of $Q'$ has no effect on the outcome, while its bit-flip component changes at most $t$ bits.
Since the PRC corrects up to $t$ bit errors, its decoder recovers $x\|z$.
Removing the one-time pad $X^xZ^z$ leaves a $t$-local error on $\regB$, which the QECC corrects.
Thus the key-averaged recovery channel is negligibly close to the identity.

\subsection{Warm-Up: Construction of PRICs Robust to Local Pauli $X$ Errors via PRFCs}
\label{subsec:overview-bit-errors}
Our next goal is to construct PRICs from PRCs.
The pseudorandomness of PRCs, however, is not suitable on its own for PRICs.
For each fixed secret key, a PRC encoder uses fresh internal randomness on each execution, whereas a PRIC encoder is a fixed isometry.

To address this mismatch, we introduce PRFCs.
Informally, PRFCs are PRFs that are robust to noise.\footnote{Since PRFCs are classical primitives, one could also consider classically secure PRFCs.
However, for our purposes, we need post-quantum security, and therefore we consider only post-quantum secure PRFCs.
Throughout this paper, we omit the term post-quantum.}
A PRFC consists of polynomial-time classical algorithms $(\KeyGen,\Enc,\Dec)$, where $\KeyGen$ outputs a secret key $\sk$, $\Enc(\sk,\cdot)$ is deterministic and maps $x$ to an $m$-bit codeword, and $\Dec(\sk,\cdot)$ decodes noisy codewords.
As for PRCs, robustness requires recovery of each fixed input after the specified noise.
Pseudorandomness requires that every QPT adversary, even with coherent oracle queries, cannot distinguish oracle access to $\Enc(\sk,\cdot)$ from oracle access to a uniformly random function $f:\bit^{n(\secp)}\to\bit^{m(\secp)}$.

\paragraph{PRFCs from secret-key PRCs.}
We first construct PRFCs from secret-key PRCs.
Let $(\PRC.\KeyGen,\allowbreak\PRC.\Enc,\allowbreak\PRC.\Dec)$ be a secret-key PRC.
Let $\ell_{\PRC}(\secp)$ be the number of internal random bits used by $\PRC.\Enc$.
Let $\{\pi_k\}_{k\in\bit^\secp}$ be a PRP over $\bit^{n}$.
Let $\{f_{k'}\}_{k'\in\bit^\secp}$ be a PRF from $\bit^{n}$ to $\bit^{\ell_{\PRC}}$.
The PRFC secret key consists of a PRC secret key $\sk_\PRC$, a PRP key $k$, a PRF key $k'$, and a uniformly random string $r\in\bit^{\ell_\PRC}$ of the same length as the internal randomness of $\PRC.\Enc$.
On input $x$, the PRFC encoder is defined by
\begin{align}
    \Enc(\sk,x)\coloneqq
    \PRC.\Enc(\sk_\PRC,\pi_k(x);f_{k'}(x)\oplus r).
\end{align}
Here, the notation $\PRC.\Enc(\sk_\PRC,y;s)$ is the output of $\PRC.\Enc(\sk_\PRC,y)$ when using $s$ as its internal randomness.
Robustness follows from that of the underlying PRC.

Let us explain why this construction satisfies pseudorandomness.
First, replace the PRP $\pi_k$ and the PRF $f_{k'}$ by a uniformly random permutation $\pi$ and a uniformly random function $f$.
Then the encoder behaves like
\begin{align}
    x
    \longmapsto
    \PRC.\Enc(\sk_\PRC,\pi(x);f(x)\oplus r).
\end{align}
For each $x$, $\pi(x)$ and $f(x)\oplus r$ are independent uniformly random bit strings.
Thus, pseudorandomness of the underlying PRC shows that this oracle is indistinguishable from a uniformly random function.

\paragraph{How to use PRFCs for constructing PRICs.}
Next we construct PRICs robust only to local Pauli $X$ errors with PRFCs.
For our construction, we use a PRFC, a PRF, and a PRP; the latter two can be constructed from the PRFC because its encoder is a PRF and PRFs imply PRPs~\cite{Quantum:Zhandry25}.
Let $n_\logic$ be the number of input qubits, and $\ell,r,$ and $m$ be polynomials.
The number of output qubits is $n_\phys\coloneqq n_\logic+\ell+r+m$.
We construct a PRIC $(\KeyGen^X,\Enc^X,\Dec^X)$ as follows.
Let $\PRFC$ be a PRFC whose encoder, for each secret key $\sk'$, maps $\bit^{n_\logic+\ell}$ to $\bit^m$.
Define $g_{\sk'}(\cdot)\coloneqq \PRFC.\Enc(\sk',\cdot)$ for each PRFC secret key $\sk'$.
Let $\{f_k\}_{k\in\bit^\secp}$ be a PRF from $\bit^{n_\logic+\ell}$ to $\bit$.
Let $\{\pi_{k'}\}_{k'\in\bit^\secp}$ be a PRP over $\bit^{n_\logic+\ell+r}$.

The algorithm $\KeyGen^X$ samples a secret key as follows: on input $1^\secp$, it samples $\sk'\gets \PRFC.\KeyGen(1^\secp)$, $k\gets\bit^\secp$, and $k'\gets\bit^\secp$, and outputs $\sk\coloneqq(\sk',k,k')$.
For each $\sk$, $\Enc^X(\sk,(\cdot))$ implements the isometry 
\begin{align}
    \Enc^X_\sk
    :
    \ket{x}_\regA
    \mapsto
    \frac{1}{\sqrt{2^\ell}}
    \sum_{y\in\bit^\ell}
    (-1)^{f_k(x\|y)}
    \ket{\pi_{k'}(x\|y\|0^r)}_\regB
    \ket{g_{\sk'}(x\|y)}_\regC .
\end{align}
Here, $\regB$ stores the permuted padded string, while $\regC$ stores the PRFC encoding of $x\|y$.

Before constructing the decoder, we explain why this satisfies pseudorandomness by the hybrid argument.
In the first hybrid, we replace pseudorandom objects with ideal random objects.
The resulting isometry is
\begin{align}
    V_1:\ket{x}
    \mapsto
    \frac{1}{\sqrt{2^\ell}}
    \sum_{y\in\bit^\ell}
    (-1)^{f(x\|y)}
    \ket{\pi(x\|y\|0^r)}
    \ket{g(x\|y)},
    \label{OveriewEq:H1}
\end{align}
where $f$, $g$, and $\pi$ are the corresponding ideal random objects.
In the next hybrid, we replace the concatenated function $z\mapsto \pi(z\|0^r)\| g(z)$ with $z\mapsto \pi'(z\|0^{r+m})$, where $\pi'$ is a uniformly random permutation over $n_\phys=n_\logic+\ell+r+m$ bits.
The resulting isometry is
\begin{align}
    V_2:\ket{x}\mapsto
    \frac{1}{\sqrt{2^\ell}}
    \sum_{y\in\bit^\ell}
    (-1)^{f(x\|y)}\ket{\pi'(x\|y\|0^{r+m})}_{\regB\regC}.
    \label{OveriewEq:H2}
\end{align}
The lemma below with $n=n_\logic+\ell$ shows that the output maps in the two hybrids are statistically indistinguishable when $r\ge n+\omega(\log\secp)$.
Because the proof of the lemma is straightforward, we omit it here.

\begin{lemma}[\cref{lem:pi||f_vs_pi'}, informal]
    Define $\cD$ and $\cD'$ to be the following distributions over injective functions $\bit^n\to\bit^{n+r+m}$.
    \begin{itemize}
        \item $\cD:$
        Choose a uniformly random function $g:\bit^n\to\bit^m$ and a uniformly random permutation $\pi:\bit^{n+r}\to\bit^{n+r}$.
        Define $h:\bit^n\to\bit^{n+r+m}$ by $h(z)\coloneqq \pi(z\|0^r)\|g(z)$.
        Output $h$.

        \item $\cD':$
        Choose a uniformly random permutation $\pi':\bit^{n+r+m}\to\bit^{n+r+m}$.
        Define $h:\bit^n\to\bit^{n+r+m}$ by $h(z)\coloneqq \pi'(z\|0^{r+m})$.
        Output $h$.
    \end{itemize}
    If $r\ge n+\omega(\log\secp)$, then the statistical distance between $\cD$ and $\cD'$ is at most $\negl(\secp)$.
\end{lemma}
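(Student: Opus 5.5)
Both $\cD$ and $\cD'$ output functions $h:\bit^n\to\bit^{n+r+m}$, and a function is determined by its table $(h(z))_{z\in\bit^n}$. So I will compare the two distributions of tables. The plan is to show that each is negligibly close to the same reference distribution $\cR$, namely a uniformly random injective table: $2^n$ distinct strings in $\bit^{n+r+m}$, sampled uniformly without replacement. Set $N=2^n$.

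\emph{Step 1: $\cD'$ equals $\cR$.} The inputs $z\|0^{r+m}$ are $N$ distinct points. A uniformly random permutation $\pi'$ maps any fixed set of distinct points to a uniformly random sequence of distinct outputs. Hence the table of $\cD'$ is exactly distributed as $\cR$.

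\emph{Step 2: describe $\cD$ and compare it with a product.}
\begin{itemize}
\item By the same reasoning, $(\pi(z\|0^r))_z$ is a uniformly random injective sequence in $\bit^{n+r}$.
\item $(g(z))_z$ is i.i.d.\ uniform in $\bit^m$ and independent of the first sequence.
\item Couple $\pi$ with a uniformly random function $u:\bit^n\to\bit^{n+r}$. By the standard birthday bound, uniform-without-replacement and uniform-with-replacement sequences of length $N$ over a set of size $2^{n+r}$ are within statistical distance $\binom{N}{2}2^{-(n+r)}\le 2^{2n}/2^{n+r+1}=2^{n-r-1}$.
\item Therefore the table of $\cD$ is within $2^{n-r-1}$ of the table $(u(z)\|g(z))_z$, which is i.i.d.\ uniform over $\bit^{n+r+m}$.
\end{itemize}

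\emph{Step 3: compare the i.i.d.\ table with $\cR$.} Applying the birthday bound again over the larger set gives distance at most $2^{2n}/2^{n+r+m+1}\le 2^{n-r-1}$.

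\emph{Step 4: conclude.} By the triangle inequality,
\[
\mathrm{SD}(\cD,\cD')\le 2\cdot 2^{n-r-1}=2^{n-r}.
\]
Since $r\ge n+\omega(\log\secp)$, this is $2^{-\omega(\log\secp)}=\negl(\secp)$.

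The only real content is the birthday-bound lemma: the statistical distance between uniform sampling with and without replacement of $q$ elements from a set of size $M$ is at most $\Pr[\text{collision}]\le q^2/(2M)$. To prove it, condition the with-replacement sample on having no collision; the result is exactly the without-replacement distribution. The distance is then bounded by the collision probability. I do not expect any real obstacle. The one thing to be careful about is that $N=2^n$ is exponential, so the bound must be applied with $q=2^n$, and this is precisely why the hypothesis $r\ge n+\omega(\log\secp)$ is needed.
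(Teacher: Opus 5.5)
Your proof is correct, but it is organized differently from the paper's. The paper never introduces the i.i.d.\ table. It defines the event $E$ that the first $n+r$ bits $U_h(z)$ of $h(z)$ are pairwise distinct. It then shows by direct counting that $\cD'$ conditioned on $E$ is \emph{exactly} $\cD$: a uniformly random injective table over $\bit^{n+r+m}$, conditioned on its $\bit^{n+r}$-parts being distinct, is a uniformly random injective $\bit^{n+r}$-table together with an independent uniform $\bit^m$-table. It concludes with $\mathrm{SD}(\cD,\cD')=1-\Pr_{\cD'}[E]\le 2^{n-r-1}$.

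You instead route both distributions through the i.i.d.\ uniform table and apply the with/without-replacement birthday lemma twice. The first application is over $\bit^{n+r}$, after noting that appending the independent $g$-table cannot increase distance; the second is over $\bit^{n+r+m}$. Your version uses only the standard birthday lemma as a black box and avoids the paper's counting claims. The cost is a factor of $2$: you get $2^{n-r}$ rather than the $2^{n-r-1}$ in the paper's formal lemma. The paper's version is a single conditioning step that yields the sharper constant and an exact structural identity. For the informal statement, which only asks for negligibility when $r\ge n+\omega(\log\secp)$, either bound suffices.
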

Finally, applying the pseudorandomness of PF isometries in~\cite{FOCS:MPSY24}, when $\ell\ge\omega(\log\secp)$, the isometry in \cref{OveriewEq:H2} is indistinguishable from a Haar-random isometry by any QPT adversary.
Therefore, by putting everything together, $\Enc^X_\sk$ is computationally indistinguishable from a Haar-random isometry.

Next we construct $\Dec^X$ and show robustness to local Pauli $X$ errors.
Let $\ket{\psi}=\sum_x\alpha_x\ket{x}$ be an input state, and suppose that a Pauli $X$ error $X^v$ acts on $\Enc_\sk^X\ket{\psi}$.
Write $v=v_0\|v_1$ according to the registers $\regB$ and $\regC$.
To describe the decoding step simply, first suppose that the PRFC decoder recovers $x\|y$ for all $x$ and $y$.\footnote{The actual proof also accounts for components of the superposition on which PRFC decoding fails.}
The decoder applies the isometry $\ket{c}_\regC\mapsto\ket{c}_\regC\ket{\PRFC.\Dec(\sk',c)}_\regD$, yielding
\begin{align}
    X^v\Enc_\sk^X\ket{\psi}
    &\propto\sum_{x,y}\alpha_x
    (-1)^{f_k(x\|y)}
    \ket{\pi_{k'}(x\|y\|0^r)\oplus v_0}_\regB
    \ket{g_{\sk'}(x\|y)\oplus v_1}_\regC
    \notag\\
    &\mapsto\sum_{x,y}\alpha_x
    (-1)^{f_k(x\|y)}
    \ket{\pi_{k'}(x\|y\|0^r)\oplus v_0}_\regB
    \ket{g_{\sk'}(x\|y)\oplus v_1}_\regC
    \ket{x\|y}_\regD.\label{OverviewEq:PRFCdecode}
\end{align}
Write $\regD=\regA\regE$, where $\regA$ contains the first $n_\logic$ qubits and $\regE$ contains the remaining $\ell$ qubits.
Next, controlled on $\regD$, $\Dec^X_\sk$ uncomputes $\pi_{k'}(x\|y\|0^r)$ and $g_{\sk'}(x\|y)$ from $\regB\regC$ and cancels the phase $(-1)^{f_k(x\|y)}$:
\begin{align}
    \big(\text{\cref{OverviewEq:PRFCdecode}}\big)
    &\mapsto\sum_{x,y}\alpha_x
    \ket{v_0}_\regB
    \ket{v_1}_\regC
    \ket{x\|y}_\regD
    =\ket{v}_{\regB\regC}\otimes
    \left(\sum_x\alpha_x\ket{x}_{\regA}\right)
    \otimes
    \left(\sum_{y\in\bit^\ell}\ket{y}_{\regE}\right).
\end{align}
The decoder outputs $\regA$, whose state is $\ket{\psi}$ in this ideal case.

\subsection{Lifting Robustness to General Local Errors}
\label{subsec:overview-general-errors}

The construction in the previous subsection is robust to local Pauli $X$ errors, but not yet to general Pauli errors.
For a Pauli error $Z^vX^u$ on the registers $\regB\regC$, the $X$-part can be corrected by the PRFC decoder, whereas the $Z$-part introduces the phase $(-1)^{v\cdot\gamma_\sk(x\|y)}$, where $\gamma_\sk(x\|y)=\pi_{k'}(x\|y\|0^r)\|g_{\sk'}(x\|y)$.
After applying $\Dec^X$ without knowing $v$, the state becomes, up to a global phase,
\begin{align}
    (Z^vX^u\cdot\Enc_\sk^X\ket{\psi})_{\regB\regC}
    \mapsto
    \ket{u}_{\regB\regC}
    \otimes
    \left(
        \frac{1}{\sqrt{2^\ell}}
        \sum_{x,y}
        \alpha_x
        (-1)^{v\cdot\gamma_\sk(x\|y)}
        \ket{x}_\regA
        \ket{y}_\regE
    \right),
\end{align}
so the original logical state is not recovered in general.

\paragraph{Codeword-stabilized framework.}
To make the phase recoverable, we add an encoding layer based on the CWS framework~\cite{CSSZ08}.
The CWS framework first maps computational-basis states to codewords of a classical code and then applies a graph-basis unitary.
Let $G$ be a simple graph on $n_\phys$ vertices, and let $A\in\bit^{n_\phys\times n_\phys}$ be its adjacency matrix, where $A_{i,j}=1$ if vertices $i$ and $j$ are adjacent and $A_{i,j}=0$ otherwise.
Define $U_G\coloneqq\left(\prod_{\{i,j\}\in E}\CZ_{ij}\right)H^{\otimes n_\phys}$.
Using a classical code with codeword $c_x$ and $U_G$, the encoding maps
\begin{align}
    \sum_x\alpha_x\ket{x}
    \mapsto
    \sum_x\alpha_x\ket{c_x}
    \mapsto
    U_G\bigg(\sum_x\alpha_x\ket{c_x}\bigg).
\end{align}
For a Pauli error $Z^vX^u$, the CWS identity gives, up to a global phase,
\begin{align}
    U_G^\dagger Z^vX^u U_G
    \propto X^{v\oplus Au}Z^u.
\end{align}
Thus, the Pauli error induces a bit error $e=v\oplus Au$, which depends on both $u$ and $v$.

However, no known graph family comes with an efficient algorithm that recovers $u$ from the induced error $e=v\oplus Au$ in the error range we need.
We therefore construct a graph sampler together with a recovery algorithm that recovers $u$ from $e$ with overwhelming probability over the sampled graph for each fixed low-weight error.
\begin{theorem}[\cref{thm:GraphSample_and_Recover}, informal]\label{OverviewThm:GraphSample_and_Recover}
    Fix a constant $p>0$ and a function $t:\N\to\N$ satisfying $t(n_\phys)=o(n_\phys\log\log n_\phys/\log n_\phys)$.
    There are a PPT graph-sampling algorithm $\GraphSample$ and a deterministic polynomial-time algorithm $\Recover$ such that, for every sufficiently large even $n_\phys$ and every fixed $u,v\in\bit^{n_\phys}$ with $|\Support(u)\cup\Support(v)|\le t(n_\phys)$, every graph $G$ output by $\GraphSample(1^{n_\phys/2})$ has $n_\phys$ vertices and satisfies $\weight(v\oplus Au)\le p n_\phys$, and
    \begin{align*}
        \Pr_{G\gets\GraphSample(1^{n_\phys/2})}[\Recover(G,v\oplus Au)=u]\ge 1-\negl(n_\phys),
    \end{align*}
    where $A$ is the adjacency matrix of $G$.
\end{theorem}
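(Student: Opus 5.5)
The plan is to take $\GraphSample(1^{n_\phys/2})$ to output a random sparse bipartite graph, and to let $\Recover$ decode each $u_i$ by a threshold vote over the neighbourhood of $i$. Write $n=n_\phys$ and let the vertex set be $L\sqcup R$ with $|L|=|R|=n/2$. This explains the input $1^{n/2}$.

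\textbf{Sampler.} Fix a degree $d=d(n)$. The sampler picks, for every vertex, $d$ uniformly random neighbours in the opposite half. It then deletes edges so that every vertex has degree at most $2d$. To keep the degree bound deterministic, it rejects and resamples any vertex of excessive degree, or more simply uses a union of $d$ uniformly random perfect matchings between $L$ and $R$.

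\textbf{Weight bound.} With maximum degree $\Delta\le 2d$, every output graph satisfies
\begin{align*}
\weight(v\oplus Au)\le \weight(v)+\Delta\,\weight(u)\le t(2d+1).
\end{align*}
Write $t=\varepsilon(n)\, n\log\log n/\log n$ with $\varepsilon(n)\to 0$. I will choose
\begin{align*}
d=h(n)\log n/\log\log n,\qquad h\to\infty,\qquad h\le p/(3\varepsilon).
\end{align*}
This choice gives $t(2d+1)\le pn$ for large $n$, so the deterministic requirement $\weight(v\oplus Au)\le pn$ holds for every sampled $G$.

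\textbf{Recover.} Let $e=v\oplus Au$. For each vertex $i$, count $c_i=|\{j\in N(i): e_j=1\}|$, and output $\hat u_i=1$ iff $c_i\ge \deg(i)/2$.

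Fix $u,v$ and set $S=\Support(u)\cup\Support(v)$, so $|S|\le t$. Each neighbour $j$ of $i$ has $e_j=v_j\oplus\sum_{k\in N(j)}u_k$. Hence $e_j=u_i$ unless $j\in\Support(v)$, or $j\in N(k)$ for some $k\in\Support(u)\setminus\{i\}$. So $\hat u_i=u_i$ whenever the number of such bad neighbours is below $\deg(i)/2$.

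The bad count splits into two parts.
\begin{itemize}
\item[(a)] The first part is $|N(i)\cap\Support(v)|$. Here $\Support(v)$ is fixed before the graph is sampled, and each neighbour lands in it with probability at most $2t/n$. So the probability of at least $d/4$ hits is at most $\binom{2d}{d/4}(2t/n)^{d/4}\le (O(t/n))^{d/4}$. Since $t/n=o(\log\log n/\log n)$, this is $\exp(-\Omega(d\log\log n))=n^{-\omega(1)}$ by the choice $d=\omega(\log n/\log\log n)$.
\item[(b)] The second part is $\sum_{k\in\Support(u)\setminus\{i\}}|N(i)\cap N(k)|$, the common-neighbour count. For fixed $i,k$ we have $\Pr[|N(i)\cap N(k)|\ge s]\le (O(d^2/n))^s$. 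The total has mean $O(td^2/n)=O(pd)$, so I will tune the constant in $h\le p/(3\varepsilon)$ to keep this below $d/8$. Then I will show that the sum exceeds $d/4$ only with probability $n^{-\omega(1)}$.
\end{itemize}
A union bound over the $n$ vertices finishes the argument.

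The main obstacle is (b). The events for different $k$ are dependent, since they share the random neighbourhood of $i$. The sum is also not a sum of independent indicators. I would handle this by conditioning on $N(i)$. Then the sets $N(k)$ for $k\neq i$ are essentially independent; with the matching model the dependence is only through mild negative correlation. Each neighbour $j\in N(i)$ is hit by the union $\bigcup_k N(k)$ with probability at most $2td/n$, which is at most a small constant. Applying a Chernoff bound for negatively associated variables, or a direct counting bound over which $d/4$ neighbours are hit, should give a tail of $2^{-\Omega(d)}$. However, $2^{-\Omega(d)}$ is negligible only when $d=\omega(\log n)$, which forces $t=o(n/\log n)$.

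To reach the stated range $t=o(n\log\log n/\log n)$, I would exploit the fact that the hits come in clusters. A neighbour $j$ hit by some $N(k)$ means that $k$ and $i$ share $j$. Having $d/4$ hits from few distinct $k$'s forces some pair $(i,k)$ to have $\Omega(d/|\{k\}|)$ common neighbours, an event of probability $(d^2/n)^{\Omega(\cdot)}$. Having them from many distinct $k$'s costs a factor $t/n$ per $k$. Balancing these two regimes is where I expect the $\log\log n$ gain, and it is the step I am least sure of.
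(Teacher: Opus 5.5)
Your proposal has a genuine gap. It is the step you flag in (b): a one-shot threshold vote at each vertex cannot reach the radius $t=o(n\log\log n/\log n)$, and the clustering idea does not rescue it. A neighbour $j$ of $i$ is also adjacent to some other vertex of $\Support(u)$ with probability $\Theta(dt/n)$. The weight bound needs $dt=O(pn)$, so this probability is at best a small constant. For $t=\omega(n/\log n)$, every admissible choice then has $d=O(n/t)=o(\log n)$. Your cost accounting for the many-distinct-$k$ regime is also off. A fixed $k$ shares a neighbour with $i$ with probability about $d^2/n$, not $t/n$. With $\binom{t}{s}$ choices of $k$'s, the probability that $s=\Theta(d)$ distinct ones do so is about $(\mathrm{e}\,td^2/(sn))^{s}=(\Theta(dt/n))^{\Theta(d)}=2^{-\Theta(d)}$. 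This is not an artifact of the union bound. For $i\notin\Support(u)$, the number of neighbours with $e_j=1$ is essentially binomial with constant success probability. Take, e.g., $t=n\sqrt{\log\log n}/\log n$: each such vertex is misdecoded with probability $n^{-o(1)}$, whatever degree and fixed threshold fraction you pick, so about $n^{1-o(1)}$ vertices are misdecoded in expectation. So the majority decoder itself fails in the upper part of the range, and your argument only yields the weaker radius $t=o(n/\log n)$.

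The missing idea, which is how the paper proceeds, is to stop controlling the self-interference of $u$ vertex by vertex. Instead, decode iteratively in Sipser--Spielman style on the bipartite split $e_L=Bu_R\oplus v_L$, $e_R=B^\top u_L\oplus v_R$, where $B$ is the XOR of $d$ random perfect matchings. Repeatedly flip a column whose score against the current residual $r=Bz\oplus v_L$ exceeds $5d/8$. Correctness needs only three properties: the degree bound; unique-neighbour expansion $|\Gamma(S)|\ge\frac34 d|S|$ for every nonempty $S$ of size at most $2t$ on either side; and a local-overlap bound involving $v$ alone, namely that every vertex has at most $d/16$ neighbours in $\Support(v)$. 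Then each flip lowers the potential $\weight(Bz)$ by more than $d/8$, so there are fewer than $8t$ flips. Expansion keeps $\weight(z)<2t$, since otherwise $\weight(Bz)\ge\frac32dt>dt\ge\weight(Bu_R)$. A flippable column exists while $z\neq0$, and none exists once $z=0$. Wrong intermediate flips are harmless, because the potential argument forces them to be undone.

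Probabilistically, expansion fails with probability at most $4t(d/n)^{d/16}$. This bound is dominated by singleton sets and needs only $d=\omega(1)$. The only per-vertex tail event is the overlap with $v$, whose per-edge probability is $t/n$ rather than $dt/n$. It fails with probability at most $2n(16\mathrm{e}\,t/n)^{d/16}$, which is negligible once $d=\omega(\log n/\log\log n)$ because $\log(n/t)=\Omega(\log\log n)$. That is where the $\log\log n$ gain comes from, and your part (a) is exactly this local-overlap estimate.
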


\paragraph{Construction of PRICs robust to general local errors.}
Before explaining the proof of \cref{OverviewThm:GraphSample_and_Recover}, we use its algorithms to lift the PRIC $(\KeyGen^X,\Enc^X,\Dec^X)$ to a PRIC $(\KeyGen,\Enc,\Dec)$ robust to general local noise.
On input $1^\secp$, $\KeyGen$ samples $\sk_{\mathrm{in}}\gets\KeyGen^X(1^\secp)$, an independent uniformly random $n_\phys$-qubit Pauli $P$, and $G\gets\GraphSample(1^{n_\phys/2})$, and outputs $\sk\coloneqq(\sk_{\mathrm{in}},G,P)$.
The encoder implements $\Enc_\sk\coloneqq PU_G\cdot\Enc^X_{\sk_{\mathrm{in}}}$.
Because $G$ and $P$ are independent of $\sk_{\mathrm{in}}$, pseudorandomness follows from that of $\Enc^X$ and the left invariance of the Haar measure.

To decode, $\Dec_\sk$ first applies $P^\dagger$, undoing the random Pauli used by the encoder.
Averaging over the independent random $P$ twirls any key-independent $w$-local noise into a mixture of fixed $w$-local Pauli errors.
For $w=o(n_\phys\log\log n_\phys/\log n_\phys)$, it therefore suffices to analyze a fixed Pauli error $Z^vX^u$ of weight at most $w$.
The decoder next applies $U_G^\dagger$.
For this fixed error, the conjugation produces, up to a global phase, the bit error $X^e$ and phase error $Z^u$, where $e=v\oplus Au$.
Following the decoding procedure beginning with \cref{OverviewEq:PRFCdecode}, we obtain
\begin{align}
    \ket{e}_{\regB\regC}\otimes\left(\frac{1}{\sqrt{2^\ell}}\sum_{x,y}\alpha_x(-1)^{u\cdot\gamma_{\sk_{\mathrm{in}}}(x\|y)}\ket{x}_{\regA}\ket{y}_{\regE}\right).
\end{align}
The decoder measures $\regB\regC$ to obtain $e$, runs $\Recover(G,e)$ to recover $u$, and, when recovery succeeds, uses $u$ and $\sk_{\mathrm{in}}$ to cancel the phase.
The resulting state is
\begin{align}
    \left(\sum_x\alpha_x\ket{x}_{\regA}\right)\otimes\left(\frac{1}{\sqrt{2^\ell}}\sum_y\ket{y}_{\regE}\right).
\end{align}
For every such fixed error, the parameter choice puts $e$ within the PRFC correction radius for every sampled graph, while \cref{OverviewThm:GraphSample_and_Recover} recovers $u$ with overwhelming probability over $G$.
The decoder therefore outputs the original state from $\regA$, establishing robustness against the original $w$-local noise.

\paragraph{Efficient phase recovery.}
It remains to explain why \cref{OverviewThm:GraphSample_and_Recover} holds by constructing $\GraphSample$ and $\Recover$.
To recover $u$ from $e=v\oplus Au$, $\GraphSample$ samples a graph $G$ with the following properties:
\begin{itemize}
    \item Unique-neighbor condition: Errors on a small set of vertices remain visible at many neighboring vertices.
    \item Local-overlap condition: The unknown $v$ corrupts only a few neighbors of any vertex.
\end{itemize}

We now construct $\Recover$ to obtain $u$ from $e$.
Starting from $\widehat u=0$, it forms the residual $r=e\oplus A\widehat u$ and flips a vertex when sufficiently many of its neighbors have a $1$ in $r$.
Let $z=u\oplus\widehat u$, so $r=v\oplus Az$.
While the estimate is incorrect, the unique-neighbor condition makes some incorrectly estimated vertex visible in $Az$, and local overlap ensures that it remains visible in $r$.
When $\widehat u=u$, we have $r=v$, and local overlap ensures that no candidate remains, so $\Recover$ stops.

Finally, $\GraphSample$ combines independent random perfect matchings between two equally sized vertex sets by edge parity.
Every resulting graph satisfies the degree bound, and for each fixed $v$, it satisfies the unique-neighbor and local-overlap conditions with overwhelming probability.
Therefore, $\GraphSample$ and $\Recover$ satisfy the guarantees in \cref{OverviewThm:GraphSample_and_Recover}.

\subsection{Related Work}

\paragraph{Classical pseudorandom codes.}
Classical pseudorandom codes were introduced by Christ and Gunn~\cite{C:ChrGun24}.
Subsequent works considered stronger robustness notions, including ideal pseudorandom codes~\cite{STOC:AACDG25} and improved constructions from permuted puzzles~\cite{EPRINT:CGGMW25}.
Our QPRCs and PRICs are quantum analogues of the basic robustness notion of Christ and Gunn; quantum analogues of these stronger robustness notions are beyond the scope of this work.

\paragraph{Quantum pseudorandomness.}
Quantum pseudorandomness has been studied for states, unitaries, and isometries~\cite{C:JiLiuSon18,EC:AGKL24,FOCS:MPSY24,STOC:MaHua25}.
QPRCs for the completely depolarizing channel use a different comparison distribution from these primitives.
In contrast, if one ignores the decoder, a PRIC is a PRI. 
Thus PRICs imply PRIs, and therefore imply PRSs.
PRICs and PRUs impose different requirements: PRUs are compared with Haar-random unitaries, whereas PRICs are compared with Haar-random isometries and must also correct noise.

\paragraph{Haar-random isometries and random quantum codes.}
Random quantum codes based on Haar-random subspaces or encodings have been used in random-coding and decoupling arguments for quantum error correction and entanglement transmission~\cite{HSW08,ADHW09}.
More recently, Ma, Tan, and Wright showed that Haar-random quantum codes approximately correct Pauli error sets up to the quantum Hamming bound~\cite{MTW25}.
In their notation, a Haar-random $K$-dimensional subspace of $\mathbb{C}^N$ approximately corrects a set of $m$ Pauli errors when $mK\ll N$.
Translating this to an encoding of $n_\logic$ logical qubits into $n_\phys$ physical qubits gives $K=2^{n_\logic}$ and $N=2^{n_\phys}$.
Since all Pauli errors on at most $\alpha n_\phys$ qubits form a set of size at most $2^{(H(\alpha)+\alpha\log_2 3)n_\phys+o(n_\phys)}$, Haar-random codes can approximately correct a linear number of local errors whenever $n_\logic/n_\phys+H(\alpha)+\alpha\log_2 3<1$.
These results are information-theoretic: the encoder is sampled from the Haar distribution, and the corresponding encoding and decoding procedures are not generally efficient.
In contrast, PRICs require computational indistinguishability from Haar-random isometries and robustness with efficient encoders and decoders.

\paragraph{Efficient decoding of CWS codes.}
Our PRIC construction uses the graph-state component of the CWS framework~\cite{CSSZ08}.
Li, Dumer, and Pryadko gave a clustered error-correction procedure that tests all errors on a prescribed set of $t$ qubits with one measurement~\cite{LDP09}.
When the error support is unknown, this procedure still ranges over possible supports.
Our keyed construction instead uses the observed $e=v\oplus Au$ and graph-based phase recovery to infer $u$ without enumerating supports.
The resulting efficient-decoding guarantee for CWS codes built from classical codes, including nonlinear ones, is stated in \cref{thm:Graph_for_CWS_intro}.

\section{Preliminaries}
\label{sec:preliminaries}

\subsection{Notations}

\paragraph{Basic notation.}
%check done 2026/6/18
We use standard notation from quantum computation and cryptography.
For bit strings $x$ and $y$, $(x,y)$ denotes their concatenation.
For an integer $i$, $x_i$ denotes the $i$th bit of $x$.
$\Support(x)$ denotes the set $\{i: x_i=1\}$.
$\weight(x)$ denotes the Hamming weight of $x$, i.e., $\weight(x)\coloneqq|\{i: x_i=1\}|$.
For a finite set $S$, $x\gets S$ means that $x$ is sampled uniformly from $S$.
We use $\secp$ to denote the security parameter.
For an integer $D$, $[D]$ denotes the set $\{1,2,\ldots,D\}$.
%For an integer $0\le t\le D$, we define $[D]^t_\dist\coloneqq\{(x_1,\ldots,x_t)\in[D]^t: x_i\neq x_j \text{ for all } i\neq j\}$.
We write $\negl$ for a negligible function and $\poly$ for a polynomial.
PPT and QPT stand for (classical) probabilistic polynomial time and quantum polynomial time, respectively.
For a PPT algorithm $\cA$ that takes an $n$-bit string as input and uses an $\ell$-bit string as internal randomness, $\cA(x;r)$ denotes the output of $\cA$ on input $x\in\bit^n$ with internal randomness $r\in\bit^\ell$.

\paragraph{Quantum states and norms.}
A register $\regA$ is a named finite-dimensional Hilbert space $\cH \cong \C^d$, where $d$ is called the dimension of $\regA$.
For two registers $\regA$ and $\regB$, $\regA\regB$ denotes their tensor product.
We use $I_\regA$ to denote the identity operator on $\regA$.
We often omit the subscript $\regA$ when it is clear from the context.
For a vector $\ket{\psi}$, we define its Euclidean norm by $\|\ket{\psi}\|_2\coloneqq\sqrt{\braket{\psi|\psi}}$.
For any matrix $A$ and $p\in(0,\infty)$, we define the Schatten $p$-norm by $\|A\|_p\coloneqq(\Tr[(A^\dag A)^{p/2}])^{1/p}$.
For any matrix $A$, the operator norm is defined by $\|A\|_\infty\coloneqq\max_{\ket{\psi}}\|A\ket{\psi}\|_2$, where the maximization is taken over all pure states $\ket{\psi}$.
For two states $\rho$ and $\sigma$, we define their fidelity by $F(\rho,\sigma)\coloneqq\|\sqrt{\rho}\sqrt{\sigma}\|_1$.

\paragraph{Quantum operations.}
For each register $\regA$, let $\identitymap_\regA$ denote the linear map from operators on $\regA$ to operators on $\regA$ defined by $\identitymap_\regA(X)\coloneqq X$ for every operator $X$ on $\regA$.
A linear map $\cE_{\regA\to\regB}$ from operators on $\regA$ to operators on $\regB$ is said to be completely positive if, for every register $\regR$, the map $\identitymap_\regR \otimes \cE_{\regA\to\regB}$ is positive, i.e., it maps positive operators to positive operators.
Such a map is said to be trace preserving if $\Tr[\cE_{\regA\to\regB}(\rho)] = \Tr[\rho]$ for every operator $\rho$ on $\regA$.
A quantum channel is a linear map that is both completely positive and trace preserving (CPTP).
When $\regA=\regB$, we abbreviate $\cE_{\regA\to\regA}$ as $\cE_{\regA}$.
For CPTP maps $\cE_{\regA\to\regB}$ and $\cF_{\regB\to\regC}$, we define their composition $(\cF \circ \cE)_{\regA\to\regC}$ as the CPTP map $\rho\mapsto \cF_{\regB\to\regC}(\cE_{\regA\to\regB}(\rho))$.
For an isometry $V:\regA\to\regB$, $V_{\regA\to\regB}(\cdot)$ denotes the CPTP map defined by $\rho\mapsto V\rho V^\dag$.
For two CPTP maps $\cE,\cF:\regA\to\regB$, their diamond distance is defined by $\|\cE_{\regA\to\regB}-\cF_{\regA\to\regB}\|_\diamond\coloneqq\frac{1}{2}\max_{\ket{\psi}_{\regR\regA}}\|(\identitymap_\regR\otimes\cE_{\regA\to\regB})(\ketbra{\psi}{\psi})-(\identitymap_\regR\otimes\cF_{\regA\to\regB})(\ketbra{\psi}{\psi})\|_1$, where $\regR$ is a register whose dimension is the same as that of $\regA$, and the maximization is taken over all pure states on $\regR\regA$.

For $x,z\in\bit^n$, $X^x$ and $Z^z$ denote the $n$-qubit Pauli operators $\bigotimes_{i\in [n]}X^{x_i}$ and $\bigotimes_{i\in [n]}Z^{z_i}$, respectively.
We define the $n$-qubit Pauli group by $\cP_n\coloneqq\{v\bigotimes_{i\in [n]}X^{x_i}Z^{z_i}\}_{x,z\in\bit^n,v\in\{\pm1,\pm i\}}$.
For an integer $n$, $\mu_n$ denotes the Haar measure over $n$-qubit unitaries (i.e., the unique left- and right-invariant measure).
For integers $n$ and $m$ satisfying $n<m$, $\mu_{n\to m}$ denotes the Haar measure over isometries from $n$ qubits to $m$ qubits (i.e., the unique left- and right-invariant measure).

\subsection{Lemmas}

\begin{lemma}\label{lem:gentle_measurement}
    For any pure state $\ket{\psi}$ and any projection $\Pi$, if 
    \begin{align}
        \bigg\|
         \ket{\psi}-\frac{\Pi\ket{\psi}}{\| \Pi\ket{\psi} \|_2}
        \bigg\|_2
        \le \epsilon,
    \end{align}
    then $\| \Pi\ket{\psi} \|^2_2\ge1-\epsilon^2$.
\end{lemma}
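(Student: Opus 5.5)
The plan is to reduce everything to a single inner product. Write $p\coloneqq\|\Pi\ket{\psi}\|_2$ and $\ket{\phi}\coloneqq\Pi\ket{\psi}/p$, assuming $p>0$. If $p=0$ the normalized vector is undefined, so I would treat that case as excluded or vacuous. Since $\ket{\psi}$ and $\ket{\phi}$ are both unit vectors, expanding the squared norm gives
\begin{align}
\|\ket{\psi}-\ket{\phi}\|_2^2 = 2-2\,\mathrm{Re}\braket{\psi|\phi}.
\end{align}

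The key step is to compute $\braket{\psi|\phi}$ exactly. Because $\Pi$ is a projection, we have $\Pi=\Pi^\dag\Pi$, and therefore
\begin{align}
\braket{\psi|\Pi|\psi}=\|\Pi\ket{\psi}\|_2^2=p^2.
\end{align}
It follows that $\braket{\psi|\phi}=p^2/p=p$, which is real and nonnegative. The hypothesis then becomes $2-2p\le\epsilon^2$, that is, $p\ge 1-\epsilon^2/2$.

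The last step is to square this lower bound. When $\epsilon^2\le 2$, so that the right-hand side is nonnegative, we get
\begin{align}
p^2\ge\Bigl(1-\frac{\epsilon^2}{2}\Bigr)^2 = 1-\epsilon^2+\frac{\epsilon^4}{4}\ge 1-\epsilon^2.
\end{align}
When $\epsilon^2>2$, the claimed bound $1-\epsilon^2$ is negative, so $p^2\ge 1-\epsilon^2$ holds trivially.

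There is no real obstacle here. The only care needed is the degenerate case $\Pi\ket{\psi}=0$ and the sign issue when squaring the bound, and both are handled above.
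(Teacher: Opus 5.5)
Your proof is correct and is the same argument as the paper's: expand the squared distance as $2-2\|\Pi\ket{\psi}\|_2$, deduce $\|\Pi\ket{\psi}\|_2\ge 1-\epsilon^2/2$, and square. Your explicit handling of the case $\Pi\ket{\psi}=0$, and of $\epsilon^2>2$ (where squaring the lower bound requires a sign check), is a small extra care that the paper's one-line proof omits.
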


\begin{proof}
    Since $\|\ket{\psi}-\frac{\Pi\ket{\psi}}{\|\Pi\ket{\psi}\|_2}\|_2^2=2-2\|\Pi\ket{\psi}\|_2$, we have $\|\Pi\ket{\psi}\|_2 \ge 1-\frac{\epsilon^2}{2}$.
    This implies $\|\Pi\ket{\psi}\|_2^2\ge(1-\epsilon^2/2)^2\ge 1-\epsilon^2$, which concludes the proof.
\end{proof}

\begin{lemma}[Markov]\label{lem:Markov}
    Let $X$ be a positive random variable.
    Then, for any $a>0$,
    \begin{align}
        \Pr[X\ge a]\le \frac{\Exp[X]}{a}.
    \end{align}
\end{lemma}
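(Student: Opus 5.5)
The statement to prove is Markov's inequality: for a nonnegative random variable $X$ and any $a>0$, $\Pr[X\ge a]\le \Exp[X]/a$. I will use the standard pointwise-domination argument.

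Let $\mathbf{1}_{\{X\ge a\}}$ denote the indicator of the event $X\ge a$. The central step is the pointwise inequality
\begin{align}
a\cdot\mathbf{1}_{\{X\ge a\}}\le X .
\end{align}
I check it in two cases on each outcome. If $X\ge a$, the left side equals $a$, which is at most $X$. If $X<a$, the left side is $0$, which is at most $X$ because $X$ is nonnegative.

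Next I take expectations of both sides. Expectation is monotone and linear, and $\Exp[\mathbf{1}_{\{X\ge a\}}]=\Pr[X\ge a]$, so
\begin{align}
a\Pr[X\ge a]\le \Exp[X].
\end{align}
Dividing by $a>0$ gives the claim.

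The only point needing care is the edge case $\Exp[X]=\infty$. There the bound holds trivially, so the argument goes through for arbitrary nonnegative random variables. I expect no step to be a genuine obstacle.
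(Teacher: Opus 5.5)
Your proof is correct. The pointwise bound $a\cdot\mathbf{1}_{\{X\ge a\}}\le X$ followed by monotonicity of expectation is the standard argument, and the paper states this lemma without proof as a standard fact. Proving it for nonnegative rather than strictly positive $X$ is a harmless and in fact useful generalization, since the paper later applies the lemma to quantities such as $1-X_y$ and squared norms, which can be zero.
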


\subsection{$t$-wise Independent Functions}

We recall the definition of $t$-wise independent function families.

\begin{definition}[$t$-wise independent function family]\label{def:t-wise_independent_function}
    Let $N$ and $M$ be positive integers.
    Let $\cF=\{f: [N]\to[M]\}$ be a family of functions.
    $\cF$ is a $t$-wise independent function family if, for all distinct $x_1,\ldots,x_t\in[N]$ and all $y_1,\ldots,y_t\in[M]$,
    \begin{align}
        \Pr_{f\gets\cF}[f(x_1)=y_1\wedge\cdots\wedge f(x_t)=y_t]=M^{-t}.
    \end{align}
\end{definition}

We use the following two lemmas.

\begin{lemma}[\cite{Gol08}]\label{lem:alg_hash}
    There exists a classical algorithm running in time $\poly(\log N,\log M,t)$ that, on input $t$, $N$, and $M$, samples a function $f:[N]\to[M]$ from a $t$-wise independent function family.
    In addition, each such function $f$ is computable in time $\poly(\log N,\log M,t)$.
\end{lemma}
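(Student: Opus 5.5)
The plan is to prove \cref{lem:alg_hash} with the standard polynomial-evaluation construction over a finite field. Take $d$ so that $q\coloneqq 2^d\ge\max(N,M)$, so $d=O(\log N+\log M)$. The sampler picks uniformly random coefficients $a_0,\ldots,a_{t-1}\in\F_{q}$ and outputs their description. This takes $t\cdot d$ random bits and $\poly(\log N,\log M,t)$ time. An irreducible polynomial of degree $d$ over $\F_2$, which represents $\F_q$, can be found deterministically in $\poly(d)$ time (e.g.\ Shoup's algorithm). Alternatively one uses an explicit family such as $x^{2\cdot3^j}+x^{3^j}+1$, at the cost of a constant-factor blowup in $d$. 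Embed $[N]$ injectively into $\F_q$, and define $h(x)\coloneqq\sum_{i<t}a_ix^i$.

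The first key step is exact $t$-wise independence of $h$ into $\F_q$. For distinct points $x_1,\ldots,x_t$, the map $(a_0,\ldots,a_{t-1})\mapsto(h(x_1),\ldots,h(x_t))$ is given by a Vandermonde matrix. Its determinant $\prod_{i<j}(x_j-x_i)$ is nonzero, so the map is a bijection on $\F_q^t$, and the output tuple is uniform. Evaluating $h$ by Horner's rule costs $t$ field operations, each $\poly(d)$.

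The second step, converting the range $\F_q$ to $[M]$, is the main obstacle, because exact $t$-wise independence over $[M]$ needs $M$ to divide the range size. If $M$ is a power of two, I take $q=2^d$ with $d\ge\log M$ and output the first $\log M$ bits of $h(x)$. The truncation is a linear surjection $\F_2^d\to\F_2^{\log M}$, so it maps the uniform distribution to the uniform distribution coordinatewise, and exact independence is preserved.

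For general $M$, I would switch to a prime field $\F_p$ and use the product field trick. Find a prime $p\ge N$ with $M\mid p-1$? That does not directly help either. The cleaner route is to take the range $\F_{p}$ with $p$ a prime power equal to $M$ when $M$ is a prime power, and otherwise to note that the definition is stated with $M^{-t}$ exactly. In that case I would factor $M=\prod_j p_j^{e_j}$; factoring is efficient here only if we allow it, so instead I would rely on \cite{Gol08}.

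In practice I expect to cite \cite{Gol08} for the general-$M$ exact case, and to present the Vandermonde argument in full for $M$ a power of two, which is all the later constructions require. Finally, I will check that all parameters remain $\poly(\log N,\log M,t)$.
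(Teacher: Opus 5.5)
\textbf{Gap: the general-$M$ case is deferred to a citation that cannot supply it.} Your argument for $M$ a power of two is correct. It is the standard construction behind the citation; the paper gives no proof beyond \cite{Gol08}, and Goldreich's $t$-wise independent families map $\bit^n$ to $\bit^m$, so only power-of-two ranges are covered there.

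For general $M$, the exact condition of \cref{def:t-wise_independent_function} cannot hold for any sampler with worst-case polynomial running time. Such a sampler flips at most $T=\poly(\log N,\log M,t)$ fair coins. Hence $\Pr_f[f(x_1)=y_1]$ is an integer multiple of $2^{-T}$, and it equals $1/M$ only if $M$ divides $2^T$. So already $t=1$, $M=3$ is impossible. Odd prime powers fail for the same reason, since even a uniform element of $\F_p$ needs rejection sampling. Your detour through prime fields and factoring $M$ is therefore a dead end, not a missing step. What you can actually prove is the lemma restricted to $M$ a power of two, or a version with only statistical closeness to $t$-wise independence.

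\textbf{The paper's use does not yet fit the power-of-two case.} Your remark that the power-of-two case is all the later constructions require is not true of the paper as written. The only use of \cref{lem:alg_hash} is in the proof of \cref{thm:PRFC}. There $h:\bit^n\to[t]$ is sampled with $t=4\cdot27q^3p$, which is generally not a power of two. Exactness matters there, because \cref{lem:sim_with_hash} is invoked as an equality.

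Your claim becomes correct after a harmless modification, which you should state explicitly: set $t\coloneqq 2^{\lceil\log_2(108q^3p)\rceil}$. This keeps $t$ polynomial. It makes the range of $h$ a power of two, so your construction gives exact $4q$-wise independence. It also only helps the final estimate: $2\cdot27q^3/t\le 1/(2p)$ still yields the bound $1/(2p)-\negl(\secp)$ in \cref{eq:PRFC:contradiction}.
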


\begin{lemma}[\cite{zhandry2015secure}]\label{lem:sim_with_hash}
    Let $\cG=\{g:[N]\to[M]\}$ be a $2t$-wise independent function family.
    Then, for any quantum algorithm $\cA^{(\cdot)}$ that makes $t$ queries,
    \begin{align}
        \Pr_{g\gets\cG}[1\gets\cA^{g}]=\Pr_{f}[1\gets\cA^{f}],
    \end{align}
    where $f:[N]\to[M]$ is a uniformly random function. 
\end{lemma}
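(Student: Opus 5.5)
This statement is the Zhandry lemma: any $t$-query quantum algorithm cannot distinguish a $2t$-wise independent function from a uniformly random one. The plan is to show that the acceptance probability of $\cA$ is a polynomial of degree at most $2t$ in the indicator variables describing the oracle. Exact $2t$-wise independence then makes its expectation agree under the two distributions.

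\textbf{Writing the algorithm in terms of indicators.} Without loss of generality, $\cA$ starts from a fixed state, alternates unitaries $U_0,\dots,U_t$ with oracle calls $O_h:\ket{x}\ket{y}\mapsto\ket{x}\ket{y+h(x)\bmod M}$, and finally measures a two-outcome projector $\Pi$. For each $x\in[N]$ and $y\in[M]$, introduce the indicator $\delta_{x,y}(h)\coloneqq[h(x)=y]$. The oracle can then be written linearly in these indicators:
\begin{align}
O_h=\sum_{x,y}\delta_{x,y}(h)\,\ketbra{x}{x}\otimes S_y,
\end{align}
where $S_y$ is the shift by $y$ on the output register.

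\textbf{Degree bound.} The final unnormalized state is $U_tO_h\cdots O_hU_0\ket{0}$. Each of its amplitudes is a polynomial of degree at most $t$ in the variables $\{\delta_{x,y}(h)\}$. The acceptance probability is $\|\Pi\, U_tO_h\cdots U_0\ket{0}\|_2^2$, which is a sum of products of one amplitude and the complex conjugate of another. Because the $\delta_{x,y}$ are real, this probability is a polynomial $P$ of degree at most $2t$ in the $\delta_{x,y}$, with coefficients that do not depend on $h$.

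\textbf{Matching the expectations.} Expand $P$ into monomials $\delta_{x_1,y_1}\cdots\delta_{x_k,y_k}$ with $k\le 2t$, and consider the expectation of each monomial over $h$.
\begin{itemize}
\item If some $x_i=x_j$ with $y_i\ne y_j$, the monomial is identically zero under both distributions.
\item Otherwise, repeated factors are idempotent ($\delta^2=\delta$), so the monomial reduces to one over $k'\le 2t$ distinct inputs. Its expectation is $\Pr[h(x_1)=y_1\wedge\cdots\wedge h(x_{k'})=y_{k'}]$.
\end{itemize}
By \cref{def:t-wise_independent_function} applied with $2t$, this probability equals $M^{-k'}$ for $g\gets\cG$. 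It also equals $M^{-k'}$ for a uniformly random $f$. Fewer than $2t$ points is not an issue, since $2t$-wise independence implies $k'$-wise independence by marginalizing over extra points.

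By linearity of expectation, $\Exp_g[P(g)]=\Exp_f[P(f)]$, which is exactly the claimed equality. The main obstacle is the degree bound, specifically ensuring the oracle is exactly linear in the $\delta_{x,y}$, which the displayed decomposition of $O_h$ handles. Everything else is bookkeeping.
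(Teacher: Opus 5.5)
Your argument is correct and is essentially Zhandry's original proof. The paper does not reprove this lemma but only cites it, and the cited proof is exactly this polynomial method: amplitudes of degree at most $t$ in the indicators $\delta_{x,y}$, an acceptance probability of degree at most $2t$, and agreement of all moments of order at most $2t$.

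One small caveat concerns your marginalization step, where you extend $k'<2t$ distinct points to $2t$ distinct points. This silently requires $N\ge 2t$: under the paper's literal definition the $2t$-wise condition is vacuous when $N<2t$, and the lemma would then fail. The condition does hold in every application in the paper, where $N=2^{n}$ with $n=\omega(\log\lambda)$.
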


\subsection{Noise Models and Quantum Error-Correcting Codes}
\label{subsec:noise_and_QECC}

We introduce the noise models considered throughout this paper.
A classical noise channel is a (possibly inefficient) classical probabilistic algorithm $\mathcal{E}$.
When clear from context, we omit the qualifier ``classical.''
We will often use the following notion of bounded noise.

\begin{definition}[Bounded noise]\label{def:p-bounded_noise}
    For $p\in[0,1]$ and a function $m:\N\to\N$, a family $\{\cE_\secp\}_{\secp\in\N}$ of channels
    $\cE_\secp:\bit^{m(\secp)}\to\bit^{m(\secp)}$
    is called a $p$-bounded noise channel if
    \begin{align}
        \Pr_{x\gets\bit^{m(\secp)}}\big[\weight(\cE_\secp(x)\oplus x)>p\cdot m(\secp)\big]
        \le
        \negl(\secp),
    \end{align}
    where the probability is taken over the choice of $x$ and the internal randomness of $\cE_\secp$.
\end{definition}

For quantum states, we consider local quantum noise.
To formalize it, we first introduce some notation.
For any $m$-qubit Pauli operator $P\in\cP_m$, let $\weight(P)$ denote the number of qubits on which $P$ acts nontrivially.
For Pauli expansions, we use the phase-free representatives $\{X^xZ^z:x,z\in\bit^m\}$, one from each equivalence class under global phases.
Given an operator $E$ acting on $m$ qubits, write its unique Pauli expansion as $E=\sum_{x,z\in\bit^m}\alpha_{x,z}X^xZ^z$.
We then define
\begin{align}
    \weight(E)\coloneqq\max_{x,z\in\bit^m:\,\alpha_{x,z}\neq0}\weight(X^xZ^z).
\end{align}

\begin{definition}[Local noise channel]\label{def:local_noise}
    For functions $t,m:\N\to\N$, a family $\{\cE_\secp\}_{\secp\in\N}$ of CPTP maps $\cE_\secp$ from $m(\secp)$ qubits to $m(\secp)$ qubits is called $t$-local noise if there exists a family $\{\widetilde{\cE}_\secp\}_{\secp\in\N}$ of CPTP maps $\widetilde{\cE}_\secp$ from $m(\secp)$ qubits to $m(\secp)$ qubits such that each $\widetilde{\cE}_\secp$ admits a Kraus representation $\widetilde{\cE}_\secp(\cdot)=\sum_i E_i(\cdot)E_i^\dag$ with $\weight(E_i)\le t(\secp)$ for all $i$, and
    \begin{align}
        \|\cE_\secp-\widetilde{\cE}_\secp\|_\diamond\le\negl(\secp).
    \end{align}
\end{definition}

We use the following lemma, whose proof is straightforward and deferred to \cref{sec:omitted_proof}.

\begin{restatable}{lemma}{PauliTwirling}\label{lem:Pauli_twirling_for_local_noise}
    Let $t,m$ be integers.
    Given a CPTP map $\cE$ from $m$ qubits to $m$ qubits, define the CPTP map $\cE^{\mathrm{tw}}$ from $m$ qubits to $m$ qubits by $\cE^{\mathrm{tw}}\coloneqq\Exp_{P\gets\cP_m}P^\dag\circ\cE\circ P$.
    If $\cE$ has a Kraus representation $\cE(\cdot)=\sum_i E_i(\cdot)E_i^\dag$ with $\weight(E_i)\le t$ for all $i$, then there exists a probability distribution $\{p_Q\}_Q$ such that
    \begin{align}
        \cE^{\mathrm{tw}}(\cdot)=\sum_{Q\in\cP_m:\,\weight(Q)\le t} p_Q\,Q(\cdot)Q^\dag.
    \end{align}
\end{restatable}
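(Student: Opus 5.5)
The plan is to expand each Kraus operator in the Pauli basis, compute the twirl on individual Pauli terms, and observe that the cross terms cancel. For each $i$, write $E_i=\sum_{x,z}\alpha^{(i)}_{x,z}X^xZ^z$. Since $\weight(E_i)\le t$, every Pauli $X^xZ^z$ with $\alpha^{(i)}_{x,z}\neq0$ has weight at most $t$. Substituting gives
\begin{align}
\cE(\rho)=\sum_i\sum_{(x,z),(x',z')}\alpha^{(i)}_{x,z}\overline{\alpha^{(i)}_{x',z'}}\,X^xZ^z\rho (X^{x'}Z^{z'})^\dag .
\end{align}
Global phases of $P\in\cP_m$ cancel in $P^\dag(\cdot)P$. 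Hence the average over $\cP_m$ equals the average over the phase-free representatives $X^aZ^b$ with $a,b\in\bit^m$ chosen uniformly. Reading the composition $P^\dag\circ\cE\circ P$ as conjugating $\rho$ by $P$ first and undoing it afterward, each term becomes $P^\dag Q P\rho P^\dag Q'^\dag P$ for Paulis $Q=X^xZ^z$ and $Q'=X^{x'}Z^{z'}$.

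The key step is the standard Pauli-twirl orthogonality. We have $P^\dag Q P=\omega(P,Q)Q$, where the sign $\omega(P,Q)\in\{\pm1\}$ is $(-1)^{\langle P,Q\rangle}$ for the symplectic form $\langle P,Q\rangle=a\cdot z+b\cdot x \bmod 2$. Each term therefore acquires the factor $\omega(P,Q)\omega(P,Q')=(-1)^{\langle P,QQ'\rangle}$, using bilinearity. Averaged over uniform $(a,b)$, this factor is $1$ if $(x,z)=(x',z')$ and $0$ otherwise, because a nonzero linear functional on $\F_2^{2m}$ is balanced.

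Only diagonal terms survive, so
\begin{align}
\cE^{\mathrm{tw}}(\rho)=\sum_{x,z}p_{x,z}\,X^xZ^z\rho(X^xZ^z)^\dag,\qquad p_{x,z}=\sum_i|\alpha^{(i)}_{x,z}|^2\ge0,
\end{align}
and $p_{x,z}>0$ only if $\weight(X^xZ^z)\le t$. It remains to check that $\{p_{x,z}\}$ is a probability distribution. Trace preservation gives $\sum_iE_i^\dag E_i=I$. Taking $2^{-m}\Tr$ of both sides and using $2^{-m}\Tr[(X^xZ^z)^\dag X^{x'}Z^{z'}]=\delta_{(x,z),(x',z')}$ yields $\sum_{x,z}p_{x,z}=1$. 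Finally, set $p_Q\coloneqq p_{x,z}$ for the representative $Q=X^xZ^z$.

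No step is a real obstacle. The only care needed is sign and phase bookkeeping: the Paulis act with possibly complex phases, and $\cE^{\mathrm{tw}}$ is defined as an average over the full group $\cP_m$. Both are handled because phases cancel within each conjugation.
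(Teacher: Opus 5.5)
Your proof is correct and follows the paper's route. You expand each Kraus operator in the phase-free Pauli basis, show that the twirl kills every cross term $Q(\cdot)R^\dagger$ with $Q\neq R$, and read off $p_Q=\sum_i|\alpha_{i,Q}|^2$. The differences are cosmetic: you get the orthogonality $\Exp_P[\chi_P(Q)\overline{\chi_P(R)}]=\delta_{Q,R}$ from the balanced symplectic character on $\F_2^{2m}$ instead of the paper's pairing $P\leftrightarrow SP$ with $S$ anticommuting with $QR^\dagger$, and you get normalization from $\sum_i E_i^\dagger E_i=I$ with Hilbert--Schmidt orthogonality instead of from trace preservation of $\cE^{\mathrm{tw}}$.
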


We define quantum error-correcting codes (QECCs) as pairs of efficient encoding and decoding algorithms.

\begin{definition}[QECC]\label{def:QECC}
    Let $n,m:\N\to\N$ be polynomials.
    Let $\cE=\{\cE_\secp\}_{\secp\in\N}$ be a family of CPTP maps from $m(\secp)$ qubits to $m(\secp)$ qubits.
    A quantum error-correcting code (QECC) for $\cE$ is a pair of QPT algorithms $(\Enc,\Dec)$, where $\Enc_\secp\coloneqq\Enc(1^\secp,\cdot)$ maps $n(\secp)$ qubits to $m(\secp)$ qubits, and $\Dec_\secp\coloneqq\Dec(1^\secp,\cdot)$ maps $m(\secp)$ qubits to $n(\secp)$ qubits, such that
    \begin{align}
        \|\Dec_\secp\circ\cE_\secp\circ\Enc_\secp-\identitymap_{n(\secp)}\|_\diamond\le\negl(\secp),
    \end{align}
    where $\identitymap_{n(\secp)}$ denotes the identity map on $n(\secp)$ qubits.
    In addition, we require that $\Enc_\secp$ is a QPT-implementable isometry.
\end{definition}

For local noise, there are many constructions of QECCs. For example, see~\cite{Book:NieChu10}.

\begin{theorem}\label{thm:QECC_exist}
    There exist constants $R,\alpha>0$ such that the following holds.
    For any polynomial $m$ and any polynomial $n$ satisfying
    $n(\secp)\le Rm(\secp)$ and $m(\secp)=O(n(\secp))$, there is a QECC
    that encodes $n(\secp)$ qubits into $m(\secp)$ qubits and corrects every
    local noise channel whose Kraus operators have weight at most
    $\lfloor\alpha m(\secp)\rfloor$.
\end{theorem}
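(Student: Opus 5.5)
The plan is to instantiate \cref{thm:QECC_exist} with an explicit family of asymptotically good CSS codes that has polynomial-time syndrome decoding. Correction of arbitrary local noise will then follow by the usual discretization of errors. I would build the family by concatenation. The outer code is a quantum polynomial (Reed--Solomon) CSS code over $\F_{2^k}$ with $k=\Theta(\log m)$, correcting a constant fraction of symbol errors with a Berlekamp--Welch decoder applied separately to the $X$- and $Z$-syndromes. The inner code is a fixed-rate CSS code on $O(k)$ qubits with constant relative distance, located by brute-force search over CSS codes meeting the quantum Gilbert--Varshamov bound. Since the inner block length is $O(\log m)$, both the search and inner maximum-likelihood syndrome decoding take time $\poly(m)$. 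Decoding the inner blocks and then running a generalized-minimum-distance style outer decoder corrects every $X$- and $Z$-pattern of weight at most $\alpha m$, for some constant $\alpha>0$. It does so exactly, not just with high probability. The rate of the concatenated code is a constant $R_0>0$. Alternatively, one could cite good quantum LDPC codes (quantum Tanner codes) with their linear-time decoders, which give the same conclusion directly.

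The second step fits arbitrary polynomial lengths $n\le Rm$. I would set $R\coloneqq R_0/2$ and pick a family member of length $m'\le m$ with $m'\ge m/c$ for a constant $c$. The family includes such a member because the outer length is adjustable in steps of one symbol. This member has dimension at least $R_0m'\ge n$ for a suitable constant choice. The encoder appends $|0\rangle$'s to the $n$ logical qubits so they fill the code dimension, encodes, and tensors $m-m'$ fresh $|0\rangle$ physical qubits. The decoder discards those extra physical qubits, decodes, and discards the logical padding. Any error of weight $\le\alpha' m$ restricts to weight $\le\alpha' m\le \alpha m'$ on the code block when $\alpha'\coloneqq\alpha/c$. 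The assumption $m=O(n)$ only guarantees that the construction stays polynomial-size.

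The third step passes from Pauli to general noise. Expand each Kraus operator $E_i=\sum_{\weight(X^xZ^z)\le\alpha' m}\alpha^{(i)}_{x,z}X^xZ^z$. The decoder coherently computes the $X$- and $Z$-syndromes into an ancilla and classically finds the minimum-weight correction consistent with each syndrome. It applies that correction and then traces out the syndrome register. For every Pauli in the expansion, the correction equals the true error up to a stabilizer, because the code corrects all such weights. Hence $\Dec\circ E_i\circ\Enc$ acts as $\mathrm{id}\otimes(\text{junk on the ancilla})$. Linearity together with trace preservation then yields exactly $\Dec\circ\cE\circ\Enc=\identitymap_n$, so the diamond distance is $0$.

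I expect the main obstacle to be the first step. There one must make sure that the concatenated decoder really corrects a linear number of adversarial errors with a polynomial-time classical subroutine, and that the inner codes can be found deterministically in polynomial time. I would first try the GMD analysis familiar from Forney's classical concatenated codes, applied separately to the $X$ and $Z$ parts of the CSS structure. If that becomes delicate, I would fall back on citing the good-qLDPC decoders.
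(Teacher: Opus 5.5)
The paper does not prove this statement. It records it as standard background with a pointer to \cite{Book:NieChu10} and uses it as a black box via \cref{const:PKQPRC}. Your proposal gives an actual argument, and its three-part structure is sound: an efficiently decodable, asymptotically good stabilizer family; padding to reach arbitrary $(n,m)$ with $n\le Rm$; and Pauli discretization yielding exactly $\Dec\circ\cE\circ\Enc=\identitymap$.

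It is in fact more precise than the paper's pointer. Gilbert--Varshamov-type existence results for CSS codes give constant rate and linear distance, but not the polynomial-time decoder that \cref{def:QECC} requires. Your fallback (good quantum LDPC codes with their efficient decoders), or efficiently decodable concatenated codes, is the reference that actually matches the statement.

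Your padding step is correct: the effective Kraus operators on the code block have weight at most $\alpha' m$. Your coherent-syndrome argument is also correct. Two small corrections: the hypothesis $m=O(n)$ is not needed, and you should set $R\coloneqq R_0/c$ rather than $R_0/2$ unless $c\le 2$.

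One claim in your primary route fails as written. Brute-force search over CSS codes on $n_{\mathrm{in}}=\Theta(\log m)$ qubits ranges over nested pairs of subspaces $C_2\subseteq C_1\subseteq\F_2^{n_{\mathrm{in}}}$ of dimension $\Theta(\log m)$. That is $2^{\Theta(\log^2 m)}=m^{\Theta(\log m)}$ candidates, which is quasi-polynomial, not $\poly(m)$; the same issue arises for Forney's classical concatenated codes. It is repairable in any of three ways:
\begin{itemize}
\item Concatenate twice, so the innermost code has length $O(\log\log m)$ and brute force costs $2^{O((\log\log m)^2)}=m^{o(1)}$.
\item Use an outer code over a constant-size alphabet, so the inner code has constant size.
\item Take the quantum Tanner fallback.
\end{itemize}

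Your stated ``main obstacle'' is easier than you fear: GMD is unnecessary. Decoding each inner block up to half its distance and then the outer code up to half its distance already corrects every pattern on a $\delta_{\mathrm{in}}\delta_{\mathrm{out}}/4$ fraction of qubits, separately for the $X$ and $Z$ parts.

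Finally, in the third step do not literally compute a minimum-weight correction, which is not efficient in general. The discretization argument only needs the efficient decoder's output to agree with the true error up to a stabilizer for every Pauli of weight at most $\alpha' m$ in the expansion, and your first step supplies that.
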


\subsection{Graph-Theoretic Preliminaries}
\label{subsec:graph_preliminaries}

In this subsection, we introduce some graph-theoretic notation.

\begin{definition}[Simple graph]\label{def:simple_graph}
    Let $n \in \mathbb{N}$.
    A \emph{simple graph} on $[n]$ is a pair $G=([n],E)$, where
    \begin{align}
        E \subseteq 
        \bigl\{
            \{i,j\} \subseteq [n] : i \neq j
        \bigr\}.
    \end{align}
\end{definition}

\begin{definition}[Adjacency matrix]\label{def:adjacency_matrix}
    Let $G=([n],E)$ be a simple graph.
    Its adjacency matrix over $\mathbb{F}_2$ is the matrix $A=(A_{i,j}) \in \mathbb{F}_2^{n \times n}$ defined by
    \begin{align}
        A_{i,j}
        \coloneqq
        \begin{cases}
            1, & \text{if } \{i,j\} \in E, \\
            0, & \text{otherwise.}
        \end{cases}
    \end{align}
\end{definition}

\begin{definition}[Degree]\label{def:degree}
    Let $G=([n],E)$ be a simple graph.
    The degree of $i \in [n]$ is defined by
    \begin{align}
        \deg_G(i) \coloneqq |\{\, j \in [n] : \{i,j\} \in E \,\}|.
    \end{align}
    The maximum degree of $G$ is defined by $\Delta(G)\coloneqq\max_{i\in[n]}\deg_G(i).$
\end{definition}

\subsection{Codeword-Stabilized Codes}
\label{subsec:CWS}

We briefly recall the standard binary CWS formalism \cite{CSSZ08}, which will be used in our construction of PRICs in \cref{sec:PRIC_const}.
Our presentation follows the standard-form description of CWS codes via graph states and classical binary codes.

\begin{definition}[CWS encoding unitary]\label{def:CWS_encoding_unitary}
    Let $G=([n],E)$ be a simple graph.
    We define the unitary
    \begin{align}
        U_G
        \coloneqq
        \left(\prod_{\{i,j\}\in E} \CZ_{ij}\right) H^{\otimes n},
    \end{align}
    where $H$ is the Hadamard gate and $\CZ_{ij}$ is the controlled-$Z$ gate on qubits $i$ and $j$.
\end{definition}

In our PRIC construction, we first encode the classical string $z$ into a codeword $c_z$ using the underlying classical encoder, such as the PRFCs, and then apply the encoding unitary $U_G$ to obtain the corresponding CWS codeword $U_G\ket{c_z}$.
In the decoding procedure, we apply $U_G^\dagger$ before running the underlying classical decoder. 
Therefore, we need an explicit description of how $U_G^\dagger$ acts in the presence of Pauli errors.
The following lemma gives an explicit expression for $U_G^\dagger P U_G$ when $P$ is a Pauli operator.
Since the proof is straightforward, we defer it to \cref{sec:omitted_proof}.

\begin{restatable}{lemma}{ErrorPropagation}\label{lem:error_propagation}
    Let $G=([n],E)$ be a simple graph with adjacency matrix $A$ over $\mathbb{F}_2$.
    Let $P=Z^vX^u$ with $u,v\in\mathbb{F}_2^n$.
    Then
    \begin{align}
        U_G^\dagger P U_G = (-1)^{q_G(u)}X^{v\oplus Au} Z^u,
        \qquad q_G(u)\coloneqq\sum_{\{i,j\}\in E}u_i u_j.
    \end{align}
\end{restatable}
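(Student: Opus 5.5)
We prove the identity by conjugating $P=Z^vX^u$ through the two layers of $U_G$ one at a time. Since $U_G=D_G H^{\otimes n}$ with $D_G\coloneqq\prod_{\{i,j\}\in E}\CZ_{ij}$, we have
\begin{align}
U_G^\dagger P U_G = H^{\otimes n}\bigl(D_G^\dagger Z^vX^u D_G\bigr)H^{\otimes n}.
\end{align}
The computation therefore splits into three steps: conjugation by the diagonal layer $D_G$, conjugation by $H^{\otimes n}$, and a final reordering to the stated normal form.

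\textbf{Step 1: the diagonal layer.} Since $D_G$ is diagonal and self-inverse ($D_G^\dagger=D_G$), it commutes with $Z^v$. So we only need $D_GX^uD_G$. On basis states, $D_G\ket{w}=(-1)^{q_G(w)}\ket{w}$ with $q_G(w)=\sum_{\{i,j\}\in E}w_iw_j$. This gives
\begin{align}
D_GX^uD_G\ket{w}=(-1)^{q_G(w)+q_G(w\oplus u)}\ket{w\oplus u}.
\end{align}
Expanding over $\mathbb{F}_2$,
\begin{align}
q_G(w\oplus u)=q_G(w)+q_G(u)+\sum_{\{i,j\}\in E}(w_iu_j+u_iw_j)=q_G(w)+q_G(u)+w\cdot Au.
\end{align}
The last equality uses the symmetry of $A$ and its zero diagonal. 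Hence
\begin{align}
D_GX^uD_G\ket{w}=(-1)^{q_G(u)}(-1)^{w\cdot Au}\ket{w\oplus u}.
\end{align}
The right-hand side is exactly the action of $(-1)^{q_G(u)}X^uZ^{Au}$ on $\ket{w}$, since $Z^{Au}$ acts first and contributes $(-1)^{(Au)\cdot w}$. Therefore
\begin{align}
D_G^\dagger Z^vX^uD_G=(-1)^{q_G(u)}Z^vX^uZ^{Au}.
\end{align}

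\textbf{Step 2: the Hadamard layer.} Conjugation by $H^{\otimes n}$ swaps $X^a$ and $Z^a$, and it preserves products because it is a group homomorphism under conjugation. This yields
\begin{align}
(-1)^{q_G(u)}X^vZ^uX^{Au}.
\end{align}

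\textbf{Step 3: reordering (the main obstacle).} We need to bring this product into the stated form $(-1)^{q_G(u)}X^{v\oplus Au}Z^u$, which requires moving $X^{Au}$ past $Z^u$. We have $Z^uX^{Au}=(-1)^{u\cdot Au}X^{Au}Z^u$. The sign vanishes because
\begin{align}
u^{\top}Au=\sum_{i,j}A_{ij}u_iu_j=2\sum_{\{i,j\}\in E}u_iu_j+\sum_iA_{ii}u_i=0\pmod 2,
\end{align}
using that $A$ is symmetric with zero diagonal. Combining $X^vX^{Au}=X^{v\oplus Au}$ then gives the claim.

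The only places where care is needed are this reordering sign and the Step 1 bookkeeping of which operator acts first. In both, the zero diagonal and symmetry of $A$ are what make the extra signs cancel.
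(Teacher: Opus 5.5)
Your proof is correct and follows essentially the same route as the paper: conjugate through the $\CZ$ layer to obtain $(-1)^{q_G(u)}Z^vX^uZ^{Au}$, swap $X$ and $Z$ with $H^{\otimes n}$, and use $u\cdot Au=0$ (from the symmetry and zero diagonal of $A$) to reorder. The differences are minor. You derive the $\CZ$-layer conjugation directly on computational basis states by expanding $q_G(w\oplus u)$, which makes the sign $(-1)^{q_G(u)}$ explicit rather than tracking it through products of $X_iZ^{r_i}$. You also perform the reordering after the Hadamard layer rather than before it.
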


\subsection{Cryptographic Assumptions}
In this subsection, we introduce several cryptographic assumptions that will be used to construct post-quantum PRCs.
First, we recall the Learning Parity with Noise (LPN) assumption \cite{Pie12}.
To define the LPN assumption, we first introduce some notation.
For $p\in[0,1/2)$, $\Ber(p)$ denotes the Bernoulli distribution on $\bit$ with bias $p$, i.e., $\Pr_{e\gets\Ber(p)}[e=1]=p$.
$\Ber(n,p)$ denotes the distribution over $\bit^n$ where each bit is independently sampled from $\Ber(p)$.
The LPN assumption (against QPT adversaries) is defined as follows.

\begin{assumption}[\textbf{LPN assumption \cite{Pie12}}]\label{assumption:LPN}
    For $\eta\in(0,1/2)$ and a function $g:\N\to\N$, the $\LPN_{g,\eta}$ assumption is the following statement.
    For any QPT algorithm $\cA$,
    \begin{align}
        \left|
         \Pr_{\substack{A\gets\F_2^{n \times g(n)},\\s\gets\F_2^{g(n)},e\gets\Ber(n,\eta)}}[1\gets\cA(A,As\oplus e)]
         -\Pr_{A\gets\F_2^{n \times g(n)},u\gets\F_2^{n}}[1\gets\cA(A,u)]
        \right|
        \le\negl(n).
    \end{align}
\end{assumption}

Next, we define the planted XOR assumption (against QPT adversaries).

\begin{assumption}[\textbf{Planted XOR assumption \cite{ASSVN23}}]\label{assumption:planted_XOR}
    For functions $m,t:\N\to\N$, the $\XOR_{m,t}$ assumption is the following statement.
    For any QPT algorithm $\cA$,
    \begin{align}
        \bigg|
         \Pr_{G\gets\F_2^{n\times m(n)}}[1\gets\cA(G)]
         -\Pr_{G\gets\cD(n,m(n),t(n))}[1\gets\cA(G)]
        \bigg|
        \le\negl(n),
    \end{align}
    where $\cD(n,m(n),t(n))$ is the following distribution over $\F_2^{n\times m(n)}:$
    \begin{enumerate}
        \item 
        Sample $s\gets\{s\in\F_2^n: \weight(s)=t(n)\}$, i.e., $s$ is uniformly sampled from $\F^n_2$ conditioned on having Hamming weight $t(n)$.
        \item Sample a uniformly random matrix $G\in\F_2^{n\times m(n)}$ subject to $s^\top G=0$.
        \item Output $G$.
    \end{enumerate}
\end{assumption}

To construct post-quantum PRCs, we use the following assumption, which is the post-quantum analog of the assumption used to construct classically secure PRCs in \cite{C:ChrGun24}.

\begin{assumption}[\textbf{Post-quantum version of \cite{C:ChrGun24}}]\label{assumption:for_PRC}
    Either of the following two statements is true.
    \begin{itemize}
        \item There exists a constant $\eta\in(0,1/2)$ such that, for any function $g(n)=\Omega(\log^2n)$, the $\LPN_{g,\eta}$ assumption holds.
        \item 
        There exist constants $\eta\in(0,1/2)$ and $\epsilon\in(0,1/2)$ such that, for any function $t(n)=\Theta(\log n)$, both the $\LPN_{n^\epsilon,\eta}$ assumption and the $\XOR_{2n^\epsilon,t}$ assumption hold.
    \end{itemize}
\end{assumption}

\begin{remark}
    The first item assumes $\LPN_{g,\eta}$ for every $g(n)=\Omega(\log^2 n)$, where $n$ is the number of samples and $g(n)$ is the secret length.
    When $g(n)=\Theta(\log^2 n)$, polynomial time in $n$ is $2^{O(\sqrt{g(n)})}$ time in the secret length; in the expression $2^{O(\sqrt n)}$ used in \cref{sec:introduction}, $n$ denotes the secret length.
\end{remark}

\if0
\begin{definition}[Sparse Vectors and Matrices]
    We define that 
    \begin{align}
        \cS_{t,n}\coloneqq\{s\in\F_2^n: \weight(s)=t\}
    \end{align}
    is the set of $t$-sparse vectors.    We also define
    \begin{align}
        \cS_{t,r,n}\coloneqq\{P\in\F_2^{r \times n}: \weight(P_{i,;})=t~ \text{ for all }i\in[r]\}
    \end{align}
    as the set of matrices whose rows are all $t$-sparse vectors, where, for $P\in \F_2^{r \times n}$, $P_{i,;}\in\F_2^n$ denotes the $i$th row of $P$.
\end{definition}
\fi

\subsection{Cryptographic Primitives}

In this subsection, we introduce cryptographic primitives.
We first define pseudorandom functions (PRFs) and pseudorandom permutations (PRPs).

\begin{definition}[Pseudorandom functions (PRFs)]\label{def:PRF}
    Let $n,m:\N\to\N$ be polynomials.
    A deterministic polynomial-time algorithm $F:\bit^\secp\times\bit^{n(\secp)}\to\bit^{m(\secp)}$
    is a pseudorandom function (PRF) if, for any QPT adversary $\cA$,
    \begin{align}
        \bigg|
         \Pr_{k\gets\bit^\secp}[1\gets\cA^{f_k}(1^\secp)]
         -\Pr_{f}[1\gets\cA^{f}(1^\secp)]
        \bigg|
        \le\negl(\secp),
    \end{align}
    where $f_k(\cdot)\coloneqq F(k,\cdot)$.
    Here, $f$ is a uniformly random function $\bit^{n(\secp)}\to\bit^{m(\secp)}$.
\end{definition}

\begin{definition}[Pseudorandom permutations (PRPs)]\label{def:PRP}
    Let $n:\N\to\N$ be a polynomial.
    A pair of deterministic polynomial-time algorithms $\Pi:\bit^\secp\times\bit^{n(\secp)}\to\bit^{n(\secp)}$ and $\Pi^{-1}:\bit^\secp\times\bit^{n(\secp)}\to\bit^{n(\secp)}$ is a pseudorandom permutation (PRP) if $\pi_k(\cdot)\coloneqq \Pi(k,\cdot)$ and $\pi^{-1}_k(\cdot)\coloneqq \Pi^{-1}(k,\cdot)$ are permutations over $\bit^{n(\secp)}$ such that $\pi_k\circ \pi^{-1}_k$ and $\pi^{-1}_k\circ\pi_k$ are the identity, and, for any QPT adversary $\cA$,
    \begin{align}
        \bigg|
         \Pr_{k\gets\bit^\secp}[1\gets\cA^{\pi_k,\pi_k^{-1}}(1^\secp)]
         -\Pr_{\pi}[1\gets\cA^{\pi,\pi^{-1}}(1^\secp)]
        \bigg|
        \le\negl(\secp).
    \end{align}
    Here, $\pi$ is a uniformly random permutation $\bit^{n(\secp)}\to\bit^{n(\secp)}$.
\end{definition}

\if0
\begin{definition}[Pseudorandom Genrators]
    Let $n,m:\N\to\N$ be functions such that $n(\secp)<m(\secp)$ for sufficiently large $\secp\in\N$.
    A polynomial-time deterministic algorithm $G$ is a pseudorandom generator (PRG) if it, on input $\secp$ and $r\in\bit^{n(\secp)}$, outputs a $G(1^\secp,r)\in\bit^{m(\secp)}$, and, for any QPT adversary $\cA$ and sufficiently large $\secp\in\N$,
    \begin{align}
        \bigg|
         \Pr_{r\gets\bit^{n(\secp)}}[1\gets\cA(G(1^\secp,r))]
         -\Pr_{u\gets\bit^{m(\secp)}}[1\gets\cA(u)]
        \bigg|
        \le\negl(\secp).
    \end{align}
\end{definition}
\fi

Next, we recall the definition of PRCs from \cite{C:ChrGun24}.
Throughout this paper, unless stated otherwise, security is required to hold against quantum adversaries; we therefore omit the qualifier ``post-quantum.''
Accordingly, we keep the definition of PRCs unchanged, except that the security requirement is formulated with respect to QPT distinguishers.
There are two variants of PRCs: public-key PRCs and secret-key PRCs.
We first define public-key PRCs.

\begin{definition}[Public-key PRCs \cite{C:ChrGun24}]\label{def:PKPRC}
    Let $n,m:\N\to\N$ be polynomials.
    Let $\cE\coloneq \{\cE_\secp\}_{\secp\in\mathbb{N}}$ be a noise channel, where $\cE_\secp:\bit^{m(\secp)}\to\bit^{m(\secp)}$.
    An $(n,m)$ public-key pseudorandom error-correcting code (PRC) robust to $\cE$ is a tuple of algorithms $(\KeyGen,\Enc,\Dec)$ satisfying the following syntax and conditions.
    \begin{itemize}
        \item Syntax:
         \begin{itemize}
             \item $\KeyGen:$ 
             It is a PPT algorithm that takes the security parameter $\secp$ as input and outputs a pair consisting of a secret key $\sk$ and a public key $\pk$.
             \item $\Enc:$ 
             It is a PPT algorithm that takes a public key $\pk$ and a plaintext $x\in\bit^{n(\secp)}$ as input and outputs a ciphertext $c\in\bit^{m(\secp)}$.
             \item $\Dec$: 
             It is a deterministic polynomial-time algorithm that takes a secret key $\sk$ and a bit string $c\in\bit^{m(\secp)}$ as input and outputs $x\in\bit^{n(\secp)}$ or $\bot$.
         \end{itemize}
        \item Robustness:
        For any $x\in\bit^{n(\secp)}$,
        \begin{align}
            \Pr
            \left[
             x\gets\Dec(\sk,\cE_\secp( c)):
             \begin{array}{r}
               (\sk,\pk)\gets\KeyGen(1^\secp)\\
                c\gets\Enc(\pk,x)
             \end{array}
            \right]
            \ge1-\negl(\secp).
        \end{align}

        \item Soundness:
        For any fixed $c\in\bit^{m(\secp)}$,
        \begin{align}
            \Pr[\bot\gets\Dec(\sk,c):(\sk,\pk)\gets\KeyGen(1^\secp)]\ge1-\negl(\secp).
        \end{align}

        \item Pseudorandomness:
        For any QPT adversary $\cA$ that is allowed to make only classical queries to an oracle,
        \begin{align}
            \bigg|
             \Pr_{(\sk,\pk)\gets\KeyGen(1^\secp)}[1\gets\cA^{\Enc_\pk}(1^\secp,\pk)]
             -\Pr_{(\sk,\pk)\gets\KeyGen(1^\secp)}[1\gets\cA^{\cU}(1^\secp,\pk)]
            \bigg|
            \le\negl(\secp),
        \end{align}
        where $\Enc_\pk$ and $\cU$ are defined as follows.
        \begin{itemize}
            \item For each $\pk$, $\Enc_\pk$ is the following oracle: on input $x\in\bit^{n(\secp)}$, it runs $y\gets\Enc(\pk,x)$ and outputs $y$.
            \item $\cU$ is the following oracle: for each query, on any input, it chooses $y\gets\bit^{m(\secp)}$ and outputs $y$.
        \end{itemize}
    \end{itemize}
    When $n$ and $m$ are clear from the context, we simply refer to $(\KeyGen, \Enc, \Dec)$ as a public-key PRC robust to $\cE$.
    Similarly, when $\cE$ is clear from the context, we simply refer to it as an $(n,m)$ public-key PRC.
    When $n$, $m$, and $\cE$ are all clear from the context, we simply refer to it as a public-key PRC.
\end{definition}

\begin{remark}[Deterministic requirement for the decoding algorithm]
    For convenience, we require $\Dec$ to be deterministic in the above definition.
    This entails no loss of generality.
    Indeed, if $\Dec$ were PPT, then $\KeyGen$ could sample its internal randomness as part of the secret key $\sk$ and regard $\Dec$ as a deterministic algorithm with that randomness fixed.
\end{remark}

\begin{remark}[Stronger robustness notions]\label{remark:robustness_of_PRC}
    Stronger robustness notions for PRCs have been considered in the literature, such as adaptive robustness and strongly-adaptive robustness~\cite{STOC:AACDG25,EPRINT:CGGMW25}.
    In this paper, however, our main goal is to construct the quantum variants, i.e., QPRCs and PRICs.
    For these constructions, it suffices to start from a PRC satisfying the above robustness notion.
    Therefore, we restrict our attention to this basic notion of robustness throughout the paper.
\end{remark}

\begin{remark}[On soundness]
    We include soundness in the above definition following \cite{C:ChrGun24}.
    This condition is mainly a technical condition that rules out trivial PRCs when $n=0$.
    Indeed, when the message space contains only one message, robustness alone does not prevent the decoder from accepting every input as a valid codeword.
    For PRCs with nonzero message length, such a triviality is not the issue we need to address.
    In our setting, the properties that will be essential for the subsequent constructions are pseudorandomness and robustness.
\end{remark}

\begin{remark}[On classical queries]
    In the above security definition, the adversary is allowed to make only classical queries.
    One could strengthen the definition by allowing quantum queries, as in the case of encryption schemes \cite{C:BonZha13}.
    Although the above security notion is weaker than such a quantum-query notion, it is sufficient for our purposes, namely for constructing QPRCs and PRICs.
    We leave the study of such a stronger post-quantum security notion to future work.
\end{remark}

Secret-key PRCs are defined analogously.

\begin{definition}[Secret-key PRCs \cite{C:ChrGun24}]\label{def:SKPRC}
    Let $n,m:\N\to\N$ be polynomials.
    Let $\cE\coloneq \{\cE_\secp\}_{\secp\in\mathbb{N}}$ be a noise channel, where $\cE_\secp:\bit^{m(\secp)}\to\bit^{m(\secp)}$.
    An $(n,m)$ secret-key pseudorandom error-correcting code (PRC) robust to $\cE$ is a tuple of algorithms $(\KeyGen,\Enc,\Dec)$ satisfying the following syntax and conditions.
    \begin{itemize}
        \item Syntax:
         \begin{itemize}
             \item $\KeyGen:$ 
             It is a PPT algorithm that takes the security parameter $\secp$ as input and outputs a secret key $\sk$.
             \item $\Enc:$ 
             It is a PPT algorithm that takes a secret key $\sk$ and a plaintext $x\in\bit^{n(\secp)}$ as input and outputs a ciphertext $c\in\bit^{m(\secp)}$.
             \item $\Dec$: 
             It is a deterministic polynomial-time algorithm that takes a secret key $\sk$ and a bit string $c\in\bit^{m(\secp)}$ as input and outputs $x\in\bit^{n(\secp)}$ or $\bot$.
         \end{itemize}
        \item Robustness:
        For any $x\in\bit^{n(\secp)}$,
        \begin{align}
            \Pr
            \left[
             x\gets\Dec(\sk,\cE_\secp( c)):
             \begin{array}{r}
               \sk\gets\KeyGen(1^\secp)\\
                c\gets\Enc(\sk,x)
             \end{array}
            \right]
            \ge1-\negl(\secp).
        \end{align}

        \item Soundness:
        For any fixed $c\in\bit^{m(\secp)}$,
        \begin{align}
            \Pr[\bot\gets\Dec(\sk,c):\sk\gets\KeyGen(1^\secp)]\ge1-\negl(\secp).
        \end{align}

        \item Pseudorandomness:
        For any QPT adversary $\cA$ that is allowed to make only classical queries to an oracle,
        \begin{align}
            \bigg|
             \Pr_{\sk\gets\KeyGen(1^\secp)}[1\gets\cA^{\Enc_\sk}(1^\secp)]
             -\Pr[1\gets\cA^{\cU}(1^\secp)]
            \bigg|
            \le\negl(\secp),
        \end{align}
        where $\Enc_\sk$ and $\cU$ are defined as follows.
        \begin{itemize}
            \item For each $\sk$, $\Enc_\sk$ is the following oracle: on input $x\in\bit^{n(\secp)}$, it runs $y\gets\Enc(\sk,x)$ and outputs $y$.
            \item $\cU$ is the following oracle: for each query, on any input, it chooses $y\gets\bit^{m(\secp)}$ and outputs $y$.
        \end{itemize}
    \end{itemize}
    When $n$ and $m$ are clear from the context, we simply refer to $(\KeyGen, \Enc, \Dec)$ as a secret-key PRC robust to $\cE$.
    Similarly, when $\cE$ is clear from the context, we simply refer to it as an $(n,m)$ secret-key PRC.
    When $n$, $m$, and $\cE$ are all clear from the context, we simply refer to it as a secret-key PRC.
\end{definition}

It is clear that secret-key PRCs imply IND-CPA-secure secret-key encryption schemes (SKEs).
Since IND-CPA-secure SKEs imply PRFs and PRPs, we obtain the following.

\begin{corollary}\label{coro:PRC_imply_PRF}
    If secret-key PRCs exist, then PRFs and PRPs exist.
\end{corollary}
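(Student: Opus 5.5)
The corollary follows by chaining two implications. First, a secret-key PRC gives an IND-CPA-secure secret-key encryption scheme against QPT adversaries making classical queries. Second, such a scheme yields a post-quantum one-way function, which by standard results yields PRFs and PRPs secure even against quantum queries.

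For the first step, I would define the encryption scheme to be the PRC itself: $\KeyGen$ samples $\sk$, $\Enc(\sk,x)$ runs $\PRC.\Enc(\sk,x)$, and $\Dec(\sk,c)$ runs $\PRC.\Dec(\sk,c)$.

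\emph{Correctness.} This follows from robustness applied to the identity noise channel. Here one has to check that robustness for the stated noise $\cE$ covers the noiseless case. If it does not, I would not need correctness anyway, because only a one-way function is extracted below.

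\emph{IND-CPA security.} By pseudorandomness of \cref{def:SKPRC}, the oracle $\Enc_\sk$ is indistinguishable from $\cU$. This also holds for the challenge query, which is just one more classical query. Under $\cU$ the challenge ciphertext is independent of the challenge bit, so the advantage is negligible.

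For the second step, I would use the following map as a one-way function:
\begin{align}
F(k,x,s) \coloneqq \bigl(x,\ \PRC.\Enc(\sk_k,x;s)\bigr),
\end{align}
where $\sk_k$ denotes the key produced by $\PRC.\KeyGen$ run on randomness $k$.

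\emph{Reduction.} Suppose an inverter recovers some key $k'$ consistent with the output. A distinguisher then decides whether its oracle is $\Enc_\sk$ or $\cU$ as follows. It queries a random $x$ to get $y$ and inverts $(x,y)$ to obtain $k'$. It then queries fresh random messages $x_1,\dots,x_q$ and checks whether $\PRC.\Dec(\sk_{k'},\cdot)$ returns $x_i$ on the answers.

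\emph{Main obstacle.} The real difficulty is this check. A preimage key $k'$ need not equal the true key, so its decoding behaviour on fresh ciphertexts is not directly controlled. I would handle it by the standard argument that an SKE whose ciphertexts are shorter than... is not available here.

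Instead I would use the cleaner route: IND-CPA secure SKE against classical-query QPT adversaries implies post-quantum one-way functions, a known folklore result due to Impagliazzo–Luby. I would simply cite it, noting that the reduction is classical and therefore works for quantum adversaries.

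To finish, post-quantum one-way functions give post-quantum PRGs by the HILL construction, whose reduction is black-box and so carries over to quantum adversaries. PRGs give quantum-query-secure PRFs via GGM, as shown by Zhandry. PRFs give quantum-secure PRPs with inverse access via \cite{Quantum:Zhandry25}. Each of these steps is cited rather than reproved.
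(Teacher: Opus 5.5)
Your proposal takes essentially the same route as the paper. The paper observes that a secret-key PRC is an IND-CPA-secure secret-key encryption scheme (correctness from robustness, security from pseudorandomness, as in your first step), then invokes the standard implication from such schemes to PRFs and PRPs, which is your cited chain through one-way functions, HILL, GGM/Zhandry, and Zhandry's PRF-to-PRP result.

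One correction: your fallback that correctness is unnecessary is false. An encryption algorithm that ignores its input and outputs fresh uniform strings is unconditionally IND-CPA-secure, so the SKE-to-OWF step genuinely needs correctness. You get it from robustness against the identity channel, which is $p$-bounded and hence covered for the PRCs the paper actually uses. The abandoned single-ciphertext candidate $F$ can simply be dropped.
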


In \cite{C:ChrGun24}, public-key and secret-key PRCs robust to $p$-bounded noise for any $p\in[0,1/2)$, as defined in \cref{def:p-bounded_noise}, are constructed under \cref{assumption:for_PRC} against PPT adversaries.
Although their result is stated for PPT adversaries, the same proof, with \cref{assumption:for_PRC} formulated against quantum adversaries, yields public-key and secret-key PRCs secure against QPT adversaries.
Thus, we obtain the following.

\begin{restatable}{theorem}{PostQuantumPRC}\label{thm:post-quantum_PRC}
    Let $p\in[0,1/2)$.
    There exists a constant $\alpha$ satisfying the following.
    For polynomials $n,m:\N\to\N$ satisfying $m(\secp)\ge \alpha\cdot n(\secp)$ for all sufficiently large $\secp$, if \cref{assumption:for_PRC} is true, then there exist $(n,m)$ public-key PRCs and $(n,m)$ secret-key PRCs robust to $p$-bounded noise.
\end{restatable}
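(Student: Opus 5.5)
The plan is to reuse the Christ--Gunn construction and proof essentially verbatim, checking at each step that nothing depends on the adversary being classical beyond the computational assumption itself. I would recall the two constructions of \cite{C:ChrGun24}. The first is the LPN-only construction: a zero-rate PRC that encodes a single bit via a noisy LPN sample $Gs\oplus e$ together with a planted sparse parity check. The second is the construction from $\LPN_{n^\epsilon,\eta}$ combined with $\XOR_{2n^\epsilon,t}$. I would also recall the generic transformations that take a zero-bit PRC to a multi-bit PRC with $m\ge\alpha n$ and robustness to $p$-bounded noise for any $p<1/2$: first encrypt the message under a one-time pad, then encode it with a constant-rate binary code correcting close to $1/2$ fraction of errors, and finally apply a secret random permutation and XOR mask, together with a PRC header. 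Both the public-key and secret-key variants are obtained this way. The secret-key variant can take the public-key scheme with $\pk$ included in $\sk$, since secret-key pseudorandomness is implied by public-key pseudorandomness.

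\textbf{Robustness and soundness.} These are purely statistical or combinatorial statements about $\KeyGen,\Enc,\Dec$ and the noise channel, for example concentration bounds on the number of satisfied sparse parity checks. They make no reference to any adversary, so they carry over unchanged.

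\textbf{Pseudorandomness.} I would go through each hybrid of the Christ--Gunn proof and check two things:
\begin{enumerate}
\item Every reduction is a black-box algorithm that runs the distinguisher $\cA$ once, answers $\cA$'s classical oracle queries by classically sampling using its own challenge input, and outputs $\cA$'s bit.
\item The remaining steps are statistical.
\end{enumerate}
Because Definitions~\ref{def:PKPRC} and~\ref{def:SKPRC} only allow classical queries, a QPT $\cA$ composed with such a classical simulator is a QPT algorithm. Each reduction therefore breaks the quantum versions of $\LPN_{g,\eta}$ or $\XOR_{m,t}$ (Assumptions~\ref{assumption:LPN},~\ref{assumption:planted_XOR}) with the same advantage, and the hybrid argument with polynomially many queries carries over with a polynomial loss. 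No rewinding or other classical-only technique appears, which is what lets us avoid any quantum-specific obstacle such as superposition queries.

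\textbf{Main obstacle: parameter bookkeeping.} In the first branch of Assumption~\ref{assumption:for_PRC}, the secret length is $g(n)=\Theta(\log^2 n)$. Pseudorandomness must hold against adversaries polynomial in $\secp$, so the LPN instance size must be tied to $\secp$ so that the reduction's running time stays within the QPT regime the assumption quantifies over. I would verify this by setting $n=\poly(\secp)$ as in \cite{C:ChrGun24}. That choice keeps $m=O(n)$ with the constant $\alpha$ depending only on $p$ through the rate of the outer binary code. I would defer the full details to \cref{sec:post_quantum_CG24}.
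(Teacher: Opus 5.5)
Your overall strategy matches the paper's. You run the Christ--Gunn constructions unchanged and note that robustness and soundness are statistical. You observe that each pseudorandomness reduction only answers classical encoder queries, so a QPT distinguisher gives a QPT attacker on the quantum forms of LPN or planted XOR. You get the secret-key scheme by putting $\pk$ into $\sk$.

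The gap is in the constant-rate step: the version you recall is not pseudorandom. In your version (pad the message, encode with a constant-rate binary code, apply a permutation and an XOR mask, prepend a PRC header), the encoder uses the permutation and any fixed mask. In the public-key setting they are therefore part of $\pk$. Every codeword body is then a masked, permuted codeword of a fixed code with $2^{n}$ elements inside $\bit^{n''}$. The adversary can undo $\pi$ and the mask, decode, re-encode and compare; a uniform string passes this test with probability at most $2^{n-n''}$. Padding the message before encoding does not help, since the result is still a codeword. If the mask is instead fresh per encoding and sent in the header, the header must carry $\Theta(m)$ bits. The multi-bit conversion has rate about $1/n_0$, where the zero-bit block length is $n_0=\poly(\secp)$, so this destroys the constant rate the theorem needs.

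The missing ingredient is a pseudorandom generator. In \cref{const:CG24_constant_rate} the header is a multi-bit PRC encoding of a short fresh seed $r$, and the body is $\mathrm{ECC}(x)\oplus\operatorname{PRG}(r)$, where $\mathrm{ECC}$ is the classical code's encoder; the whole string is then permuted. This adds two obligations your sketch does not cover.
\begin{enumerate}
\item You need a PRG secure against QPT distinguishers under \cref{assumption:for_PRC}. The paper derives it from the PRCs themselves via \cref{coro:PRC_imply_PRF} and the standard PRF-to-PRG implication; LPN would also give one directly.
\item The pseudorandomness argument needs the ordered hybrids of \cref{lem:CG24_constant_rate_guarantee}. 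First replace every header by a uniform string using PRC pseudorandomness; the reduction classically queries the PRC oracle on its own seeds $r$. After that $r$ is no longer revealed, and only then can each $\operatorname{PRG}(r)$ be replaced by a uniform string in a per-query hybrid.
\end{enumerate}
With this correction your argument coincides with the paper's.
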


We give the proof in \cref{sec:post_quantum_CG24} for completeness.

\section{Definitional Work}
\label{sec:QPRC_def}

In this section, we define quantum pseudorandom error-correcting codes (QPRCs).
Let $n_\logic,n_\phys:\N\to\N$ be polynomials, where $n_\logic(\secp)\le n_\phys(\secp)$ for all $\secp$.
We refer to $n_\logic$ and $n_\phys$ as the numbers of logical and physical qubits, respectively.
In the following, for any family of CPTP maps $\cE\coloneqq\{\cE_\secp\}_{\secp\in \N}$ mapping $n_\phys(\secp)$ qubits to $n_\phys(\secp)$ qubits, we call such a family $\cE$ a noise channel.
In addition, for any family of distributions $\cD\coloneqq\{\cD_\secp\}_{\secp\in \N}$ over CPTP maps from $n_\logic(\secp)$ qubits to $n_\phys(\secp)$ qubits, we call such a family $\cD$ an ideal distribution.

\subsection{Quantum Pseudorandom Error-Correcting Codes}
\label{subsec:QPRC_def}
First, we define public-key QPRCs.

\begin{definition}[Public-key quantum pseudorandom error-correcting codes (PKQPRCs)]\label{def:PK-QPRC}
    Let $\cE\coloneqq\{\cE_\secp\}_{\secp\in \N}$ be a noise channel.
    Let $\cD\coloneqq\{\cD_\secp\}_{\secp\in \N}$ be an ideal distribution.
    A public-key $\cD$-QPRC robust to $\cE$ is a tuple of algorithms $(\KeyGen,\Enc,\Dec)$ satisfying the following syntax and conditions.
    \begin{itemize}
        \item Syntax:
         \begin{itemize}
             \item $\KeyGen:$ 
             It is a QPT algorithm that takes the security parameter $\secp$ as input and outputs a pair consisting of a classical secret key $\sk$ and a classical public key $\pk$.
             \item $\Enc:$ 
             It is a QPT algorithm that, given a public key $\pk$, implements a CPTP map $\Enc_\pk$ from $n_\logic$ logical qubits to $n_\phys$ physical qubits.
             \item $\Dec$: 
             It is a QPT algorithm that, given a secret key $\sk$, implements a CPTP map $\Dec_\sk$ from $n_\phys$ physical qubits to $n_\logic$ logical qubits.
         \end{itemize}
        \item Robustness:
        \begin{align}
            \bigg\|
             \Exp_{(\pk,\sk)\gets\KeyGen(1^\secp)}(\Dec_\sk\circ\cE_\secp\circ\Enc_\pk)
             -\identitymap
            \bigg\|_\diamond
            \le\negl(\secp).
        \end{align}

        \item Pseudorandomness:
        For any QPT adversary $\cA^{(\cdot)}$,
        \begin{align}
            \bigg|
             \Pr_{(\pk,\sk)\gets\KeyGen(1^\secp)}[1\gets\cA^{\Enc_\pk}(1^\secp,\pk)]
             -\Pr_{\substack{(\pk,\sk)\gets\KeyGen(1^\secp)\\
               \cI\gets\cD_\secp}}[1\gets\cA^{\cI}(1^\secp,\pk)]
            \bigg|
            \le\negl(\secp).
        \end{align}
    \end{itemize}
    When $\cE$ is clear from the context, we refer to it as a public-key $\cD$-QPRC.
    When $\cD$ is clear from the context, we refer to it as a public-key QPRC robust to $\cE$.
    When both $\cD$ and $\cE$ are clear from the context, we refer to it as a public-key QPRC.
\end{definition}

In the definition of pseudorandomness, the ideal channel is sampled independently of the key and reused for all oracle queries. When the ideal distribution outputs the completely depolarizing channel with probability one, this is the quantum analogue of the classical PRC ideal oracle, which returns a fresh uniform string on each query.

In the construction in \cref{sec:QPRC_const}, we specialize the ideal distribution to the point-mass distribution on the completely depolarizing channel, which is the quantum analogue of the classical PRC ideal oracle.

\begin{remark}\label{remark:robustness_of_QPRC}
    As mentioned in \cref{remark:robustness_of_PRC}, several stronger notions of robustness have been considered for classical PRCs \cite{STOC:AACDG25,EPRINT:CGGMW25}.
    One can similarly formulate their quantum counterparts.
    However, studying such stronger notions is beyond the scope of this paper, and we leave their study to future work.
\end{remark}

Next, we define secret-key QPRCs.

\begin{definition}[Secret-key quantum pseudorandom error-correcting codes]\label{def:SK-QPRC}
    Let $\cE\coloneqq\{\cE_\secp\}_{\secp\in \N}$ be a noise channel.
    Let $\cD\coloneqq\{\cD_\secp\}_{\secp\in \N}$ be an ideal distribution.
    A secret-key $\cD$-QPRC robust to $\cE$ is a tuple of algorithms $(\KeyGen,\Enc,\Dec)$ satisfying the following syntax and conditions.
    \begin{itemize}
        \item Syntax:
         \begin{itemize}
             \item $\KeyGen:$ 
             It is a QPT algorithm that takes the security parameter $\secp$ as input and outputs a classical secret key $\sk$.
             \item $\Enc:$ 
             It is a uniform QPT algorithm that, given a secret key $\sk$, implements a keyed, efficiently computable, stateless CPTP map $\Enc_\sk$ from $n_\logic$ logical qubits to $n_\phys$ physical qubits.
             \item $\Dec$: 
             It is a QPT algorithm that, given a secret key $\sk$, implements a CPTP map $\Dec_\sk$ from $n_\phys$ physical qubits to $n_\logic$ logical qubits.
         \end{itemize}
        \item Robustness:
        \begin{align}
            \bigg\|
             \Exp_{\sk\gets\KeyGen(1^\secp)}(\Dec_\sk\circ\cE_\secp\circ\Enc_\sk)
             -\identitymap
            \bigg\|_\diamond
            \le\negl(\secp).
        \end{align}
        
        \item Pseudorandomness:
        For any QPT adversary $\cA^{(\cdot)}$,
        \begin{align}
            \bigg|
             \Pr_{\sk\gets\KeyGen(1^\secp)}[1\gets\cA^{\Enc_\sk}(1^\secp)]
              -\Pr_{\substack{\sk\gets\KeyGen(1^\secp)\\
               \cI\gets\cD_\secp}}[1\gets\cA^{\cI}(1^\secp)]
            \bigg|
             \le\negl(\secp),
        \end{align}
    \end{itemize}
    When $\cE$ is clear from the context, we simply refer to it as a secret-key $\cD$-QPRC.
    When $\cD$ is clear from the context, we refer to it as a secret-key QPRC robust to $\cE$.
    When both $\cD$ and $\cE$ are clear from the context, we refer to it as a secret-key QPRC.
\end{definition}

\subsection{Pseudorandom Isometric Error-Correcting Codes}
\label{subsec:PRIC_def}
In this subsection, we define PRICs as QPRCs by choosing the Haar measures of isometries as an ideal distribution, and discuss why we define PRICs this way.
We define and often use $\mu_{n_\logic\to n_\phys}\coloneqq\{\mu_{n_\logic(\secp)\to n_\phys(\secp)}\}_\secp$, where $\mu_{n_\logic(\secp)\to n_\phys(\secp)}$ is the Haar measure over isometries from $n_\logic(\secp)$ qubits to $n_\phys(\secp)$ qubits.

\begin{definition}[PRICs]\label{def:PRIC}
    Let $\cE$ be a noise channel.
    We call a tuple of algorithms $(\KeyGen,\Enc,\Dec)$ a pseudorandom isometric code (PRIC) robust to $\cE$ if it is a secret-key $\mu_{n_\logic\to n_\phys}$-QPRC robust to $\cE$.
\end{definition}

\paragraph{Why are PRICs secret-key primitives?}
In the above definition, we consider only the secret-key setting.
This is necessary because, regardless of noise, no public-key $\mu_{n_\logic\to n_\phys}$-QPRCs exist.

\begin{proposition}[Impossibility of public-key $\mu_{n_\logic\to n_\phys}$-QPRCs]
\label{prop:no_public_key_haar_qprc}
No public-key $\mu_{n_\logic\to n_\phys}$-QPRC exists, regardless of the
noise channel.
\end{proposition}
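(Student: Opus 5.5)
We prove this by exhibiting an explicit QPT distinguisher that, given $\pk$, tells oracle access to $\Enc_\pk$ apart from oracle access to a Haar-random isometry $V\gets\mu_{n_\logic\to n_\phys}$ sampled independently of $\pk$. The key observation is that the adversary holds $\pk$, so it can run $\Enc(\pk,\cdot)$ by itself. It can then compare its own copy of the encoding with the oracle's output. In the real world both come from the same channel, while in the ideal world the oracle is an isometry unrelated to the adversary's copy. The argument uses only pseudorandomness; robustness and the noise channel play no role, which is why the impossibility holds regardless of the noise.

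The distinguisher $\cA$ queries the oracle on $\ket{0^{n_\logic}}$ and obtains $\rho_1$. It then runs $\Enc(\pk,\cdot)$ itself on $\ket{0^{n_\logic}}$ and obtains $\rho_2$. It applies the SWAP test to $\rho_1\otimes\rho_2$ and outputs $1$ if the test accepts. The acceptance probability is $\frac12+\frac12\Tr[\rho_1\rho_2]$.

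In the real world, $\rho_1=\rho_2=\sigma_\pk\coloneqq\Enc_\pk(\ketbra{0}{0})$, so the acceptance probability is $\frac12+\frac12\Tr[\sigma_\pk^2]$. In the ideal world, $\rho_1=V\ketbra{0}{0}V^\dagger$ is a Haar-random pure state independent of $\sigma_\pk$. Averaging over $V$ gives $\Exp_V[\rho_1]=I/2^{n_\phys}$, so the acceptance probability is exactly $\frac12+\frac12\cdot 2^{-n_\phys}$. The advantage is therefore $\frac12\bigl(\Exp_\pk\Tr[\sigma_\pk^2]-2^{-n_\phys}\bigr)$.

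The main obstacle is that this advantage can be negligible when $\sigma_\pk$ is highly mixed, as in the completely depolarizing QPRCs of \cref{thm:QPRC_intro}. I would resolve this with a case split on the purity. If $\Exp_\pk\Tr[\sigma_\pk^2]$ is non-negligible, the SWAP test above already wins. Otherwise $\sigma_\pk$ is nearly maximally mixed, and $\cA$ instead queries the oracle $q=\poly(\secp)$ times on $\ket{0^{n_\logic}}$ and SWAP-tests pairs of oracle outputs against each other. Under a fixed isometry $V$ the two outputs are the same pure state, so the test accepts with probability $1$. In the real world the outputs are $\sigma_\pk^{\otimes 2}$, so the test accepts with probability $\frac12+\frac12\Tr[\sigma_\pk^2]$, which is bounded away from $1$ when the purity is small. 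This second test takes all its inputs from the oracle, so it needs no $\pk$ at all.

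Putting the cases together, the adversary runs both tests with a polynomial number of repetitions. It outputs $0$ (ideal) if every oracle-versus-oracle SWAP test accepts and the oracle-versus-self test matches the Haar prediction, and outputs $1$ otherwise. To make this precise, I would first show that in the ideal world both statistics are fixed regardless of $\pk$: oracle-versus-oracle acceptance is $1$ and oracle-versus-self acceptance is $\frac12+\frac12 2^{-n_\phys}$. I would then show that in the real world at least one statistic deviates noticeably. This holds because exact agreement with both would require $\sigma_\pk$ to be pure, since oracle-versus-oracle acceptance equals $1$, and also to have negligible overlap with itself, which is impossible for a pure state. A Chernoff bound then turns the constant gap in the relevant statistic into a non-negligible distinguishing advantage, contradicting pseudorandomness.
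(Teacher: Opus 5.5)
Your two ingredients are the same as the paper's: a SWAP test of an oracle output against a self-prepared copy of $\sigma_\pk=\Enc_\pk(\ketbra{0^{n_\logic}}{0^{n_\logic}})$, and a SWAP test of two oracle outputs, which always accepts for a fixed isometry. The problem is the way you combine them, where one step is wrong as stated.

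You claim that in the ideal world the oracle-versus-self acceptance is fixed at $\frac12+\frac12 2^{-n_\phys}$, and you then apply a Chernoff bound to repeated trials. But the ideal oracle is a single $V$ reused across all queries. Conditioned on $V$, the trials are i.i.d.\ with acceptance probability $\frac12+\frac12\bra{0^{n_\logic}}V^\dagger\sigma_\pk V\ket{0^{n_\logic}}$, and only the $V$-average equals the Haar prediction. The empirical frequency therefore concentrates around this $V$-dependent value. So your claim that the ideal world passes the two-sided ``matches the Haar prediction'' check with high probability is unjustified, and it is false for small $n_\phys$: for $n_\phys=1$ and pure $\sigma_\pk$, the conditional acceptance is uniform on $[\frac12,1]$.

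Your preliminary dichotomy is also off when $2^{-n_\phys}$ is non-negligible. The first test's gap is $\frac12(\Tr[\sigma_\pk^2]-2^{-n_\phys})$, which vanishes for maximally mixed $\sigma_\pk$ even though that state's purity is not negligible. Both points can be repaired, for example with Markov's inequality on $\bra{0^{n_\logic}}V^\dagger\sigma_\pk V\ket{0^{n_\logic}}$ or with one-sided thresholds. A much simpler route exists, though.

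The paper avoids repetition and case analysis entirely. Orient both tests so that the real oracle produces output $1$ more often:
\begin{itemize}
\item in the oracle-versus-oracle test, output $1$ on rejection;
\item in the oracle-versus-self test, output $1$ on acceptance.
\end{itemize}
Averaging over $V$ is legitimate in each test used once. The oracle-versus-self test queries $V$ once, so $\Exp_V[V\ketbra{0^{n_\logic}}{0^{n_\logic}}V^\dagger]=I/2^{n_\phys}$ applies directly. The oracle-versus-oracle test accepts with probability $1$ for every $V$.

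For every fixed $\pk$, the real-minus-ideal gaps are $\frac12(1-\Tr[\sigma_\pk^2])\ge0$ and $\frac12(\Tr[\sigma_\pk^2]-2^{-n_\phys})\ge0$, and they sum to $\frac12(1-2^{-n_\phys})$. Running one test chosen by a fair coin gives output $1$ with probability exactly $\frac12$ against $\Enc_\pk$ and $\frac14+\frac{1}{4\cdot 2^{n_\phys}}$ against a Haar isometry. No purity estimation or concentration bound is needed.
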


\begin{proof}[Proof of \cref{prop:no_public_key_haar_qprc}]
Let $(\KeyGen,\Enc,\Dec)$ be a tuple of algorithms satisfying the syntax of public-key QPRCs.
We construct a QPT adversary that breaks the pseudorandomness of this tuple with respect to $\mu_{n_\logic\to n_\phys}$.
Let $\secp$ be the security parameter. We omit the dependence on $\secp$ in the notation for simplicity.
Consider the following QPT algorithm $\cA$.
\begin{enumerate}
    \item Receive a public key $\pk$ and oracle access to $\cO$.

    \item Choose $b\gets\bit$.

    \begin{itemize}
        \item If $b=0$, prepare two copies of $\sigma\coloneqq\mathcal O(\ketbra{0^{n_\logic}}{0^{n_\logic}})$ by querying $\cO$.
        Perform the SWAP test on the two copies and output $1$ if the test
        rejects, and $0$ otherwise.

        \item If $b=1$, prepare a single copy of $\rho_\pk\coloneqq\Enc_\pk(\ketbra{0^{n_\logic}}{0^{n_\logic}})$ using $\pk$, and prepare $\sigma$ by querying $\cO$.
        Do the SWAP test for them, and output $1$ if the test is successful, otherwise, output $0$.
    \end{itemize}
\end{enumerate}
For each $\pk$, we have
\begin{align}
    \Pr[1\gets\cA^\cO(\pk)]=\frac{2-\Tr[\sigma^2]+\Tr[\rho_\pk\sigma]}{4},
\end{align}
where we omit the input $1^\secp$ for simplicity.
If $\cO=\Enc_\pk$, then $\sigma=\rho_\pk$, and this probability is $1/2$.
If $\cO=V(\cdot)V^\dag$ is a Haar-random isometry, then the probability is
\begin{align}
    \Pr_{V}[1\gets\cA^{V}(\pk)]
    &=\frac{1}{2}-\frac{1}{4}\Exp_V
    \bigg[
     \Tr[(V\ketbra{0^{n_\logic}}{0^{n_\logic}}V^\dagger)^2]
    \bigg]
    +\frac{1}{4}\Exp_V
    \bigg[
     \Tr[\rho_\pk\cdot V\ketbra{0^{n_\logic}}{0^{n_\logic}}V^\dagger]
    \bigg]
    \notag\\
    &=\frac{1}{4}+\frac{1}{4\cdot2^{n_\phys}},
\end{align}
where the expectation is taken over $\mu_{n_\logic\to n_\phys}$, and we have used $\Exp_V[V\ketbra{0^{n_\logic}}{0^{n_\logic}}V^\dagger]=I^{\otimes n_\phys}/2^{n_\phys}$.
Thus, we obtain
\begin{align}
    \Pr_{(\pk,\sk)\gets\KeyGen(1^\secp)}[1\gets\cA^{\Enc_\pk}(\pk)]-\Pr_{(\pk,\sk)\gets\KeyGen(1^\secp),V}[1\gets\cA^{V}(\pk)]
    =\frac{1}{4}-\frac{1}{4\cdot2^{n_\phys}},
\end{align}
which is non-negligible.
Therefore $\cA$ breaks pseudorandomness, which completes the proof.
\end{proof}

\begin{remark}
\label{remark:no_public_key_isometry_one_design}
We can extend the above impossibility to any ideal distribution $\nu=\{\nu_\secp\}_\secp$ forming an exact isometry $1$-design.
This is because we used only the first moment of Haar-random isometries in the proof. 
\end{remark}

\paragraph{Why do we not consider Haar-random unitaries as an ideal distribution?}
One may consider Haar-random unitaries as an ideal distribution for QPRCs rather than Haar-random isometries.
This choice requires the logical and physical spaces to have the same
dimension, leaving no redundancy to protect against irreversible noise.
Thus, we expect that such QPRCs do not exist.

Related limitations are well established~\cite{GrasslHuberWinter22,BarnumFuchsJozsaSchumacher96}, and these results support the intuition that insufficient redundancy
precludes error correction, but they do not directly give the impossibility statement needed here.
Our definition allows arbitrary CPTP encoders and decoders with a
shared classical secret key, and measures recovery error by the
diamond distance of the key-averaged channel from the identity.

We therefore give an impossibility result for our setting as follows.

\begin{theorem}[Impossibility of QPRCs with small redundancy]\label{thm:impossibility_w_small_redandancy}
    Let $\cD$ be an arbitrary ideal distribution.
    Let $t:\N\to\N$ be an arbitrary function satisfying $n_\phys(\secp)-n_\logic(\secp)<t(\secp)\le n_\phys(\secp)$ for all sufficiently large $\secp$.
    Then, there exists a $t$-local error map $\cE$ such that no secret-key $\cD$-QPRCs robust to $\cE$ exist.
    In addition, if there exists a positive integer $c$ such that $n_\phys(\secp)-n_\logic(\secp)< c$ for all sufficiently large $\secp$, there exists a $c$-local error map $\cE$ such that no secret-key $\cD$-QPRCs robust to $\cE$ exist.
\end{theorem}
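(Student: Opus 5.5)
Choose the error map $\cE$ to be the completely depolarizing (erasure-like) channel on a fixed set $S$ of $t(\secp)$ physical qubits, say the first $t$ qubits: $\cE(\rho)=\frac{I_S}{2^{t}}\otimes\Tr_S[\rho]$. By \cref{def:local_noise}, this is $t$-local noise, since its Kraus operators can be taken to be Paulis supported on $S$. Robustness of any QPRC would then mean that the key-averaged channel $\Exp_{\sk}\Dec_\sk\circ\cE\circ\Enc_\sk$ is negligibly close to $\identitymap_{n_\logic}$ in diamond distance. Averaging over the key only helps the scheme, so I argue about the single CPTP map $\Phi\coloneqq\Exp_\sk\Dec_\sk\circ\cE\circ\Enc_\sk$ and never use pseudorandomness. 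This is consistent with the statement holding for arbitrary $\cD$.

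The key step is an information bound. The output of $\cE$ depends on the input only through $\Tr_S$ of the encoded state, which lives on $n_\phys-t<n_\logic$ qubits. To make the dependence on the key explicit, write $\Phi=\Exp_\sk \cR_\sk\circ\Tr_S\circ\Enc_\sk$, where $\cR_\sk$ appends $I_S/2^t$ and then applies $\Dec_\sk$. Equivalently, $\Phi$ factors as $\cR\circ\cT$, where $\cT$ maps the input together with a classical key register to $(\sk,\Tr_S\Enc_\sk(\rho))$. Now take a maximally entangled state on $\regR\regA$ with $n_\logic$ qubits on each side. Applying $\identitymap_\regR\otimes\Phi$ to it must give a state negligibly close to the maximally entangled state. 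In the intermediate state, the quantum part has dimension $2^{n_\phys-t}$ and the classical key is independent of $\regR$.

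I would then use the coherent information $I(\regR\rangle \regB\regK)$. It is at most $(n_\phys-t)\le n_\logic-1$, because a classical register $\regK$ that is independent of $\regR$ cannot raise coherent information beyond the dimension bound, since $I(\regR\rangle\regB\regK)=\sum_k p_k I(\regR\rangle\regB)_k$. By data processing under $\cR$ this bound survives to the output. In contrast, a state $\epsilon$-close to maximally entangled has coherent information at least $n_\logic-O(\epsilon n_\logic)-h(\epsilon)$ by Fannes–Audenaert/Alicki–Fannes continuity. With $\epsilon$ negligible and $n_\logic$ polynomial, these two bounds contradict each other. Alternatively, to avoid entropies I could use entanglement fidelity directly: the fidelity of the recovered state with the maximally entangled state is at most $2^{n_\phys-t}/2^{n_\logic}\le 1/2$, by the standard bound that the entanglement fidelity through a $d$-dimensional bottleneck is at most $d/2^{n_\logic}$, applied per key and averaged. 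This forces diamond distance at least a constant, which is cleaner, and I would prefer it.

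For the second claim, with $n_\phys-n_\logic<c$, I take $t=c$ in the same construction; this works whenever $c\le n_\phys$, which holds for large $\secp$. The main obstacle is proving the bottleneck fidelity bound carefully in the presence of the shared classical key. The plan is to fix $\sk$, bound the overlap of $(\identitymap\otimes\cR_\sk\circ\Tr_S\circ\Enc_\sk)(\Phi^+)$ with $\Phi^+$ by $d_{\mathrm{bottleneck}}^2/d^2$ via a Kraus/Schmidt-rank argument, and then average over $\sk$ by linearity. A Schmidt-rank argument actually gives the stronger bound $(2^{n_\phys-t}/2^{n_\logic})^2\le 1/4$.
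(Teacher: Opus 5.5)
Your proposal is correct and is essentially the paper's proof. The paper also takes $\cE_\secp=\Delta_{[t]}$ and factors $\Dec_\sk\circ\Delta_{[t]}\circ\Enc_\sk$ as $\mathcal C_\sk\circ\mathcal B_\sk$ through the $(n_\phys-t)$-qubit bottleneck. It then bounds the per-key overlap with the maximally entangled state, averages over $\sk$ by linearity, and concludes with $\frac12\|\rho-\Phi\|_1\ge 1-\bra{\Phi}\rho\ket{\Phi}$.

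One correction to your side remark: a Schmidt-rank argument alone gives only $2^{n_\phys-t}/2^{n_\logic}\le 1/2$, which already suffices, and cannot give the squared bound $4^{n_\phys-n_\logic-t}$. The paper gets the squared bound via Choi operators and H\"older ($\|X_\sk\|_\infty\le 2^{n_\phys-t}$, $\Tr Y_\sk=2^{n_\phys-t}$), which additionally uses that the reference marginal stays maximally mixed, i.e., that the encoder is trace preserving.
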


Taking $c=1$ in \cref{thm:impossibility_w_small_redandancy} shows that QPRCs compared with Haar-random unitaries cannot be robust to every one-local noise channel.

\begin{corollary}
    Suppose that $n_\logic=n_\phys$, and let $\mu_{n_\phys}\coloneqq\{\mu_{n_\phys(\secp)}\}_\secp$, where $\mu_{n_\phys(\secp)}$ is the Haar measure over unitaries acting on $n_\phys(\secp)$ qubits.
    Then, there exists a one-local error map $\cE$ such that no secret-key $\mu_{n_\phys}$-QPRCs robust to $\cE$ exist.
\end{corollary}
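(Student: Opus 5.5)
The corollary follows directly from \cref{thm:impossibility_w_small_redandancy}, applied with the ideal distribution $\cD\coloneqq\mu_{n_\phys}$ and redundancy constant $c=1$. First I check that $\mu_{n_\phys}$ is a legitimate ideal distribution in the sense of \cref{sec:QPRC_def}. Since $n_\logic(\secp)=n_\phys(\secp)$, a Haar-random $n_\phys$-qubit unitary $U$ gives a CPTP map $\rho\mapsto U\rho U^\dagger$ from $n_\logic(\secp)$ qubits to $n_\phys(\secp)$ qubits. So $\mu_{n_\phys}$ is a family of distributions over such CPTP maps, which is exactly what the theorem needs; the theorem allows $\cD$ to be arbitrary.

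Next I check the hypothesis of the second part of \cref{thm:impossibility_w_small_redandancy}. We have $n_\phys(\secp)-n_\logic(\secp)=0<1=c$ for every $\secp$, so the condition holds with the positive integer $c=1$. The theorem then gives a $1$-local error map $\cE$ such that no secret-key $\mu_{n_\phys}$-QPRC robust to $\cE$ exists. This is the claimed statement.

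There is no real obstacle. The only point to be careful about is typing: the definitions of QPRCs are phrased with ideal distributions over maps from $n_\logic$ to $n_\phys$ qubits, so I make explicit that unitaries count as such maps when $n_\logic=n_\phys$. I also note that the impossibility does not depend on pseudorandomness at all. It comes only from the lack of redundancy, so it holds for any choice of $\cD$, in particular the Haar measure on unitaries.
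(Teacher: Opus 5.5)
Your proposal is correct and matches the paper's argument: the paper obtains this corollary by taking $c=1$ in the second part of \cref{thm:impossibility_w_small_redandancy}, with $\cD=\mu_{n_\phys}$ and $n_\phys-n_\logic=0<1$. Your extra check that Haar-random unitary channels are valid ideal distributions when $n_\logic=n_\phys$ is a harmless clarification.
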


To prove \cref{thm:impossibility_w_small_redandancy}, we need the following lemma.

\begin{restatable}{lemma}{ImpossibilityOfCPTPEncode}
\label{lem:cptp_encoding_no_redundancy}
    Let $(\KeyGen,\Enc,\Dec)$ be a tuple of algorithms satisfying the syntax of secret-key QPRCs.
    Let $\secp$ be the security parameter.
    For an integer $t$ satisfying $0\le t\le n_\phys(\secp)$, define a CPTP map $\Delta_{[t]}$ from $n_\phys(\secp)$ qubits to $n_\phys(\secp)$ qubits as the completely depolarizing map on the first $t$ qubits, i.e.,
    \begin{align}
        \Delta_{[t]}(\cdot)\coloneqq
        \frac{I^{\otimes t}}{2^t}\otimes\Tr_{[t]}[(\cdot)],
    \end{align}
    where $\Tr_{[t]}$ denotes the partial trace over the first $t$ qubits.
    Then,
    \begin{align}
        \left\|
          \Exp_\sk\bigl[\Dec_\sk\circ\Delta_{[t]}\circ\Enc_\sk\bigr]
    -\identitymap
        \right\|_\diamond
        \ge\max\{0,1-4^{n_\phys-n_\logic-t}\}.
    \end{align}
\end{restatable}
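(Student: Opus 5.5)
We will lower-bound the diamond distance by the entanglement fidelity of the key-averaged channel $\Lambda\coloneqq\Exp_\sk[\Dec_\sk\circ\Delta_{[t]}\circ\Enc_\sk]$. Let $\ket{\Phi}_{\regR\regA}$ be the maximally entangled state on $n_\logic$ reference qubits and $n_\logic$ logical qubits. Set $F\coloneqq\bra{\Phi}(\identitymap_\regR\otimes\Lambda)(\ketbra{\Phi}{\Phi})\ket{\Phi}$. By the Fuchs--van de Graaf inequality, $\|\Lambda-\identitymap\|_\diamond\ge 1-\sqrt{F}$; more directly, the trace distance to a pure state is at least $1-F$. So it suffices to show $F\le 4^{n_\phys-n_\logic-t}$, or more precisely a bound on $F$ that yields the stated expression. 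By linearity, $F=\Exp_\sk F_\sk$, where $F_\sk$ is the entanglement fidelity of $\Dec_\sk\circ\Delta_{[t]}\circ\Enc_\sk$. It therefore suffices to bound $F_\sk$ uniformly for each fixed key.

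For a fixed key, write $\Delta_{[t]}=\Exp_{P}P(\cdot)P^\dag$, where $P$ ranges uniformly over the $4^t$ Paulis on the first $t$ qubits. The composed channel $\Dec_\sk\circ\Delta_{[t]}\circ\Enc_\sk$ then has a Kraus decomposition whose operators are $4^{-t/2}D_jPE_i$. Here $\{E_i\}$ and $\{D_j\}$ are Kraus operators of $\Enc_\sk$ and $\Dec_\sk$, and $E_i$ maps $2^{n_\logic}$ to $2^{n_\phys}$ dimensions.

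The key structural fact is that the channel factors through the register of the last $n_\phys-t$ physical qubits. It can be written as $\Delta_{[t]}(\rho)=\tau\otimes\Tr_{[t]}\rho$ with $\tau$ maximally mixed, so $\Lambda_\sk=\cR_\sk\circ\Tr_{[t]}\circ\Enc_\sk$, where $\cR_\sk(\omega)\coloneqq\Dec_\sk(\tau\otimes\omega)$. We will use the standard bound on the entanglement fidelity of a channel $\cN=\cR\circ\cM$ that passes through a $d$-dimensional system: $F\le d^2/D^2$, where $D=2^{n_\logic}$ and $d=2^{n_\phys-t}$. This gives $F_\sk\le 4^{n_\phys-t-n_\logic}$.

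To prove this bound I will use the Kraus decomposition. Every Kraus operator $K$ of $\cN$ factors as $K=R\,M$ with $M:\C^D\to\C^d$, so $\mathrm{rank}\,K\le d$. Cauchy--Schwarz then gives $|\Tr K|^2\le \mathrm{rank}(K)\,\Tr(K^\dag K)\le d\,\Tr(K^\dag K)$. Since $F=\sum_K|\Tr K|^2/D^2$ and $\sum_K K^\dag K=I$, this yields only $F\le d\cdot D/D^2=d/D$, which is weaker than the $d^2/D^2$ we want. To reach $d^2/D^2$, I will instead use the coherent-information bound. Passing through a $d$-dimensional system gives $I_c\le\log d$, while high fidelity forces large coherent information. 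Alternatively, I will use the known result that the optimal entanglement fidelity for a channel through a $d$-dimensional bottleneck is exactly $d^2/D^2$ when $d\le D$, obtained by writing the Choi state as arising from a state of Schmidt rank at most $d$ across $\regR$ versus the rest. That follows by expanding $\ket{\Phi}=\sum_K$ components: the overlap $\bra\Phi\sigma\ket\Phi$ of a state $\sigma$ that is separable-of-Schmidt-number $d$ with the maximally entangled state of dimension $D$ is at most $d/D$. This gives $F\le d/D=2^{n_\phys-t-n_\logic}$. Squaring appears when $1-\sqrt F$ is used, but to match the stated bound I will combine $F\le d/D$ with $\|\Lambda-\identitymap\|_\diamond\ge 1-F$.

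This gives $1-2^{n_\phys-n_\logic-t}\ge 1-4^{n_\phys-n_\logic-t}$ whenever the exponent is negative, which is the case of interest. When the exponent is nonnegative, the bound $\max\{0,\cdot\}$ is trivially $0$. The main obstacle is the Schmidt-number fidelity bound. I plan to prove it via the Kraus rank argument applied to the Choi vectors: each $(I\otimes K)\ket\Phi$ has Schmidt rank at most $d$, and for such a vector $\psi$ one has $|\braket{\Phi|\psi}|^2\le (d/D)\|\psi\|^2$, since the maximal overlap of $\ket{\Phi}$ with unit vectors of Schmidt rank at most $d$ is $d/D$. Summing over $K$ and using trace preservation gives $F\le d/D$.
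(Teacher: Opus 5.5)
\textbf{There is a genuine gap: your argument proves a strictly weaker inequality than the lemma.} Your overall strategy matches the paper's: bound the entanglement fidelity $F$ of the key-averaged channel, then use $\|\Lambda-\identitymap\|_\diamond\ge 1-F$.

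Write $a\coloneqq n_\phys-n_\logic-t$, $D=2^{n_\logic}$ and $d=2^{n_\phys-t}$. Your Schmidt-rank argument gives only $F_\sk\le d/D=2^{a}$, and hence only $\|\Lambda-\identitymap\|_\diamond\ge 1-2^{a}$. Your final claim that $1-2^{a}\ge 1-4^{a}$ when the exponent is negative is backwards. For $a<0$ we have $2^a>4^a$, so $1-2^a<1-4^a$. For example, at $a=-1$ you obtain $1/2$, not the claimed $3/4$. So the lemma is not established.

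\textbf{Why the Kraus-rank route cannot close the gap.} The argument uses only that every Kraus operator of the composite channel has rank at most $d$, and that property cannot give anything better than $d/D$. Consider the block-pinching channel $\rho\mapsto\sum_k\Pi_k\rho\Pi_k$, built from $D/d$ orthogonal rank-$d$ projectors. All its Kraus operators have rank $d$, and its entanglement fidelity is exactly $d/D$. Meanwhile the true answer for a channel through a $d$-dimensional bottleneck is $d^2/D^2$. For instance, full depolarization ($d=1$) has fidelity $1/D^2$.

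\textbf{The missing idea.} You need trace preservation of both halves of the factorization $\Dec_\sk\circ\Delta_{[t]}\circ\Enc_\sk=\mathcal C_\sk\circ\mathcal B_\sk$, where $\mathcal B_\sk=\Tr_{[t]}\circ\Enc_\sk$ and $\mathcal C_\sk(\omega)=\Dec_\sk(2^{-t}I^{\otimes t}\otimes\omega)$. The paper does this as follows.
\begin{itemize}
\item It takes Choi operators $X_\sk=D(\mathcal B_\sk\otimes\identitymap)(\Phi)$ and $Y_\sk=D(\mathcal C_\sk^\dagger\otimes\identitymap)(\Phi)$.
\item Since $\mathcal B_\sk$ is trace preserving, the partial trace of $X_\sk$ over its output is $I$. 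Since any eigenvector of $X_\sk$ has an input-side reduced state of rank at most $d$, this forces $\|X_\sk\|_\infty\le d$.
\item Since $\mathcal C_\sk$ is trace preserving, $\Tr Y_\sk=\Tr\,\mathcal C_\sk(I_d)=d$.
\item By H\"older, $D^2F_\sk=\Tr[X_\sk Y_\sk]\le\|X_\sk\|_\infty\Tr Y_\sk\le d^2$.
\end{itemize}

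In your Kraus language, write $K_{ij}=R_jM_i$, with $M_i$ Kraus operators of $\mathcal B_\sk$ and $R_j$ Kraus operators of $\mathcal C_\sk$. The paper's argument then bounds $\sum_{i,j}|\Tr(R_jM_i)|^2$ jointly, using normalization on each side separately, rather than term by term.

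You cite $d^2/D^2$ as a known result. However, none of the routes you sketch (Cauchy--Schwarz, coherent information, Schmidt number) actually proves it.
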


We defer the proof to \cref{sec:omitted_proof}. 
Now we give the proof of \cref{thm:impossibility_w_small_redandancy}.

\begin{proof}[Proof of \cref{thm:impossibility_w_small_redandancy}]
    In this proof, we suppress the dependence on $\cD$ for simplicity.
    We first prove the first claim.
    To this end, we construct a $t$-local error map $\cE$ for which no tuple of algorithms satisfying the syntax of a secret-key QPRC is robust.
    Define $\cE\coloneqq\{\cE_\secp\}_{\secp}$, where, for all sufficiently large $\secp$, $\cE_\secp\coloneqq\Delta_{[t(\secp)]}$ is the completely depolarizing map on the first $t(\secp)$ qubits, and, for smaller $\secp$, $\cE_\secp$ is the identity map.
    It is clear that $\cE$ is $t$-local.
    Let $(\KeyGen,\Enc,\Dec)$ be a tuple of algorithms satisfying the syntax of secret-key QPRCs.
    Fix a sufficiently large security parameter $\secp$.
    We omit its dependence for notational simplicity.
    From $t>n_\phys-n_\logic$ and \cref{lem:cptp_encoding_no_redundancy}, we have
    \begin{align}
        \left\|
          \Exp_\sk\bigl[\Dec_\sk\circ\Delta_{[t]}\circ\Enc_\sk\bigr]
          -\identitymap
        \right\|_\diamond
        \ge\max\{0,1-4^{n_\phys-n_\logic-t}\}
        \ge \frac{3}{4},
    \end{align}
    Thus, $(\KeyGen,\Enc,\Dec)$ is not robust against $\cE$.

    The second claim follows by the same argument, which concludes the proof.
\end{proof}

\section{Construction of QPRCs for Completely Depolarizing Channel}
\label{sec:QPRC_const}

In this section, we construct QPRCs encoding $n_\logic$ logical qubits into $n_\phys$ physical qubits, robust to $t$-local noise with $t=O(n_\phys)$ and $n_\phys=O(n_\logic)$.
As the ideal distribution $\cD$, we consider the distribution that outputs the completely depolarizing channel with probability one, and we omit $\cD$ in this section.
Since public-key QPRCs imply secret-key QPRCs, it suffices to construct public-key QPRCs.

\begin{construction}[Public-key QPRC]\label{const:PKQPRC}
    Let $p\in(0,1/2)$.
    Let $n_\logic,t,m,\ell:\N\to\N$ be polynomials.
    Let $(\QECC.\Enc,\QECC.\Dec)$ be a QECC for $t$-local noise that encodes $n_\logic$ qubits into $\ell$ qubits.
    Let $(\PRC.\KeyGen,\PRC.\Enc,\PRC.\Dec)$ be a $(2\ell,m)$ public-key PRC robust to any $p$-bounded noise.
    Define $n_\phys\coloneqq \ell+m$.
    We construct a public-key QPRC $(\KeyGen,\Enc,\Dec)$ encoding $n_\logic$ logical qubits into $n_\phys$ physical qubits as follows.
    \begin{itemize}
        \item 
        $\KeyGen:$ 
        On input $1^\secp$, run $(\sk_\PRC,\pk_\PRC)\gets\PRC.\KeyGen(1^\secp)$.
        Output the pair $(\sk,\pk)$, where $\sk\coloneqq \sk_\PRC$ and $\pk\coloneqq\pk_\PRC$.
        
        \item 
        $\Enc:$
        Parse $\pk= \pk_\PRC$.
        Let $\regA,\regB$, and $\regC$ be $n_\logic$-qubit, $\ell$-qubit, and $m$-qubit registers, respectively.
        Given $\pk$, implement the following operation.
        Choose $x,z\gets\bit^\ell$ and internal randomness $r$ for $\PRC.\Enc$, and apply the isometry 
        \begin{align}
            \ket{\psi}_\regA\mapsto (X^xZ^z)_{\regB}\cdot
            \QECC.\Enc_{\regA\to\regB}\cdot
            \ket{\psi}_\regA
            \otimes\ket{\PRC.\Enc(\pk_\PRC,x\|z;r)}_\regC
        \end{align}
        from $\regA$ to $\regB\regC$.

        \item 
        $\Dec:$
        Parse $\sk=\sk_\PRC$.
        On input an $n_\phys$-qubit state $\rho$ on $\regB\regC$, apply the following operations:
        \begin{enumerate}
            \item Measure $\regC$ in the computational basis to get $c\in\bit^m$.
            Run $\PRC.\Dec(\sk_\PRC,c)$. If the result is a string $x\|z\in\bit^{2\ell}$, use it; if the result is $\bot$, set $x\|z\coloneqq 0^{2\ell}$. Apply the unitary $(X^xZ^z)^\dag$ to register $\regB$.
            \item 
            Apply $\QECC.\Dec_{\regB\to\regA}$ and output register $\regA$.
        \end{enumerate}
    \end{itemize}
\end{construction}

We prove the following theorem.

\begin{theorem}\label{thm:PKQPRC}
    Let $p\in(0,1/2)$.
    If $t(\secp)\le p\cdot m(\secp)$ for sufficiently large $\secp$, then \cref{const:PKQPRC} is a public-key QPRC robust to any $t$-local noise.
\end{theorem}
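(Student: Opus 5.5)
We prove robustness and pseudorandomness separately.

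\textbf{Robustness.} First reduce to Pauli noise. Replace $\cE$ by the channel $\widetilde{\cE}$ from \cref{def:local_noise}; this costs only a negligible amount in diamond distance. The Kraus operators of $\widetilde{\cE}$ have weight at most $t$, so each is a combination of Pauli operators of weight at most $t$.

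One cannot twirl directly here, because the encoder has no random Pauli applied across all of $\regB\regC$. So we work with Kraus operators. Fix a Kraus operator $E=\sum_{Q}\beta_Q Q$, where each $Q=Q_\regB\otimes X^{a}Z^{b}$ has weight at most $t$. Then track the decoder step by step.

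\emph{Measuring $\regC$.} Let the PRC codeword be $c$. Measuring $\regC$ gives outcome $c\oplus a$. Since the $Z^b$ part acts only as a phase on the basis state $\ket{c}$, it has no effect on the outcome.

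\emph{Decoding the key.} We need $\PRC.\Dec$ to output $x\|z$ whenever $\weight(a)\le t\le pm$. The noise channel $c\mapsto c\oplus a$ is $p$-bounded, so by robustness of the PRC (which holds for every fixed message, in particular for each fixed $x\|z$), decoding succeeds except with negligible probability over the keys and the randomness $r$.

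There is a subtlety: different $Q$ in the expansion of $E$ give different shifts $a$. So we write the post-measurement state as a superposition over $Q$. After decoding, the operator applied to $\regB$ is $(X^xZ^z)^\dagger Q_\regB X^xZ^z = \pm Q_\regB$, whenever decoding is correct for the shift $a$ belonging to that branch. On that event, the residual error on $\regB$ is the operator $\sum_{Q:\,a_Q=a}\pm\beta_Q Q_\regB$.

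\emph{Bounding the failure event.} Define a ``bad'' projector onto outcomes $c'$ for which decoding fails. Bound its total weight by a union over the $\le 2^{m}$-ish shifts? That bound is too crude. Instead average the probability of failure over $x,z,r,\sk$: failure probability equals $\sum_Q\|\ldots\|$ weighted by $\Exp$ of the indicator that decoding $c\oplus a_Q$ fails. By linearity of expectation and the Pauli-diagonal structure after averaging over the uniformly random $x,z$, this reduces to a mixture over fixed $a$, each with negligible failure probability. The key-averaging over $x,z$ is what makes the cross terms between different $Q_\regB$ vanish. This plays the role of the twirl, in the spirit of \cref{lem:Pauli_twirling_for_local_noise}.

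\emph{Recovering the state.} On the success event the residual error on $\regB$ has weight at most $t$. Hence $\QECC.\Dec$ recovers $\ket{\psi}$ exactly. Any components of the error acting on $\regC$ end up in the discarded $\regC$ register.

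Combining these steps with \cref{lem:gentle_measurement} converts the negligible failure probabilities into a negligible diamond-distance bound for the key-averaged channel. I expect this step to be the main obstacle: handling coherent cross terms where a branch in which decoding fails interferes with a branch in which it succeeds. The remedy is the expectation over $x,z$, which decoheres the Pauli frame.

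\textbf{Pseudorandomness.} The argument is a hybrid over the adversary's $q$ queries. In $\mathsf{H}_0$, each query is answered with $\Enc_\pk$. In $\mathsf{H}_1$, each query is answered by the same map, except that the PRC codeword is replaced by a fresh uniform string $u\in\bit^m$, while $x,z$ are still fresh uniform strings.

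$\mathsf{H}_0$ and $\mathsf{H}_1$ are indistinguishable by PRC pseudorandomness. The reduction is given $\pk$ and only needs classical oracle access: it samples $x,z$ itself, queries its oracle on $x\|z$, and applies the pad and $\QECC.\Enc$ locally.

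In $\mathsf{H}_1$, register $\regC$ is maximally mixed and independent of everything else. Register $\regB$ carries $\QECC.\Enc\ket{\psi}$ under a fresh uniform Pauli pad. By the quantum one-time pad \cite{C:BroJef15}, averaging over $x,z$ sends every state, including parts entangled with the adversary, to $I/2^\ell$ tensored with the reference. So each query in $\mathsf{H}_1$ implements exactly the completely depolarizing channel on $n_\phys=\ell+m$ qubits, which is the ideal oracle.
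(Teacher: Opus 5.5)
Your pseudorandomness argument is the paper's: PRC security replaces the codeword by a fresh uniform string, and the quantum one-time pad on $\regB$ then gives exactly the completely depolarizing channel. Your per-outcome bookkeeping on $\regB$ is also fine.

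The gap is in ``Bounding the failure event.'' You invoke PRC robustness only for fixed additive channels $c\mapsto c\oplus a$. You then assert that, after averaging over $x,z$, the failure probability is a mixture over fixed shifts $a$ whose weights do not depend on the codeword $c$. This is false, for three reasons.
\begin{itemize}
\item Your remark that $Z^b$ ``has no effect on the outcome'' holds for a single Pauli, but a Kraus operator is a superposition of Paulis. Two terms with the same $\regB$-part and the same $X$-part $a$ on $\regC$, but different $Z$-parts $b\neq b'$, interfere with phase $(-1)^{(b\oplus b')\cdot c}$ in the probability of shift $a$. No averaging of the pad on $\regB$ touches this. For instance, the one-local channel with Kraus operators $\ketbra{0}{0}$ and $X\ketbra{1}{1}$ on the first qubit of $\regC$ flips that bit exactly when $c_1=1$.
\item $\Exp_{x,z}$ cannot be applied to the weights separately. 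The failure event $\PRC.\Dec(\sk,c\oplus a)\neq x\|z$, with $c=\PRC.\Enc(\pk,x\|z;r)$, itself depends on $x,z$.
\item As you note, a union bound over the exponentially many shifts of weight at most $t$ is useless.
\end{itemize}
So negligibility of the failure probability is not established.

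The obstacle you name is also not the real one. Measuring $\regC$ in the computational basis puts distinct shifts into orthogonal outcomes. All terms with the same outcome $c'$ receive the same decoded key $\PRC.\Dec(\sk,c')$, so success and failure branches never interfere.

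The missing step is to apply PRC robustness once, for each fixed $x,z$, to an input-dependent classical channel.
\begin{itemize}
\item Let $\eta_\regB$ be the reduced state on $\regB$ of $(X^xZ^z)\QECC.\Enc\ket{\psi}$. Let $\cN$ map $c$ to the outcome of measuring $\regC$ after applying $\cE$ to $\eta_\regB\otimes\ketbra{c}{c}$.
\item Every Kraus operator is a combination of Paulis of weight at most $t$, so every output of $\cN$ differs from its input in at most $t\le pm$ bits. Hence $\cN$ is $p$-bounded in the sense of \cref{def:p-bounded_noise}, which permits input dependence.
\item The probability that the decoder recovers $x\|z$ is exactly $\Pr[\PRC.\Dec(\sk,\cN(\PRC.\Enc(\pk,x\|z)))=x\|z]\ge1-\negl(\secp)$.
\item Replacing the decoded key by the true $x\|z$ on every outcome changes the output by at most this failure probability.
\item With the true key, the residual map on $\regB$ is $(X^xZ^z)^\dagger\circ\cE_{x,z}\circ(X^xZ^z)$, where $\cE_{x,z}(\cdot)=\Tr_\regC[\cE(\cdot\otimes\tau_{x,z})]$ and $\tau_{x,z}$ is the key- and randomness-averaged codeword state. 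This map is $t$-local, so the QECC corrects it.
\end{itemize}
This is the paper's argument, and it needs no averaging over $x,z$ for robustness.
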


From \cref{thm:QECC_exist,thm:post-quantum_PRC,thm:PKQPRC}, we have the following.

\begin{corollary}\label{coro:QPRC}
    There exist constants $c_0$ and $c_1$ satisfying the following.
    For polynomials $n_\logic,n_\phys,t:\N\to\N$ satisfying $n_\phys\ge c_0\cdot n_\logic(\secp)$ and $t(\secp)\le c_1\cdot n_\phys(\secp)$ for all sufficiently large $\secp$, if \cref{assumption:for_PRC} is true, then there exist public-key and secret-key QPRCs encoding $n_\logic$ logical qubits into $n_\phys$ physical qubits, robust to any $t$-local noise.
\end{corollary}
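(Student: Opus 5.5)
The corollary follows by instantiating \cref{const:PKQPRC} with the QECC from \cref{thm:QECC_exist} and the PRC from \cref{thm:post-quantum_PRC}, then applying \cref{thm:PKQPRC}. The main task is to choose $\ell$ and $m$ as functions of $n_\phys$ so that every constraint holds and $n_\phys=\ell+m$ exactly.

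Let $R,\alpha_Q$ be the constants of \cref{thm:QECC_exist}. Fix any $p\in(0,1/2)$, say $p=1/4$, and let $\alpha_P$ be the constant of \cref{thm:post-quantum_PRC} for this $p$. The PRC encodes $2\ell$ bits into $m$ bits, so it needs $m\ge 2\alpha_P\ell$. I will take $\ell\coloneqq\lfloor n_\phys/(1+2\alpha_P)\rfloor$ and $m\coloneqq n_\phys-\ell$, so that $m\ge 2\alpha_P\ell$ and $\ell,m=\Theta(n_\phys)$. I then need three conditions.
\begin{itemize}
\item For the QECC on $n_\logic$ qubits into $\ell$ qubits to exist, I need $n_\logic\le R\ell$ and $\ell=O(n_\logic)$.
\item For it to correct $t$-local noise, I need $t\le\lfloor\alpha_Q\ell\rfloor$.
\item For \cref{thm:PKQPRC} to apply, I need $t\le pm$.
\end{itemize}

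Choosing $c_0\coloneqq 2(1+2\alpha_P)/R$ makes $R\ell\ge n_\logic$ for large $\secp$. Choosing $c_1$ smaller than both $\alpha_Q/(2(1+2\alpha_P))$ and $p/2$ makes $t\le c_1n_\phys$ imply both $t\le\lfloor\alpha_Q\ell\rfloor$ and $t\le pm$.

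The one real obstacle is the upper bound $\ell=O(n_\logic)$ required by \cref{thm:QECC_exist}. The hypothesis only gives $n_\phys\ge c_0n_\logic$, so $n_\phys$ could be far larger than $n_\logic$. I plan to handle this by padding. Take a QECC with $\ell'\coloneqq\lceil n_\logic/R\rceil$ physical qubits, so that $\ell'=O(n_\logic)$. Then encode into $\ell$ qubits by appending $\ell-\ell'$ ancilla qubits in $\ket{0}$ and discarding them before decoding. Noise on the ancillas is irrelevant, but this padded code only corrects $\alpha_Q\ell'$ errors, which can be much smaller than $c_1n_\phys$.

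The cleaner route is therefore to apply \cref{thm:QECC_exist} directly with $n\coloneqq\max(n_\logic,\lfloor R\ell\rfloor)$ logical qubits and $m\coloneqq\ell$. This satisfies $n\le R\ell$ and $\ell=O(n)$. I then encode the $n_\logic$ logical qubits by padding the logical input with $\ket{0}$ states up to $n$ qubits and tracing out the padding after decoding. The resulting code still corrects $\lfloor\alpha_Q\ell\rfloor$-local noise and remains an efficiently implementable isometry.

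With these parameters, \cref{thm:PKQPRC} yields a public-key QPRC robust to any $t$-local noise. Since public-key QPRCs imply secret-key QPRCs, as noted at the start of the section, both claims of the corollary follow.
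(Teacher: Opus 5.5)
Your proposal is correct and follows the paper's route: the paper obtains this corollary in one line by instantiating \cref{const:PKQPRC} with the QECC of \cref{thm:QECC_exist} and the PRC of \cref{thm:post-quantum_PRC}, invoking \cref{thm:PKQPRC}, and using that public-key QPRCs yield secret-key ones. Your explicit parameter bookkeeping goes beyond what the paper writes, in particular padding the logical input up to $\lfloor R\ell\rfloor$ qubits to meet the $m=O(n)$ hypothesis of \cref{thm:QECC_exist} when $n_\phys\gg n_\logic$; one small point is that deriving $t\le pm$ from $c_1<p/2$ uses $m\ge n_\phys/2$, i.e.\ $\alpha_P\ge 1/2$, which you may assume without loss of generality since the constant in \cref{thm:post-quantum_PRC} can always be enlarged.
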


To prove \cref{thm:PKQPRC}, we establish the robustness and pseudorandomness of \cref{const:PKQPRC} in \cref{subsec:QPRC_robustness,subsec:QPRC_pseudorandomness}, respectively.

\subsection{Robustness}
\label{subsec:QPRC_robustness}
\newcommand{\PauliRecover}{\algo{PauliRecover}}

\begin{lemma}\label{lem:robustness_PKQPRC}
    \cref{const:PKQPRC} is robust to any $t$-local noise if $t(\secp)\le p\cdot m(\secp)$ for sufficiently large $\secp$.
\end{lemma}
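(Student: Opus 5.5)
The plan is to reduce to an ideal decoder that knows the true Pauli key $x\|z$, and then to bound separately (i) the probability that the real PRC decoding step returns a wrong key and (ii) the error of the ideal decoder. By \cref{def:local_noise} and the triangle inequality for the diamond norm, we may replace $\cE_\secp$ by a channel $\widetilde\cE$ with Kraus operators $\{E_i\}$ of weight at most $t$ on $\regB\regC$, at negligible cost. We fix a reference register $\regR$ and an input $\ket{\psi}_{\regR\regA}$, and we analyze $\Exp_{\sk,x,z,r}[\Dec_\sk\circ\widetilde\cE\circ\Enc_\pk]$ on it. Expanding each $E_i=\sum_{a,b}\alpha^{(i)}_{a,b}X^aZ^b$, every Pauli term has weight at most $t$. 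Hence its $\regC$-part moves the computational-basis vector $\ket{c}$ only to vectors $\ket{c\oplus a_\regC}$ with $\weight(a_\regC)\le t\le p\,m$.

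\textbf{Step 1: key recovery.} Fix a message $w=x\|z$ together with $\ket{\psi}$ and $\{E_i\}$. From these we define a classical channel $\cN_w$ on $\bit^m$. On input $c$, it prepares $\regR\regB$ in $(X^xZ^z)_\regB\QECC.\Enc\ket{\psi}$, attaches $\ket{c}_\regC$, applies $\widetilde\cE$, measures $\regC$ in the computational basis, and outputs the result. By the weight observation above, $\cN_w$ always outputs some $c\oplus a$ with $\weight(a)\le t\le pm$, so it is $p$-bounded in the sense of \cref{def:p-bounded_noise}. The robustness of the PRC is quantified as ``for every channel and every message''. Applying it with the channel $\cN_w$ and the message $w$ gives $\Pr_{\sk,r}[\PRC.\Dec(\sk_\PRC,\cN_w(c))=w]\ge1-\negl(\secp)$. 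This probability is exactly the probability that step 1 of $\Dec$ outputs the correct key, because the distribution of the measured value of $\regC$ is the same in both processes. Averaging over the uniform $x,z$, the real decoder agrees with the ideal decoder (which uses the true $x\|z$) except with negligible probability. Since the decoding step is classical post-processing of a measurement outcome, the two channels are negligibly close in diamond distance.

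A technical caveat I expect to be the main obstacle is that the negligible function from PRC robustness may a priori depend on the channel $\cN_w$, and hence on $w$ and $\ket{\psi}$, while we need a bound uniform over exponentially many $w$ and over all inputs. I would handle this in one of two ways. The first is to note that the construction of \cite{C:ChrGun24} (see \cref{thm:post-quantum_PRC}) gives a failure bound depending only on the weight bound $pm$; I would state this explicitly as the form of robustness used. The second is to fold the uniform choice of $w$ into a single channel: the channel applied to $c$ is $\cN_w$ with $w$ uniform, and the failure probability averaged over $w$ is then bounded by a single negligible function.

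\textbf{Step 2: the ideal decoder.} Condition on a measurement outcome $c'=c\oplus a_\regC$ and undo the pad with the true $x,z$. For each Pauli term $X^{a}Z^{b}=P_\regB\otimes Q_\regC$, the pad conjugates $P_\regB$ to $\pm P_\regB$. Averaging over $x,z$ (a Pauli twirl on $\regB$, in the manner of \cref{lem:Pauli_twirling_for_local_noise}) kills the cross terms between distinct $\regB$-Paulis. What remains on $\regR\regB$ is a mixture of $P_\regB\,\QECC.\Enc\ketbra{\psi}{\psi}\QECC.\Enc^\dagger P_\regB^\dagger$ with $\weight(P_\regB)\le t$. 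Because the QECC corrects $t$-local noise, $\QECC.\Dec$ returns $\ket{\psi}$ up to negligible diamond error. Combining this with Step~1 through the triangle inequality yields robustness.
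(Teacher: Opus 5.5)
Your Step~1 is the paper's argument. The paper also compares the real decoder with one that removes the true pad $X^xZ^z$. It bounds the difference by the weight of $\regC$-outcomes on which $\PRC.\Dec$ does not return $x\|z$, and evaluates that weight through the same induced classical channel you define (its $\cN_\eta$), which is $p$-bounded because every Pauli term moves $\ket{c}$ by weight at most $t\le pm$.

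\textbf{Step~2 contains a false step.} Averaging over $(x,z)$ does not remove the cross terms between distinct $\regB$-Paulis, because $\regC$ is not independent of the pad: it holds an encoding of $x\|z$. Take two terms $P\otimes Q$ and $P'\otimes Q'$ of one Kraus operator. After the pad is removed, their contribution carries the coefficient $\Exp_{x,z}[\chi_{x,z}(P)\chi_{x,z}(P')\bra{c}Q'^\dagger Q\ket{c}]$, where $c$ encodes $x\|z$, and this expectation need not factor.

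Your argument uses nothing about $\regC$ beyond its dependence on $x\|z$, and for such encodings the conclusion fails. Suppose qubit $k$ of $\regC$ carries the bit $z_j$, and the noise is the unitary $e^{i\frac{\pi}{4}X_j\otimes Z_k}$ on qubit $j$ of $\regB$ and qubit $k$ of $\regC$. The pad sign $(-1)^{z_j}$ then cancels the $\regC$ sign, and for every $(x,z)$ the decoder sees the coherent error $e^{i\frac{\pi}{4}X_j}$, not a Pauli mixture. Conditioning on the outcome $c'$ is worse, since by your own Step~1 the outcome $c'$ essentially determines $x\|z$.

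The twirl is also unnecessary, and dropping it gives the paper's route. For each fixed $(x,z)$, take the map on $\regB$ obtained by appending the $\regC$ state, applying $\widetilde\cE$, tracing out $\regC$, and conjugating by $X^xZ^z$. Its Kraus operators are weighted versions of $(X^xZ^z)^\dagger(I\otimes\bra{c'})E_i(I\otimes\ket{c})(X^xZ^z)$, all in the span of Paulis of weight at most $t$ on $\regB$. \Cref{thm:QECC_exist} covers arbitrary Kraus operators of bounded weight, so it applies directly, and convexity handles the average over $(x,z)$.

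\textbf{Uniformity.} The paper simply asserts its bound ``uniformly in $x,z$'', so you are right to raise this. Your first remedy works. The same conclusion also follows from the black-box definition. Every $\cN_{w,\psi}$ is deterministically $p$-bounded, so choosing for each $\secp$ a near-worst pair $(w_\secp,\psi_\secp)$ gives a single $p$-bounded channel family and a single message sequence. Robustness for that family then forces the worst-case failure to be negligible.

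Your second remedy does not work. A channel that applies $\cN_{w'}$ for an independent uniform $w'$ bounds the failure of decoding $\PRC.\Enc(\pk,w)$ after $\cN_{w'}$, averaged over $w'$. The real process instead produces the diagonal quantity: $\PRC.\Enc(\pk,w)$ followed by $\cN_w$ for the same $w$. A key-independent channel that sees only the codeword cannot recreate this correlation with $w$.
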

\begin{proof}[Proof of \cref{lem:robustness_PKQPRC}]
    By the definition of $t$-local noise, choose an exact-local channel
    $\widetilde{\cE}_\secp$ with
    $\|\cE_\secp-\widetilde{\cE}_\secp\|_\diamond\le\negl(\secp)$.
    We first carry out the argument for $\widetilde{\cE}_\secp$ and write it as
    $\cE_\secp$ below. Contractivity under the encoding and decoding CPTP maps
    adds at most $\negl(\secp)$ when we return to the original channel.
    In the following, we denote by $\Exp_{(\sk,\pk)}$ the expectation over $(\sk,\pk)\gets\KeyGen(1^\secp)$.
    Let $\regA$ and $\regA'$ be $n_\logic$-qubit registers.
    Our goal is to show
    \begin{align}
        \frac{1}{2}
        \bigg\|
         \bigg(
          \identitymap_{\regA'}\otimes
          \Exp_{(\sk,\pk)}
          (\Dec_\sk\circ\cE_\secp\circ\Enc_\pk)_\regA
         \bigg)
         (\ketbra{\psi}{\psi}_{\regA'\regA})
         -\ketbra{\psi}{\psi}_{\regA'\regA}
        \bigg\|_1
        \le\negl(\secp)
    \end{align}
    for any $2n_\logic$-qubit pure state $\ket{\psi}$ and any $t$-local noise $\cE_\secp$.
    In the following, we omit the identity map $\identitymap$ on the register $\regA'$ and the subscript $\secp$ of $\cE_\secp$ for notational simplicity.
    For the hybrid argument, we define the following states.

    \begin{itemize}
        \item $\rho^{(0)}_{\regA'\regA}:$
        This is defined as
        \begin{align}
            \bigg(
             \Exp_{(\sk,\pk)}
             (\Dec_\sk\circ\cE\circ\Enc_\pk)_\regA
            \bigg)
            (\ketbra{\psi}{\psi}_{\regA'\regA})
        \end{align}

        \item $\rho^{(1)}_{\regA'\regA}:$
        This is defined as
        \begin{align}
            &\bigg(
            \Exp_{(\sk,\pk)}
            \QECC.\Dec_{\regB\to\regA}\circ
            \PauliRecover_{\sk,\regB\regC\to\regB}\circ 
            \cE_{\regB\regC}\circ
            \Enc_{\pk,\regA\to\regB\regC}
            \bigg)
            (\ketbra{\psi}{\psi}_{\regA'\regA}),
        \end{align}
        where $\PauliRecover_\sk$ is the CPTP map that operates in the first step of $\Dec$, i.e., it is defined as follows:
        \begin{align}
            \PauliRecover_{\sk,\regB\regC\to\regB}(\cdot)\coloneqq \sum_{c\in\bit^m} 
            \bigg(
            (X^{x_{\sk,c}}Z^{z_{\sk,c}})^\dag_\regB\otimes \bra{c}_\regC
            \bigg)
            (\cdot)_{\regB\regC}
            \bigg(
            (X^{x_{\sk,c}}Z^{z_{\sk,c}})_\regB\otimes \ket{c}_\regC
            \bigg).
        \end{align}
        Here $x_{\sk,c}\Vert z_{\sk,c}$ is $\PRC.\Dec(\sk,c)$ when decoding returns a $2\ell$-bit string, and is $0^{2\ell}$ when decoding returns $\bot$.

        \item $\rho^{(2)}_{\regA'\regA}:$
        This is defined as
        \begin{align}
            \bigg(
            \Exp_{x,z\gets\bit^{\ell}}
            \QECC.\Dec_{\regB\to\regA}\circ
            (X^xZ^z)^\dag_{\regB}\circ
            \cE_{x,z,\regB}\circ
            (X^xZ^z)_{\regB}\circ
            \QECC.\Enc_{\regA\to\regB}
            \bigg)
            (\ketbra{\psi}{\psi}_{\regA'\regA}),
        \end{align}
        where, for each $x,z\in\bit^{\ell}$,
        \begin{align}
            \cE_{x,z,\regB}(\cdot)\coloneqq
            \Tr_{\regC}
            \Bigg[
            \cE_{\regB\regC}
            \bigg(
            (\cdot)_{\regB}\otimes
            \Exp_{\substack{(\sk,\pk),r}}
            \ketbra{\PRC.\Enc(\pk,x\|z;r)}{\PRC.\Enc(\pk,x\|z;r)}_\regC
            \bigg)
            \Bigg].
        \end{align}

        \item $\rho^{(3)}_{\regA'\regA}:$
        This is defined as $\ketbra{\psi}{\psi}_{\regA'\regA}$.
    \end{itemize}

    It is sufficient to prove that $\rho^{(0)}$ is statistically close to $\rho^{(3)}$.
    From the definitions of $\Enc$ and $\Dec$ in \cref{const:PKQPRC}, we have $\rho^{(0)}=\rho^{(1)}$.
    From the following claim, we have $\frac{1}{2}\|\rho^{(1)}-\rho^{(2)}\|_1\le\negl(\secp)$.
    
    \begin{claim}\label{claim:sigma_vs_sigma'}
        For any fixed $x,z\in\bit^{\ell}$, we have
        $\frac{1}{2}\|\sigma_{x,z}-\sigma'_{x,z}\|_1\le\negl(\secp)$,
        uniformly in $x,z$, where
        \begin{align}
            \sigma_{x,z,\regA'\regB}\coloneqq
            &
            \bigg(
            \Exp_{\substack{(\sk,\pk),r}}
            \PauliRecover_{\sk,\regB\regC\to\regB}\circ 
            \cE_{\regB\regC}\circ
            \Enc_{\pk,(x,z,r)\regA\to\regB\regC}
            \bigg)
            (\ketbra{\psi}{\psi}_{\regA'\regA})
        \end{align}
        and 
        \begin{align}
            \sigma'_{x,z,\regA'\regB}\coloneqq
            \bigg(
            (X^xZ^z)_\regB^\dag\circ
            \cE_{x,z,\regB}\circ
            (X^xZ^z)_\regB\circ
            \QECC.\Enc_{\regA\to\regB}
            \bigg)
            (\ketbra{\psi}{\psi}_{\regA'\regA}).
        \end{align}
    \end{claim}

    We prove the claim below.
    For each $x,z$, $\cE_{x,z}$ is $t$-local noise by definition.
    Thus, $(X^xZ^z)^\dag\circ\cE_{x,z}\circ(X^xZ^z)$ is also $t$-local noise.
        Therefore, since $(\QECC.\Enc,\QECC.\Dec)$ is a QECC for $t$-local noise,
        $\frac{1}{2}\|\rho^{(2)}-\rho^{(3)}\|_1\le\negl(\secp)$.
    Combining these bounds, we obtain $\frac{1}{2}\|\rho^{(0)}-\rho^{(3)}\|_1\le\negl(\secp)$, which concludes the proof.
\end{proof}

To complete the proof, we prove \cref{claim:sigma_vs_sigma'}.

\begin{proof}[Proof of \cref{claim:sigma_vs_sigma'}]
        For each public key $\pk$, define
        \begin{align}
            \Omega^\pk_{\regA'\regB\regC}
            &\coloneqq
            \bigg(
            \Exp_{r}
            \cE\circ
             \Enc_{\pk,(x,z,r)\regA\to\regB\regC}
            \bigg)
            (\ketbra{\psi}{\psi}_{\regA'\regA}).
        \end{align}
        Let
        \begin{align}
            G_{\sk,x,z}
            &\coloneqq
            \{c\in\bit^m:\PRC.\Dec(\sk,c)=x\|z\}.
        \end{align}
        Also define
        \begin{align}
            \sigma^{(\sk,\pk)}_{\regA'\regB}
            &\coloneqq
            \PauliRecover_{\sk,\regB\regC\to\regB}
            \big(
            \Omega^\pk_{\regA'\regB\regC}
            \big), \\
            \sigma'^{(\sk,\pk)}_{\regA'\regB}
            &\coloneqq
            (X^xZ^z)_\regB^\dag\,
            \Tr_{\regC}
            \Big[
            \Omega^\pk_{\regA'\regB\regC}
            \Big]\,
            (X^xZ^z)_\regB.
        \end{align}
        Note that $\sigma=\Exp_{(\sk,\pk)}[\sigma^{(\sk,\pk)}]$ and $\sigma'=\Exp_{(\sk,\pk)}[\sigma'^{(\sk,\pk)}]$.
        By the definition of $\PauliRecover_{\sk,\regB\regC\to\regB}$,
        \begin{align}
            \sigma^{(\sk,\pk)}_{\regA'\regB}
            &=
            \sum_{c\in G_{\sk,x,z}}
            (X^xZ^z)_\regB^\dag\,
            \Tr_{\regC}
            \Big[
            (I_{\regA'\regB}\otimes \ketbra{c}{c}_{\regC})
            \Omega^\pk_{\regA'\regB\regC}
            \Big]\,
            (X^xZ^z)_\regB
            \notag\\
            &\quad+
            \sum_{c\notin G_{\sk,x,z}}
            (X^{x_{\sk,c}}Z^{z_{\sk,c}})_\regB^\dag\,
            \Tr_{\regC}
            \Big[
            (I_{\regA'\regB}\otimes \ketbra{c}{c}_{\regC})
            \Omega^\pk_{\regA'\regB\regC}
            \Big]\,
            (X^{x_{\sk,c}}Z^{z_{\sk,c}})_\regB,
        \end{align}
        whereas
        \begin{align}
            \sigma'^{(\sk,\pk)}_{\regA'\regB}
            &=
            \sum_{c\in G_{\sk,x,z}}
            (X^xZ^z)_\regB^\dag\,
            \Tr_{\regC}
            \Big[
            (I_{\regA'\regB}\otimes \ketbra{c}{c}_{\regC})
            \Omega^\pk_{\regA'\regB\regC}
            \Big]\,
            (X^xZ^z)_\regB
            \notag\\
            &\quad+
            \sum_{c\notin G_{\sk,x,z}}
            (X^xZ^z)_\regB^\dag\,
            \Tr_{\regC}
            \Big[
            (I_{\regA'\regB}\otimes \ketbra{c}{c}_{\regC})
            \Omega^\pk_{\regA'\regB\regC}
            \Big]\,
            (X^xZ^z)_\regB.
        \end{align}
        Therefore,
        \begin{align}
        \frac{1}{2}
        \big\|
        \sigma^{(\sk,\pk)}_{\regA'\regB}
        -
        \sigma'^{(\sk,\pk)}_{\regA'\regB}
        \big\|_1
        &\le
        \frac{1}{2}
        \Bigg\|
        \sum_{c\notin G_{\sk,x,z}}
        (X^{x_{\sk,c}}Z^{z_{\sk,c}})_\regB^\dag\,
        \Tr_{\regC}             \Big[             (I_{\regA'\regB}\otimes \ketbra{c}{c}_{\regC})\Omega^\pk_{\regA'\regB\regC}             \Big]\,
        (X^{x_{\sk,c}}Z^{z_{\sk,c}})_\regB
        \Bigg\|_1
        \notag\\
        &\quad+
        \frac{1}{2}
        \Bigg\|
        \sum_{c\notin G_{\sk,x,z}}
        (X^xZ^z)_\regB^\dag\,
        \Tr_{\regC}             \Big[             (I_{\regA'\regB}\otimes \ketbra{c}{c}_{\regC})\Omega^\pk_{\regA'\regB\regC}             \Big]\,
        (X^xZ^z)_\regB
        \Bigg\|_1.
    \end{align}
    Both operators on the right-hand side are positive semidefinite, so their trace norm equals their trace. 
    Thus,
        \begin{align}
            \frac{1}{2}
            \big\|
            \sigma^{(\sk,\pk)}_{\regA'\regB}
            -
            \sigma'^{(\sk,\pk)}_{\regA'\regB}
            \big\|_1
            &\le
            \sum_{c\notin G_{\sk,x,z}}
            \Tr
            \Big[
            (I_{\regA'\regB}\otimes \ketbra{c}{c}_{\regC})
            \Omega^\pk_{\regA'\regB\regC}
            \Big] \\
            &=
            1-
            \sum_{c\in G_{\sk,x,z}}
            \Tr
            \Big[
            (I_{\regA'\regB}\otimes \ketbra{c}{c}_{\regC})
            \Omega^\pk_{\regA'\regB\regC}
            \Big].
        \end{align}
        Since $\sigma=\Exp_{(\sk,\pk)}[\sigma^{(\sk,\pk)}]$ and $\sigma'=\Exp_{(\sk,\pk)}[\sigma'^{(\sk,\pk)}]$,
        convexity of the trace norm implies
        \begin{align}
            \frac{1}{2}
            \big\|
            \sigma_{\regA'\regB}
            -
            \sigma'_{\regA'\regB}
            \big\|_1
            &\le
            \Exp_{(\sk,\pk)}
            \Bigg[
            1-
            \sum_{c\in G_{\sk,x,z}}
            \Tr
            \Big[
            (I_{\regA'\regB}\otimes \ketbra{c}{c}_{\regC})
            \Omega^\pk_{\regA'\regB\regC}
            \Big]
            \Bigg].
        \end{align}
        Define
        \begin{align}
            \eta_{\regB}
            &\coloneqq
            \Tr_{\regA'}
            \Big[
            \Big(
            (X^xZ^z)_\regB\circ
            \QECC.\Enc_{\regA\to\regB}
            \Big)
            (\ketbra{\psi}{\psi}_{\regA'\regA})
            \Big].
        \end{align}
        Using $\cE_{\regB\regC}$ and the fixed state $\eta_{\regB}$, define the following classical channel $\mathcal{N}_\eta$ on $\bit^m$:
        on input $c\in\bit^m$, it prepares $\eta_{\regB}\otimes\ketbra{c}{c}_{\regC}$, applies $\cE_{\regB\regC}$, and measures $\regC$ in the computational basis to output the outcome.
        Then
        \begin{align}
            \sum_{c\in G_{\sk,x,z}}
            \Tr
            \Big[
            (I_{\regA'\regB}\otimes \ketbra{c}{c}_{\regC})
            \Omega^\pk_{\regA'\regB\regC}
            \Big]
            =
            \Pr
            \Big[
            \PRC.\Dec\big(\sk,\mathcal{N}_\eta(\PRC.\Enc(\pk,x\|z))\big)=x\|z
            \Big],
        \end{align}
        where the probability is over the internal randomness of $\PRC.\Enc$ and the induced channel $\cN_\eta$.
        Because $\cE_{\regB\regC}$ is $t$-local and $t\le pm$, the induced classical noise changes at most $pm$ bits and is therefore $p$-bounded.
        Therefore, by the robustness of the underlying PRC,
        \begin{align}
            \frac{1}{2}
            \big\|
            \sigma_{\regA'\regB}
            -
            \sigma'_{\regA'\regB}
            \big\|_1
            \le
            1-\Pr_{(\sk,\pk)\gets\KeyGen(1^\secp)}
            \Big[
            \PRC.\Dec\big(\sk,\mathcal{N}_\eta(\PRC.\Enc(\pk,x\|z))\big)=x\|z
            \Big]
            \le
            \negl(\secp),
        \end{align}
        which concludes the proof.
\end{proof}

\subsection{Pseudorandomness}
\label{subsec:QPRC_pseudorandomness}

\begin{lemma}\label{lem:pseudorandom_PKQPRC}
    \cref{const:PKQPRC} satisfies pseudorandomness.
\end{lemma}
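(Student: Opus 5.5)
We use a two-step hybrid argument: first replace the PRC encoder by the uniform oracle, then show the resulting channel is exactly the completely depolarizing channel.

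\textbf{Hybrid 0.} The adversary $\cA$ receives $\pk$ and oracle access to $\Enc_\pk$. Each query applies, on the input register, the channel
\[
\rho\mapsto \Exp_{x,z,r}\,(X^xZ^z)\,\QECC.\Enc(\rho)\,(X^xZ^z)^\dag\otimes\ketbra{c}{c},
\]
where $c=\PRC.\Enc(\pk,x\|z;r)$ and fresh $x,z,r$ are drawn per query.

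\textbf{Hybrid 1.} We replace $c$ by a fresh uniform string $u\gets\bit^m$, drawn independently of $x,z$. The oracle becomes
\[
\rho\mapsto \bigl(\Exp_{x,z}X^xZ^z\,\QECC.\Enc(\rho)\,(X^xZ^z)^\dag\bigr)\otimes\frac{I^{\otimes m}}{2^m}.
\]
\emph{Indistinguishability of Hybrids 0 and 1.} We build a reduction $\cB$ against PRC pseudorandomness that uses only classical queries. On input $\pk$, $\cB$ runs $\cA$ and simulates each quantum query as follows:
\begin{itemize}
\item sample $x,z\gets\bit^\ell$ itself;
\item send $x\|z$ classically to its own oracle and receive $c$;
\item apply $X^xZ^z\cdot\QECC.\Enc$ to the query register and append $\ket{c}$.
\end{itemize}
If $\cB$'s oracle is $\Enc_{\pk_\PRC}$, this reproduces Hybrid~0 exactly. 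If it is $\cU$, then $c$ is uniform and independent of $x,z$, which reproduces Hybrid~1. Both $\QECC.\Enc$ and the Pauli operations are QPT, so $\cB$ is QPT.

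The main subtlety is that $\cA$ may query on registers entangled with its own workspace. This causes no problem: $\cB$ processes each query locally and needs only one classical query per oracle call. The hybrids therefore differ by at most the PRC pseudorandomness advantage, which is negligible.

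\textbf{Hybrid 2 (exact).} The Pauli twirl on $\ell$ qubits satisfies
\[
\Exp_{x,z}\,X^xZ^z\,\sigma\,(X^xZ^z)^\dag=\Tr[\sigma]\,\frac{I}{2^\ell}
\]
for every operator $\sigma$ on $\regB$, including when $\regB$ is entangled with the adversary's registers (this is the quantum one-time pad of~\cite{C:BroJef15}). $\QECC.\Enc$ is an isometry, so it preserves trace. Hence the Hybrid~1 oracle equals $\rho\mapsto\Tr[\rho]\,I^{\otimes n_\phys}/2^{n_\phys}$, which is exactly the completely depolarizing channel. So Hybrids~1 and~2 are identical even for entangled, adaptive queries.

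Combining the hybrids shows that \cref{const:PKQPRC} satisfies pseudorandomness. The only computational step is the reduction $\cB$; there the key point is that the fresh per-query randomness $x,z$ of the QPRC encoder allows simulation from classical PRC queries alone.
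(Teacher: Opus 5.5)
Your proposal is correct and follows essentially the same route as the paper: the real oracle is switched to one that appends a fresh uniform string using PRC pseudorandomness, and that oracle is shown to equal the completely depolarizing channel exactly via the Pauli one-time pad. You also spell out the classical-query reduction $\cB$, which the paper only invokes as ``the security of $\PRC$''; for entangled queries, state the twirl identity in its partial-trace form $\Tr_\regB[\sigma_{\regX\regB}]\otimes I_\regB/2^\ell$.
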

\begin{proof}[Proof of \cref{lem:pseudorandom_PKQPRC}]
    Let $\cA$ be any QPT adversary. 
    We define the following sequence of oracles for the hybrid argument.

    \begin{itemize}
        \item ${\rm{H}}_0:$
        This is the real oracle: for each query, it samples $x,z\gets\bit^\ell$ and $r$ uniformly at random and applies
        \begin{align}
             \ket{\psi}_\regA&\longmapsto (X^xZ^z)_\regB W_{\regA\to\regB} \ket{\psi}_A \otimes \ket{\PRC.\Enc(\pk,x\|z;r)}_\regC,
        \end{align}
        where $W_{\regA\to\regB}\coloneqq\QECC.\Enc_{\regA\to\regB}$.

        \item ${\rm{H}}_1:$
        For each query, the oracle samples $x,z\gets\bit^\ell$ and $u \gets \bit^{m}$ uniformly at random and applies
        \begin{align}
             \ket{\psi}_\regA&\longmapsto (X^xZ^z)_\regB W_{\regA\to\regB} \ket{\psi}_\regA \otimes \ket{u}_\regC .
        \end{align}

        \item ${\rm{H}}_2:$
        This is the ideal oracle that applies the completely depolarizing channel $\mathcal{D}$ defined by $\mathcal{D}(X)=\operatorname{Tr}(X)I_{\regB\regC}/2^{n_\phys}$ on each query.
    \end{itemize}

    Our goal is to show that ${\rm{H}}_0$ is indistinguishable from ${\rm{H}}_2$.
    Note that ${\rm{H}}_0$ depends on $\pk$, but ${\rm{H}}_1$ and ${\rm{H}}_2$ are independent of $\pk$.
    From the security of $\PRC$, we have
    \begin{align}
        \bigg|
        \Pr_{(\sk,\pk)\gets\KeyGen(1^\secp)}[1\gets\cA^{{\rm{H}}_0}(\pk)]
        -\Pr_{(\sk,\pk)\gets\KeyGen(1^\secp)}[1\gets\cA^{{\rm{H}}_1}(\pk)]
        \bigg|
        \le\negl(\secp).
    \end{align}
    Now we argue that ${\rm{H}}_1$ is perfectly indistinguishable from ${\rm{H}}_2$.
    Let $\ket{\phi}_{\regX\regA}$ be any state, and let $\regX$ be an arbitrary finite-dimensional register.
    A straightforward calculation gives
    \begin{align}
        \Exp_{x,z\gets\bit^\ell} (X^xZ^z\circ W)_{\regA\to\regB}(\ketbra{\phi}{\phi}_{\regX\regA})
         =\Tr_\regA[\ketbra{\phi}{\phi}_{\regX\regA}]\otimes \frac{I_\regB}{2^{\ell}}
    \end{align}
    and, for the completely depolarizing channel,
    \begin{align}
        (\identitymap_\regX\otimes\mathcal{D}_{\regA\to\regB\regC})(\ketbra{\phi}{\phi}_{\regX\regA})
        =\Tr_\regA[\ketbra{\phi}{\phi}_{\regX\regA}]\otimes \frac{I_{\regB\regC}}{2^{n_\phys}}.
    \end{align}
    Thus, since $\Exp_{u}\ketbra{u}{u}=\frac{I}{2^m}$, we have
    \begin{align}
        \Pr_{(\sk,\pk)\gets\KeyGen(1^\secp)}[1\gets\cA^{{\rm{H}}_1}(\pk)]
        =\Pr_{(\sk,\pk)\gets\KeyGen(1^\secp)}[1\gets\cA^{{\rm{H}}_2}(\pk)].
    \end{align}
    Because ${\rm H}_1$ samples fresh independent $x,z,u$ on every query and
    ${\rm H}_2$ is a memoryless CPTP channel, the calculation above is an
    equality of the induced CPTP channels. It therefore remains valid for
    adaptive multi-query adversaries, including reference-entangled queries.
    Therefore, we obtain
    \begin{align}
        \bigg|
        \Pr_{(\sk,\pk)\gets\KeyGen(1^\secp)}[1\gets\cA^{{\rm{H}}_0}(\pk)]
        -\Pr_{(\sk,\pk)\gets\KeyGen(1^\secp)}[1\gets\cA^{{\rm{H}}_2}(\pk)]
        \bigg|
        \le\negl(\secp),
    \end{align}
    which concludes the proof.
\end{proof}

\if0
\subsection{Soundness}
\label{subsec:QPRC_soundness}

\begin{lemma}\label{lem:soundness_PKQPRC}
    \cref{const:PKQPRC} satisfies soundness.
\end{lemma}

\begin{proof}[Proof of \cref{lem:soundness_PKQPRC}]
    Our goal is to show
    \begin{align}
        \Pr_{\sk\gets\KeyGen(1^\secp)}[\bot\gets\Dec(\sk,\rho)]\ge1-\negl(\secp)
    \end{align}
    for any $(n_\phys+m)$-qubit state $\rho$.
    Due to the concavity, it is sufficient to consider the case when $\rho$ is a pure state $\ket{\psi}$.
    Let $\ket{\psi}=\sum_{c\in\bit^m}\ket{\alpha_c}_\regB\ket{c}_\regC$, where $\regB$, and $\regC$ are an $n_\phys$-qubit and an $m$-qubit registers, respectively. 
    Parse $\sk=\sk_\PRC$.
    Then, we have
    \begin{align}
        \Pr_{\sk\gets\KeyGen(1^\secp)}[\bot\gets\Dec(\sk,\ket{\psi})]
        =&\Exp_{\sk\gets\KeyGen(1^\secp)}
        \bigg[
         \bigg\|
          \ketbra{\bot}{\bot}_\regC\cdot
         \ket{\psi}_{\regB\regC}
        \bigg\|_2^2
        \bigg]
        \notag\\
        =&
        \sum_{c}\|\ket{\alpha_c}\|^2_2\cdot
        \Pr_{\sk_\PRFC\gets\PRC.\KeyGen(1^\secp)}[\PRFC.\Dec(\sk_\PRC,c)=\bot]
        \notag\\
        \ge&
        \sum_{c}\|\ket{\alpha_c}\|^2_2\cdot
        (1-\negl(\secp))
        \notag\\
        =&1-\negl(\secp),
    \end{align}
    where we have used soundness of PRC in the fourth line, and $\sum_{c}\|\ket{\alpha_c}\|^2_2=\|\ket{\psi}\|^2_2=1$ in the last line.
    This concludes the proof.
\end{proof}
\fi

\section{Pseudorandom Functional Codes}
\label{sec:PRFC}

In this section, we introduce a PRF-like variant of PRCs, which we call \emph{pseudorandom functional codes} (PRFCs).
Informally, a PRFC is a PRF equipped with an error-correcting property, and it is precisely the primitive needed for our construction of PRICs.
We formalize PRFCs in \cref{subsec:PRFC:def} and construct them from secret-key PRCs in \cref{subsec:PRFC:const}.
To the best of our knowledge, this primitive and its construction are new even in the classical setting.

\subsection{Definition}
\label{subsec:PRFC:def}

In this subsection, we define PRFCs.

\begin{definition}[Pseudorandom functional codes (PRFCs)]\label{def:PRFC}
    Let $n, m : \N \to \N$ be polynomials.
    Let $\cE\coloneq \{\cE_\secp\}_{\secp\in\mathbb{N}}$ be a noise channel, where $\cE_\secp:\bit^{m(\secp)}\to\bit^{m(\secp)}$.
    An $(n,m)$ pseudorandom functional code (PRFC) robust to $\cE$ is a tuple of algorithms $(\KeyGen, \Enc, \Dec)$ satisfying the following syntax and properties.

    \begin{itemize}
        \item Syntax:
         \begin{itemize}
             \item $\KeyGen:$ 
             It is a PPT algorithm that takes the security parameter $1^\secp$ as input and outputs a secret key $\sk$.
             \item $\Enc:$ 
             It is a deterministic polynomial-time algorithm that takes a secret key $\sk$ and a plaintext $x \in \bit^{n(\secp)}$ as input and outputs a ciphertext $c \in \bit^{m(\secp)}$.
             \item $\Dec$: 
             It is a deterministic polynomial-time algorithm that takes a secret key $\sk$ and a bit string $c \in \bit^{m(\secp)}$ as input and outputs $x \in \bit^{n(\secp)}$.
         \end{itemize}
        \item Robustness:
        For any $x\in\bit^{n(\secp)}$,
        \begin{align}
            \Pr_{\sk\gets\KeyGen(1^\secp)}
            \left[
             \Dec(\sk,c')=x:c'\gets\cE_\secp(\Enc(\sk,x))
            \right]
            \ge1-\negl(\secp).
        \end{align}

        \item Pseudorandomness:
        For any QPT adversary $\cA$,
        \begin{align}
            \bigg|
             \Pr_{\sk\gets\KeyGen(1^\secp)}[1\gets\cA^{\Enc(\sk,\cdot)}(1^\secp)]
             -\Pr_{f}[1\gets\cA^f(1^\secp)]
            \bigg|
            \le\negl(\secp),
        \end{align}
        where $f:\bit^n\to\bit^m$ is a uniformly random function.
        Here, for any function $g:\bit^n\to\bit^m$, $\cA^g$ can query the unitary that maps
        \begin{align}
            \ket{x}\ket{y}\mapsto
            \ket{x}\ket{y\oplus g(x)}
        \end{align}
        for any $x\in\bit^n$ and $y\in\bit^m$.
    \end{itemize}
\end{definition}

\begin{remark}
    Unlike the definition of PRCs in \cref{def:SKPRC}, \cref{def:PRFC} does not include a soundness requirement. 
    For PRCs, soundness is needed because when $n=0$, the notion becomes trivial without it. In this paper, however, we construct PRFCs only in the regime $n \ge \omega(\log \secp)$, which is also the only regime needed in our construction of PRICs. For this reason, we omit soundness from the definition for simplicity.
\end{remark}

\begin{remark}
    In the above definition, we require that $\Dec$ is a deterministic algorithm.
    However, this entails no loss of generality, because the internal randomness of $\Dec$ can be included in the secret key $\sk$.
\end{remark}

Regardless of the noise $\cE$, $F(\sk,x)\coloneqq \Enc(\sk,x)$ is a PRF if $(\KeyGen, \Enc, \Dec)$ is a PRFC robust to $\cE$.
In addition, PRPs are implied by PRFs \cite{Quantum:Zhandry25}.
Thus, we immediately obtain the following.

\begin{corollary}\label{thm:PRFC_imply_PRF/PRP}
    For any polynomials $n$ and $m$, if there exists an $(n,m)$-PRFC robust to some noise $\cE$, then PRFs and PRPs exist.
\end{corollary}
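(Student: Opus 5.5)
The plan is to reduce both conclusions to the pseudorandomness property in \cref{def:PRFC}. Write $F(\sk,x)\coloneqq\Enc(\sk,x)$, with the key sampled by $\KeyGen(1^\secp)$.

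\textbf{PRFs.} Because $\Enc$ is deterministic and polynomial time, $F$ is an efficiently computable keyed function. The pseudorandomness clause of \cref{def:PRFC} says that every QPT adversary with quantum oracle access to $F(\sk,\cdot)$ has negligible advantage in distinguishing it from a uniformly random $f:\bit^n\to\bit^m$. This is exactly the PRF condition of \cref{def:PRF}. Robustness and $\Dec$ play no role here, so the argument works for any noise $\cE$.

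\textbf{Matching the key syntax.} \cref{def:PRF} asks for a key drawn uniformly from $\bit^\secp$, whereas our key comes from the PPT algorithm $\KeyGen$. Let $\ell(\secp)$ be the polynomial number of random coins used by $\KeyGen$. I would define $F'(k,x)\coloneqq\Enc(\KeyGen(1^\secp;k),x)$ with $k\gets\bit^{\ell(\secp)}$, and then reindex the security parameter so that the key length equals the parameter. Concretely, on key length $\secp'$, use the largest $\secp$ with $\ell(\secp)\le\secp'$ and ignore any surplus key bits.

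The subtle point is that this reindexing also changes the input and output lengths $n,m$, which must remain polynomial in $\secp'$. This holds as long as $\ell(\secp)\ge\secp$. If the coin count is larger than this, I would first stretch a short seed with a PRG, which exists once a PRF exists. Alternatively, I would simply read \cref{def:PRF} as allowing any polynomial key length. I expect this syntactic bookkeeping to be the only real obstacle, and it is routine.

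\textbf{PRPs.} I would invoke the PRF-to-quantum-secure-PRP transformation of \cite{Quantum:Zhandry25}. It builds a PRP $\pi_k$ together with its inverse $\pi^{-1}_k$, both efficiently computable from PRF evaluations, and proves indistinguishability from a random permutation against QPT adversaries with quantum access to both $\pi$ and $\pi^{-1}$. This matches \cref{def:PRP}.

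To obtain a PRP on any desired domain $\bit^{n'(\secp)}$, I need PRFs with the input and output lengths that the transformation calls for. Since we have a PRF with input length $n\ge\omega(\log\secp)$ (the regime assumed in the remark following \cref{def:PRFC}), I would adjust lengths in two ways:
\begin{itemize}
\item shrink the output by truncation, and shrink the input by padding with zeros;
\item grow the output via the standard quantum-secure GGM-style or counter-mode expansion, e.g.\ $x\mapsto F(k,x\|1)\|\cdots\|F(k,x\|j)$ on a reduced input length, which remains secure under quantum queries.
\end{itemize}
This yields PRFs with the required parameters, and applying the transformation of \cite{Quantum:Zhandry25} to them gives PRPs.
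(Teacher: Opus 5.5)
Your proposal is correct and follows the paper's one-line argument: $\Enc(\sk,\cdot)$ is a PRF directly by the pseudorandomness clause of \cref{def:PRFC}, and PRPs follow from PRFs by \cite{Quantum:Zhandry25}. The paper leaves the key-syntax and length bookkeeping you add implicit, and two points in it deserve care: the hypothesis $n=\omega(\log\secp)$ that you borrow from the remark is not in the statement, but it is genuinely needed if the conclusion is to give PRFs and PRPs of arbitrary polynomial lengths (for $n=1$ and $m=\secp$, a uniformly random truth table used as the key is already a PRFC with no assumption); and truncation, zero-padding and counter mode never lengthen the input, so for PRP domains much longer than $n$ (the PRP in \cref{const:PRIC} acts on more than $2(n_\logic+\ell)$ bits) you need GGM for input extension, not only for output expansion.
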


\subsection{Construction}
\label{subsec:PRFC:const}

We construct PRFCs from secret-key PRCs, PRFs, and PRPs as follows.

\begin{construction}[PRFCs]\label{const:PRFC}
    Let $n,m,\ell:\N\to\N$ be polynomials.
    Let $(\PRC.\KeyGen, \PRC.\Enc, \PRC.\Dec)$ be an $(n,m)$ secret-key PRC robust to a noise channel $\cE$ such that $\PRC.\Enc$ uses an $\ell(\secp)$-bit string as internal randomness.
    Let $\Pi:\bit^\secp\times \bit^{n(\secp)}\to\bit^{n(\secp)}$ be a PRP, and define $\pi_k(\cdot)\coloneqq \Pi(k,\cdot)$ for each $k\in\bit^\secp$.
    Let $F:\bit^\secp\times \bit^{n(\secp)}\to\bit^{\ell(\secp)}$ be a PRF, and define $f_k(\cdot)\coloneqq F(k,\cdot)$ for each $k\in\bit^\secp$.
    We construct a PRFC $(\KeyGen, \Enc, \Dec)$ as follows.
    \begin{itemize}
        \item $\KeyGen:$ 
        On input $1^\secp$, run $\sk'\gets\PRC.\KeyGen(1^\secp)$, and choose $k,k'\gets\bit^\secp$.
        Choose $r\gets\bit^{\ell(\secp)}$.
        Output $\sk\coloneqq(\sk',k,k',r)$.
        
        \item $\Enc:$
        Parse $\sk=(\sk',k,k',r)$.
        On input $x\in\bit^{n(\secp)}$, output
        \begin{align}
            \Enc(\sk,x)\coloneqq
            \PRC.\Enc(\sk',\pi_{k}(x);f_{k'}(x)\oplus r).
        \end{align}

        \item $\Dec:$
        Parse $\sk=(\sk',k,k',r)$.
        On input $c\in\bit^{m(\secp)}$, run $\PRC.\Dec(\sk',c)$ to get $v\in\bit^{n(\secp)}\cup\{\bot\}$.
        If $v\in\bit^{n(\secp)}$, output $\pi_{k}^{-1}(v)$.
        Otherwise, output $0^{n(\secp)}$.
    \end{itemize}
\end{construction}

We show that the above construction satisfies \cref{def:PRFC}.

\begin{theorem}\label{thm:PRFC}
    If $n(\secp)=\omega(\log\secp)$ and $n(\secp)\le m(\secp)$, then $(\KeyGen,\Enc,\Dec)$ in \cref{const:PRFC} is an $(n,m)$ PRFC robust to $\cE$.
\end{theorem}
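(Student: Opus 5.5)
Robustness comes first, since it is immediate. Fix $x\in\bit^{n}$. For a fixed key $(k,k')$, $\pi_k(x)$ is a fixed plaintext. Also, $f_{k'}(x)\oplus r$ is uniform over $\bit^{\ell}$ because $r$ is uniform and independent of $k'$. Hence $\Enc(\sk,x)$ has exactly the distribution of $\PRC.\Enc(\sk',\pi_k(x))$ with fresh randomness. By robustness of the secret-key PRC applied to the message $\pi_k(x)$, we get $\PRC.\Dec(\sk',\cE_\secp(\Enc(\sk,x)))=\pi_k(x)$ except with negligible probability. The decoder then outputs $\pi_k^{-1}(\pi_k(x))=x$. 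Averaging over $k,k'$ keeps the failure probability negligible.

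Pseudorandomness goes through a hybrid argument against an adversary making quantum queries. In $\mathrm{H}_0$ the oracle is the real encoder. In $\mathrm{H}_1$ we replace $\pi_k$ by a uniformly random permutation $\pi$, using PRP security; the reduction simulates $\PRC.\Enc$ itself, since it can sample $\sk',k',r$ on its own. In $\mathrm{H}_2$ we replace $f_{k'}$ by a uniformly random function $f$, using PRF security. Next we replace $\pi$ by a uniformly random function $h:\bit^n\to\bit^n$. This costs $O(q^3/2^n)$ for $q$ quantum queries, by Zhandry's quantum PRP/PRF switching lemma, and is negligible since $n=\omega(\log\secp)$. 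This is where the hypothesis on $n$ enters. Now the oracle computes $x\mapsto \PRC.\Enc(\sk',h(x);f(x)\oplus r)$ with $h,f$ uniform and independent, and we may drop $r$.

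The main obstacle is the last step. PRC pseudorandomness only holds against classical queries and with fresh randomness per query, whereas our adversary queries $x\mapsto \PRC.\Enc(\sk',h(x);f(x))$ in superposition. I would resolve this with \cref{lem:sim_with_hash}. Fix a QPT adversary $\cA$ making $q$ queries. Build a classical-query PRC adversary $\cB$ as follows. $\cB$ samples a $2q$-wise independent function family of size $\poly$ (\cref{lem:alg_hash}), indexed by $z\in\bit^{\log_2 Q}$ for a polynomial $Q$; concretely it uses $Q=\poly$ distinct points, one for each needed lookup. The subtlety is that a random function $x\mapsto\PRC.\Enc(\sk',h(x);f(x))$ on exponentially many inputs cannot be tabulated. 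Instead, by \cref{lem:sim_with_hash}, $\cA$'s view is identical if the pair $(h,f)$ is replaced by a $2q$-wise independent function $g=(g_1,g_2)$. Even then the oracle still evaluates $\PRC.\Enc$ on exponentially many points. So I would go the other way: first pass to a random function of a small domain. Let $\cB$ sample a $2q$-wise independent $s:\bit^n\to[2q\cdot\text{anything}]$... this too collides.

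The cleaner route I would actually try is the following. Make $2q$ classical queries $y_1,\dots,y_{2q}$ to the PRC oracle (in the ideal world these are uniform strings), and define the oracle answer as a $2q$-wise independent function of $x$, built by polynomial interpolation through these values; this is exactly the degree-$(2q-1)$ polynomial family over $\F_{2^m}$ with coefficients derived from the query answers. When the answers are uniform, this is a uniform $2q$-wise independent function, hence perfectly indistinguishable from a random function to $\cA$. When the answers are real PRC outputs, however, the interpolated values are not codewords. Consequently I expect the real argument to require the intermediate hybrid to be the interpolated one on both sides, which breaks robustness-free equivalence.

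I flag this reduction from quantum queries to the classical-query PRC security as the crux. The first fix I would attempt is a direct small-range-distribution argument (Zhandry): replace $(h,f)$ by a function factoring through a random map into a polynomial-size set $[Q]$. Then $x\mapsto c_{\sigma(x)}$, where $c_1,\dots,c_Q$ are obtained by $Q$ classical PRC queries, is distinguishable from a random function with advantage $O(q^3/Q)$. That bound can be made an arbitrary inverse polynomial, and a standard argument over all polynomials $Q$ then yields negligible advantage.
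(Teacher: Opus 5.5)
Your robustness argument and your final resolution of the crux are exactly the paper's proof. That resolution is a small-range replacement $x\mapsto c_{\sigma(x)}$ with $Q=\Theta(q^3p)$ values $c_i$ obtained by classical PRC queries on uniformly random messages, chosen against the assumed advantage $1/p$ and combined with the PRF/PRP switches and the $O(q^3/2^n)$ function-versus-permutation step; the interpolation and $2q$-wise detours can be dropped. The remaining details are that $\sigma:\bit^n\to[Q]$ must be drawn from a $4q$-wise independent family (\cref{lem:alg_hash,lem:sim_with_hash}) so that your PRC adversary is QPT, and that the PRF switch should come before the PRP switch so that no reduction has to simulate a truly random permutation against quantum queries.
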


Before the proof, we mention some consequences.
Since secret-key PRCs imply PRFs and PRPs by \cref{coro:PRC_imply_PRF}, we obtain the following corollary.

\begin{corollary}
    Let $n,m:\N\to\N$ be functions satisfying $n(\secp)=\omega(\log\secp)$ and $n(\secp)\le m(\secp)$.
    If $(n,m)$ secret-key PRCs robust to noise $\cE$ exist, then $(n,m)$ PRFCs robust to $\cE$ exist.
\end{corollary}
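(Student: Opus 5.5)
We prove robustness and pseudorandomness separately; both reduce to the underlying secret-key PRC after replacing the PRP and PRF by ideal objects.

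\textbf{Robustness.} Fix $x\in\bit^{n(\secp)}$. For each fixed choice of $k,k'$, the codeword $\Enc(\sk,x)$ equals $\PRC.\Enc(\sk',\pi_k(x);f_{k'}(x)\oplus r)$. Because $r$ is uniform and independent of $(\sk',k,k')$, the randomness string $f_{k'}(x)\oplus r$ is uniform and independent of $\sk'$. Hence, conditioned on $k,k'$, the codeword is distributed exactly as an honest PRC encoding of the fixed message $y\coloneqq\pi_k(x)$ under a fresh key $\sk'$.

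Robustness of the PRC, applied to $y$, gives $\PRC.\Dec(\sk',\cE_\secp(c))=y$ with probability $1-\negl(\secp)$. The PRC guarantee holds for every message, so the bound is uniform in $y$. In that event the PRFC decoder outputs $\pi_k^{-1}(y)=x$. Averaging over $k,k'$ completes this part.

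\textbf{Pseudorandomness.} We use a hybrid argument against a QPT adversary $\cA$ making quantum queries.
\begin{itemize}
\item $\mathrm{H}_0$ is the real oracle $x\mapsto\PRC.\Enc(\sk',\pi_k(x);f_{k'}(x)\oplus r)$.
\item $\mathrm{H}_1$ replaces $f_{k'}$ by a uniformly random function $f$ and $\pi_k$ by a uniformly random permutation $\pi$. This is indistinguishable by PRF and PRP security, since the reduction can simulate the rest of the oracle using $\sk'$ and $r$. Note that the PRP security of \cref{def:PRP} is against quantum queries.
\item $\mathrm{H}_2$ replaces the uniform permutation $\pi$ by a uniformly random function $h:\bit^n\to\bit^n$. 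This is indistinguishable because a random permutation and a random function are indistinguishable to $q$-query quantum algorithms with advantage $O(q^3/2^n)$ (Zhandry), which is negligible since $n=\omega(\log\secp)$.
\item In $\mathrm{H}_2$, the string $f(x)\oplus r$ is just another uniformly random function of $x$, so the oracle is $x\mapsto\PRC.\Enc(\sk',h(x);s(x))$ for independent uniform functions $h,s$.
\item $\mathrm{H}_3$ is the ideal oracle, a uniformly random function $g:\bit^n\to\bit^m$.
\end{itemize}

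\textbf{Main obstacle: $\mathrm{H}_2$ versus $\mathrm{H}_3$.} The PRC security in \cref{def:SKPRC} only allows classical queries, yet here $\cA$ makes quantum queries. The plan is to simulate the random functions $h,s$ with a $2q$-wise independent family via \cref{lem:alg_hash,lem:sim_with_hash}, where $q$ is the number of queries of $\cA$.

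The reduction $\cB$ has classical access to either $\Enc_{\sk'}$ or $\cU$. It issues $2q$ fresh classical queries, for instance on a fixed message, and obtains samples. More precisely, it asks for encodings $c_j\gets\PRC.\Enc(\sk',u_j)$ of uniformly random messages $u_j$ that it samples itself. It then defines a function $g$ as a $2q$-wise independent interpolation of these samples, for example a random polynomial over $\mathbb{F}_{2^m}$ of degree $<2q$ whose values at $2q$ random points are the $c_j$.

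The difficulty is that $\cB$ cannot evaluate a $2q$-wise independent function whose output values are PRC codewords on arbitrary inputs without the key. I would therefore prefer to invoke a cleaner tool: quantum-query indistinguishability of a \emph{random function whose outputs are i.i.d.\ from a distribution $D$} from one with outputs from $D'$, whenever $D$ and $D'$ are indistinguishable given polynomially many samples. This is Zhandry's small-range distribution / semi-constant lemma from \texttt{zhandry2015secure}.

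Concretely, choose $r=\poly(q)$ and use the small-range distribution: pick $r$ classical samples $c_1,\dots,c_r$, each either $\PRC.\Enc(\sk',u_i)$ with uniform $u_i$ (obtained by classical queries) or uniform strings in $\bit^m$. Then sample a $2q$-wise independent $j:\bit^n\to[r]$ and answer with $x\mapsto c_{j(x)}$. Zhandry's theorem says that, for both sample distributions, this oracle is $O(q^3/r)$-close to the corresponding i.i.d.\ random function, which corresponds to $\mathrm{H}_2$ and $\mathrm{H}_3$ respectively.

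Therefore a distinguisher between $\mathrm{H}_2$ and $\mathrm{H}_3$ gives a classical-query PRC distinguisher with advantage at least $\epsilon-O(q^3/r)$. Letting $r$ be a large polynomial, the advantage $\epsilon$ must be negligible. This is the step I expect to require the most care.
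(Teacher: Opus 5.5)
Your argument is essentially the paper's proof of \cref{thm:PRFC}, and it is correct as far as it goes.

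\textbf{What matches.} You use the same construction. The key reduction is the same: $\cB$ makes classical queries on uniformly random messages to collect $\poly(q)$ samples, then answers $\cA$'s quantum queries with a small-range function indexed by a $k$-wise independent hash. The ingredients are also the same: \cref{lem:small_range_oracle}, \cref{lem:function_vs_permutation}, and PRF/PRP security. You only order the hybrids from the real oracle inward, where the paper starts from $\cB$'s simulation. Your robustness argument is identical to the paper's.

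\textbf{The missing step.} What you have not shown is the part that makes this the stated corollary rather than the theorem. The hypothesis gives you only an $(n,m)$ secret-key PRC. Your construction also needs a PRF $f_{k'}$ and a PRP $\pi_k$, both secure against quantum queries, and the PRP must have an efficiently computable inverse because the decoder uses $\pi_k^{-1}$. You never say where these come from, so as written you prove ``PRC $+$ PRF $+$ PRP $\Rightarrow$ PRFC'', not ``PRC $\Rightarrow$ PRFC''. The paper closes this with \cref{coro:PRC_imply_PRF}. A secret-key PRC gives an IND-CPA secret-key encryption scheme, hence quantum-query-secure PRFs, and PRFs give quantum-secure PRPs by Zhandry's result. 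One sentence invoking this completes your proof.

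\textbf{Two small points.}
\begin{itemize}
\item If you apply \cref{lem:sim_with_hash} as stated, the index hash $j$ must be $4q$-wise independent, not $2q$-wise. Each query to $x\mapsto c_{j(x)}$ costs two queries to $j$ (compute and uncompute). The paper uses $4q$ for exactly this reason.
\item The small-range bound you need is \cref{lem:small_range_oracle}, from Zhandry's 2012 paper, not the $2t$-wise independence result you cite.
\end{itemize}
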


In addition, from \cref{thm:post-quantum_PRC}, we obtain an instantiation of PRFCs robust to bounded noise.

\begin{corollary}\label{coro:LPN_imply_PRFC}
    For every $p\in[0,1/2)$, there exists a constant $\alpha>0$ such that the following holds.
    For polynomials $n,m:\N\to\N$ satisfying $n(\secp)=\omega(\log\secp)$ and $m(\secp)\ge \alpha\cdot n(\secp)$ for all sufficiently large $\secp$, if \cref{assumption:for_PRC} is true, then there exist $(n,m)$ PRFCs robust to $p$-bounded noise.
\end{corollary}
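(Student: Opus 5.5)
The corollary follows by chaining \cref{thm:post-quantum_PRC} with the corollary immediately preceding it, which turns $(n,m)$ secret-key PRCs into $(n,m)$ PRFCs via \cref{const:PRFC}, \cref{thm:PRFC}, and \cref{coro:PRC_imply_PRF}. The plan has three steps.

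\textbf{Step 1: fix the constant.} Fix $p\in[0,1/2)$. Let $\alpha_0$ be the constant that \cref{thm:post-quantum_PRC} provides for this $p$. Set $\alpha\coloneqq\max\{\alpha_0,1\}$. Taking the maximum with $1$ guarantees that $m(\secp)\ge\alpha\cdot n(\secp)$ implies $m(\secp)\ge n(\secp)$, which \cref{thm:PRFC} requires.

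\textbf{Step 2: get a secret-key PRC.} Take polynomials $n,m$ with $n(\secp)=\omega(\log\secp)$ and $m(\secp)\ge\alpha\cdot n(\secp)$ for all sufficiently large $\secp$. Assume \cref{assumption:for_PRC}. Since $m\ge\alpha_0 n$, \cref{thm:post-quantum_PRC} gives an $(n,m)$ secret-key PRC robust to $p$-bounded noise.

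\textbf{Step 3: get PRFs, PRPs, and the PRFC.}
\begin{enumerate}
\item By \cref{coro:PRC_imply_PRF}, this PRC yields PRFs and PRPs. We need a PRP over $\bit^{n}$ and a PRF with range $\bit^{\ell}$, where $\ell$ is the randomness length of $\PRC.\Enc$. Both lengths are polynomial, and the standard generic constructions (length adjustment, and Zhandry's PRF-to-PRP result) produce primitives with exactly these lengths.
\item Instantiate \cref{const:PRFC} with the PRC, PRF, and PRP.
\item Apply \cref{thm:PRFC}. Its hypotheses $n=\omega(\log\secp)$ and $n\le m$ hold, so the result is an $(n,m)$ PRFC robust to the same noise channel, here $p$-bounded noise.
\end{enumerate}

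\textbf{Main obstacle.} The only point needing care is the robustness notion. The PRC is robust to every $p$-bounded noise channel. \cref{thm:PRFC} transfers robustness for a fixed channel $\cE$, so we apply it separately to each $p$-bounded channel. Robustness in \cref{def:PRFC} is then required only for each fixed input $x$. \cref{def:p-bounded_noise} is stated for uniformly random inputs, but the PRC's robustness guarantee is already for each fixed $x$. Hence no extra argument is needed beyond applying \cref{thm:PRFC} with $\cE$ ranging over $p$-bounded channels.
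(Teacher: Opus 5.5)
Your proposal is correct and follows the paper's route: \cref{thm:post-quantum_PRC} gives an $(n,m)$ secret-key PRC robust to $p$-bounded noise, and the preceding corollary (via \cref{coro:PRC_imply_PRF}, \cref{const:PRFC}, and \cref{thm:PRFC}) turns it into an $(n,m)$ PRFC robust to the same channels, with your choice $\alpha\ge 1$ making the hypothesis $n\le m$ explicit. The ``main obstacle'' is not a real issue: the uniform distribution in \cref{def:p-bounded_noise} is over the channel's input string, not over messages, so \cref{thm:PRFC} inherits the PRC's per-message robustness for each $p$-bounded channel without extra work.
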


Now we prove \cref{thm:PRFC}.

\begin{proof}[Proof of \cref{thm:PRFC}]
    In the following, we omit the dependence on $\secp$ for notational simplicity.
    It is easy to see that $(\KeyGen,\Enc,\Dec)$ satisfies robustness.
    Fix $k',k\in\bit^\secp$.
    For any $x\in\bit^n$,
    \begin{align}
    &\Pr_{\substack{\sk'\gets\PRC.\KeyGen(1^\secp)\\r\gets\bit^\ell}}
    \left[
        \Dec((\sk',k,k',r),c')=x:
        c'\gets\cE_\secp(\Enc((\sk',k,k',r),x))
    \right]
    \notag\\
    \ge{}&
    \Pr_{\substack{\sk'\gets\PRC.\KeyGen(1^\secp)\\r'\gets\bit^\ell}}
    \left[
        \PRC.\Dec(\sk',c')=\pi_k(x):
        c'\gets\cE_\secp(\PRC.\Enc(\sk',\pi_k(x);r'))
    \right]
    \notag\\
    \ge{}&1-\negl(\secp).
    \end{align}
    Here $r'=f_{k'}(x)\oplus r$ is uniform when $r$ is uniform, and the last inequality follows from the robustness of the underlying PRC.
    
    It remains to show pseudorandomness.
    For the sake of contradiction, assume that there exists a polynomial $p$ and a QPT adversary $\cA$ such that, for infinitely many $\secp$,
    \begin{align}
        \bigg|
         \Pr_{\sk\gets\KeyGen(1^\secp)}[1\gets\cA^{\Enc(\sk,\cdot)}]
         -\Pr_{f}[1\gets\cA^f]
        \bigg|
        \ge\frac{1}{p},
        \label{eq:PRFC_assumption}
    \end{align}
    where $f$ is a uniformly random function from $n$ bits to $m$ bits.
    It is sufficient to prove that $\PRC$ does not satisfy the security, assuming the security of a PRF $\{f_k\}_k$ and a PRP $\{\pi_k\}_k$.
    To this end, we use the above $\cA$ to construct a QPT adversary $\cB$ that breaks the security of $(\PRC.\KeyGen,\PRC.\Enc,\PRC.\Dec)$.
    Suppose that $\cA$ makes at most $q$ queries, where $q$ is polynomial in $\secp$.
    Let $t$ be a polynomial that we determine later.
    $\cB$ is the following algorithm:
    \begin{enumerate}
        \item 
        Choose a function $g:\bit^n\to\bit^m$ as follows.
        For each $i\in[t]$, choose $z_i\gets\bit^n$. 
        Query an oracle on $z_i$ and obtain $y_i\in\bit^m$.
        Using the algorithm from \cref{lem:alg_hash}, sample a function $h:\bit^n\to[t]$ from a $4q$-wise independent function family.
        Then define $g$ by $g(x)\coloneqq y_{h(x)}$.\label{step:smaple_g}
        \item 
        Simulate $\cA^g$ using $h$ and $\{y_i\}_{i\in[t]}$ to get $b\in\bit$.
        \item 
        Output $b$.
    \end{enumerate}
    It is clear that $\cB$ is a QPT algorithm.
    To prove that $\cB$ breaks the security of the secret-key PRC, consider the following hybrids of distributions over $\bit$.
    \begin{itemize}
        \item 
        ${\rm H}_0$: this is the distribution of $b\gets\cB^{\PRC.\Enc(\sk',\cdot)}$; that is, choose $\sk'\gets\PRC.\KeyGen(1^\secp)$ and sample a function $g:\bit^n\to\bit^m$ as in step \ref{step:smaple_g}.
        Then, run $b\gets\cA^g$, and output $b$.
        \item 
        ${\rm H}_1$: this is the same as ${\rm H}_0$ except for sampling $g:\bit^n\to\bit^m$ as follows.
        Choose $k,k'\gets\bit^\secp$ and $r\gets\bit^\ell$.
        Then, define $g(x)\coloneqq \PRC.\Enc(\sk',\pi_{k}(x);f_{k'}(x)\oplus r)$.
        \item 
        ${\rm H}_2$: this is the same as ${\rm H}_1$ except for sampling $g:\bit^n\to\bit^m$ as a uniformly random function $g:\bit^n\to\bit^m$.
        \item 
        ${\rm H}_3$: this is the distribution of $b\gets\cB^\cO$, where $\cO$ is the oracle that outputs a uniformly random $m$-bit string.
    \end{itemize}
    Note that $\Pr[1\gets{\rm H}_0]=\Pr_{\sk'\gets\PRC.\KeyGen(1^\secp)}[1\gets\cB^{\PRC.\Enc(\sk',\cdot)}]$ and $\Pr[1\gets{\rm H}_3]=\Pr_{\cO}[1\gets\cB^\cO]$.
    We prove the following two claims.
    \begin{claim}\label{claim:PRFC:H0_to_H1}
        There exists a constant $c>0$ such that
        \begin{align}
            |\Pr[1\gets{\rm H}_0]-\Pr[1\gets{\rm H}_1]|\le \frac{27q^3}{t}+\frac{cq^3}{2^n}+\negl(\secp).
        \end{align}
    \end{claim}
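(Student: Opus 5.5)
The plan is to fix the PRC secret key $\sk'$ throughout and to connect ${\rm H}_0$ and ${\rm H}_1$ through a chain of intermediate oracles $g$. For each link, we either invoke a computational assumption through an efficient reduction that knows $\sk'$, or use a statistical bound on quantum-query distinguishability. The reference object in the middle is the following. Let $D_{\sk'}$ be the distribution of $\PRC.\Enc(\sk',z;s)$ with $z\gets\bit^n$ and $s\gets\bit^\ell$ uniform and independent. Let $G$ be a random function $\bit^n\to\bit^m$ whose values are i.i.d.\ samples from $D_{\sk'}$. The plan is to show that both ${\rm H}_0$ and ${\rm H}_1$ are close to running $\cA^G$.

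\textbf{From ${\rm H}_1$ to $G$.} First, replace $f_{k'}$ by a uniformly random function $f$ and $\pi_k$ by a uniformly random permutation $\pi$. The incurred loss is $\negl(\secp)$ by the security of the PRF and the PRP (\cref{def:PRF,def:PRP}). The reduction samples $\sk'$ and $r$ itself and evaluates $\PRC.\Enc(\sk',\cdot\,;\cdot)$ coherently, so it is QPT. Because $f$ is uniformly random, the function $x\mapsto f(x)\oplus r$ is again a uniformly random function $\rho$, independent of $\pi$. Hence $g(x)=\PRC.\Enc(\sk',\pi(x);\rho(x))$.

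Next, replace the random permutation $\pi$ by a uniformly random function $\sigma:\bit^n\to\bit^n$. By Zhandry's bound on distinguishing random permutations from random functions with quantum queries, a $q$-query adversary has advantage $O(q^3/2^n)$. Each query to $g$ uses two queries to $\pi$, to compute and then uncompute, so this link costs $cq^3/2^n$ for a constant $c$. After this step, $g(x)=\PRC.\Enc(\sk',\sigma(x);\rho(x))$ has i.i.d.\ values distributed as $D_{\sk'}$, so $g$ is distributed exactly as $G$.

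\textbf{From ${\rm H}_0$ to $G$.} In ${\rm H}_0$, $\cB$ queries its oracle on fresh uniform $z_i$ with fresh internal randomness. The values $y_1,\dots,y_t$ are therefore i.i.d.\ samples from $D_{\sk'}$, and $g=y_{h(\cdot)}$. One oracle query to $g$ is implemented with two queries to $h$. The adversary $\cA$ thus makes at most $2q$ queries to $h$, and $h$ is $4q$-wise independent. By \cref{lem:sim_with_hash}, we may replace $h$ by a uniformly random function $H:\bit^n\to[t]$ with no loss.

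The resulting oracle $x\mapsto y_{H(x)}$ is exactly a small-range distribution over $D_{\sk'}$ in Zhandry's sense. Zhandry's small-range distribution theorem states that $q$ quantum queries distinguish such a function from a random function with i.i.d.\ $D_{\sk'}$ outputs, namely $G$, with advantage at most $27q^3/t$. The theorem holds for an arbitrary output distribution, so it applies to $D_{\sk'}$. Averaging over $\sk'$ and combining with the ${\rm H}_1$ chain via the triangle inequality gives the stated bound
\begin{align}
    |\Pr[1\gets{\rm H}_0]-\Pr[1\gets{\rm H}_1]|\le \frac{27q^3}{t}+\frac{cq^3}{2^n}+\negl(\secp).
\end{align}

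\textbf{Main obstacle.} The step needing most care is the small-range link. We must check that the oracle simulated by $\cB$ matches the small-range distribution exactly, including the independence of the $y_i$ and their distribution $D_{\sk'}$ conditioned on $\sk'$. We must also check the query counting that justifies using a $4q$-wise (rather than $2q$-wise) independent $h$.
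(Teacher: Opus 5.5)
Your proposal is correct and follows essentially the same route as the paper. The paper's linear chain ${\rm H'}_0,\dots,{\rm H'}_7$ uses exactly your links: $4q$-wise independence via \cref{lem:sim_with_hash}, the small-range bound \cref{lem:small_range_oracle} for i.i.d.\ outputs $D_{\sk'}$, the function/permutation switch \cref{lem:function_vs_permutation}, absorbing $r$ into a fresh random function, and PRF/PRP security, traversed from ${\rm H}_0$ to ${\rm H}_1$ rather than meeting in the middle at $G$.
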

    \begin{claim}\label{claim:PRFC:H2_to_H3}
        We have
        \begin{align}
            |\Pr[1\gets{\rm H}_2]-\Pr[1\gets{\rm H}_3]|\le \frac{27q^3}{t}.
        \end{align}
    \end{claim}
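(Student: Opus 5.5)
The plan is to show that the oracle seen by $\cA$ in ${\rm H}_3$ is, up to an exact reduction, a \emph{small-range} random function, and then to invoke Zhandry's small-range distribution theorem. In ${\rm H}_3$, each classical query of $\cB$ to $\cO$ returns a fresh uniform string. Hence $y_1,\ldots,y_t$ are i.i.d.\ uniform in $\bit^m$, and they are independent of the $z_i$. So $\cA$ is run with oracle $g=y_{h(\cdot)}$, where $h:\bit^n\to[t]$ is drawn from a $4q$-wise independent family. In ${\rm H}_2$, $\cA$ is run with a uniformly random $g$. The goal is to bound the difference between these two experiments.

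\textbf{Step 1: replace $h$ by a truly random function.} Fix the tuple $(y_1,\ldots,y_t)$. We can implement one quantum query $\ket{x}\ket{w}\mapsto\ket{x}\ket{w\oplus y_{h(x)}}$ with two quantum queries to $h$: compute $h(x)$ into an ancilla, XOR $y_{h(x)}$ into $\ket{w}$ using the hard-wired table, and uncompute $h(x)$. Therefore $\cA^{y_{h(\cdot)}}$ is a quantum algorithm making at most $2q$ queries to $h$. By \cref{lem:sim_with_hash}, $4q$-wise independence of $h$ then gives
\begin{align}
\Pr_{h}[1\gets\cA^{y_{h(\cdot)}}]=\Pr_{h'}[1\gets\cA^{y_{h'(\cdot)}}],
\end{align}
where $h':\bit^n\to[t]$ is uniformly random. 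This holds for every fixed $y$, so it also holds after averaging over $y$. Step 1 incurs no loss at all.

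\textbf{Step 2: small-range distribution bound.} The function $x\mapsto y_{h'(x)}$, with $h'$ uniform into $[t]$ and $y_i$ i.i.d.\ uniform, is exactly Zhandry's small-range distribution $\mathrm{SR}_t$ over functions $\bit^n\to\bit^m$. By Zhandry's theorem (``How to construct quantum random functions''), no $q$-query quantum algorithm can distinguish $\mathrm{SR}_t$ from a uniformly random function with advantage more than $O(q^3/t)$. In the explicit form used here, the bound is $27q^3/t$. Combining this with Step 1 gives $|\Pr[1\gets{\rm H}_2]-\Pr[1\gets{\rm H}_3]|\le 27q^3/t$.

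The main obstacle is Step 2, which is the genuinely quantum part: a naive collision argument fails under superposition queries. I would not reprove it but cite it as a black box; if a self-contained argument were required, I would reproduce Zhandry's polynomial-method argument for $\mathrm{SR}_r$. The only care needed in Step 1 is the query count: each $g$-query costs two $h$-queries, which is precisely why the construction of $\cB$ samples $h$ from a $4q$-wise (rather than $2q$-wise) independent family.
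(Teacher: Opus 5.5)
Your proposal is correct and follows the paper's proof essentially step for step, only traversed in the opposite direction. The paper's hybrids go from a random $g$, to the small-range function via \cref{lem:small_range_oracle}, to a random $h$, to a $4q$-wise independent $h$ via \cref{lem:sim_with_hash} (viewing $\cA$ as a $2q$-query algorithm), and finally to $y_i$ obtained by querying $\cO$. These are the same three ingredients you use, with the same $2q$-query accounting and the same $27q^3/t$ bound.
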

    It is clear that
    \begin{align}
        \Pr[1\gets{\rm H}_1]=\Pr_{\sk\gets\KeyGen(1^\secp)}[1\gets\cA^{\Enc(\sk,\cdot)}]
    \end{align}
    and
    \begin{align}
        \Pr[1\gets{\rm H}_2]=\Pr_{f}[1\gets\cA^f],
    \end{align}
    where $f$ is a uniformly random function from $n$ bits to $m$ bits.
    Thus, from \cref{eq:PRFC_assumption}, 
    \begin{align}
        |\Pr[1\gets{\rm H}_1]-\Pr[1\gets{\rm H}_2]|\ge\frac{1}{p}.\label{eq:PRFC:H1_to_H2}
    \end{align}
    Now define $t\coloneqq 4\cdot27 q^3p$.
    Then, from \cref{claim:PRFC:H0_to_H1,claim:PRFC:H2_to_H3,eq:PRFC:H1_to_H2}, we have
    \begin{align}
        |\Pr[1\gets{\rm H}_0]-\Pr[1\gets{\rm H}_3]|
        &\ge \frac{1}{p}-\frac{2\cdot27q^3}{t}-\frac{cq^3}{2^n}-\negl(\secp)\notag\\
        &=\frac{1}{2p}-\negl(\secp)\notag\\
        &\ge\frac{1}{4p},\label{eq:PRFC:contradiction}
    \end{align}
    where we have used $t= 4\cdot27 q^3p$ and $n=\omega(\log\secp)$.
    Since $\Pr[1\gets{\rm H}_0]=\Pr_{\sk'\gets\PRC.\KeyGen(1^\secp)}[1\gets\cB^{\PRC.\Enc(\sk',\cdot)}]$ and $\Pr[1\gets{\rm H}_3]=\Pr_{\cO}[1\gets\cB^\cO]$, \cref{eq:PRFC:contradiction} implies that $\cB$ breaks the security of a secret-key PRC.
    This concludes the proof.
\end{proof}

To prove \cref{claim:PRFC:H0_to_H1,claim:PRFC:H2_to_H3}, we need the following lemmas.

\begin{lemma}[Corollary 7.5 in \cite{FOCS:Zhandry12}]\label{lem:small_range_oracle}
    Let $\cX$ and $\cY$ be sets.
    Let $\cD$ be a distribution over $\cY$.
    Define $\cD^\cX$ to be the distribution of the following function $f:\cX\to\cY$: for each $x\in\cX$, choose $y_x\gets \cD$ and define $f(x)=y_x$.
    For an integer $t$, define ${\rm{SR}}_t^\cD(\cX)$ as the following distribution on functions $g:\cX\to\cY:$
    \begin{enumerate}
        \item For each $i\in[t]$, choose $y_i\gets \cD$.
        For each $x\in\cX$, choose $i\gets[t]$ and define $g(x)\coloneqq y_i$.
    \end{enumerate}
    Then, for any quantum algorithm $\cA$ that makes at most $q$ queries, 
    \begin{align}
        \bigg|
         \Pr_{g\gets{\rm{SR}}_t^\cD(\cX)}[1\gets\cA^{g}]
         -\Pr_{f\gets \cD^\cX}[1\gets\cA^f]
        \bigg|
        \le\frac{27q^3}{t}.
    \end{align}
\end{lemma}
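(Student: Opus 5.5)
The plan is to follow Zhandry's polynomial-method argument. First I would rewrite ${\rm SR}_t^\cD(\cX)$ as a composition $g=y\circ h$. Here $h:\cX\to[t]$ is a uniformly random function and $y=(y_1,\dots,y_t)$ is drawn i.i.d.\ from $\cD$. The $q$-query algorithm $\cA$ is then run on $g$.

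Fix the rest of $\cA$'s description. Expand the final state in the computational basis of the query registers. The acceptance probability $\Pr[1\gets\cA^g]$ becomes a sum over pairs of query transcripts: a ``ket'' sequence and a ``bra'' sequence, with at most $2q$ query points $x_1,\dots,x_{2q}$ in total. Each term depends on $g$ only through the values $g(x_1),\dots,g(x_{2q})$. So the acceptance probability is an expectation of a bounded functional of the joint distribution of $(g(x_1),\dots,g(x_{2q}))$ for finitely many points.

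Under $g=y\circ h$ this joint distribution is determined as follows. Let $\sigma$ be the equivalence relation on the distinct points among $x_1,\dots,x_{2q}$ induced by collisions of $h$. Then the distribution is: draw one independent sample from $\cD$ for each class of $\sigma$. For a fixed set of $k\le 2q$ distinct points, the probability that $h$ induces a particular partition $\sigma$ with $|\sigma|$ classes is
\begin{align}
t(t-1)\cdots(t-|\sigma|+1)/t^{k}.
\end{align}
This is a polynomial in $1/t$ of degree at most $2q$ with $t$-independent coefficients. Summing, $\Pr_{g\gets{\rm SR}_t^\cD}[1\gets\cA^g]=P(1/t)$ for a single real polynomial $P$ of degree at most $2q$.

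At $1/t=0$ only the all-singletons partition survives. That case is exactly the distribution $\cD^\cX$, so $P(0)=\Pr_{f\gets\cD^\cX}[1\gets\cA^f]$. Moreover, since $P(1/t)$ is a probability for every positive integer $t$, we have $0\le P(1/t)\le 1$ for all $t\in\N$.

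The main step, and the expected obstacle, is an approximation-theoretic lemma. It should say that a degree-$d$ polynomial bounded in $[0,1]$ on the points $\{1/t:t\in\N\}$ satisfies $|P(1/t)-P(0)|\le O(d^3)/t$. My first attempt would split into two regimes.

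\emph{Large $t$.} For $t\gtrsim d^2$, the points $1/t$ are dense enough in $[0,1/(cd^2)]$ that a Coppersmith--Rivlin style argument bounds $|P|$ by a constant on that whole interval. Then Markov's inequality for polynomials on that interval gives $|P'|\le O(d^2)\cdot d^2=O(d^4)$ there. This is too weak, so I would instead apply Markov's inequality on an interval $[0,\epsilon]$ with $\epsilon\approx 1/d^{2}$, getting $|P'|\le O(d^2/\epsilon)$. That needs sharpening: I would use Bernstein's inequality near the interior, or rather the refined estimate in Zhandry's Theorem~7.3. I expect the careful constant bookkeeping here, to land exactly $27q^3/t$ with $d=2q$, to be the delicate part.

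\emph{Small $t$.} For $t\lesssim q^3$ the bound $27q^3/t\ge 1$ is trivial, since both probabilities lie in $[0,1]$.

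Integrating the derivative bound from $0$ to $1/t$ then yields the claim. Nothing in this argument uses the structure of $\cD$, so the bound holds for arbitrary $\cD$.
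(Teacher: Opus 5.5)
Your reduction is correct, and it is exactly Zhandry's polynomial-method route. The paper itself does not reprove this lemma; it imports it from Zhandry. Writing $g=y\circ h$ and expanding the acceptance probability as a linear combination of $\Pr[g(x_i)=z_i\ \forall i\le 2q]$ is right. Using the collision-pattern probabilities $t^{\underline{s}}/t^{k}$ then indeed gives $\Pr_{g\gets{\rm SR}_t^\cD(\cX)}[1\gets\cA^g]=P(1/t)$ for a real polynomial $P$ of degree less than $2q$, with $P(0)=\Pr_{f}[1\gets\cA^f]$ and $0\le P(1/t)\le 1$ for all $t\in\N$.

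The gap is the approximation lemma, which carries the whole content of the bound and which you do not supply. What your large-$t$ regime actually proves is $|P(1/t)-P(0)|\le O(d^4)/t$, i.e.\ $O(q^4/t)$. Deferring the improvement to ``Zhandry's Theorem~7.3'' is circular when the goal is Zhandry's Corollary~7.5. Calling the improvement ``constant bookkeeping'' also misidentifies the problem: what is missing is a whole factor of $d$.

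No sharpening of that regime can close the gap, because it only uses the constraints at the dense points $1/k$ with $k\gtrsim d^2$, and these alone permit quartic behaviour. Take $P(x)=\tfrac12\bigl(1+T_d(1-4d^2x)\bigr)$. It satisfies $0\le P(1/k)\le 1$ for every $k\ge 2d^2$, yet $P(0)=1$ while $P(1/t)\approx 0$ for $t\approx 8d^4/\pi^2$, where the target bound is $O(1/d)$. Bernstein's inequality does not help, since at the endpoint $0$ it degenerates to Markov's bound.

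A correct proof must therefore exploit the constraints at the sparse points $1/k$ with $k<2d^2$. Your small-$t$ regime only observes that the conclusion is vacuous there and extracts no information from them. Heuristically, these constraints keep $P$ under control on an interval of length $\Theta(1/d)$ rather than $\Theta(1/d^2)$, and Markov's inequality on such an interval gives $d^2\cdot d=d^3$.

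The cubic exponent is sharp, so there is no slack for a cruder estimate. For even $d$, the polynomial $\prod_{k\le d/2}(1-kx)\,T_{d/2}(1-dx)$ has absolute value at most $1$ at every point $1/k$ and derivative $\approx -d^3/4$ at $0$.

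Your argument, completed rigorously, does give an $O(q^4/t)$ small-range bound. That would suffice for the paper's use of the lemma in the proof of \cref{thm:PRFC}, after enlarging $t$ to $\Theta(q^4 p)$. It does not, however, prove the statement as given.
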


\begin{lemma}[\cite{Zhandry15,QIC:Yuen13}]\label{lem:function_vs_permutation}
    There exists a constant $c>0$ such that, for any integer $n$ and any quantum algorithm that makes $q$ queries,
    \begin{align}
        \bigg|
         \Pr_{f}[1\gets\cA^f]-\Pr_{\pi}[1\gets\cA^\pi]
        \bigg|
        \le \frac{cq^3}{2^n},
    \end{align}
    where $f$ is a uniformly random function from $n$ bits to $n$ bits, and $\pi$ is a uniformly random permutation over $n$ bits.
\end{lemma}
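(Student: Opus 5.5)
The plan is to reduce distinguishing a random function from a random permutation to finding a collision in a random function, and then invoke the quantum collision-finding bound. The first step is to write a uniformly random function as $f=\pi\circ\rho$, where $\pi$ is a uniformly random permutation of $\bit^n$ and $\rho$ is an independent uniformly random function $\bit^n\to\bit^n$. This is an exact equality of distributions, since post-composing with a random permutation preserves uniformity of $\rho$. The permutation side $\pi$ corresponds to replacing $\rho$ by the identity.

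The intended intuition is that the two experiments differ only when the adversary can detect a collision of $\rho$. On the collision-free part of its domain, $\pi\circ\rho$ is a uniformly random injection, just as $\pi$ is. To make this precise for quantum queries I would use Zhandry's polynomial-method characterization. The acceptance probability of a $q$-query algorithm is a polynomial of degree at most $2q$ in the indicator variables $\delta_{x,y}=[f(x)=y]$. Hence the acceptance probability depends only on the joint distribution of the oracle's values on at most $2q$ points.

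For a uniform function, the values on any fixed $2q$ inputs are i.i.d.\ uniform. For a uniform permutation, they are uniform and distinct. The distance between these two distributions is exactly the probability of a collision among $2q$ uniform samples, which is at most $\binom{2q}{2}/2^n$.

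The main obstacle is that this naive moment-matching argument is not valid as stated. A degree-$2q$ polynomial can have exponentially large coefficients, so a small total-variation gap on $2q$-tuples does not immediately bound the gap in acceptance probabilities. I would therefore instead use the rational/polynomial technique of \cite{Zhandry15}, or equivalently Yuen's argument \cite{QIC:Yuen13}. These arguments view the acceptance probability as a function of the collision-free structure and bound it by the success probability of a quantum collision finder built from $\cA$.

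Concretely, I would construct from a distinguisher $\cA$ a $O(q)$-query algorithm that finds a collision in a random function with probability comparable to $\cA$'s advantage. I would then apply the known optimal bound $O(q^3/2^n)$ for quantum collision finding in random functions. That bound itself follows from the small-range technique of \cref{lem:small_range_oracle} with $t\approx 2^n/q^{2}$ scaling. Combining these steps yields the claimed $cq^3/2^n$ bound, with $c$ an absolute constant from the collision bound.
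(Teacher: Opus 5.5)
The paper gives no proof of this lemma; it imports it from the cited works as a black box. So your sketch has to stand on its own, and it has a genuine gap exactly where the content lies.

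You are right to discard the $2q$-wise marginal comparison. If it worked it would give $O(q^2/2^n)$, which a BHT-type collision attack refutes at $q\approx 2^{n/3}$. But the step that replaces it is only asserted: building from $\cA$ an $O(q)$-query collision finder whose success probability is comparable to $\cA$'s advantage. Classically this is the transcript argument: the two games diverge only once a repeated output appears among the queries. With superposition queries there is no transcript to inspect, and measuring the query registers destroys the distinguisher. Generic quantum hybrid tools of the one-way-to-hiding type lose a square root, so they cannot deliver the rate $q^3/2^n$.

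Two supporting claims are also wrong. First, Yuen's argument is not equivalent: it gives only a query lower bound of roughly $2^{n/5}$. Second, the collision bound does not follow from the small-range lemma. Functions drawn from ${\rm SR}_t$ are full of collisions coming from the index map $h$, so their indistinguishability from a random function says nothing about collision hardness. In any case, taking $t\approx 2^n/q^2$ in the $27q^3/t$ bound gives $27q^5/2^n$, not $O(q^3/2^n)$.

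To salvage your polynomial-method starting point, apply it to a one-parameter family interpolating between the two oracles, rather than to a single comparison of marginals. For $1\le s\le 2^n$ take the oracle $\iota\circ h$, with $h:\bit^n\to[s]$ uniform and $\iota:[s]\to\bit^n$ a uniform injection. Consider a monomial $\prod_{i\le k}\delta_{x_i,y_i}$ with distinct $x_i$ and $j$ distinct values among the $y_i$. Its expectation is $s^{j-k}\prod_{i=1}^{j-1}(1-i/s)$ divided by $2^n(2^n-1)\cdots(2^n-j+1)$. Hence the acceptance probability is a polynomial of degree less than $2q$ in $1/s$, with these properties:
\begin{itemize}
\item it equals the random-function value at $s=2^n$;
\item it equals the random-permutation value at $1/s=0$;
\item it lies in $[0,1]$ at every $1/s$ with $1\le s\le 2^n$.
\end{itemize}
The factor $q^3$ then comes from a Markov-type inequality for polynomials bounded at the points $1/s$. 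This is the same mechanism behind the $27q^3/t$ small-range bound, though some care is needed because only the points with $s\le 2^n$ are available. No collision-finder reduction is needed.
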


Now we are ready to show \cref{claim:PRFC:H0_to_H1,claim:PRFC:H2_to_H3}.
First, we give the proof of \cref{claim:PRFC:H0_to_H1}.

\begin{proof}[Proof of \cref{claim:PRFC:H0_to_H1}]
    Let us consider the following hybrids of distributions over $\bit$.
    In the following, the {\color{red}red} text is the new step in the current hybrid.
    \begin{itemize}
        \item 
        ${\rm H'}_0$: Choose $\sk'\gets\PRC.\KeyGen(1^\secp)$ and a function $g:\bit^n\to\bit^m$ as explained below, run $b\gets\cA^g$, and output $b$.
        The function $g$ is chosen as follows.
        For each $i\in[t]$, choose $z_i\gets\bit^n$. 
        Query $\PRC.\Enc(\sk',\cdot)$ on $z_i$ and obtain $y_i\in\bit^m$.
        Using the algorithm from \cref{lem:alg_hash}, sample a function $h:\bit^n\to[t]$ from a $4q$-wise independent function family.
        Then, define $g$ by $g(x)\coloneqq y_{h(x)}$.
        Finally, run $b\gets\cA^g$, and output $b$.
        
        \item 
        ${\rm H'}_1$: this is the same as ${\rm H'}_0$ except for sampling $g:\bit^n\to\bit^m$ as follows.
        {\color{red}Choose a random function $h:\bit^n\to[t]$,} and define $g(x)\coloneqq y_{h(x)}$.
        
        \item 
        ${\rm H'}_2$: this is the same as ${\rm H'}_1$ except for sampling $g:\bit^n\to\bit^m$ as follows.
        {\color{red}For each $x\in\bit^n$, choose $z_x\gets\bit^n$, and run $y_x\gets\PRC.\Enc(\sk',z_x)$.
        Then, define $g(x)\coloneqq y_x$.}
        
        \item 
        ${\rm H'}_3$: this is the same as ${\rm H'}_2$ except for sampling $g:\bit^n\to\bit^m$ as follows.
        {\color{red}Choose a random function $u:\bit^n\to\bit^n$.}
        For each $x\in\bit^n$, run $y_x\gets\PRC.\Enc(\sk',z_x)$, {\color{red}where $z_x\coloneqq u(x)$.} 
        Then, define $g(x)\coloneqq y_x$.
        
        \item 
        ${\rm H'}_4$: this is the same as ${\rm H'}_3$ except for sampling $g:\bit^n\to\bit^m$ as follows.
        {\color{red}Choose a random permutation $\pi:\bit^n\to\bit^n$.}
        For each $x\in\bit^n$, run $y_x\gets\PRC.\Enc(\sk',z_x)$, {\color{red}where $z_x\coloneqq \pi(x)$.} 
        Then, define $g(x)\coloneqq y_x$.

        \item 
        ${\rm H'}_5$: this is the same as ${\rm H'}_4$ except for sampling $g:\bit^n\to\bit^m$ as follows.
        Choose a random permutation $\pi:\bit^n\to\bit^n$ {\color{red}and a bit string $r\gets\bit^\ell$.}
        For each $x\in\bit^n$, {\color{red}choose $r_x\gets\bit^\ell$ and set $y_x\coloneqq\PRC.\Enc(\sk',z_x;r_x\oplus r)$,} where $z_x\coloneqq \pi(x)$.
        Then, define $g(x)\coloneqq y_x$.
        
        \item 
        ${\rm H'}_6$: this is the same as ${\rm H'}_5$ except for sampling $g:\bit^n\to\bit^m$ as follows.
        Choose a random permutation $\pi:\bit^n\to\bit^n$, {\color{red} a random function $f:\bit^n\to\bit^\ell$,} and a bit string $r\gets\bit^\ell$.
        Then, define $g(x)\coloneqq \PRC.\Enc(\sk',z_x;r_x\oplus r)$, where $z_x\coloneqq\pi(x)$ and {\color{red}$r_x\coloneqq f(x)$.}
        
        \item 
        ${\rm H'}_7$: this is the same as ${\rm H'}_6$ except for sampling $g:\bit^n\to\bit^m$ as follows.
        {\color{red}Choose $k,k'\gets\bit^\secp$} and a bit string $r\gets\bit^\ell$.
        Then, for each $x\in\bit^n$, define $g(x)\coloneqq \PRC.\Enc(\sk',z_x;r_x\oplus r)$, {\color{red}where $z_x\coloneqq\pi_{k}(x)$ and $r_x\coloneqq f_{k'}(x)$.}
    \end{itemize}
    Note that $\rm{H'}_{0}={\rm H}_0$, and $\rm{H'}_{7}={\rm H}_1$.
    We show that ${\rm H'}_{j-1}$ is statistically close to ${\rm H'}_j$ for each $j\in[7]$ as follows.
    \begin{itemize}
        \item 
        ${\rm H'}_0$ and ${\rm H'}_1$: 
         The difference between them lies in whether $h:\bit^n\to[t]$ is sampled from a uniformly random or a $4q$-wise independent function family.
         To simulate one query $\ket{x}\ket{w}\mapsto\ket{x}\ket{w\oplus y_{h(x)}}$, an uncomputation step is required, so $\cA^g$ is regarded as an algorithm making $2q$ queries to $h$. Thus, we have
        \begin{align}
            \Pr[1\gets{\rm H'}_0]=\Pr[1\gets{\rm H'}_1]\label{eq:PRFC:H'0_vs_H'1}
        \end{align}
        from \cref{lem:sim_with_hash} with $t=2q$.

        \item 
        ${\rm H'}_1$ and ${\rm H'}_2$:
        For each $\sk'$, define $\cD_{\sk'}$ as the following distribution over $\bit^m$.
         \begin{enumerate}
             \item Choose a $z\gets\bit^n$, and run $\PRC.\Enc(\sk',z)$ to get $y\in\bit^m$.
             \item Output $y$.
         \end{enumerate}
        Let us fix $\sk'$.
        Since $h:\bit^n\to[t]$ is a uniformly random function, in ${\rm H'}_1$, a function $g:\bit^n\to\bit^m$ is sampled from ${\rm{SR}}_t^{\cD_{\sk'}}(\bit^n)$, where ${\rm{SR}}_t^{\cD_{\sk'}}(\bit^n)$ is the distribution defined in \cref{lem:small_range_oracle}.
        On the other hand, in ${\rm H'}_2$, $g:\bit^n\to\bit^m$ is sampled from $\cD_{\sk'}^{\bit^n}$, where $\cD_{\sk'}^{\bit^n}$ is the distribution defined in \cref{lem:small_range_oracle}.
        Since $\cA$ is a $q$-query quantum algorithm, we obtain
        \begin{align}
            |\Pr[1\gets{\rm H'}_1]-\Pr[1\gets{\rm H'}_2]|\le\frac{27q^3}{t}\label{eq:PRFC:H'1_vs_H'2}
        \end{align}
        by applying \cref{lem:small_range_oracle}.

        \item 
        ${\rm H'}_2$ and ${\rm H'}_3$:
        The difference between them lies in whether, for each $x\in\bit^n$, $z_x\in\bit^n$ is chosen uniformly at random or determined as $u(x)$ using a uniformly random function $u:\bit^n\to\bit^n$.
        Since they give the same distribution, we have
        \begin{align}
            \Pr[1\gets{\rm H'}_2]=\Pr[1\gets{\rm H'}_3].\label{eq:PRFC:H'2_vs_H'3}
        \end{align}

        \item 
        ${\rm H'}_3$ and ${\rm H'}_4$:
        The difference between them lies in whether, for each $x\in\bit^n$, $z_x\in\bit^n$ is determined using a uniformly random function $u:\bit^n\to\bit^n$ or a uniformly random permutation $\pi:\bit^n\to\bit^n$.
        Since $\cA$ is a $q$-query quantum algorithm, for some constant $c>0$,
        \begin{align}
            |\Pr[1\gets{\rm H'}_3]-\Pr[1\gets{\rm H'}_4]|\le\frac{cq^3}{2^n}\label{eq:PRFC:H'3_vs_H'4}
        \end{align}
        by applying \cref{lem:function_vs_permutation}.

        \item 
        ${\rm H'}_4$ and ${\rm H'}_5$:
        From their definitions, ${\rm H'}_4$ is equal to ${\rm H'}_5$.
        Thus, we have
        \begin{align}
            \Pr[1\gets{\rm H'}_4]=\Pr[1\gets{\rm H'}_5].\label{eq:PRFC:H'4_vs_H'5}
        \end{align}

        \item 
        ${\rm H'}_5$ and ${\rm H'}_6$:
        The difference between them lies in whether, for each $x\in\bit^n$, an internal random string $r_x\in\bit^\ell$ is chosen uniformly at random or determined as $f(x)$ using a uniformly random function $f:\bit^n\to\bit^\ell$.
        Since they give the same distribution, we have
        \begin{align}
            \Pr[1\gets{\rm H'}_5]=\Pr[1\gets{\rm H'}_6].\label{eq:PRFC:H'5_vs_H'6}
        \end{align}

        \item
        ${\rm H'}_6$ and ${\rm H'}_7$:
        The difference between them lies in whether, to define $g:\bit^n\to\bit^m$, the permutation and the function are sampled uniformly at random or from a PRP and a PRF.
        Thus, we have
        \begin{align}
            |\Pr[1\gets{\rm H'}_6]-\Pr[1\gets{\rm H'}_7]|\le\negl(\secp)\label{eq:PRFC:H'6_vs_H'7}
        \end{align}
        from the security of PRPs and PRFs.
    \end{itemize}
    From \cref{eq:PRFC:H'0_vs_H'1,eq:PRFC:H'1_vs_H'2,eq:PRFC:H'2_vs_H'3,eq:PRFC:H'3_vs_H'4,eq:PRFC:H'4_vs_H'5,eq:PRFC:H'5_vs_H'6,eq:PRFC:H'6_vs_H'7}, we have
    \begin{align}
        |\Pr[1\gets{\rm H'}_0]-\Pr[1\gets{\rm H'}_7]|\le\frac{27q^3}{t}+\frac{cq^3}{2^n}+\negl(\secp).
    \end{align}
    Since $\rm{H'}_{0}={\rm H}_0$ and $\rm{H'}_{7}={\rm H}_1$, this concludes the proof.
\end{proof}

Next, we give the proof of \cref{claim:PRFC:H2_to_H3}.

\begin{proof}[Proof of \cref{claim:PRFC:H2_to_H3}]
    Let us consider the following hybrids of distributions over $\bit$.
    In the following, the {\color{red}red} text is the new step in the current hybrid.
    \begin{itemize}
        \item ${\rm H''}_0$: 
        Choose a random function $g:\bit^n\to\bit^m$, and run $\cA^g$ to get $b\in\bit$.
        Output $b$.
        
        \item ${\rm H''}_1$: 
        This is the same as ${\rm H''}_0$ except for sampling $g:\bit^n\to\bit^m$ as follows.
        {\color{red}
        For each $i\in[t]$, choose $y_i\gets\bit^m$.
        For each $x\in\bit^n$, choose $i_x\gets[t]$ and define $g(x)\coloneqq y_{i_x}$.
        }
        
        \item ${\rm H''}_2$: 
        This is the same as ${\rm H''}_1$ except for sampling $g:\bit^n\to\bit^m$ as follows.
        For each $i\in[t]$, choose a $y_i\gets\bit^m$.
        {\color{red}Choose a random function $h:\bit^n\to[t]$.
        Then, define $g(x)\coloneqq y_{h(x)}$.}
        
        \item ${\rm H''}_3$: 
        This is the same as ${\rm H''}_2$ except for sampling $g:\bit^n\to\bit^m$ as follows.
        For each $i\in[t]$, choose a $y_i\gets\bit^m$.
        {\color{red}Choose a function $h:\bit^n\to[t]$ from a $4q$-wise independent function family.}
        Then, define $g(x)\coloneqq y_{h(x)}$.

        \item ${\rm H''}_4$: 
        This is the same as ${\rm H''}_3$ except for sampling $g:\bit^n\to\bit^m$ as follows.
        For each $i\in[t]$, {\color{red}choose $z_i\gets\bit^n$ and query an oracle $\cO$ on $z_i$ to obtain $y_i\in\bit^m$, where $\cO$ outputs a uniformly random $m$-bit string regardless of the input.}
        Choose a function $h:\bit^n\to[t]$ from a $4q$-wise independent function family.
        Then, define $g(x)\coloneqq y_{h(x)}$.
    \end{itemize}
    Note that $\rm{H''}_{0}={\rm H}_2$, and $\rm{H''}_{4}={\rm H}_3$.
    We show that ${\rm H''}_{j-1}$ is statistically close to ${\rm H''}_j$ for each $j\in[4]$ as follows.
    \begin{itemize}
        \item ${\rm H''}_0$ and ${\rm H''}_1$:
        Let $\cD$ be the uniform distribution over $\bit^m$.
        Since, in ${\rm H''}_0$, $g:\bit^n\to\bit^m$ is a uniformly random function, it can be seen as sampled from $\cD^{\bit^n}$, where $\cD^{\bit^n}$ is defined in \cref{lem:small_range_oracle}.
        On the other hand, in ${\rm H''}_1$, $g:\bit^n\to\bit^m$ is sampled from ${\rm{SR}}_t^{\cD}(\bit^n)$, where ${\rm{SR}}_t^{\cD}(\bit^n)$ is defined in \cref{lem:small_range_oracle}.
        Since $\cA$ is a $q$-query quantum algorithm, we obtain 
        \begin{align}
            |\Pr[1\gets{\rm H''}_0]-\Pr[1\gets{\rm H''}_1]|\le\frac{27q^3}{t}\label{eq:PRFC:H''0_vs_H''1}
        \end{align}
        by applying \cref{lem:small_range_oracle}.

        \item ${\rm H''}_1$ and ${\rm H''}_2$:
        The difference between them lies in whether, for each $x\in\bit^n$, $i_x\in[t]$ is chosen uniformly at random or determined as $h(x)$ using a uniformly random function $h:\bit^n\to[t]$.
        Since they give the same distribution, we have
        \begin{align}
            \Pr[1\gets{\rm H''}_1]=\Pr[1\gets{\rm H''}_2].\label{eq:PRFC:H''1_vs_H''2}
        \end{align}

        \item ${\rm H''}_2$ and ${\rm H''}_3$:
        The difference between them lies in whether $h:\bit^n\to[t]$ is sampled from a uniformly random or a $4q$-wise independent function family.
        The same $4q$ requirement applies here because this hybrid uses the same coherent hash simulation as above, in which $\cA$ is regarded as a $2q$-query algorithm.
        Since $\cA$ is regarded as a $2q$-query quantum algorithm for this simulation, we have
        \begin{align}
            \Pr[1\gets{\rm H''}_2]=\Pr[1\gets{\rm H''}_3]\label{eq:PRFC:H''2_vs_H''3}
        \end{align}
        from \cref{lem:sim_with_hash}.

        \item ${\rm H''}_3$ and ${\rm H''}_4$:
        The difference between them lies in whether, for each $i\in[t]$, $y_i\in\bit^m$ is sampled uniformly at random or obtained by querying the oracle $\cO$ on a uniformly random $z_i$.
        They give the same distribution since $\cO$ outputs a uniformly random $m$-bit string, regardless of the query.
        Thus, we have
        \begin{align}
            \Pr[1\gets{\rm H''}_3]=\Pr[1\gets{\rm H''}_4].\label{eq:PRFC:H''3_vs_H''4}
        \end{align}
    \end{itemize}

    From \cref{eq:PRFC:H''0_vs_H''1,eq:PRFC:H''1_vs_H''2,eq:PRFC:H''2_vs_H''3,eq:PRFC:H''3_vs_H''4}, we have
    \begin{align}
        |\Pr[1\gets{\rm H''}_0]-\Pr[1\gets{\rm H''}_4]|\le\frac{27q^3}{t}.
    \end{align}
    Since $\rm{H''}_{0}={\rm H}_2$ and $\rm{H''}_{4}={\rm H}_3$, this concludes the proof.
\end{proof}

\section{Construction of PRICs}
\label{sec:PRIC_const}

In this section, we construct PRICs robust to $w$-local quantum noise for any function $w:\N\to\N$ satisfying $w=o(n_\phys\log\log n_\phys/\log n_\phys)$, using the graph-sampling and recovery algorithms below.
For the construction, we need the following graph-sampling and recovery theorem.

\begin{restatable}[Graph sampling and phase recovery]{theorem}{SampleRecover}
\label{thm:better-graph}
\label{thm:GraphSample_and_Recover}
Fix a constant $p>0$.
Let $t:\mathbb N\to\mathbb N$ be a function satisfying
\begin{align}
 t(n)=o\left(\frac{n\log\log n}{\log n}\right).
 \label{eq:better-radius-growth}
\end{align}
There exists a pair of algorithms $(\GraphSample,\Recover)$ with the following properties.
$\GraphSample$ is a PPT algorithm that takes $1^n$ as input and outputs a simple graph $G$ on vertex set $[2n]$ with adjacency matrix $A$.
$\Recover$ is a deterministic polynomial-time algorithm that takes $G$ and $e\in\bit^{2n}$ as input and outputs a $2n$-bit string.
For all sufficiently large $n$ and all $u,v\in\bit^{2n}$ satisfying $|\Support(u)\cup\Support(v)|\le t(n)$, the following hold.
\begin{itemize}
\item[\textnormal{(i) Induced-weight bound.}] Every output graph satisfies
\begin{align}
 \weight(v\oplus Au)\le pn.
 \label{eq:better-pointwise-weight}
\end{align}
\item[\textnormal{(ii) Recovery correctness.}]
\begin{align}
 \Pr_{G\gets\GraphSample(1^n)}
 [\Recover(G,v\oplus Au)=u]
  \ge1-\negl(n),
 \label{eq:better-pointwise-recovery}
\end{align}
\end{itemize}
\end{restatable}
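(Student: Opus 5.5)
We will take $\GraphSample(1^n)$ to output the edge-parity union of $d$ independent uniformly random perfect matchings between $L=\{1,\dots,n\}$ and $R=\{n+1,\dots,2n\}$: an edge $\{i,j\}$ is present iff it appears in an odd number of the matchings. The degree $d=d(n)$ is chosen as
\begin{align}
d\coloneqq\min\bigl\{\lfloor pn/(2t(n)+1)\rfloor,\ \lceil\log^2 n\rceil\bigr\}.
\end{align}
Since $t=o(n\log\log n/\log n)$, this gives $d=\omega(\log n/\log\log n)$, while $d\,t\le pn/2$.

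Part (i) is then deterministic. Every vertex has degree at most $d$, so $\weight(Au)\le d\,\weight(u)$. Hence $\weight(v\oplus Au)\le t+dt\le pn$ for large $n$. Note that $d\ge 1$ eventually, because $t=o(n)$.

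$\Recover$ is a bit-flipping decoder with threshold $\theta\coloneqq d/2$. Start from $\widehat u=0$. Repeat: compute $r=e\oplus A\widehat u$; if some vertex $i$ has more than $\theta$ neighbours $j$ with $r_j=1$, flip $\widehat u_i$ (pick the smallest such $i$). Otherwise stop and output $\widehat u$. Writing $z=u\oplus\widehat u$, we have $r=v\oplus Az$. Correctness rests on two properties of $G$, each to be shown with probability $1-\negl(n)$.

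The first is unique-neighbour expansion. Every nonempty $S$ with $|S|\le 2t$ should have a vertex $i\in S$ with at least $3d/4$ neighbours outside $N(S\setminus\{i\})$ and outside the set of vertices losing edges to parity cancellation. The standard counting gives failure probability at most $\sum_{s\le 2t}\binom{2n}{s}(c\,sd/n)^{\Omega(sd)}$. Since $sd\le 2td\le pn$ and $d=\omega(1)$, each term is $n^{-\omega(1)}$ for small $s$ and exponentially small for large $s$. Parity cancellation costs $O(1)$ edges per vertex whp, and this is absorbed into the constants.

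The second is local overlap for the fixed $v$. Every vertex should have at most $d/8$ neighbours in $\Support(v)$. For a fixed vertex, its neighbours come from $d$ random matchings. So the probability of having $k=d/8$ neighbours in a set of size $t$ is at most $\binom{d}{k}(t/n)^k\le(8e\,t/n)^{d/8}$. As $t/n\le(\log n)^{-1+o(1)}$, this is at most $\exp(-(1-o(1))\tfrac d8\log\log n)$. That is $n^{-\omega(1)}$ because $d\log\log n=\omega(\log n)$, and this is precisely where the radius $o(n\log\log n/\log n)$ is used. A union bound over $2n$ vertices finishes the argument.

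Given both properties, the analysis runs by induction on $\weight(z)$, showing $z$ stays of weight at most $2t$. While $z\ne 0$, expansion gives a vertex $i\in\Support(z)$ with at least $3d/4$ neighbours seen exactly once by $Az$. At most $d/8$ of these can be cancelled by $v$, so $i$ exceeds $\theta$. Conversely, any vertex $i\notin\Support(z)$ with more than $d/2$ ones in $r$ would need more than $d/2-d/8$ neighbours in $N(\Support z)$. Applying expansion to $\Support(z)\cup\{i\}$ rules this out. Thus every flip decreases $\weight(z)$, and the loop ends after at most $t$ flips with $z=0$, that is, $r=v$. At that point the loop indeed halts: with $z=0$ the residual is exactly $v$, and local overlap says every vertex has at most $d/8<\theta$ neighbours with $r_j=1$.

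The main obstacle is making the expansion statement robust. The no-false-flip step needs expansion for sets of size up to $2t$ with lossless-type constants, and it must still hold after accounting for parity cancellation of repeated matching edges. I would first try to establish this via the union-bound calculation above, with edges of multiplicity at least $2$ treated as deleted.
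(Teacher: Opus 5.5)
There is a genuine gap in the step where you claim that no flip ever increases $\weight(z)$. Applying your existential expansion property to $\Support(z)\cup\{i\}$ only produces \emph{some} vertex of that set with many unique neighbours. That vertex may lie in $\Support(z)$ rather than being $i$, so the property says nothing about $|N(i)\cap\Support(Az)|$.

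In fact no expansion property of the graph can give this. Suppose $2t\ge d$. For a typical vertex $i$ of a sampled graph, pick one further neighbour $s_j\neq i$ of each $j\in N(i)$, and let $S$ be the resulting set of $d$ vertices. Then $N(i)\subseteq\Support(A1_S)$, where $1_S$ is the indicator vector of $S$. So for $z=1_S$ the vertex $i$ has score close to $d$ and is eligible for a false flip.

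Restricting to subsets of the fixed set $\Support(u)$ does not rescue the argument near the critical radius either. An edge at $i$ lands in $N(\Support u)$ with probability of order $td/n$, which is a constant in your parameterization. The per-vertex failure probability is therefore only $e^{-\Theta(d)}$. This is not negligible once $d=o(\log n)$, i.e.\ for $t=\omega(n/\log n)$, which is exactly the range where the $\log\log n$ factor matters.

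There is also a parameter problem. Your choice $2td\le pn$ for an arbitrary constant $p$ does not make the base $c\,sd/n$ in your expansion union bound smaller than $1$ for $s$ near $2t$. You need $d\le\gamma n/t$ for a small absolute constant $\gamma$; the paper takes $\gamma\le\min\{2^{-32},p/2\}$.

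The missing idea is to give up monotonicity of $\weight(z)$ and track the potential $\weight(Bz_R)$ instead, i.e.\ the signal part of the residual, as in Sipser--Spielman. The paper does this blockwise on $e_L=Bu_R\oplus v_L$ and $e_R=B^\top u_L\oplus v_R$.

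Your threshold is too low for this argument. With overlap $d/8$ and threshold $d/2$, a flipped column is only guaranteed more than $3d/8$ ones of $Bz_R$, so the potential can increase by up to $d/4$. The paper uses overlap $d/16$ and threshold $5d/8$:
\begin{itemize}
\item A candidate from unique-neighbour expansion still scores at least $3d/4-d/16>5d/8$.
\item Any flipped column contains more than $9d/16$ ones of $Bz_R$, so each flip lowers $\weight(Bz_R)$ by more than $d/8$.
\item This bounds the number of flips by $8t$.
\item A support of size exactly $2t$ would force $\weight(Bz_R)\ge\frac32dt>dt\ge\weight(Bu_R)$, so the error never reaches weight $2t$.
\item Expansion therefore keeps supplying candidates until $z_R=0$, at which point local overlap halts the loop. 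False flips along the way are harmless.
\end{itemize}
Only the local-overlap condition for the fixed $v$ uses the per-edge probability $t/n$. That is where $t=o(n\log\log n/\log n)$ enters, and your tail bound for that condition is fine.
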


The proof is given in \cref{sec:better_graph}.
Using the algorithms $\GraphSample$ and $\Recover$ from \cref{thm:GraphSample_and_Recover}, we construct PRICs from PRFCs, PRFs, and PRPs.
    In the following, fix constants $0<c<1/2$ and $0<p<1$, let $n_\logic,\ell,r,m:\N\to\N$ be polynomials, and define a polynomial $n_\phys\coloneqq n_\logic+\ell+r+m$.
We use $\secp$ as the security parameter and omit the $\secp$-dependence of $n_\logic,\ell,r,m,$ and $n_\phys$ for notational simplicity.
We may assume without loss of generality that $n_\phys$ is even for all sufficiently large $\secp$, since this can be ensured by an appropriate choice of $\ell$ and $r$.

Fix constants $0<p<1$ and $C\ge1$ such that $n_\phys\le Cm$ for all sufficiently large $\secp$.
Let $w:\N\to\N$ be a function satisfying $w=o(n_\phys\log\log n_\phys/\log n_\phys)$, and set $p_{\mathrm g}=p/C$.
Choose a function $t:\N\to\N$ such that $t(n)=o(n\log\log n/\log n)$ and $t(n_\phys(\secp)/2)\ge w(\secp)$ for all sufficiently large $\secp$.

\begin{construction}[PRIC]\label{const:PRIC}
    Let $(\PRFC.\KeyGen,\allowbreak \PRFC.\Enc,\PRFC.\Dec)$ be an
    $(n_\logic+\ell,m)$ PRFC robust to any $p$-bounded channel.
    Let $F:\bit^\secp\times\bit^{n_\logic+\ell}\to\bit$ be a PRF.
    Let $\Pi:\bit^\secp\times\bit^{n_\logic+\ell+r}\to\bit^{n_\logic+\ell+r}$ be a PRP.
    Let $\GraphSample$ and $\Recover$ be the algorithms from \cref{thm:better-graph} instantiated with the constant $p_{\mathrm g}$ and the radius function $t$ fixed above. We construct a PRIC encoding $n_\logic$ logical qubits into $n_\phys$ physical qubits and robust to $w$-local noise as follows.
    \begin{itemize}
        \item $\KeyGen:$
        Run $\sk'\gets\PRFC.\KeyGen(1^\secp)$ and choose $k,k'\gets\bit^\secp$.
        Run $G\gets\GraphSample(1^{n_\phys/2})$.
        Choose $P\gets\cP_{n_\phys}$.
        Output $\sk\coloneqq(\sk',k,k',G,P)$.
        
        \item $\Enc:$
        Parse $\sk=(\sk_{\mathrm{in}},G,P)$, where $\sk_{\mathrm{in}}\coloneqq(\sk', k,k')$.
        Let $\regA$, $\regB$, and $\regC$ be $n_\logic$-qubit, $(n_\logic+\ell+r)$-qubit, and $m$-qubit registers, respectively.
        Given $\sk$, $\Enc$ implements the following isometry $\Enc_\sk$:\footnote{For the definition of the unitary $U_G$, see \cref{def:CWS_encoding_unitary}.}
        \begin{align}
            \Enc_{\sk,\regA\to\regB\regC}
            \coloneqq
            (PU_{G})_{\regB\regC}\circ V_{\sk_{\mathrm{in}},\regA\to\regB\regC}.
        \end{align}
        Here $V_{\sk_{\mathrm{in}}}$ is the isometry
        \begin{align}
            V_{\sk_{\mathrm{in}}}
            :\ket{x}_\regA\mapsto
             \frac{1}{\sqrt{2^\ell}}\sum_{y\in\bit^\ell}(-1)^{f_{k}(x\|y)}\ket{\pi_{k'}(x\|y\|0^r)}_\regB\ket{g_{\sk'}(x\|y)}_\regC,
        \end{align}
        where $f_{k}(\cdot)\coloneqq F(k,\cdot)$, $\pi_{k'}\coloneqq \Pi(k',\cdot)$, and $g_{\sk'}(\cdot)\coloneqq\PRFC.\Enc(\sk',\cdot)$.
        $V_{\sk_{\mathrm{in}}}$ is implemented as follows:
        \begin{align}
            \ket{x}_\regA
            \XMAPSTO{}&
            \ket{x\|0^{\ell}\|0^r}_\regB\ket{0^m}_\regC
            \\
            \XMAPSTO{H}&
            \frac{1}{\sqrt{2^\ell}}\sum_{y\in\bit^\ell}\ket{x\|y\|0^{r}}_\regB\ket{0^m}_\regC
            \\
            \XMAPSTO{f_{k}}&
            \frac{1}{\sqrt{2^\ell}}\sum_{y\in\bit^\ell}(-1)^{f_{k}(x\|y)}\ket{x\|y\|0^{r}}_\regB\ket{0^m}_\regC
            \\
            \XMAPSTO{g_{\sk'}}&
            \frac{1}{\sqrt{2^\ell}}\sum_{y\in\bit^\ell}(-1)^{f_{k}(x\|y)}\ket{x\|y\|0^{r}}_\regB\ket{g_{\sk'}(x\|y)}_\regC\\
            \XMAPSTO{\pi_{k'}}&
            \frac{1}{\sqrt{2^\ell}}\sum_{y\in\bit^\ell}(-1)^{f_{k}(x\|y)}\ket{\pi_{k'}(x\|y\|0^r)}_\regB\ket{g_{\sk'}(x\|y)}_\regC.
        \end{align}
        
        \item $\Dec:$
        Parse $\sk=(\sk_{\mathrm{in}},G,P)$, where $\sk_{\mathrm{in}}\coloneqq(\sk', k,k')$.
        Given an input $\rho_{\regB\regC}$, $\Dec(\sk,\cdot)$ applies the following operations.
        \begin{enumerate}
            \item Apply $U_G^\dag\cdot P^\dag$ to $\regB\regC$.
            \item 
            Apply the isometry $\BitDec_{\sk_{\mathrm{in}},\regB\regC\to\regB\regC\regD}$ which is implemented as follows.
            \begin{enumerate}
                \item Apply the following isometry to the register $\regC$ using $\sk'$:
                \begin{align}
                    \ket{c}_\regC\mapsto\ket{c}_\regC\ket{\PRFC.\Dec(\sk',c)}_\regD.
                    \label{eq:BitDec_step1}
                \end{align}

                \item 
                Apply the following unitary to $\regB\regC\regD$ using $\sk'$, $k$, and $k'$:
                \begin{align}
                    \ket{b}_\regB\ket{c}_\regC\ket{z}_\regD\mapsto
                    (-1)^{f_{k}(z)}\ket{b\oplus \pi_{k'}(z\|0^r)}_\regB\ket{c\oplus g_{\sk'}(z)}_\regC\ket{z}_\regD.
                    \label{eq:BitDec_step2}
                \end{align}
            \end{enumerate}
            
            \item 
            Apply the CPTP map $\PhaseDec_{\sk_{\mathrm{in}},G,\regB\regC\regD\to\regA}$ which is implemented as follows.
            \begin{enumerate}

               \item 
               Measure $\regB\regC$ in the computational basis to get $e\in\bit^{n_\phys}$.

               \item 
               Run $\Recover(G,e)$ to get $u\in\bit^{n_\phys}$.

               \item 
               Apply the following unitary to the register $\regD$ using $\sk$, $k'$, and $u$.
                \begin{align}
                    \ket{z}_\regD\mapsto (-1)^{u\cdot(\pi_{k'}(z\|0^r)\| g_{\sk'}(z))}\ket{z}_\regD.
                    \label{eq:PhaseDec_step3}
                \end{align}

                \item 
                Output the first $n_\logic$ qubits of $\regD$ as the $n_\logic$-qubit register $\regA$.
            \end{enumerate}
        \end{enumerate}
    \end{itemize}
\end{construction}

We prove the following theorem.

\begin{theorem}\label{thm:PRIC}
    Let $w:\N\to\N$ be a function satisfying $w=o(n_\phys\log\log n_\phys/\log n_\phys)$.
    If $n_\phys(\secp)\ge\secp^\kappa$ for some constant $\kappa>0$, $r\ge n_\logic+\ell+\omega(\log\secp)$, $\ell\ge\omega(\log\secp)$, and $n_\phys=O(m)$, then \cref{const:PRIC}, with the graph parameters chosen above, is a PRIC robust to any $w$-local noise.
\end{theorem}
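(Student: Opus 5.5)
We split the proof into pseudorandomness and robustness, following the overview in \cref{subsec:overview-general-errors}.

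\textbf{Pseudorandomness.} Since $G$ and $P$ are sampled independently of $\sk_{\mathrm{in}}$ and $PU_G$ is a fixed unitary given them, any adversary against $\Enc_\sk$ yields an adversary against $V_{\sk_{\mathrm{in}}}$. The reduction samples $G,P$ itself and post-composes each oracle output with $PU_G$. In the ideal world, left invariance of $\mu_{n_\logic\to n_\phys}$ shows that $PU_G\circ V$ for Haar $V$ is again Haar. So it suffices to show that $V_{\sk_{\mathrm{in}}}$ is indistinguishable from Haar, via these hybrids.

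(1) Replace $f_k$, $\pi_{k'}$, and $g_{\sk'}$ by a random function, a random permutation, and a random function $g:\bit^{n_\logic+\ell}\to\bit^m$. This uses PRF and PRP security and PRFC pseudorandomness with quantum queries, since $V$ is implementable with oracle access to them. The PRP needs inverse access only if decoding is simulated, which it is not.

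(2) Replace $z\mapsto\pi(z\|0^r)\|g(z)$ by $z\mapsto\pi'(z\|0^{r+m})$, using the distributional lemma stated in the overview (\cref{lem:pi||f_vs_pi'}) with $r\ge n_\logic+\ell+\omega(\log\secp)$. This lemma is proved by a collision/birthday bound: both sides are uniform injective functions conditioned on no collision among the first $n+r$ bits.

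(3) Invoke the PF-isometry pseudorandomness of \cite{FOCS:MPSY24} with $\ell=\omega(\log\secp)$. One subtlety is that the random permutation and random function must be efficiently simulatable for the hybrid reduction. Simulation is not needed for the statistical steps, and for (3) we apply the cited result directly to the ideal objects.

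\textbf{Robustness.} First pass from $\cE$ to the exact $w$-local $\widetilde\cE$ at negligible cost. Then average over $P$: by \cref{lem:Pauli_twirling_for_local_noise}, $\Exp_P P^\dagger\widetilde\cE P$ is a mixture of Pauli errors $Q=Z^vX^u$ of weight at most $w$. Since $P$ is independent of the other key components, it suffices by convexity to show, for each such fixed $Q$, that $\Exp_{\sk_{\mathrm{in}},G}\Dec'\circ Q\circ U_GV_{\sk_{\mathrm{in}}}$ is negligibly close to the identity on purifications. Here $\Dec'$ omits $P^\dagger$.

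By \cref{lem:error_propagation}, $U_G^\dagger QU_G\propto X^eZ^u$ with $e=v\oplus Au$. By \cref{thm:GraphSample_and_Recover}(i) with $w\le t(n_\phys/2)$, $\weight(e)\le p_{\mathrm g}n_\phys/2\le pm$ for every sampled $G$, so the restriction $e_\regC$ to $\regC$ has weight at most $pm$. Writing the state as $\sum_{x,y}\alpha_x(-1)^{f(x\|y)+u\cdot\gamma(x\|y)}\ket{\gamma(x\|y)\oplus e}$, \texttt{BitDec} maps each branch with correct PRFC decoding to $\ket{e}_{\regB\regC}\ket{x\|y}_\regD$ with phase $(-1)^{u\cdot\gamma(x\|y)}$. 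Let $\Pi_{\mathrm{bad}}$ be the set of branches where $\PRFC.\Dec(\sk',g(x\|y)\oplus e_\regC)\neq x\|y$. The fixed shift by $e_\regC$ is a $p$-bounded channel, so PRFC robustness gives failure probability $\negl$ for each fixed $x\|y$. Averaging over $y$ and the key, the expected squared norm of the bad component is negligible uniformly in $x$, and this extends to any input with reference by a Cauchy--Schwarz/triangle argument over $x$. I expect this superposition-level control of decoding failure to be the main technical step: bounding the bad amplitude in the worst case over $\ket{\psi}_{\regA'\regA}$, which needs care because the bad set depends on $x$. The good part of the state is then close to the ideal state by \cref{lem:gentle_measurement}.

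On the ideal state, measuring $\regB\regC$ yields $e$ deterministically. By \cref{thm:GraphSample_and_Recover}(ii), $\Recover(G,e)=u$ except with negligible probability over $G$. When it succeeds, the \texttt{PhaseDec} unitary cancels $(-1)^{u\cdot\gamma}$ and leaves $\ket{\psi}_\regA\otimes\ket{+^\ell}$, so tracing out gives $\ket{\psi}$. Summing the negligible errors — the noise approximation, the bad PRFC branches, and recovery failure — proves robustness. Note that the correctness of $\Recover$ is over $G$ for fixed $(u,v)$, which is exactly the quantifier order produced by the twirl, since $Q$ is fixed before averaging over $G$.
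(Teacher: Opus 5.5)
Your overall architecture matches the paper's. You peel off $PU_G$ by left invariance and then run the hybrids PRF/PRP/PRFC $\to$ \cref{lem:pi||f_vs_pi'} $\to$ \cref{thm:PF_isometry}. For robustness you twirl by $P$, fix $Z^vX^u$, propagate it through $U_G$, and correct bit errors with the PRFC and phases with $\Recover$. However, the step you single out as the main technical one is resolved by an argument that does not work.

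Write $W$ for the isometry $\BitDec_{\sk_{\mathrm{in}}}\circ(U_G^\dagger Z^vX^uU_G)\circ V_{\sk_{\mathrm{in}}}$, set $\ket{b_x}\coloneqq(I-\Pi^{\good})W\ket{x}$, and write $\ket{\psi}_{\regA'\regA}=\sum_x\ket{\psi_x}_{\regA'}\ket{x}_\regA$. A triangle inequality over $x$ bounds the bad amplitude by $\sum_x\|\ket{\psi_x}\|_2\|\ket{b_x}\|_2$, and Cauchy--Schwarz then gives $(\sum_x\|\ket{b_x}\|_2^2)^{1/2}$. Either way you aggregate decoding failures over up to $2^{n_\logic}$ basis inputs. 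A per-input negligible failure probability gives no negligible bound on such a sum when $n_\logic=\poly(\secp)$.

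The missing idea is that there are no cross terms at all. Since $\Pi^{\good}=\ketbra{e}{e}_{\regB\regC}\otimes I_\regD$, and on a correctly decoded branch $\regD$ holds $x\|y$ (\cref{claim:good_projection_after_BitDec}), the vectors $\Pi^{\good}W\ket{x}$ are pairwise orthogonal. Because $W$ is an isometry, $\braket{b_{x'}|b_x}=\braket{x'|W^\dagger W|x}-\braket{x'|W^\dagger\Pi^{\good}W|x}=0$ for $x\neq x'$. Hence $\|(I-\Pi^{\good})W\ket{\psi}\|_2^2=\sum_x\|\ket{\psi_x}\|_2^2\|\ket{b_x}\|_2^2$ exactly. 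Its expectation over the key is $\sum_x\|\ket{\psi_x}\|_2^2\Pr_{\sk',y}[\PRFC.\Dec(\sk',g_{\sk'}(x\|y)\oplus e_1)\neq x\|y]\le\negl(\secp)$. This is how the paper obtains \cref{eq:exp_good}: it computes the good weight, which is additive over $(x,y)$, and takes the complement.

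The same orthogonality is what your next step actually needs; \cref{lem:gentle_measurement} goes the other way and does not give it. Up to a global phase, $\Pi^{\good}W\ket{\psi}$ differs from your ideal state $\ket{e}\otimes2^{-\ell/2}\sum_{x,y}(-1)^{u\cdot c(x\|y)}\ket{\psi_x}\ket{x\|y}$ exactly by the failed $(x,y)$ terms. Their squared norm is again $1-\|\Pi^{\good}W\ket{\psi}\|_2^2$. With this fixed, your endgame is valid: apply $\PhaseDec$ to the ideal state, which yields $\ket{\psi}\otimes\ket{+^\ell}$ when $\Recover(G,e)=u$, and average the distance over keys via Jensen. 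It is somewhat more direct than the paper's route through $\cK_\good$ and the Markov bound on $X_y$.

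A minor point: your claim that the PRP inverse is not needed is wrong. Applying $\pi_{k'}$ in place on $\regB$ inside $V_{\sk_{\mathrm{in}}}$ requires uncomputing the input with $\pi_{k'}^{-1}$. This is harmless only because \cref{def:PRP} gives the distinguisher both oracles.
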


Before proving \cref{thm:PRIC}, we mention some corollaries.
Since PRFCs imply PRFs and PRPs by \cref{thm:PRFC_imply_PRF/PRP} and secret-key PRCs robust to $p$-bounded noise imply PRFCs robust to $p$-bounded noise for any constant $0<p<1/2$ by \cref{thm:PRFC}, we immediately obtain the following corollary.

\begin{corollary}\label{coro:PRIC_from_SKPRC}
    Let $n_\logic,\ell,r,m:\N\to\N$ be polynomials, and set $n_\phys=n_\logic+\ell+r+m$.
    Assume that $n_\phys\ge\secp^\kappa$ for some constant $\kappa>0$, $r\ge n_\logic+\ell+\omega(\log\secp)$, $\ell\ge\omega(\log\secp)$, and $n_\phys=O(m)$.
    If $(n_\logic+\ell,m)$ secret-key PRCs robust to $p$-bounded noise exist for some constant $0<p<1$, then, for every function $w:\N\to\N$ satisfying $w=o(n_\phys\log\log n_\phys/\log n_\phys)$, there exist PRICs encoding $n_\logic$ logical qubits into $n_\phys$ physical qubits and robust to any $w$-local noise.
\end{corollary}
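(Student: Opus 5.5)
This corollary should follow by chaining results already stated; no new analysis is needed. Fix the polynomials $n_\logic,\ell,r,m$ with $n_\phys=n_\logic+\ell+r+m$ and the constant $p\in(0,1)$ as in the statement. First I would reduce to $p<1/2$. Robustness to $p$-bounded noise implies robustness to $p'$-bounded noise for every $p'\le p$, since a $p'$-bounded channel is also $p$-bounded. So we may shrink $p$ to some $p'\in(0,\min\{p,1/2\})$. This is harmless, because \cref{const:PRIC} only needs a PRFC robust to $p$-bounded noise for some constant $0<p<1$, and the graph constant $p_{\mathrm g}=p/C$ is chosen from this $p$ afterwards.

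\textbf{Step 1 (PRFCs).} Set $n\coloneqq n_\logic+\ell$. Since $\ell\ge\omega(\log\secp)$, we have $n=\omega(\log\secp)$. From $n_\phys=O(m)$ and $n\le n_\phys$ we get $n=O(m)$. I also need $n\le m$, which \cref{thm:PRFC} requires. If the given PRC already has $m\ge n$ this is immediate. In any case, I would note that a PRC with $m<n$ cannot be robust (its decoder could not be injective on codewords), so for $n=\omega(\log\secp)$ existence of the $(n,m)$ PRC forces $m\ge n$. Then \cref{thm:PRFC}, applied to the secret-key PRC together with the PRF and PRP supplied by \cref{coro:PRC_imply_PRF}, yields an $(n_\logic+\ell,m)$ PRFC robust to $p'$-bounded noise.

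\textbf{Step 2 (PRFs and PRPs of the right shapes).} \Cref{const:PRIC} needs a PRF $F:\bit^\secp\times\bit^{n_\logic+\ell}\to\bit$ and a PRP on $\bit^{n_\logic+\ell+r}$. By \cref{thm:PRFC_imply_PRF/PRP}, PRFs and PRPs exist. I would use the standard facts that PRFs for arbitrary polynomial input and output lengths follow from any PRF (for example by GGM from the induced PRG), and that PRPs on any polynomial domain follow from PRFs~\cite{Quantum:Zhandry25}.

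\textbf{Step 3 (apply \cref{thm:PRIC}).} Given $w=o(n_\phys\log\log n_\phys/\log n_\phys)$, I would choose $t$ as in the paragraph before \cref{const:PRIC}, for instance $t(n)=\max\{w(\secp): n_\phys(\secp)/2\le n\}$ suitably smoothed, so that $t(n)=o(n\log\log n/\log n)$ and $t(n_\phys/2)\ge w$. I would then instantiate \cref{const:PRIC} with these primitives and invoke \cref{thm:PRIC}. Its hypotheses are exactly those assumed: $n_\phys\ge\secp^\kappa$, $r\ge n_\logic+\ell+\omega(\log\secp)$, $\ell\ge\omega(\log\secp)$, and $n_\phys=O(m)$. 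The conclusion is a PRIC encoding $n_\logic$ logical qubits into $n_\phys$ physical qubits, robust to every $w$-local noise.

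The only step where I expect any real care is matching parameters: checking the $m\ge n$ requirement of \cref{thm:PRFC}, and confirming that lowering $p$ below $1/2$ leaves the constant $C$ with $n_\phys\le Cm$ unchanged. Everything else is direct composition of earlier results.
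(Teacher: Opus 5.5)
Your proposal is correct and matches the paper's argument. The paper obtains the corollary immediately by the same chain: secret-key PRC $\Rightarrow$ PRFC by \cref{thm:PRFC} (using the PRF and PRP from \cref{coro:PRC_imply_PRF}), then PRFC $\Rightarrow$ PRFs and PRPs by \cref{thm:PRFC_imply_PRF/PRP}, then \cref{thm:PRIC}.

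Your extra checks are details the paper leaves implicit. Two small remarks on them. First, the reduction to $p<1/2$ is unnecessary, since \cref{const:PRIC} already allows any $0<p<1$. Second, the $m\ge n_\logic+\ell$ point is better justified by counting than by ``injectivity of the decoder'': for each key the decoder outputs at most $2^m$ messages, so robustness under the identity channel forces $2^m\ge(1-\negl(\secp))2^{n_\logic+\ell}$.
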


In addition, by \cref{thm:post-quantum_PRC}, \cref{assumption:for_PRC} implies the existence of $(n,O(n))$ secret-key PRCs robust to $p$-bounded noise for any $0<p<1/2$.
Thus, we also obtain the following corollary.

\begin{corollary}\label{coro:PRIC_from_assumption}
    There exists a constant $\gamma>0$ such that the following holds.
    Let $n_\logic,n_\phys:\N\to\N$ be polynomials satisfying $n_\phys\ge\gamma n_\logic$ and $n_\phys(\secp)\ge\secp^\kappa$ for some constant $\kappa>0$.
    If \cref{assumption:for_PRC} holds, then, for every function $w:\N\to\N$ satisfying $w=o(n_\phys\log\log n_\phys/\log n_\phys)$, there exist PRICs encoding $n_\logic$ logical qubits into $n_\phys$ physical qubits and robust to any $w$-local noise.
\end{corollary}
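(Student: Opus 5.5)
The plan is to instantiate \cref{coro:PRIC_from_SKPRC} with concrete parameter choices and discharge its hypotheses via \cref{thm:post-quantum_PRC}.

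\emph{Choosing the PRC.} Fix a constant $p\in(0,1/2)$, say $p=1/4$. Let $\alpha$ be the constant given by \cref{thm:post-quantum_PRC} for this $p$. Under \cref{assumption:for_PRC}, that theorem yields $(n,m)$ secret-key PRCs robust to $p$-bounded noise whenever $m\ge\alpha n$. We will use it with $n=n_\logic+\ell$.

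\emph{Choosing the parameters.} Given $n_\logic$ and $n_\phys$, set $\ell\coloneqq\lceil\log^2\secp\rceil$, so that $\ell=\omega(\log\secp)$, and $r\coloneqq n_\logic+\ell+\lceil\log^2\secp\rceil$, so that $r\ge n_\logic+\ell+\omega(\log\secp)$. Then define $m\coloneqq n_\phys-n_\logic-\ell-r$, adjusting $r$ by one if needed so that $n_\phys$ is even, as allowed before \cref{const:PRIC}.

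We need $m\ge\alpha(n_\logic+\ell)$ and $n_\phys=O(m)$. Note that $n_\phys\ge\secp^\kappa$ gives $\log^2\secp=o(n_\phys)$. So for large $\secp$,
\[
m\ge n_\phys-2n_\logic-O(\log^2\secp)\ge n_\phys-2n_\logic-o(n_\phys).
\]
Take $\gamma\coloneqq 2(2+\alpha)+2$. Then $n_\logic\le n_\phys/\gamma$, and so
\[
m\ge n_\phys\bigl(1-2/\gamma-o(1)\bigr)\ge n_\phys/2 .
\]
This gives $n_\phys\le 2m$, i.e.\ $n_\phys=O(m)$. It also gives $m\ge n_\phys/2\ge(\gamma/2)n_\logic$. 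Since $\alpha(n_\logic+\ell)\le\alpha n_\logic+o(n_\phys)$, this suffices for $m\ge\alpha(n_\logic+\ell)$. Because $\gamma$ depends only on $\alpha$, which depends only on the fixed $p$, $\gamma$ is a constant as the statement requires.

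\emph{Applying the corollary.} With these choices, $(n_\logic+\ell,m)$ secret-key PRCs robust to $p$-bounded noise exist, and every hypothesis of \cref{coro:PRIC_from_SKPRC} holds. It then gives, for every $w=o(n_\phys\log\log n_\phys/\log n_\phys)$, PRICs encoding $n_\logic$ logical qubits into $n_\phys$ physical qubits and robust to $w$-local noise.

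\emph{Expected obstacle.} The only delicate point is the bookkeeping. First, the $\omega(\log\secp)$ overheads in $\ell$ and $r$ must be absorbed into $n_\phys$; this is where $n_\phys\ge\secp^\kappa$ is used. Second, the rate constant $\gamma$ must not depend on $\secp$ or on $w$. Both are handled by the inequalities above.
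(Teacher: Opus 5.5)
Your proposal is correct and follows the paper's route. The paper obtains this corollary in one line by plugging the $(n,O(n))$ secret-key PRCs of \cref{thm:post-quantum_PRC} into \cref{coro:PRIC_from_SKPRC}, and your choices $\ell,\ r-n_\logic-\ell=\Theta(\log^2\secp)$ and $\gamma=6+2\alpha$ are a valid explicit version of that bookkeeping, with $n_\phys\ge\secp^\kappa$ absorbing the overheads.

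One minor caveat concerns parity. Since $n_\phys$ is prescribed by the statement, changing $r$ by one only shifts $m$ and cannot change the parity of $n_\phys$, so your adjustment does not achieve evenness. The evenness needed for the graph sampler is an implicit restriction inherited from the paper's ``without loss of generality'' remark.
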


To prove \cref{thm:PRIC}, we prove the robustness and pseudorandomness of \cref{const:PRIC} in \cref{subsec:PRIC_robustness,subsec:PRIC_pseudorandomness}, respectively.

\subsection{Robustness}
\label{subsec:PRIC_robustness}

In this subsection, we prove that \cref{const:PRIC} satisfies robustness.

\begin{lemma}\label{lem:robustness_of_PRIC}
    Let $w:\N\to\N$ be a function satisfying $w=o(n_\phys\log\log n_\phys/\log n_\phys)$.
    If $n_\phys(\secp)\ge\secp^\kappa$ for some constant $\kappa>0$ and $n_\phys=O(m)$, then \cref{const:PRIC}, with the graph parameters chosen above, satisfies robustness for any $w$-local error channel.
\end{lemma}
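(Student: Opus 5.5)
The plan is to reduce to fixed Pauli errors, track one Pauli error through the decoder exactly, and control the failure events using the PRFC robustness and \cref{thm:GraphSample_and_Recover}.

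First, the noise. By \cref{def:local_noise}, replace $\cE_\secp$ by an exactly $w$-local channel $\widetilde\cE$; contractivity of the encoder and decoder costs only $\negl(\secp)$. The key contains a uniform Pauli $P$ that is independent of everything else, and the decoder first applies $P^\dagger$. Averaging over $P$ therefore turns the effective noise into the twirl $\widetilde\cE^{\mathrm{tw}}$. By \cref{lem:Pauli_twirling_for_local_noise}, this twirl is a convex combination $\sum_Q p_Q Q(\cdot)Q^\dagger$ over Paulis with $\weight(Q)\le w$. By convexity of the diamond norm, it then suffices to show
\[
\Bigl\|\Exp_{\sk}\bigl[\Dec'_{\sk}\circ Q\circ U_G V_{\sk_{\mathrm{in}}}\bigr]-\identitymap\Bigr\|_\diamond\le\negl(\secp),
\]
uniformly over every fixed $Q=Z^vX^u$ with $|\Support(u)\cup\Support(v)|\le w\le t(n_\phys/2)$. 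Here $\Dec'$ denotes the decoder with the $P^\dagger$ step removed.

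Second, compute the action on a fixed input. Take a purified input $\sum_x\ket{\phi_x}_{\regA'}\ket{x}$. By \cref{lem:error_propagation}, $U_G^\dagger Z^vX^uU_G=\pm X^{e}Z^{u}$ with $e=v\oplus Au$. By property (i) of \cref{thm:GraphSample_and_Recover} with constant $p_{\mathrm g}$, every sampled graph gives $\weight(e)\le p_{\mathrm g}n_\phys/2\le pm/2$, so the restriction $e_\regC$ has weight at most $pm$. Applying $\BitDec$ to the branch $\ket{x\|y}$ produces $\ket{\gamma(x\|y)\oplus e}\ket{D(x\|y)}$, where $D(z)=\PRFC.\Dec(\sk',g_{\sk'}(z)\oplus e_\regC)$. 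Let $\mathsf{Bad}$ be the set of $z$ with $D(z)\neq z$. On the good branches the state is exactly
\[
\ket{e}_{\regB\regC}\otimes 2^{-\ell/2}\sum_{x,y}(-1)^{u\cdot\gamma(x\|y)}\ket{\phi_x}\ket{x\|y}_\regD .
\]
If $\Recover(G,e)=u$, the $\PhaseDec$ step cancels the phase and leaves $\sum_x\ket{\phi_x}\ket{x}\otimes\ket{+^\ell}$, as desired. The deviation from the ideal output in trace norm is then bounded by two terms. The first is $\Pr_G[\Recover(G,e)\ne u]$, which is $\negl(n_\phys/2)=\negl(\secp)$ by property (ii) and $n_\phys\ge\secp^\kappa$. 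The second is, up to a square-root, gentle-measurement-type loss from \cref{lem:gentle_measurement}, a bound on the weight of the bad branches, $\Exp_{\sk}\bigl[\sum_{x}\|\phi_x\|^2\,2^{-\ell}\sum_y \mathbf 1[x\|y\in\mathsf{Bad}]\bigr]$.

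Third, control the bad branches; this is the main obstacle. For each fixed $z$, the error $e_\regC$ depends only on $G$ and $Q$, not on $\sk'$, and it has weight at most $pm$. The constant channel "$c\mapsto c\oplus e_\regC$" is therefore $p$-bounded, so PRFC robustness gives $\Pr_{\sk'}[z\in\mathsf{Bad}]\le\negl(\secp)$ for each fixed $z$. The subtlety is that the robustness guarantee is pointwise in $z$ while the state is a superposition. I plan to handle this by exchanging expectations: the bad weight is a $\|\phi_x\|^2$-weighted average over $z$ of $\mathbf 1[z\in\mathsf{Bad}]$, so its expectation over $\sk'$ is at most $\max_z\Pr[z\in\mathsf{Bad}]$. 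This requires the negligible bound to be uniform in $z$, which I would justify from the construction in \cref{const:PRFC}, whose failure probability reduces to the PRC robustness bound independently of $x$. The deviation is not linear in the bad weight, because bad branches also leave garbage in $\regB\regC$ that spoils the product form. I would therefore bound the key-averaged fidelity directly: write the post-$\BitDec$ state as $\ket{\Psi_{\mathrm{good}}}+\ket{\Psi_{\mathrm{bad}}}$ with $\|\Psi_{\mathrm{bad}}\|^2$ equal to the bad weight, and use Jensen/Markov (\cref{lem:Markov}) so that $\Exp\|\Psi_{\mathrm{bad}}\|\le\sqrt{\Exp\|\Psi_{\mathrm{bad}}\|^2}=\negl$. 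The measurement of $e$ and the subsequent unitary steps are contractive. Summing the two error terms and the initial $\negl$ approximation of the noise yields robustness.
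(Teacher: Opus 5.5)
Your proposal is correct and has the same architecture as the paper's proof. Both arguments approximate the noise by an exact-local channel, twirl with the key Pauli $P$ via \cref{lem:Pauli_twirling_for_local_noise}, and fix a Pauli $Z^vX^u$ pushed through $U_G$ by \cref{lem:error_propagation}. Both then use property (i) of \cref{thm:GraphSample_and_Recover} to keep $e_\regC$ inside the PRFC radius for every graph, property (ii) for $\Recover$, and PRFC robustness for the key-independent shift $c\mapsto c\oplus e_\regC$ to make the key-averaged bad-branch weight negligible.

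One statement needs correcting, and it is exactly where the paper spends its effort. The component of the post-$\BitDec$ state lying on $\ket{e}_{\regB\regC}$ is $\ket{e}\otimes 2^{-\ell/2}\sum_{x\|y\notin\mathsf{Bad}}(-1)^{u\cdot\gamma(x\|y)}\ket{\phi_x}\ket{x\|y}$, not the sum over all $(x,y)$. So after the phase correction it is not exactly $\sum_x\ket{\phi_x}\ket{x}\otimes\ket{+^\ell}$: for each $y$, only a $y$-dependent set of $x$ survives. The paper handles this by restricting to a Markov-good key set $\cK_\good$, normalizing the projected state, and then using Fuchs--van de Graaf together with a second Markov argument over $y$.

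In your setup the repair is short. Compare the actual vector $\Psi$ with $\Psi_{\mathrm{ideal}}=\ket{e}\otimes 2^{-\ell/2}\sum_{x,y}(-1)^{u\cdot\gamma(x\|y)}\ket{\phi_x}\ket{x\|y}$. The bad branches (which lie off $\ket{e}_{\regB\regC}$) and the missing good terms are orthogonal, and each has squared norm equal to the bad weight $\beta$, so $\|\Psi-\Psi_{\mathrm{ideal}}\|_2^2=2\beta$. When $\Recover(G,e)=u$, contractivity of $\PhaseDec$ bounds the output trace distance by $\sqrt{2\beta}$. Jensen then gives $\Exp\sqrt{2\beta}\le\sqrt{2\,\Exp\beta}=\negl(\secp)$, which is a slightly shorter route than the paper's.

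Finally, you do not need \cref{const:PRFC} for the uniformity in $z$ (and in the $G$-dependent shift); it is automatic. Robustness holds for every message and every $p$-bounded channel, so choosing the worst case separately for each $\secp$ shows the failure probability is uniformly negligible. The paper uses this implicitly in \cref{eq:exp_good}.
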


\begin{proof}[Proof of \cref{lem:robustness_of_PRIC}]
        By the definition of $w$-local noise, choose an exact-local channel
        $\widetilde{\cE}_\secp$ with
        $\|\cE_\secp-\widetilde{\cE}_\secp\|_\diamond\le\negl(\secp)$.
        We prove the claim for $\widetilde{\cE}_\secp$ and write it as
        $\cE_\secp$ below; contractivity adds the approximation error back at
        the end.
        Let $\regA$ and $\regA'$ be $n_\logic$-qubit registers.
        Our goal is to show
        \begin{align}
            \frac{1}{2}
            \bigg\|
             \bigg(
              \identitymap_{\regA'}\otimes
              \Exp_{\sk\gets\KeyGen(1^\secp)}(\Dec_\sk\circ\cE_\secp\circ\Enc_{\sk})_\regA
             \bigg)
             (\ketbra{\psi}{\psi}_{\regA'\regA})
             -\ketbra{\psi}{\psi}_{\regA'\regA}
            \bigg\|_1
            \le\negl(\secp)
        \end{align}
        for any $2n_\logic$-qubit pure state $\ket{\psi}$ and any $w$-local error channel $\cE_\secp$.
        In \cref{const:PRIC}, the secret key $\sk$ is parsed as $\sk = (P, G, \sk_{\mathrm{in}})$, where $P \gets \cP_{n_\phys}$ is a uniformly random Pauli for twirling, $G \gets \GraphSample(1^{n_\phys/2})$ is a random bipartite graph for the underlying CWS code, and $\sk_{\mathrm{in}} = (\sk', k, k')$ is the inner key.
        We decompose $\Enc_\sk$ as follows:
        \begin{align}
            \Enc_{\sk,\regA\to\regB\regC}= (PU_{G})_{\regB\regC}\circ \Enc_{\sk_{\mathrm{in}},\regA\to\regB\regC},
        \end{align}
        where
        \begin{align}
            \Enc_{\sk_{\mathrm{in}},\regA\to\regB\regC}:
            \ket{x}_\regA
            \mapsto
            \frac{1}{\sqrt{2^\ell}}\sum_{y\in\bit^\ell}(-1)^{f_{k}(x\|y)}\ket{\pi_{k'}(x\|y\|0^r)}_\regB\ket{g_{\sk'}(x\|y)}_\regC.
        \end{align}
        We also decompose $\Dec_\sk$ as follows:
        \begin{align}
            \Dec_{\sk,\regB\regC\to\regA}= \PhaseDec_{\sk_{\mathrm{in}},G,\regB\regC\regD\to\regA} \circ \BitDec_{\sk_{\mathrm{in}},\regB\regC\to\regB\regC\regD}\circ (PU_{G})_{\regB\regC}^\dag
        \end{align}
        
        By taking the expectation over the random Pauli operator
        $P \in \cP_{n_\phys}$, the arbitrary $w$-local error channel
        $\cE_\secp$ is twirled into a probabilistic mixture of $w$-local Pauli
        errors from \cref{lem:Pauli_twirling_for_local_noise}:
        \begin{align}
            \Exp_{P\gets\cP_{n_\phys}}
            \left[
             P^\dagger \cE_\secp (P \rho P^\dagger) P
            \right]
            = \sum_{E \in \cP_{n_\phys} : \weight(E) \le w} p_E E \rho E^\dagger.
        \end{align}
        Thus, under the Pauli-noise mixture, the expectation over $\sk$ reduces to the expectation over $\sk_{\mathrm{in}}$ and $G$.
        By convexity of the trace distance, it suffices to establish the following bound, uniformly over all fixed Pauli errors $E=Z^vX^u$ with $\weight(E)\le w$.
        \begin{align}
            &\frac{1}{2}
            \bigg\|
             \bigg(
              \identitymap_{\regA'}\otimes
              \Exp_{\sk_{\mathrm{in}},G}(\PhaseDec_{(\sk_{\mathrm{in}},G)} \circ \BitDec_{\sk_{\mathrm{in}}}\circ (U_G^\dag E U_G) \circ\Enc_{\sk_{\mathrm{in}}})_\regA
             \bigg)
             (\ketbra{\psi}{\psi}_{\regA'\regA})
             -\ketbra{\psi}{\psi}_{\regA'\regA}
            \bigg\|_1
            \notag\\
            \le&\negl(\secp).
            \label{eq:goal_of_robustness}
        \end{align}
        In the following, fix a $2n_\logic$-qubit pure state $\ket{\psi}$ and a Pauli error $E = Z^v X^u$ satisfying $\weight(E)\le w$.
        To show \cref{eq:goal_of_robustness}, we need the following claim.
        \begin{claim}\label{claim:after_1st_step_in_Dec_PRIC}
            For each $G$, define $e \coloneqq v\oplus Au$, where $A$ denotes the adjacency matrix of $G$, and the projection
            \begin{align}
                    \Pi^{\good}_{\regB\regC\regD}\coloneqq
                    &
                    \ketbra{e}{e}_{\regB\regC}\otimes
                    \sum_{x\in\bit^{n_\logic},y\in\bit^\ell}\ketbra{(x\|y)}{(x\|y)}_\regD.
            \end{align}
            For each $(\sk_{\mathrm{in}}, G)$, define the states. If the
            projection below has nonzero norm, normalize it; if its norm is
            zero, define $\ket{\phi_{(\sk_{\mathrm{in}},G)}^{\good}}$ to be a
            fixed computational-basis state of the same register dimension.
            \begin{align}
                    \ket{\phi_{(\sk_{\mathrm{in}}, G)}}_{\regA'\regB\regC\regD}
                     \coloneqq
                     \BitDec_{\sk_{\mathrm{in}}, \regB\regC \to \regB\regC\regD} \cdot 
                     (U_G^\dag Z^vX^u U_G)_{\regB\regC} \cdot
                     \Enc_{\sk_{\mathrm{in}}, \regA \to \regB\regC} \cdot
                     \ket{\psi}_{\regA'\regA}
            \end{align}
            and
            \begin{align}
                    \ket{\phi_{(\sk_{\mathrm{in}}, G)}^{\good}}_{\regA'\regB\regC\regD}
                    \coloneqq
                    \frac{\Pi^{\good}_{\regB\regC\regD}\ket{\phi_{(\sk_{\mathrm{in}}, G)}}_{\regA'\regB\regC\regD}}{
                    \left\|
                    \Pi^{\good}_{\regB\regC\regD}\ket{\phi_{(\sk_{\mathrm{in}}, G)}}_{\regA'\regB\regC\regD}
                    \right\|_2}.
            \end{align}
            Then, there exists a negligible function $\epsilon$ such that
            \begin{align}
                \Pr_{(\sk_{\mathrm{in}}, G)}
                \bigg[
                 \bigg\|
                  \ket{\phi_{(\sk_{\mathrm{in}}, G)}}-\ket{\phi_{(\sk_{\mathrm{in}}, G)}^{\good}}
                 \bigg\|_2\le\epsilon(\secp)
                 \wedge
                 \Recover(G,e)=u
                \bigg]
                \ge1-\negl(\secp).
            \end{align}
        \end{claim}

        We prove \cref{claim:after_1st_step_in_Dec_PRIC} below.
        Let $\epsilon$ be the negligible function in \cref{claim:after_1st_step_in_Dec_PRIC}.
        Define $\cK_\good$ to be the following set of keys:
        \begin{align}
            \cK_\good\coloneqq
            \bigg\{(\sk_{\mathrm{in}}, G):
            \bigg\|
             \ket{\phi_{(\sk_{\mathrm{in}}, G)}}-
             \ket{\phi_{(\sk_{\mathrm{in}}, G)}^{\good}}
            \bigg\|_2\le\epsilon(\secp)
            \wedge
            \Recover(G,e)=u
            \bigg\}.
        \end{align}
        We will show that for any $(\sk_{\mathrm{in}}, G)\in\cK_\good$,
        \begin{align}
            &\frac{1}{2}
            \bigg\|
             \bigg(
              \identitymap_{\regA'}\otimes
              (\PhaseDec_{(\sk_{\mathrm{in}},G)} \circ \BitDec_{\sk_{\mathrm{in}}}\circ (U_G^\dag E U_G) \circ\Enc_{\sk_{\mathrm{in}}})_\regA
             \bigg)
             (\ketbra{\psi}{\psi}_{\regA'\regA})
             -\ketbra{\psi}{\psi}_{\regA'\regA}
            \bigg\|_1
            \notag\\
            \le&\negl(\secp)
            \label{eq:closeness_for_good_sk}
        \end{align}
        which implies \cref{eq:goal_of_robustness} from \cref{claim:after_1st_step_in_Dec_PRIC}.

        Now we show \cref{eq:closeness_for_good_sk}.
        In the following, fix a key $(\sk_{\mathrm{in}}, G) \in \cK_\good$. We use a hybrid argument.
        Define the following states:
        \begin{align}
            &\rho^{(0)}_{\regA'\regA}\coloneqq
                \bigg(
                 \identitymap_{\regA'}\otimes
                 (\PhaseDec_{(\sk_{\mathrm{in}},G)} \circ \BitDec_{\sk_{\mathrm{in}}}\circ (U_G^\dag E U_G) \circ\Enc_{\sk_{\mathrm{in}}})_\regA
                \bigg)
                (\ketbra{\psi}{\psi}_{\regA'\regA}),
            \\
            &\rho^{(1)}_{\regA'\regA}\coloneqq
                \bigg(
                 \identitymap_{\regA'}\otimes
                 \PhaseDec_{(\sk_{\mathrm{in}},G),\regB\regC\regD\to\regA}
                \bigg)
                (\ketbra{\phi_{(\sk_{\mathrm{in}}, G)}}{\phi_{(\sk_{\mathrm{in}}, G)}}_{\regA'\regB\regC\regD}),
            \\
            &\rho^{(2)}_{\regA'\regA}\coloneqq
                \bigg(
                 \identitymap_{\regA'}\otimes
                 \PhaseDec_{(\sk_{\mathrm{in}},G),\regB\regC\regD\to\regA}
                \bigg)
                (\ketbra{\phi^{\good}_{(\sk_{\mathrm{in}}, G)}}{\phi^{\good}_{(\sk_{\mathrm{in}}, G)}}_{\regA'\regB\regC\regD}),
            \\
            &\rho^{(3)}_{\regA\regA'}\coloneqq
            \ketbra{\psi}{\psi}_{\regA'\regA}
        \end{align}
        For each $i\in[2]$, we show that $\rho^{(i-1)}$ is statistically close to $\rho^{(i)}$ as follows.

        \begin{itemize}
            \item $\rho^{(0)}$ and $\rho^{(1)}$:
            By the definition of $\ket{\phi_{(\sk_{\mathrm{in}},G)}}$, we have $\rho^{(0)}=\rho^{(1)}$.

            \item $\rho^{(1)}$ and $\rho^{(2)}$:
            Since $(\sk_{\mathrm{in}}, G)\in\cK_\good$, we have $\|\ket{\phi_{(\sk_{\mathrm{in}}, G)}}-\ket{\phi^{\good}_{(\sk_{\mathrm{in}}, G)}}\|_2\le\epsilon(\secp)\le\negl(\secp)$. 
            Thus, we obtain $\frac{1}{2}\|\rho^{(1)}-\rho^{(2)}\|_1 \le \negl(\secp)$.
        \end{itemize}
        
        For $\rho^{(2)}$ and $\rho^{(3)}$, we prove the following bound below:
        \begin{align}
           \frac{1}{2}\|\rho^{(2)}-\rho^{(3)}\|_1\le\negl(\secp).
            \label{eq:rho2_and_rho3}
        \end{align}
        From the above and the triangle inequality, we obtain \cref{eq:closeness_for_good_sk}, which concludes the main proof.
    \end{proof}

To complete the proof, we show \cref{claim:after_1st_step_in_Dec_PRIC,eq:rho2_and_rho3}.
For the proofs, we use the following claim.

\begin{claim}\label{claim:good_projection_after_BitDec}
    For any $x\in\bit^{n_\logic}$,
    \begin{align}
        &\Pi^\good_{\regB\regC\regD}\cdot
            \BitDec_{\sk_{\mathrm{in}},\regB\regC\to\regB\regC\regD}\cdot
            (U_G^\dag Z^vX^u U_G)\cdot
            \Enc_{\sk_{\mathrm{in}},\regA\to\regB\regC}\cdot
            \ket{x}_\regA
        \notag\\
        =&\ket{e}_{\regB\regC}
            \otimes
            \bigg(
             \frac{1}{\sqrt{2^\ell}}\sum_{y:\PRFC.\Dec(\sk', g_{\sk'}(x\|y) \oplus e_1)=x\|y}
            (-1)^{u\cdot c(x\|y)}\ket{x\|y}_\regD
            \bigg),
    \end{align}
    where $c(z)\coloneqq \pi_{k'}(z\|0^r) \| g_{\sk'}(z)$ and $e_1$ denotes the last $m$ bits of $e=v\oplus Au$.
\end{claim}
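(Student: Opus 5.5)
This is a direct computation: push the computational-basis state $\ket{x}$ through each operator in turn, then apply $\Pi^\good$. Write $c(z)=\pi_{k'}(z\|0^r)\|g_{\sk'}(z)$, so that
\[
\Enc_{\sk_{\mathrm{in}}}\ket{x}=2^{-\ell/2}\sum_y(-1)^{f_k(x\|y)}\ket{c(x\|y)}_{\regB\regC}.
\]

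\emph{Step 1 (Pauli conjugation).} By \cref{lem:error_propagation},
\[
U_G^\dag Z^vX^uU_G=(-1)^{q_G(u)}X^{e}Z^u,\qquad e=v\oplus Au.
\]
Applying this to each basis term $\ket{c(x\|y)}$ gives
\[
(-1)^{q_G(u)}(-1)^{u\cdot c(x\|y)}\ket{c(x\|y)\oplus e}.
\]
The sign $(-1)^{q_G(u)}$ is global and independent of $y$. It must either be absorbed or, more carefully, checked to be harmless. Since the claim is an exact equality, I would note that the paper's phase-free convention treats it as an overall phase (or the statement is up to that global sign), and carry it along as a scalar.

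\emph{Step 2 (first part of $\BitDec$).} Split $e=e_0\|e_1$ according to $\regB\regC$. The isometry \eqref{eq:BitDec_step1} writes $z_y\coloneqq\PRFC.\Dec(\sk',g_{\sk'}(x\|y)\oplus e_1)$ into $\regD$. Each term becomes
\[
(-1)^{f_k(x\|y)+u\cdot c(x\|y)}\ket{\pi_{k'}(x\|y\|0^r)\oplus e_0}\ket{g_{\sk'}(x\|y)\oplus e_1}\ket{z_y}.
\]

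\emph{Step 3 (second part of $\BitDec$).} Apply \eqref{eq:BitDec_step2} controlled on $z_y$. This multiplies the term by $(-1)^{f_k(z_y)}$ and XORs $c(z_y)$ into $\regB\regC$. Now split the sum over $y$ into two classes.
\begin{itemize}
\item If $z_y=x\|y$, the phases $f_k$ cancel and $\regB\regC$ becomes exactly $\ket{e}$. The term is then $(-1)^{u\cdot c(x\|y)}\ket{e}\ket{x\|y}$.
\item If $z_y\neq x\|y$, the register $\regD$ holds $z_y$.
\end{itemize}

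\emph{Step 4 (apply $\Pi^\good$).} The projector $\Pi^\good$ is the tensor product of $\ketbra{e}{e}$ on $\regB\regC$ with the projector onto $\regD$ states of the form $x'\|y'$. As written, this second factor is the identity on $\regD$, since every $n_\logic+\ell$-bit string has such a form. This is the main obstacle: the bad terms ($z_y\ne x\|y$) need not be annihilated. Their $\regB\regC$ content is $c(x\|y)\oplus e\oplus c(z_y)$, which equals $e$ only if $c(x\|y)=c(z_y)$. Since $\pi_{k'}$ is a permutation, $\pi_{k'}(x\|y\|0^r)=\pi_{k'}(z_y\|0^r)$ forces $z_y=x\|y$. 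Hence $c$ is injective, and every bad term is orthogonal to $\ket{e}_{\regB\regC}$ and is killed by $\ketbra{e}{e}$. The good terms survive unchanged.

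Collecting the surviving terms, with the factor $2^{-\ell/2}$ and the global sign handled as in Step 1, gives exactly the stated expression.
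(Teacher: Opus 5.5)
Your proposal is correct and follows essentially the same computation as the paper. Both apply \cref{lem:error_propagation} to obtain $X^eZ^u$, split $\BitDec$ into the decoding isometry and the controlled uncomputation, and use injectivity of $c$ (from $\pi_{k'}$ being a permutation) so that $\ketbra{e}{e}_{\regB\regC}$ kills exactly the terms with $\PRFC.\Dec(\sk',g_{\sk'}(x\|y)\oplus e_1)\neq x\|y$. Two of your side remarks are also accurate and harmless: the $\regD$ factor of $\Pi^\good$ is the identity, and the equality holds only up to the global sign $(-1)^{q_G(u)}$, which the paper drops silently.
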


\begin{proof}[Proof of \cref{claim:good_projection_after_BitDec}]
    A straightforward calculation gives
        \begin{align}
            (U_G^\dag Z^vX^u U_G)_{\regB\regC}\cdot
            \Enc_{\sk_{\mathrm{in}},\regA\to\regB\regC}\ket{x}_\regA
            =&
            (X^{e}Z^u)_{\regB\regC}\cdot
            \Enc_{\sk_{\mathrm{in}},\regA\to\regB\regC}\ket{x}_\regA
            \tag{From \cref{lem:error_propagation}}\\
            =&
             \frac{1}{\sqrt{2^\ell}} \sum_{y} (-1)^{f_{k}(x\|y) + u \cdot c(x\| y)}
            \ket{c(x\| y)\oplus e}_{\regB\regC}.
        \end{align}
    Decompose $\BitDec_{\sk_{\mathrm{in}},\regB\regC\to\regB\regC\regD}=\BitDec_{\sk_{\mathrm{in}},\regB\regC\regD}^{(b)}\circ\BitDec_{\sk_{\mathrm{in}},\regC\to\regC\regD}^{(a)}$, where $\BitDec_{\sk_{\mathrm{in}}}^{(a)}$ is the isometry defined as
        \begin{align}
            \BitDec_{\sk_{\mathrm{in}},\regC\to\regC\regD}^{(a)}:
            \ket{c}_\regC\mapsto\ket{c}_\regC\ket{\PRFC.\Dec(\sk',c)}_\regD,
        \end{align}
        and $\BitDec_{\sk_{\mathrm{in}}}^{(b)}$ is the unitary defined as
        \begin{align}
            \BitDec_{\sk_{\mathrm{in}},\regB\regC\regD}^{(b)}:
            \ket{b}_\regB\ket{c}_\regC\ket{z}_\regD\mapsto
                    (-1)^{f_{k}(z)}\ket{b\oplus \pi_{k'}(z\|0^r)}_\regB\ket{c\oplus g_{\sk'}(z)}_\regC\ket{z}_\regD.
        \end{align}
        Then, we have
        \begin{align}
            &\BitDec_{\sk_{\mathrm{in}},\regC\to\regC\regD}^{(a)} \cdot
            (U_G^\dag Z^vX^u U_G)_{\regB\regC}\cdot
            \Enc_{\sk_{\mathrm{in}},\regA\to\regB\regC}\ket{x}_\regA
            \notag\\
            =&
             \frac{1}{\sqrt{2^\ell}} \sum_{y} (-1)^{f_{k}(x\|y) + u \cdot c(x\| y)}
            \ket{c(x\| y)\oplus e}_{\regB\regC}
            \ket{\PRFC.\Dec(\sk', g_{\sk'}(x\|y)\oplus e_1)}_\regD,
            \label{eq:eq1_in_claim:good_projection_after_BitDec}
        \end{align}
        where $e_1$ denotes the last $m$ bits of $e=v\oplus Au$.
        Since $c(z)=\pi_{k'}(z\|0^r)\|g_{\sk'}(z)$ and $\pi_{k'}$ is a permutation, $c(x\|y)\oplus c(z)\oplus e=e$ if and only if $x\|y=z$.
        Thus, \cref{eq:eq1_in_claim:good_projection_after_BitDec} implies that
        \begin{align}
            &\Pi^\good_{\regB\regC\regD}\cdot
            \BitDec_{\sk_{\mathrm{in}},\regB\regC\to\regB\regC\regD}\cdot
            (U_G^\dag Z^vX^u U_G)\cdot
            \Enc_{\sk_{\mathrm{in}},\regA\to\regB\regC}\cdot
            \ket{x}_\regA
            \notag\\
            =&\ket{e}_{\regB\regC}
            \otimes
            \bigg(
             \frac{1}{\sqrt{2^\ell}}\sum_{y:\PRFC.\Dec(\sk', g_{\sk'}(x\|y) \oplus e_1)=x\|y}
            (-1)^{u\cdot c(x\| y)}\ket{x\|y}_\regD
            \bigg),
        \end{align}
        which concludes the proof.
\end{proof}

We are ready to prove \cref{claim:after_1st_step_in_Dec_PRIC,eq:rho2_and_rho3}.

    \begin{proof}[Proof of \cref{claim:after_1st_step_in_Dec_PRIC}]
        We show that there exists a negligible function $\delta$ such that
        \begin{align}
            \Exp_{\sk_{\mathrm{in}}}
             \bigg[
              \bigg\|
                  (I-\Pi^{\good})_{\regB\regC\regD}\cdot
                  \BitDec_{\sk_{\mathrm{in}},\regB\regC\to\regB\regC\regD}\cdot 
                  (U_G^\dag E U_G)_{\regB\regC}\cdot
                  \Enc_{\sk_{\mathrm{in}},\regA\to\regB\regC}\cdot
                  \ket{\xi}_\regA
              \bigg\|_2^2
             \bigg]
             \le\delta(\secp) 
             \label{eq:goal_of_claim:after_1st_step_in_Dec_PRIC}
        \end{align}
        for any $n_\logic$-qubit state $\ket{\xi}$ and any simple graph $G$ such that $\weight(e)\le pm$.
        Fix a simple graph $G$ satisfying the above condition.
        Before proving \cref{eq:goal_of_claim:after_1st_step_in_Dec_PRIC}, we show \cref{claim:after_1st_step_in_Dec_PRIC} assuming it.
        From \cref{eq:goal_of_claim:after_1st_step_in_Dec_PRIC}, we have
        \begin{align}
            \Exp_{\sk_{\mathrm{in}}}
             \bigg[
              \bigg\|
                  (I-\Pi^{\good})\cdot
                  \ket{\phi_{(\sk_{\mathrm{in}}, G)}}
              \bigg\|_2^2
             \bigg]
             \le\delta(\secp).
        \end{align}
        Thus, applying Markov's inequality (\cref{lem:Markov}), we obtain
        \begin{align}
            \Pr_{\sk_{\mathrm{in}}}
                \bigg[
                 \bigg\|
                  (I-\Pi^{\good})\cdot
                  \ket{\phi_{(\sk_{\mathrm{in}}, G)}}
                 \bigg\|_2^2\le\sqrt{\delta(\secp)}
                \bigg]
                \ge1-\sqrt{\delta(\secp)}.
                \label{eq:good_sk_in_whp0}
        \end{align}
        Fix $\sk_{\mathrm{in}}$ satisfying 
        $\|(I-\Pi^{\good})\cdot\ket{\phi_{(\sk_{\mathrm{in}}, G)}}\|_2^2\le\sqrt{\delta(\secp)}$.
        The triangle inequality implies 
        \begin{align}
            \bigg\|
             \ket{\phi_{(\sk_{\mathrm{in}}, G)}}-\ket{\phi_{(\sk_{\mathrm{in}}, G)}^{\good}}
            \bigg\|_2
            \le&
            \bigg\|\ket{\phi_{(\sk_{\mathrm{in}}, G)}}-\Pi^{\good}\ket{\phi_{(\sk_{\mathrm{in}}, G)}}\bigg\|_2
            +\bigg\|\Pi^{\good}\ket{\phi_{(\sk_{\mathrm{in}}, G)}}-\ket{\phi_{(\sk_{\mathrm{in}}, G)}^{\good}}\bigg\|_2
            \notag\\
            \le&\delta^{1/4}(\secp)
            +\bigg\|\Pi^{\good}\ket{\phi_{(\sk_{\mathrm{in}}, G)}}-\ket{\phi_{(\sk_{\mathrm{in}}, G)}^{\good}}\bigg\|_2.
            \label{eq:phi_vs_phi_good1}
        \end{align}
        From the definition of $\ket{\phi_{(\sk_{\mathrm{in}}, G)}^{\good}}$, the second term in \cref{eq:phi_vs_phi_good1} becomes
        \begin{align}
            \bigg\|\Pi^{\good}\ket{\phi_{(\sk_{\mathrm{in}}, G)}}-\ket{\phi_{(\sk_{\mathrm{in}}, G)}^{\good}}\bigg\|_2
            =&\left(
             \frac{1}{\|\Pi^\good\ket{\phi_{(\sk_{\mathrm{in}}, G)}}\|_2}-1
            \right)\cdot
            \|\Pi^\good\ket{\phi_{(\sk_{\mathrm{in}}, G)}}\|_2
            \notag\\
            =&1-\|\Pi^\good\ket{\phi_{(\sk_{\mathrm{in}}, G)}}\|_2.
            \label{eq:phi_vs_phi_good2}
        \end{align}
        On the other hand, $\|(I-\Pi^{\good})\cdot\ket{\phi_{(\sk_{\mathrm{in}}, G)}}\|_2^2\le\sqrt{\delta(\secp)}$ implies
        \begin{align}
            \|\Pi^\good\ket{\phi_{(\sk_{\mathrm{in}}, G)}}\|_2
            =\sqrt{
            1-
            \bigg\|
            (I-\Pi^{\good})\cdot
            \ket{\phi_{(\sk_{\mathrm{in}}, G)}}
            \bigg\|_2^2
            }
            \ge&\sqrt{1-\sqrt{\delta(\secp)}}
            \ge1-\sqrt{\delta(\secp)}.
            \label{eq:phi_vs_phi_good3}
        \end{align}
        From \cref{eq:phi_vs_phi_good1,eq:phi_vs_phi_good2,eq:phi_vs_phi_good3}, we obtain
        \begin{align}
            \bigg\|
             \ket{\phi_{(\sk_{\mathrm{in}}, G)}}-\ket{\phi_{(\sk_{\mathrm{in}}, G)}^{\good}}
            \bigg\|_2
             \le&\delta^{1/4}(\secp)+\sqrt{\delta(\secp)}.
        \end{align}
        Thus, by defining a negligible function $\epsilon(\secp) \coloneqq \delta^{1/4}(\secp)+\sqrt{\delta(\secp)}$, we have
        \begin{align}
            \Pr_{\sk_{\mathrm{in}}}
                \bigg[
                 \bigg\|
                  \ket{\phi_{(\sk_{\mathrm{in}}, G)}}-\ket{\phi_{(\sk_{\mathrm{in}}, G)}^{\good}}
                 \bigg\|_2\le \epsilon(\secp)
                \bigg]
                \ge1-\sqrt{\delta(\secp)}\ge1-\negl(\secp)
                \label{eq:good_sk_in_whp}
        \end{align}
        For the fixed Pauli error $E=Z^vX^u$, \cref{thm:better-graph} gives $\weight(e)\le p_{\mathrm g}n_\phys/2\le pm$ for every output graph and $\Pr_G[\Recover(G,e)=u]\ge1-\negl(n_\phys)$. Since $n_\phys(\secp)\ge\secp^\kappa$, the latter failure probability is negligible in $\secp$. Combining this bound with \cref{eq:good_sk_in_whp} gives $\Pr_{\sk_{\mathrm{in}},G}[\|\ket{\phi_{(\sk_{\mathrm{in}},G)}}-\ket{\phi^{\good}_{(\sk_{\mathrm{in}},G)}}\|_2\le\epsilon(\secp)\ \wedge\ \Recover(G,e)=u]\ge1-\sqrt{\delta(\secp)}-\negl(\secp)$, which proves the claim.

        It remains to prove \cref{eq:goal_of_claim:after_1st_step_in_Dec_PRIC}.
        To this end, fix an $n_\logic$-qubit state $\ket{\xi}_\regA = \sum_{x\in\bit^{n_\logic}}\xi_x\ket{x}_\regA$.
        From \cref{claim:good_projection_after_BitDec}, we have
        \begin{align}
            &\Pi^\good_{\regB\regC\regD}\cdot
            \BitDec_{\sk_{\mathrm{in}},\regB\regC\to\regB\regC\regD}\cdot
            (U_G^\dag Z^vX^u U_G)\cdot
            \Enc_{\sk_{\mathrm{in}},\regA\to\regB\regC}\cdot
            \ket{\xi}_\regA
            \notag\\
            =&\ket{e}_{\regB\regC}
            \otimes
            \bigg(
             \frac{1}{\sqrt{2^\ell}}\sum_{x,y:\PRFC.\Dec(\sk', g_{\sk'}(x\|y) \oplus e_1)=x\|y}
            (-1)^{u\cdot c(x\| y)}\xi_x\ket{x\|y}_\regD
            \bigg).
        \end{align}
        For the fixed graph under consideration, we have $\weight(e_1)\le\weight(e)\le pm$.
        For fixed $E$ and $G$, the error $e_1$ is independent of the inner key, so the additive channel $c\mapsto c\oplus e_1$ is $p$-bounded.
        The robustness of the inner PRFC therefore gives the following estimate.
    \begin{align}
        &\Exp_{\sk_{\mathrm{in}}}
             \bigg[
              \bigg\|
                  \Pi^{\good,\regB\regC\regD}\cdot
                  \BitDec_{\sk_{\mathrm{in}},\regB\regC\to\regB\regC\regD}\cdot 
                  (U_G^\dag Z^vX^u U_G)_{\regB\regC}\cdot
                  \Enc_{\sk_{\mathrm{in}},\regA\to\regB\regC}\cdot
                  \ket{\xi}_\regA
              \bigg\|_2^2
             \bigg]
        \notag\\
        =&\sum_{x}|\xi_x|^2
        \Pr_{
        \substack{\sk'\gets\PRFC.\KeyGen(1^\secp)\\ y\gets\bit^\ell}
        }
        \bigg[
         \PRFC.\Dec(\sk', g_{\sk'}(x\|y)\oplus e_1)=x\|y
        \bigg]
        \notag\\
        \ge&\sum_{x}|\xi_x|^2(1-\negl(\secp))
        \tag{By robustness of PRFC and $\weight(e_1)\le\weight(e)\le pm$}\\
        =&1-\negl(\secp),
        \label{eq:exp_good}
    \end{align}
    where we have used $\sum_{x}|\xi_x|^2=\|\ket{\xi}\|^2_2=1$ in the last line.
    This implies \cref{eq:goal_of_claim:after_1st_step_in_Dec_PRIC}, which concludes the proof.
\end{proof}

\begin{proof}[Proof of \cref{eq:rho2_and_rho3}]
        Recall that $(\sk_{\mathrm{in}},G)\in\cK_\good$ is fixed; thus, we have $\Recover(G,e)=u$ and
        \begin{align}
            \bigg\|
             \ket{\phi_{(\sk_{\mathrm{in}}, G)}}-
             \ket{\phi_{(\sk_{\mathrm{in}}, G)}^{\good}}
            \bigg\|_2\le\epsilon(\secp).
            \label{eq:assumption_eq:rho2_and_rho3}
        \end{align}
        Define $\cZ^\good_{\sk_{\mathrm{in}},G}$ as follows:
        \begin{align}
            \cZ^\good_{\sk_{\mathrm{in}}, G}
            &\coloneqq
            \bigg\{
             (x,y)\in\bit^{n_\logic}\times\bit^\ell: \PRFC.\Dec(\sk', g_{\sk'}(x\|y) \oplus e_1)=x\|y
            \bigg\},
        \end{align}
        where $e_1$ denotes the last $m$ bits of $e$.
        Let $\ket{\psi}_{\regA'\regA}=\sum_{x\in\bit^{n_\logic}}\ket{\psi_x}_{\regA'}\ket{x}_\regA$.
        From \cref{claim:good_projection_after_BitDec}, we have
        \begin{align}
            \Pi_{(\sk_{\mathrm{in}}, G)}^\good\ket{\phi_{(\sk_{\mathrm{in}}, G)}}_{\regA'\regB\regC\regD}
            =\ket{e}_{\regB\regC}
            \bigg(
             \frac{1}{\sqrt{2^\ell}}\sum_{(x,y)\in\cZ^\good_{\sk_{\mathrm{in}}, G}}(-1)^{u\cdot c(x\| y)}\ket{\psi_x}_{\regA'}\ket{x\|y}_\regD
            \bigg),
        \end{align}
        where $c(x\| y)=\pi_{k'}(x\|y\|0^r) \| g_{\sk'}(x\|y)$.
        Since $\ket{\phi^{\good}_{(\sk_{\mathrm{in}}, G)}}_{\regA'\regB\regC\regD}\propto\Pi_{(\sk_{\mathrm{in}}, G)}^\good\ket{\phi^{\good}_{(\sk_{\mathrm{in}}, G)}}_{\regA'\regB\regC\regD}$, we obtain $e$ with probability one when we measure $\regB\regC$ in the computational basis.
        Moreover, from the definition of $\cK_\good$, we have $\Recover(G,e)=u$.
        From the above observations, when we run $\PhaseDec_{(\sk_{\mathrm{in}},G)}$ on input $\ket{\phi^{\good}_{(\sk_{\mathrm{in}}, G)}}$, the state just before the final step is
        \begin{align}
            \frac{1}{\|\Pi_{(\sk_{\mathrm{in}}, G)}^\good\ket{\phi_{(\sk_{\mathrm{in}}, G)}}\|_2}\ket{e}_{\regB\regC}
            \underbrace{
            \bigg(
             \frac{1}{\sqrt{2^\ell}}\sum_{(x,y)\in\cZ^\good_{\sk_{\mathrm{in}}, G}}\ket{\psi_x}_{\regA'}\ket{x}_\regA\ket{y}_\regE
            \bigg)
            }_{
            \coloneqq \ket{\vartheta_{(\sk_{\mathrm{in}}, G)}^\good}_{\regA'\regA\regE}
            }.
        \end{align}
        Here we write $\regD=\regA\regE$, where $\regA$ consists of the first $n_\logic$ qubits of $\regD$ and $\regE$ consists of the last $\ell$ qubits.
        Thus, we have
        \begin{align}
            \rho^{(2)}_{\regA'\regA}=
            \frac{1}{\|\Pi_{(\sk_{\mathrm{in}}, G)}^\good\ket{\phi_{(\sk_{\mathrm{in}}, G)}}\|_2^2}
            \Tr_{\regE}
            \bigg[
            \ketbra{\vartheta_{(\sk_{\mathrm{in}}, G)}^\good}{\vartheta_{(\sk_{\mathrm{in}}, G)}^\good}_{\regA'\regA\regE}
            \bigg].
        \end{align}
        Therefore, we have
        \begin{align}
            \frac{1}{2}\|\rho^{(2)}-\rho^{(3)}\|_1
            =&\frac{1}{2}\|\rho^{(2)}-\ketbra{\psi}{\psi}\|_1
            \notag\\
            \le&\sqrt{1-
            \frac{1}{\|\Pi_{(\sk_{\mathrm{in}}, G)}^\good\ket{\phi_{(\sk_{\mathrm{in}}, G)}}\|_2^2}\bigg\|
             (\bra{\psi}_{\regA'\regA}\otimes I_\regE)
             \ket{\vartheta^\good_{(\sk_{\mathrm{in}}, G)}}_{\regA'\regA\regE}
             \bigg\|_2^2},
            \notag\\
            \le&\sqrt{1-
            \bigg\|
             (\bra{\psi}_{\regA'\regA}\otimes I_\regE)
             \ket{\vartheta^\good_{(\sk_{\mathrm{in}}, G)}}_{\regA'\regA\regE}
             \bigg\|_2^2},
            \label{eq:rho3_vs_rho4}
        \end{align}
        where we have used the Fuchs--van de Graaf bound in the first inequality and $\|\Pi^{\good}\ket{\phi_{(\sk_{\mathrm{in}}, G)}}\|_2^2\le 1$ in the last line.
        From a straightforward calculation, we have
        \begin{align}
            \bigg\|
             (\bra{\psi}_{\regA'\regA}\otimes I_\regE)
             \ket{\vartheta^\good_{(\sk_{\mathrm{in}}, G)}}_{\regA'\regA\regE}
            \bigg\|_2^2
            =&
            \frac{1}{2^\ell}\sum_{y\in\bit^\ell}
            \bigg(
             \sum_{x\in\cX^\good_{\sk_{\mathrm{in}}, G ,y}}\| \ket{\psi_x} \|^2_2
            \bigg)^2,
            \label{eq:norm_of_psi_xiGood}
        \end{align}
        where $\cX^\good_{\sk_{\mathrm{in}}, G ,y} \coloneqq \{ x\in\bit^{n_\logic}: (x,y)\in\cZ^\good_{\sk_{\mathrm{in}}, G} \}$.
        If we have
        \begin{align}
            \frac{1}{2^\ell}\sum_{y\in\bit^\ell}
            \bigg(
             \sum_{x\in\cX^\good_{\sk_{\mathrm{in}}, G ,y}}\| \ket{\psi_x} \|^2_2
            \bigg)^2
            \ge 1-\negl(\secp)
            \label{eq:exp_good_x_for_y},
        \end{align}
        then we obtain
        \begin{align}
            \frac{1}{2}\|\rho^{(2)}-\rho^{(3)}\|_1
            \le&
            \sqrt{1-
            \frac{1}{2^\ell}\sum_{y\in\bit^\ell}
             \bigg(
              \sum_{x\in\cX^\good_{\sk_{\mathrm{in}}, G ,y}}\| \ket{\psi_x} \|^2_2
             \bigg)^2},
            \tag{By \cref{eq:rho3_vs_rho4,eq:norm_of_psi_xiGood}}\\
            \le&\negl(\secp).
        \end{align}
        
        To complete the proof, it remains to show \cref{eq:exp_good_x_for_y}.
        To this end, for each $y\in\bit^\ell$, define
        \begin{align}
            X_y\coloneqq
            \sum_{x\in\cX^\good_{\sk_{\mathrm{in}}, G ,y}}\| \ket{\psi_x} \|^2_2.
        \end{align}
        The expectation of $X_y$ for uniformly random $y$ is
        \begin{align}
            \Exp_{y\gets\bit^\ell}[X_y]
            =&\frac{1}{2^\ell}\sum_{y\in\bit^\ell}\sum_{x\in\cX^\good_{\sk_{\mathrm{in}}, G ,y}}\| \ket{\psi_x} \|^2_2
            \notag\\
            =&\frac{1}{2^\ell}\sum_{(x,y)\in\cZ^\good_{\sk_{\mathrm{in}}, G}}\| \ket{\psi_x} \|^2_2
            \notag\\
            =&\|\Pi^{\good}\cdot
            \ket{\phi_{(\sk_{\mathrm{in}}, G)}}\|_2^2.
        \end{align}
        Since  $\|\Pi^{\good}\cdot\ket{\phi_{(\sk_{\mathrm{in}}, G)}}\|_2^2\ge 1-\epsilon(\secp)^2$ from \cref{eq:assumption_eq:rho2_and_rho3,lem:gentle_measurement}, this implies
        \begin{align}
            \Exp_{y\gets\bit^\ell}[X_y]\ge1-\epsilon(\secp)^2.
            \label{eq:exp_X_y}
        \end{align}
        By Markov's inequality (\cref{lem:Markov}), we have
        \begin{align}
            \Pr_{y\gets\bit^\ell}[X_y<1-\epsilon(\secp)]
            =&\Pr_{y\gets\bit^\ell}[1-X_y>\epsilon(\secp)]
            \notag\\
            \le&\frac{\Exp_{y\gets\bit^\ell}[1-X_y]}{\epsilon(\secp)}
            \le\epsilon(\secp).
            \label{eq:X_y_is_large_whp}
        \end{align}
        Now we evaluate the sum:
        \begin{align}
            \frac{1}{2^\ell}\sum_{y\in\bit^\ell}
            \bigg(
             \sum_{x\in\cX^\good_{\sk_{\mathrm{in}}, G ,y}}\| \ket{\psi_x} \|^2_2
            \bigg)^2
            =&\Exp_{y\gets\bit^\ell}[X_y^2]
            \notag\\
            \ge&(1-\epsilon(\secp))^2\cdot
            \Pr_{y\gets\bit^\ell}[X_y\ge 1-\epsilon(\secp)]
            \notag\\
            \ge&(1-\epsilon(\secp))^3.
        \end{align}
        Since $\epsilon(\secp)$ is negligible, this implies \cref{eq:exp_good_x_for_y}, which completes the proof.
\end{proof}

\subsection{Pseudorandomness}
\label{subsec:PRIC_pseudorandomness}

In this subsection, we prove the pseudorandomness of \cref{const:PRIC}.
Our proof relies on two facts.
The first fact is that PF isometries are indistinguishable from Haar-random isometries, as shown in \cite{FOCS:MPSY24}.

\begin{theorem}[Adapted from Theorem 6.2 in \cite{FOCS:MPSY24}]\label{thm:PF_isometry}
    Let $n_\logic,\ell,$ and $s$ be integers, and define $n_\phys\coloneqq n_\logic+\ell+s$.
    For a permutation $\pi'$ over $n_\phys$ bits and a function $f':\bit^{n_\logic+\ell}\to\bit$, define an isometry from $n_\logic$ qubits to $n_\phys$ qubits by
    \begin{align}
        V_{\pi',f'}:\ket{x}\mapsto
        \frac{1}{\sqrt{2^\ell}}\sum_{y\in\bit^\ell}(-1)^{f'(x\|y)}\ket{\pi'(x\|y\|0^{s})}.
    \end{align}
    Then, for any $t$-query adversary $\cA$,
    \begin{align}
        \bigg|
         \Pr_{\pi',f'}[1\gets\cA^{V_{\pi',f'}}]-\Pr_{V\gets\mu_{n_\logic\to n_\phys}}[1\gets\cA^{V}]
        \bigg|
        \le
        O\bigg(
        \frac{t^2}{2^{\ell}}
        \bigg),
    \end{align}
    where $\pi'$ and $f'$ are chosen uniformly at random.
\end{theorem}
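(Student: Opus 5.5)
The plan is to reduce the statement to the path-recording analysis of \cite{FOCS:MPSY24}. I would not reprove the Haar-random side from scratch. First, rewrite $V_{\pi',f'}$ as a composition of simple operations: append $\ket{0^{\ell+s}}$, apply $H^{\otimes\ell}$ to the middle $\ell$ qubits, apply the phase oracle $\ket{z}\mapsto(-1)^{f'(z)}\ket{z}$ for $z\in\bit^{n_\logic+\ell}$, and apply the permutation unitary $P_{\pi'}$ on all $n_\phys$ qubits. On the ideal side, use right-invariance of the Haar measure: a Haar isometry $V\gets\mu_{n_\logic\to n_\phys}$ has the same distribution as $U(\cdot\otimes\ket{0^{\ell+s}})$ with $U\gets\mu_{n_\phys}$. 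Moreover, the first $\ell$ appended qubits may be rotated by $H^{\otimes\ell}$ without changing this distribution. Both oracles then have the same outer shape: a fixed state-preparation isometry followed by a random $n_\phys$-qubit operation.

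Next, purify the randomness of $(\pi',f')$, following the path-recording method of \cite{FOCS:MPSY24}. Each query maps $\ket{x}$ to a superposition over $y$ whose branches append the pair $(x\|y\|0^s,\pi'(x\|y\|0^s))$ to a relation register, with a phase determined by $f'(x\|y)$. The Haar isometry admits an analogous path-recording purification in which each query appends a fresh pair $(x',w)$ with $w$ uniform and not previously recorded. The key lemma I would establish is the following. Let $\Pi^{\mathrm{dist}}$ project onto relation states in which the sampled $y$-values paired with each $x$ across the $t$ queries are pairwise distinct, so that the inputs $x\|y\|0^s$ to $\pi'$ never repeat. On the range of $\Pi^{\mathrm{dist}}$, the purified PF oracle and the path-recording oracle act identically. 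The intuition is that on fresh inputs, $\pi'$ returns uniformly random distinct outputs and $f'$ supplies independent uniform signs. The random signs make the purification of $f'$ orthogonal across different $y$'s, which is exactly what the path-recording isometry requires.

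Finally, bound the weight outside $\Pi^{\mathrm{dist}}$. Since $y$ is prepared in uniform superposition by $H^{\otimes\ell}$, the $i$th query produces a $y$ that collides with one of the at most $i-1$ earlier recorded values with squared amplitude at most $(i-1)/2^\ell$. A hybrid over the $t$ queries then bounds each step's deviation in norm by $O(\sqrt{i/2^\ell})$. To obtain the quadratic rather than the $t^{3/2}$ dependence, I would work with the squared-norm (trace-distance) accumulation of \cite{FOCS:MPSY24}, or a gentle-measurement argument as in \cref{lem:gentle_measurement}, which gives $\sum_i O(i/2^\ell)=O(t^2/2^\ell)$.

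I expect the main obstacle to be the exact-equality step on the distinct subspace. The phase depends only on the first $n_\logic+\ell$ bits, and the padding $0^s$ is fixed. I therefore need to check that the recorded pairs for the PF oracle match the Haar path-recording state up to a reordering isometry that commutes with the adversary. If this reordering turns out not to be exact, I would fall back on citing Theorem~6.2 of \cite{FOCS:MPSY24} directly, after verifying that their PF isometry coincides with ours once the $s$ padding qubits are absorbed into the permutation domain.
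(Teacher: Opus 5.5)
The paper does not prove this statement. It imports Theorem~6.2 of \cite{FOCS:MPSY24}, with the $0^s$ padding as the apparent adaptation, so your fallback is the paper's route. One caveat on that fallback: if the cited statement has no padding, absorbing it is not a literal identification. The Haar target becomes $\mu_{n_\logic\to n_\phys}$ while only $\ell$ ancillas are in superposition, and your own route handles this automatically. The skeleton of your self-contained route is sound. Right-invariance makes the Haar-side path recording record pairs $(x\|y\|0^s,w)$; note that path recording is the framework of \cite{STOC:MaHua25}, not \cite{FOCS:MPSY24}. Your ``reordering'' obstacle has a clean resolution. On relations $R$ with distinct inputs and outputs, define $\ket{\pi_R}$ as the normalized uniform superposition over permutations consistent with $R$, and $\ket{f_S}\propto\sum_f(-1)^{\sum_{z\in S}f(z)}\ket{f}$. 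Then $\ket{R}\mapsto\ket{\pi_R}\ket{f_{\mathrm{dom}R}}$ is an isometry. For a fresh input $a\notin\mathrm{dom}R$, the purified PF step sends $\ket{a}\ket{\pi_R}\ket{f_{\mathrm{dom}R}}$ to $(N-|R|)^{-1/2}\sum_{w\notin\mathrm{Im}R}\ket{w}\ket{\pi_{R\cup\{(a,w)\}}}\ket{f_{\mathrm{dom}R\cup\{a\}}}$ with $N=2^{n_\phys}$, which is exactly the path-recording step.

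The genuine gap is the final conversion to trace distance. The sum $\sum_i(i-1)/2^\ell\le\binom{t}{2}/2^\ell$ bounds the squared norm $\epsilon$ of the collision branch. A per-query hybrid or \cref{lem:gentle_measurement} turns weight $\epsilon$ into distance $\sqrt{\epsilon}$. That yields $O(t/2^{\ell/2})$, or $O(t^{3/2}/2^{\ell/2})$ by summing per-step norms, not $O(t^2/2^\ell)$. This loss is unavoidable for any argument that compares purified states. Already for two queries on the same $\ket{x}$, the PF and Haar ensembles have trace distance and fidelity deficit both of order $2^{-\ell}$. Hence even optimally aligned purifications are $\Theta(2^{-\ell/2})$ apart.

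The missing idea is a mixed-state step in three parts.
\begin{enumerate}
\item Let the no-collision projector act only on the purification register.
\item Check that collision branches never return to its range. In path recording, a repeated input stays in the relation forever. In the PF purification, a collision toggles $z$ out of $S$ in $\ket{f_S}$, so $|S|$ then lags the number of queries by at least two, permanently.
\item Conclude that the no-collision components of the two final states coincide exactly under the isometry above. The good--bad cross terms vanish when the purification register is traced out, so the adversary's reduced states are within $\binom{t}{2}/2^\ell$ in trace distance.
\end{enumerate}
Adding the $O(t^2/2^{n_\phys})$ Haar-versus-path-recording error of \cite{STOC:MaHua25} gives the stated bound. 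In \cref{lem:pseudorandomness_of_PRIC}, where $\ell=\omega(\log\secp)$, your weaker bound would already suffice, but it does not prove the theorem as stated.
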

% The quantitative PF-isometry bound is recorded in the decision ledger.

The second fact is the following lemma, whose proof is given below.

\begin{lemma}\label{lem:pi||f_vs_pi'}
    Let $n,r,m\in\N$.
    Define $\cD$ and $\cD'$ to be the following distributions over injective functions $\bit^n\to\bit^{n+r+m}$.
    \begin{itemize}
        \item $\cD:$ 
        Choose a uniformly random function $f:\bit^n\to\bit^m$ and a uniformly random permutation $\pi:\bit^{n+r}\to\bit^{n+r}$.
        Define $h:\bit^{n}\to\bit^{n+r+m}$ as the injective function $h(z)\coloneqq\pi(z\|0^r)\|f(z)$.
        Output $h$.
        
        \item $\cD':$
        Choose a uniformly random permutation $\pi':\bit^{n+r+m}\to\bit^{n+r+m}$.
        Define $h:\bit^{n}\to\bit^{n+r+m}$ as the injective function $h(z)\coloneqq\pi'(z\|0^{r+m})$.
        Output $h$.
    \end{itemize}
    Then, we have
    \begin{align}
        {\rm SD}(\cD,\cD')\le 2^{n-r-1}.
    \end{align}
\end{lemma}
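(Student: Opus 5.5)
Both distributions are supported on injective functions, so the plan is to compare each with the uniform distribution on injective maps $\bit^n\to\bit^{n+r+m}$. Under $\cD'$, the images $\pi'(z\|0^{r+m})$ of the $2^n$ distinct inputs form a uniformly random ordered sequence of distinct elements of $\bit^{n+r+m}$. Hence $\cD'$ is exactly uniform over injective functions, and it suffices to bound the statistical distance between $\cD$ and this uniform distribution $\cU_{\mathrm{inj}}$.

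Under $\cD$, the first components $a_z\coloneqq\pi(z\|0^r)$ form a uniformly random ordered sequence of $N\coloneqq 2^n$ distinct elements of $\bit^{n+r}$. The second components $f(z)$ are i.i.d. uniform on $\bit^m$ and independent of the $a_z$. Consequently $h$ is uniform on the set $S$ of sequences $(a_z\|b_z)_z$ whose prefixes $a_z$ are pairwise distinct, and each such sequence has equal probability. Since $S$ is a subset of the injective sequences and $\cU_{\mathrm{inj}}$ is uniform on the larger set, the statistical distance between a uniform distribution on a subset and the uniform distribution on the superset is $1-|S|/|\mathrm{Inj}|=\Pr_{h\gets\cU_{\mathrm{inj}}}[h\notin S]$. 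This quantity is the probability that, for a uniformly random injective $h$, two distinct inputs have images sharing the same $(n+r)$-bit prefix.

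This probability is bounded by a union bound over pairs $z\neq z'$. For a fixed pair under $\cU_{\mathrm{inj}}$, $h(z)$ is uniform and $h(z')$ is uniform on the remaining $M-1$ strings, where $M=2^{n+r+m}$. The number of strings other than $h(z)$ that share its prefix is $2^m-1$, so the pair collides with probability $(2^m-1)/(M-1)\le 2^m/M=2^{-(n+r)}$. Summing over the $\binom{N}{2}\le N^2/2$ pairs gives at most $2^{2n}\cdot 2^{-(n+r)}/2=2^{n-r-1}$, which is the claimed bound.

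The main point needing care is the exact characterization of $\cD$ as uniform on $S$. One has to check that each sequence in $S$ arises with probability $\frac{(2^{n+r}-N)!}{(2^{n+r})!}\cdot 2^{-mN}$, independent of the sequence. This holds because the restriction of a uniform permutation to the inputs $z\|0^r$ is a uniform injection, and $f$ is independent of it. Given this, the rest is the standard subset-uniform statistical-distance identity followed by a birthday-type union bound.
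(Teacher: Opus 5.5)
Your argument is correct and matches the paper's proof. The paper also shows that $\cD$ is exactly $\cD'$ conditioned on the event that the $(n+r)$-bit prefixes of the $2^n$ images are pairwise distinct, so ${\rm SD}(\cD,\cD')$ equals the probability under $\cD'$ that this event fails; the only cosmetic difference is that the paper bounds this probability through the exact counting ratio $\prod_{i=1}^{2^n-1}\frac{1-i2^{-(n+r)}}{1-i2^{-(n+r+m)}}\ge 1-2^{n-r-1}$, whereas you use a union bound over pairs, which gives the same bound.
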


Now we are ready to prove pseudorandomness.

\begin{lemma}\label{lem:pseudorandomness_of_PRIC}
    If $n_\phys(\secp)\ge\secp^\kappa$ for some constant $\kappa>0$,
    $r(\secp)\ge n_\logic(\secp)+\ell(\secp)+\omega(\log\secp)$, and
    $\ell(\secp)\ge\omega(\log\secp)$, then \cref{const:PRIC} satisfies
    pseudorandomness.
    % The blocklength and dimension conditions are stated in the lemma and hybrids below.
\end{lemma}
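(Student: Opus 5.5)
The argument is a hybrid sequence starting from the real oracle $\Enc_\sk=(PU_G)\circ V_{\sk_{\mathrm{in}}}$ and ending at a Haar-random isometry. The key observation is that $G$ and $P$ are sampled independently of $\sk_{\mathrm{in}}$. So it suffices to show that $V_{\sk_{\mathrm{in}}}$ alone is indistinguishable from $V\gets\mu_{n_\logic\to n_\phys}$. Indeed, a distinguisher for the full scheme yields one for $V_{\sk_{\mathrm{in}}}$: the reduction samples $G$ and $P$ itself and post-composes each oracle call with the efficiently implementable unitary $PU_G$. In the ideal world, $PU_GV$ with $V$ Haar-random is again Haar-distributed by left invariance of $\mu_{n_\logic\to n_\phys}$. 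This costs nothing.

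For $V_{\sk_{\mathrm{in}}}$, I would use the following hybrids.

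$\mathrm{H}_0$ is $V_{\sk_{\mathrm{in}}}$ itself.

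In $\mathrm{H}_1$, replace $f_k$ with a uniformly random $f$ and $\pi_{k'}$ with a uniformly random permutation $\pi$, invoking PRF and PRP security. The reduction implements the isometry coherently with oracle access to $f$ and to $\pi$. It also needs $\pi^{-1}$ only if the encoding circuit requires uncomputation. The circuit above computes $\pi$ in place on $\ket{x\|y\|0^r}$, so it needs a strong PRP, which \cref{def:PRP} provides. The reduction samples $\sk'$ itself.

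In $\mathrm{H}_2$, replace $g_{\sk'}$ with a uniformly random function $g:\bit^{n_\logic+\ell}\to\bit^m$, by PRFC pseudorandomness. The reduction simulates $f$ and $\pi$ efficiently using $4q$-wise independent functions via \cref{lem:sim_with_hash}; for $\pi$ it uses a fresh PRP, or equivalently keeps the PRP from the previous step and swaps only $g$. Since PRFC pseudorandomness holds against quantum queries, the coherent queries to $g$ are allowed.

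In $\mathrm{H}_3$, replace the map $z\mapsto\pi(z\|0^r)\|g(z)$ with $z\mapsto\pi'(z\|0^{r+m})$. By \cref{lem:pi||f_vs_pi'} with $n=n_\logic+\ell$, the statistical distance between these two function distributions is at most $2^{n_\logic+\ell-r-1}$. This is negligible because $r\ge n_\logic+\ell+\omega(\log\secp)$. The oracle in each hybrid is fully determined by $(f,h)$ with $f$ independent of $h$, so any adversary's advantage, even an unbounded one, is at most this distance.

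Finally, $\mathrm{H}_3$ is exactly the PF isometry $V_{\pi',f}$ of \cref{thm:PF_isometry} with $s=r+m$. That theorem bounds the advantage of a $q$-query adversary against the Haar-random isometry by $O(q^2/2^\ell)$, which is negligible since $\ell=\omega(\log\secp)$. The condition $n_\phys\ge\secp^\kappa$ only ensures that all dimensions are polynomially related to $\secp$, so that negligibility in the various parameters transfers to negligibility in $\secp$.

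The main obstacle is making the reductions in $\mathrm{H}_0\to\mathrm{H}_1\to\mathrm{H}_2$ efficient. Each reduction must implement the remaining random objects (a random function or a random permutation with its inverse) in quantum polynomial time while answering coherent queries. For functions, I would use $2q$-wise independent hashes. For the permutation, I would keep the PRP instance in place while switching the PRFC, and order the hybrids so that each step replaces exactly one object while the others remain efficiently computable keyed primitives. Concretely: first switch $g_{\sk'}$ to random with $f_k,\pi_{k'}$ still keyed; then $f_k$; then $\pi_{k'}$. Whenever a later step needs an earlier object already idealized, simulate it with hashes.
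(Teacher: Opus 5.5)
Your proposal is correct and follows essentially the same hybrid chain as the paper: idealize the PRF, PRP, and PRFC; merge $z\mapsto\pi(z\|0^r)\|g(z)$ into $z\mapsto\pi'(z\|0^{r+m})$ via \cref{lem:pi||f_vs_pi'}; invoke \cref{thm:PF_isometry} with $s=r+m$; and absorb $PU_G$ by left invariance of the Haar measure. The differences are only in presentation: you strip off $PU_G$ at the start rather than at the end, and you spell out the replacement order (PRFC, then PRF, then PRP, simulating already-idealized functions with $2q$-wise independent hashes) that makes each reduction efficient, which the paper compresses into a single appeal to the security of the PRFC, PRF, and PRP.
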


\begin{proof}[Proof of \cref{lem:pseudorandomness_of_PRIC}]
    Let $\cA$ be a QPT algorithm.
    Our goal is to show that
    \begin{align}
        \bigg|
         \Pr_{\sk\gets\KeyGen(1^\secp)}[1\gets\cA^{\Enc_\sk}]-
         \Pr_{V\gets\mu_{n_\logic\to n_\phys}}[1\gets\cA^V]
        \bigg|
        \le\negl(\secp).
    \end{align}
    We use a hybrid argument.
    Define the following hybrids of distributions over isometries $V$.
    % The red text in H_1--H_3 records the adjacent-hybrid differences intentionally.
    In the following, we omit the $\secp$-dependence when it is clear from the context.
    \begin{itemize}
        \item ${\rm H}_0:$
        Run $\sk'\gets\PRFC.\KeyGen(1^\secp)$, and choose $k,k'\gets\bit^\secp$.
        Run $G\gets\GraphSample(1^{n_\phys/2})$ and choose $P\gets\cP_{n_\phys}$.
        $V$ is defined as
        \begin{align}
            \ket{x}\mapsto
            P\cdot U_G \cdot
            \bigg(
            \frac{1}{\sqrt{2^\ell}}\sum_{y\in\bit^\ell}(-1)^{f_{k}(x\|y)}\ket{\pi_{k'}(x\|y\|0^r)}\ket{g_{\sk'}(x\|y)}
            \bigg).
        \end{align}

        \item ${\rm H}_1:$
        {\color{red}Choose uniformly random functions $f:\bit^{n_\logic+\ell}\to\bit$ and $g:\bit^{n_\logic+\ell}\to\bit^m$, and a uniformly random permutation $\pi:\bit^{n_\logic+\ell+r}\to\bit^{n_\logic+\ell+r}$.}
        Run $G\gets\GraphSample(1^{n_\phys/2})$ and choose $P\gets\cP_{n_\phys}$.
        $V$ is defined as 
        \begin{align}
            \ket{x}\mapsto
            P\cdot U_G \cdot
            \bigg(
            \frac{1}{\sqrt{2^\ell}}\sum_{y\in\bit^\ell}(-1)^{\color{red}f(x\|y)}\ket{\color{red}\pi(x\|y\|0^r)}\ket{\color{red}g(x\|y)}
            \bigg).
        \end{align}

        \item ${\rm H}_2:$
        Choose a uniformly random function $f:\bit^{n_\logic+\ell}\to\bit$ and {\color{red}a uniformly random permutation $\pi':\bit^{n_\phys}\to\bit^{n_\phys}$.}
        Run $G\gets\GraphSample(1^{n_\phys/2})$ and choose $P\gets\cP_{n_\phys}$.
        $V$ is defined as
        \begin{align}
            \ket{x}\mapsto
            P\cdot U_G \cdot
            \bigg(
            \frac{1}{\sqrt{2^\ell}}\sum_{y\in\bit^\ell}(-1)^{f(x\|y)}\ket{\color{red}\pi'(x\|y\|0^{r+m})}
            \bigg).
        \end{align}

        \item ${\rm H}_3:$
        {\color{red}Choose a Haar random isometry $W:\C^{2^{n_\logic}}\to\C^{2^{n_\phys}}$.}
        Run $G\gets\GraphSample(1^{n_\phys/2})$ and choose $P\gets\cP_{n_\phys}$.
        $V$ is defined as $V\coloneqq P\cdot U_G \cdot {\color{red}W}$.

        \item ${\rm H}_4:$
        Choose a Haar random isometry $W:\C^{2^{n_\logic}}\to\C^{2^{n_\phys}}$.
        \Erase{Run $G\gets\GraphSample(1^{n_\phys/2})$ and choose $P\gets\cP_{n_\phys}$.}
        $V$ is defined as $V\coloneqq W$.
    \end{itemize}
    
    It is clear that $\Pr_{\sk\gets\KeyGen(1^\secp)}[1\gets\cA^{\Enc_\sk}]=\Pr_{V\gets{\rm{H}}_0}[1\gets\cA^V]$ and $\Pr_{V\gets\mu_{n_\logic\to n_\phys}}[1\gets\cA^V]=\Pr_{V\gets{\rm{H}}_4}[1\gets\cA^V]$.
    Thus, it suffices to argue that ${\rm H}_0$ is indistinguishable from ${\rm H}_4$.
    From the security of PRFC, PRF, and PRP, we have
    \begin{align}
        \bigg|
             \Pr_{V\gets {\rm H}_0}[1\gets\cA^V]
             -\Pr_{V\gets {\rm H}_1}[1\gets\cA^V]
        \bigg|
        \le\negl(\secp).
        \label{eq:H0_vs_H1}
    \end{align}
    For ${\rm{H}}_1$ and ${\rm{H}}_2$, from \cref{lem:pi||f_vs_pi'} with $n=n_\logic+\ell$, we have
    \begin{align}
        \bigg|
             \Pr_{V\gets {\rm H}_1}[1\gets\cA^V]
             -\Pr_{V\gets {\rm H}_2}[1\gets\cA^V]
        \bigg|
        \le 2^{n_\logic+\ell-r-1}
        \le\negl(\secp),
        \label{eq:H1_vs_H2}
    \end{align}
    where we have used $r\ge n_\logic+\ell+\omega(\log\secp)$.
    For ${\rm H}_2$ and ${\rm H}_3$, by \cref{thm:PF_isometry} with $\ell(\secp)\ge\omega(\log\secp)$ and $s=r+m$, we have
    \begin{align}
        \bigg|
             \Pr_{V\gets {\rm H}_2}[1\gets\cA^V]
             -\Pr_{V\gets {\rm H}_3}[1\gets\cA^V]
        \bigg|
        \le\negl(\secp).
        \label{eq:H2_vs_H3}
    \end{align}
    By the left invariance of Haar-random isometries, we have
        \begin{align}
            \Pr_{V\gets {\rm H}_3}[1\gets\cA^V]
         =\Pr_{V\gets {\rm H}_4}[1\gets\cA^V].
        \label{eq:H3_vs_H4}
    \end{align}

    From \cref{eq:H0_vs_H1,eq:H1_vs_H2,eq:H2_vs_H3,eq:H3_vs_H4}, we obtain

     \begin{align}
        \bigg|
             \Pr_{V\gets {\rm H}_0}[1\gets\cA^V]
             -\Pr_{V\gets {\rm H}_4}[1\gets\cA^V]
        \bigg|
        \le\negl(\secp),
    \end{align}
    which concludes the proof.
\end{proof}

\paragraph{Proof of \cref{lem:pi||f_vs_pi'}.}

To complete the proof, we show \cref{lem:pi||f_vs_pi'}.

\begin{proof}[Proof of \cref{lem:pi||f_vs_pi'}]
    For an injective function $h:\bit^n\to\bit^{n+r+m}$, define $U_h:\bit^n\to\bit^{n+r}$ and $W_h:\bit^n\to\bit^m$ by the relation $h(z)=U_h(z)\|W_h(z)$.
    Define $E$ as the event that $U_h(z)\neq U_h(z')$ for all distinct $z,z'\in\bit^n$.
    Then the following holds.

    \begin{claim}\label{claim:D'_on_E_is_D}
        $\cD'$ conditioned on $E$ is the same distribution as $\cD$.
        In other words, for any injective function $h':\bit^{n}\to\bit^{n+r+m}$, $\Pr_{h\gets\cD}[h=h']=\Pr_{h\gets\cD'}[h=h'|E]$.
    \end{claim}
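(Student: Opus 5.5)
The claim states that $\cD'$ conditioned on the event $E$ coincides with $\cD$. I will prove it by computing, for each fixed injective $h'$, the probability that $h=h'$ under each side and checking that the two values agree. Write $N\coloneqq 2^{n+r}$, $M\coloneqq 2^m$ and $K\coloneqq 2^n$.

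Fix an injective $h'$ with decomposition $h'(z)=U_{h'}(z)\|W_{h'}(z)$.

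First, if $U_{h'}$ is not injective, then both sides vanish. Under $\cD$, $U_h(z)=\pi(z\|0^r)$ is always injective, so $\Pr_{h\gets\cD}[h=h']=0$. Under $\cD'$ conditioned on $E$, we again have $\Pr_{h\gets\cD'}[h=h'\mid E]=0$ by the definition of $E$.

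Otherwise $U_{h'}$ is injective, so $h'\in E$.

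Under $\cD$, the values $\pi(z\|0^r)$ for the $K$ distinct inputs form a uniformly random ordered $K$-tuple of distinct elements of $\bit^{n+r}$. The function $f$ is independent of $\pi$ and uniform. Hence
\[
\Pr_{h\gets\cD}[h=h']=\frac{1}{N(N-1)\cdots(N-K+1)}\cdot\frac{1}{M^K}.
\]

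Under $\cD'$, the tuple $(h(z))_z$ is a uniformly random ordered $K$-tuple of distinct elements of $\bit^{n+r+m}$. Therefore $\Pr_{h\gets\cD'}[h=h']$ takes the same value $1/[NM]_K$ for every injective $h'$, where $[x]_K\coloneqq x(x-1)\cdots(x-K+1)$. Since $\cD'$ is uniform over injective functions, its conditional distribution on $E$ is uniform over $E$.

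It remains to show that $\cD$ is also uniform over $E$. The formula above already gives a constant value on $E$. Its support is exactly $E$: every $h'\in E$ arises from some choice of $\pi$ (extend the injective map $z\|0^r\mapsto U_{h'}(z)$ to a permutation) together with $f=W_{h'}$. Thus $\cD$ is uniform over $E$ as well. The two distributions are uniform over the same set and hence equal, with no need to compare the constants. This proves the claim.

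For completeness of the lemma, I then bound ${\rm SD}(\cD,\cD')\le\Pr_{\cD'}[\neg E]$. This holds because $\cD=\cD'|_E$, and in general ${\rm SD}(P|_E,P)=P(\neg E)$. Finally, a union bound over pairs gives
\[
\Pr_{\cD'}[\neg E]\le\binom{K}{2}\Pr[\text{the first $n+r$ bits collide}]\le\frac{K^2}{2}\cdot\frac{1}{N}= 2^{n-r-1}.
\]
For the single-pair probability I use that $\pi'$ sends two distinct points to a uniformly random pair of distinct strings, so the prefixes collide with probability $\frac{M-1}{NM-1}\le 1/N$.

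The main obstacle is the support characterization in the uniformity argument; the extension of $U_{h'}$ to a permutation handles it.
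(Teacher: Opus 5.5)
Your proof is correct. It differs from the paper's only in how the conditional side $\cD'\mid E$ is handled.

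The paper proceeds by explicit computation. It computes $\Pr_{h\gets\cD'}[h=h'\wedge E]$ directly. It then obtains $\Pr_{h\gets\cD'}[E]$ from the cardinality of $E$ given in \cref{claim:number_of_E}. Finally, it checks that the quotient equals the constant $1/(NR(NR-1)\cdots(NR-N+1)\cdot M^N)$ it derived for $\cD$ (in its notation $N=2^n$, $R=2^r$).

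You instead argue that both distributions are uniform on $E$:
\begin{itemize}
\item $\cD'$ is uniform over injective functions, so conditioning yields the uniform distribution on $E$.
\item $\cD$ is constant on its support.
\item That support is exactly $E$, because the injection $z\|0^r\mapsto U_{h'}(z)$ extends to a permutation.
\end{itemize}
Two uniform distributions on the same set coincide, so no constants need to be matched.

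Your route avoids counting $E$, at the cost of the support-characterization step. The paper's route produces the exact formula for $\Pr_{\cD'}[E]$, which it then reuses in \cref{claim:E_whp} via a product bound. You bound $\Pr_{\cD'}[\neg E]$ separately by a pairwise union bound, which gives the same $2^{n-r-1}$.
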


    \begin{claim}\label{claim:E_whp}
        \begin{align}
            \Pr_{h\gets\cD'}[E]\ge 1-2^{n-r-1}.
        \end{align}
    \end{claim}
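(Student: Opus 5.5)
The plan is a union bound over pairs of inputs, using that a uniformly random permutation sends any two distinct inputs to a uniformly random ordered pair of distinct outputs. Write $N\coloneqq 2^{n+r+m}$. For $h\gets\cD'$ we have $h(z)=\pi'(z\|0^{r+m})$, and $U_h(z)$ is the first $n+r$ bits of $\pi'(z\|0^{r+m})$. The complement $\neg E$ is the event that some distinct $z,z'\in\bit^n$ satisfy $U_h(z)=U_h(z')$. So
\begin{align}
    \Pr_{h\gets\cD'}[\neg E]\le\sum_{\{z,z'\}:\,z\neq z'}\Pr_{\pi'}\bigl[U_h(z)=U_h(z')\bigr].
\end{align}

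\textbf{Step 1: the collision probability for a fixed pair.} Fix distinct $z,z'$. Then $z\|0^{r+m}\neq z'\|0^{r+m}$, so the pair $(\pi'(z\|0^{r+m}),\pi'(z'\|0^{r+m}))$ is uniform over ordered pairs of distinct elements of $\bit^{n+r+m}$. Condition on the first output $a$. The second output is uniform over the remaining $N-1$ strings. Exactly $2^m-1$ of these share the first $n+r$ bits with $a$, since the shared prefix is fixed and only the last $m$ bits vary. Hence the collision probability is
\begin{align}
    \frac{2^m-1}{2^{n+r+m}-1}\le 2^{-(n+r)}.
\end{align}
The inequality holds by cross-multiplying: $2^{n+r}(2^m-1)=2^{n+r+m}-2^{n+r}\le 2^{n+r+m}-1$.

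\textbf{Step 2: summing over pairs.} There are $\binom{2^n}{2}\le 2^{2n-1}$ unordered pairs. Therefore
\begin{align}
    \Pr[\neg E]\le 2^{2n-1}\cdot 2^{-(n+r)}=2^{n-r-1},
\end{align}
which gives the claim.

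No step here is a real obstacle. The only point needing care is the exact count in Step 1: one must exclude the first output itself and check that the resulting fraction is still at most $2^{-(n+r)}$. Otherwise the argument is the standard birthday-type union bound.
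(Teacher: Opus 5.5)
Your proof is correct, and it reaches the same bound as the paper by a different route.

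The paper does not use a union bound over pairs. It takes the exact count of injective functions satisfying $E$ from \cref{claim:number_of_E}, with $N=2^n$, $R=2^r$, $M=2^m$, and writes the exact probability
$\Pr_{h\gets\cD'}[E]=\prod_{i\in[N-1]}\frac{1-i/(NR)}{1-i/(NRM)}$.
It then drops the denominators and applies $\prod_i(1-a_i)\ge 1-\sum_i a_i$, which gives $1-N/(2R)=1-2^{n-r-1}$.

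The paper's choice is economical within its own argument. The same count is already needed for \cref{claim:D'_on_E_is_D}, and the method also gives a closed-form expression for $\Pr[E]$.

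Your argument bounds $\Pr[\neg E]$ directly. It uses the pairwise collision probability $(2^m-1)/(2^{n+r+m}-1)\le 2^{-(n+r)}$ and sums over $\binom{2^n}{2}\le 2^{2n-1}$ pairs. This is more self-contained: it needs only the joint distribution of $\pi'$ on two distinct inputs. So it does not depend on \cref{claim:number_of_E}, and it would work for any permutation family that is uniform on pairs of distinct inputs.

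One minor point: your $N\coloneqq 2^{n+r+m}$ clashes with the paper's $N\coloneqq 2^n$ in the surrounding claims, so rename it if you splice your argument in.
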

    We prove these claims below.
    From the above two claims,
    \begin{align}
        {\rm SD}(\cD,\cD')=1-\Pr_{h\gets\cD'}[E]\le 2^{n-r-1},
    \end{align}
    which concludes the proof.
\end{proof}

To complete the proof, we show \cref{claim:D'_on_E_is_D,claim:E_whp}.
To this end, we need the following claim.

\begin{claim}\label{claim:number_of_E}
    Define $N\coloneqq 2^n$, $R\coloneqq 2^r$, and $M\coloneqq 2^m$.
    The number of injective functions that satisfy the event $E$ is $NR(NR-1)\cdots(NR-N+1)\cdot M^N$.
\end{claim}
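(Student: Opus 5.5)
We count the injective functions $h:\bit^n\to\bit^{n+r+m}$ satisfying $E$ by splitting each such $h$ into its two components. By definition, $h(z)=U_h(z)\|W_h(z)$ with $U_h:\bit^n\to\bit^{n+r}$ and $W_h:\bit^n\to\bit^m$, and the event $E$ says exactly that $U_h$ is injective. The plan is to show that the map $h\mapsto(U_h,W_h)$ is a bijection between
\begin{itemize}
\item the set of injective $h$ satisfying $E$, and
\item the product set $\{U:\bit^n\to\bit^{n+r}\text{ injective}\}\times\{W:\bit^n\to\bit^m\}$.
\end{itemize}

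First, the map is well defined and lands in this product set: $E$ gives injectivity of $U_h$, and $W_h$ is an arbitrary function. Injectivity of the map is immediate, since $h$ is recovered as the concatenation $U_h\|W_h$.

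For surjectivity, take an injective $U$ and an arbitrary $W$, and set $h(z)\coloneqq U(z)\|W(z)$. Then $h$ is injective, because $h(z)=h(z')$ forces $U(z)=U(z')$ and hence $z=z'$. Moreover $U_h=U$, so $h$ satisfies $E$.

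It then remains to count the two factors.
\begin{itemize}
\item The number of injective maps from an $N$-element set into an $NR$-element set (note $|\bit^{n+r}|=2^{n+r}=NR$) is the falling factorial $NR(NR-1)\cdots(NR-N+1)$. This follows by assigning images to the $N$ inputs one at a time in a fixed order, each time avoiding the images already used.
\item The number of arbitrary functions $\bit^n\to\bit^m$ is $M^N$.
\end{itemize}
Multiplying the two counts gives the claimed total $NR(NR-1)\cdots(NR-N+1)\cdot M^N$.

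There is no real obstacle; the only point needing care is surjectivity, namely that injectivity of $U$ alone already forces injectivity of $h$. This is what makes the injectivity requirement on $h$ in the statement redundant once $E$ holds.
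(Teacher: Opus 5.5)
Your proof is correct and takes the same approach as the paper. Both identify $E$ with injectivity of $U_h$, split $h$ into $(U_h,W_h)$, and multiply the falling factorial $NR(NR-1)\cdots(NR-N+1)$ by $M^N$; your explicit bijection, including the check that injectivity of $U$ already forces injectivity of $h$, spells out a step the paper leaves implicit.
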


\begin{proof}
    A function $h:\bit^{n}\to\bit^{n+r+m}$ satisfies the event $E$ if and only if $U_h:\bit^{n}\to\bit^{n+r}$ is injective.
    Thus, it is sufficient to count the number of injective functions $\bit^n\to\bit^{n+r}$ and the number of functions $\bit^n\to\bit^m$.
    The number of injective functions $\bit^{n}\to\bit^{n+r}$ is $NR(NR-1)\cdots(NR-N+1)$.
    On the other hand, the number of functions $\bit^n\to\bit^m$ is $M^N$.
    Therefore, the number of injective functions that satisfy the event $E$ is $NR(NR-1)\cdots(NR-N+1)\cdot M^N$.
\end{proof}

First, we prove \cref{claim:D'_on_E_is_D}.

\begin{proof}[Proof of \cref{claim:D'_on_E_is_D}]
    For notational simplicity, define $N\coloneqq 2^n$, $R\coloneqq 2^r$, and $M\coloneqq 2^m$.
    Fix a $h':\bit^{n}\to\bit^{n+r+m}$.
    If $U_{h'}(z)= U_{h'}(z')$ for some distinct $z,z'\in\bit^{n}$, then
    \begin{align}
        \Pr_{h\gets\cD}[h=h']=\Pr_{h\gets\cD'}[h=h'|E]=0
    \end{align}
    from the definition of $\cD$ and $E$.
    Thus, without loss of generality, we can assume that $U_{h'}(z)\neq U_{h'}(z')$ for all distinct $z,z'\in\bit^{n}$.

    First, let us consider the case $h\gets\cD$.
    From the definition of $\cD$, $U_h$ and $W_h$ are independent.
    Since $U_h(\cdot)=\pi((\cdot)\|0^r)$ for a uniformly random permutation $\pi$, $U_h$ is a uniformly random injective function.
    Thus, we have
    \begin{align}
        \Pr_{h\gets\cD}[U_h=U_{h'}]=\frac{1}{NR(NR-1)\cdots(NR-N+1)}.
    \end{align}
    On the other hand, $W_h$ is a uniformly random function from the definition of $\cD$,
    \begin{align}
        \Pr_{h\gets\cD}[W_h=W_{h'}]=\frac{1}{M^N}.
    \end{align}
    Since $U_h$ and $W_h$ are independent, we have
    \begin{align}
        \Pr_{h\gets\cD}[h=h']
        =\Pr_{h\gets\cD}[U_h=U_{h'}]\Pr_{h\gets\cD}[W_h=W_{h'}]
        =\frac{1}{NR(NR-1)\cdots(NR-N+1)\cdot M^N}.
        \label{eq:prob_h=h'_wrt_D}
    \end{align}

    Next, let us consider the case $h\gets\cD'$.
    Since $h(\cdot)=\pi'((\cdot)\|0^{r+m})$ for a uniformly random permutation $\pi'$, $h$ is a uniformly random injective function.
    Thus, we have
    \begin{align}
        \Pr_{h\gets\cD'}[h=h']=\frac{1}{NRM(NRM-1)\cdots (NRM-N+1)}.
    \end{align}
    In addition, our assumption that $U_{h'}(z)\neq U_{h'}(z')$ for all distinct $z,z'\in\bit^{n}$ implies that $h$ satisfies the event $E$ if $h=h'$.
    Hence, we have
    \begin{align}
        \Pr_{h\gets\cD'}[h=h'\wedge E]=\Pr_{h\gets\cD'}[h=h']=\frac{1}{NRM(NRM-1)\cdots (NRM-N+1)}.
    \end{align}
    On the other hand, from \cref{claim:number_of_E}, we have
    \begin{align}
        \Pr_{h\gets\cD'}[E]=\frac{NR(NR-1)\cdots(NR-N+1)\cdot M^N}{NRM(NRM-1)\cdots (NRM-N+1)}.
    \end{align}
    Therefore,
    \begin{align}
        \Pr_{h\gets\cD'}[h=h'|E]
        =\frac{\Pr_{h\gets\cD'}[h=h'\wedge E]}{\Pr_{h\gets\cD'}[E]}
        =\frac{1}{NR(NR-1)\cdots(NR-N+1)\cdot M^N}.
        \label{eq:prob_h=h'_wrt_D'}
    \end{align}
    From \cref{eq:prob_h=h'_wrt_D,eq:prob_h=h'_wrt_D'}, we have $\Pr_{h\gets\cD}[h=h']=\Pr_{h\gets\cD'}[h=h'\mid E]$, which concludes the proof.
\end{proof}

Next, we prove \cref{claim:E_whp}.

\begin{proof}[Proof of \cref{claim:E_whp}]
    For notational simplicity, define $N\coloneqq 2^n$, $R\coloneqq 2^r$, and $M\coloneqq 2^m$.
    Suppose that $h\gets\cD'$.
    Since $h(\cdot)=\pi'((\cdot)\|0^{r+m})$ for a uniformly random permutation $\pi'$, $h$ is a uniformly random injective function.
    Thus, from \cref{claim:number_of_E}, we have
    \begin{align}
        \Pr_{h\gets\cD'}[E]
        &=\frac{NR(NR-1)\cdots(NR-N+1)\cdot M^N}{NRM(NRM-1)\cdots (NRM-N+1)}
        \notag\\
        &=\prod_{i\in[N-1]}\frac{M(NR-i)}{NRM-i}
        \notag\\
        &=\prod_{i\in[N-1]}\frac{1-\frac{i}{NR}}{1-\frac{i}{NRM}}
        \notag\\
        &\ge\prod_{i\in[N-1]}
        \bigg(
         1-\frac{i}{NR}
        \bigg)
        \tag{By $(1-x)^{-1}\ge1$ for any $0<x<1$}\\
        &\ge1-\frac{N}{2R},
    \end{align}
    where, in the last line, we used $\prod_i(1-a_i)\ge1-\sum_i a_i$ for $a_i>0$.
    Since $N=2^{n}$ and $R=2^r$, the proof is complete.
\end{proof}

\subsection{Remark on Constructing PRICs Secure Against Complex-Conjugate Queries}

Our security definition of PRICs in \cref{def:PRIC} considers only adversaries that query $\Enc_\sk$.
On the other hand, one can consider a stronger security notion against adversaries that also query the complex conjugate of $\Enc_\sk$.
Here, we sketch how to construct PRICs satisfying this stronger security notion by modifying \cref{const:PRIC} as follows.
Instead of using a PRF $F:\bit^\secp\times\bit^{n_\logic+\ell}\to\bit$, we use a PRF $F':\bit^\secp\times\bit^{n_\logic+\ell}\to\trit$.
We define $\KeyGen$ in the same way as in \cref{const:PRIC}.
For $\sk_{\mathrm{in}}=(\sk',k,k')$, we define an isometry
\begin{align}
    V'_{\sk_{\rm{in}}}
    :\ket{x}_\regA\mapsto
             \frac{1}{\sqrt{2^\ell}}\sum_{y\in\bit^\ell}\omega_3^{f_{k}(x\|y)}\ket{\pi_{k'}(x\|y\|0^r)}_\regB\ket{g_{\sk'}(x\|y)}_\regC,
\end{align}
where $\omega_3\coloneqq\exp(2\pi i/3)$ is a primitive third root of unity.
For a secret key $\sk=(\sk_{\rm{in}},G,P)$, define the encoding isometry by $\Enc_\sk\coloneqq PU_G V'_{\sk_{\rm{in}}}$.
$\Dec_\sk$ is defined in the same way as in \cref{const:PRIC}, except that it uses the following $\BitDec'_{\sk_{\rm{in}}}$ in place of $\BitDec_{\sk_{\rm{in}}}$:
$\BitDec'_{\sk_{\rm{in}},\regB\regC\to\regB\regC\regD}$ applies the following operations.
\begin{enumerate}
                \item Apply the following isometry to the register $\regC$ using $\sk'$:
                \begin{align}
                    \ket{c}_\regC\mapsto\ket{c}_\regC\ket{\PRFC.\Dec(\sk',c)}_\regD.
                \end{align}

                \item 
                Apply the following unitary to the register $\regB\regC\regD$ using $\sk'$, $k$, and $k'$:
                \begin{align}
                    \ket{b}_\regB\ket{c}_\regC\ket{z}_\regD\mapsto
                    \omega_3^{-f_{k}(z)}\ket{b\oplus \pi_{k'}(z\|0^r)}_\regB\ket{c\oplus g_{\sk'}(z)}_\regC\ket{z}_\regD.
                \end{align}
            \end{enumerate}

The proof of robustness is the same as the proof of \cref{lem:robustness_of_PRIC}.
For pseudorandomness, it suffices to prove the following analogue of \cref{thm:PF_isometry}: no $t$-query algorithm $\cA$ can distinguish Haar-random isometries from the following isometries, even when given oracle access to their complex conjugates:
\begin{align}
    V'_{\pi',f'}:\ket{x}
    \to\frac{1}{\sqrt{2^\ell}}\sum_{y\in\bit^\ell}\omega_3^{f'(x\|y)}\ket{\pi'(x\|y\|0^s)},
\end{align}
where $f':\bit^{n_\logic+\ell}\to\trit$ is a uniformly random function and $\pi'$ is a uniformly random permutation over $\bit^{n_\phys}$, with $n_\phys=n_\logic+\ell+s$.
We expect that this claim can be shown using the generalized path-recording framework of \cite{SMLBH25,FLMNW26}.
We leave this direction for future work.

\section{Graph Sampling and Efficient Phase Recovery}
\label{sec:better_graph}
In this section, we prove \cref{thm:better-graph}.
In \cref{subsec:better-good-conditions}, we define the good graph conditions under which recovery succeeds.
In \cref{subsec:better-recovery}, we construct $\Recover$ and prove its correctness for graphs satisfying these conditions.
In \cref{subsec:better-sampling}, we construct $\GraphSample$ and show that its output satisfies these conditions with overwhelming probability for each fixed perturbation.
In \cref{subsec:better-main-proof}, we combine these results to prove \cref{thm:better-graph}.
Finally, in \cref{subsec:better-application}, we apply the theorem to construct keyed CWS codes from classical codes that need not be linear.

\subsection{Good Graph Conditions}
\label{subsec:better-good-conditions}
This subsection formulates the conditions under which recovery succeeds.
We use a bipartite graph so that the equation $e=v\oplus Au$ separates into two blocks.
\begin{definition}[Bipartite graph]\label{def:better-bipartite}
Let $G=([2n],E)$ be simple with $L\coloneqq[n]$ and $R\coloneqq[2n]\setminus[n]$.
We say that $G$ is a bipartite graph if its adjacency matrix $A$ can be written as
\begin{align}
 A=\begin{pmatrix}0&B\\B^\top&0\end{pmatrix}
 \label{eq:better-bipartite-matrix}
\end{align}
for some $B\in\mathbb F_2^{n\times n}$.
Equivalently, every edge of $G$ has one endpoint in $L$ and the other endpoint in $R$, and $\{x,y\}\in E$ if and only if $B_{x,y-n}=1$ for all $x\in L$ and $y\in R$.
We call $B$ the biadjacency matrix of $G$.
\end{definition}

For the condition of the recovery algorithm, we define a unique-neighbor set.

\begin{definition}[Unique-neighbor sets]\label{def:better-unique-neighbors}
Let $G$ be a bipartite graph with biadjacency matrix $B\in\mathbb F_2^{n\times n}$.
For $S\subseteq L$ and $T\subseteq R$, define
$\Gamma_R(S)\coloneqq\{y\in R:|\{x\in S:B_{x,y-n}=1\}|=1\}$ and
$\Gamma_L(T)\coloneqq\{x\in L:|\{y\in T:B_{x,y-n}=1\}|=1\}$.
\end{definition}

Now we introduce the condition for the recovery algorithm.

\begin{definition}[Graph good for a fixed perturbation]\label{def:better-good-for-v}
For positive integers $t,d$ and $v\in\bit^{2n}$, call $G$ $(t,d)$-good for $v$ if it is a bipartite graph and the following conditions hold.
\begin{enumerate}
\item[\textnormal{(i) Degree bound.}] 
Every vertex has degree at most $d$, i.e., $\max_{i\in[2n]}\deg_G(i)\le d$.

\item[\textnormal{(ii) Unique-neighbor bound.}] 
For every nonempty $S\subseteq L$ and $T\subseteq R$ with $|S|,|T|\le2t$, $|\Gamma_R(S)|\ge\frac34d|S|$ and $|\Gamma_L(T)|\ge\frac34d|T|$.

\item[\textnormal{(iii) Local-overlap bound.}] 
For every $i\in[2n]$, $|\{j\in[2n]:A_{i,j}=1,\ v_j=1\}|\le d/16$.
\end{enumerate}
\end{definition}

\begin{remark}
The numerical constants in the definition are chosen for notational simplicity rather than because they are essential.
\end{remark}

\subsection{Deterministic Phase Recovery}
\label{subsec:better-recovery}
This subsection constructs the deterministic recovery algorithm for graphs satisfying the good condition and proves its running time and correctness.
To this end, write $u=(u_L,u_R)$, $v=(v_L,v_R)$, and $e=(e_L,e_R)$.
Then we have
\begin{align}
 e=v\oplus Au,
 \qquad e_L=Bu_R\oplus v_L,
 \qquad e_R=B^\top u_L\oplus v_R.
 \label{eq:better-syndrome-decomposition}
\end{align}
The left and right blocks are symmetric.
Thus, an algorithm that recovers $u_R$ from $e_L=Bu_R\oplus v_L$ also recovers $u_L$ after replacing the input by $(B^\top,e_R)$.
We give this block algorithm below.
In the following, fix $t$ and $d$.

\begin{construction}[Block recovery]\label{const:better-block-recovery}
We construct an algorithm $\algo{BlockRecover}^{t,d}$ as follows.
\begin{enumerate}
\item\label{step:better-block-initialize} 
On input $M\in\mathbb F_2^{n\times n}$ and $e\in\bit^n$, initialize $\widehat x^{(0)}=0^n$, $r^{(0)}=e$, and $k=0$.
\item\label{step:better-block-repeat} While $k<8t$, repeat the following steps.
\begin{enumerate}
\item\label{step:better-block-score} For every $j\in[n]$, define and compute the score
\begin{align}
 \sigma_k(j)\coloneqq\left|\left\{i\in[n]:M_{i,j}=1,\ r_i^{(k)}=1\right\}\right|
 \label{eq:better-score}
\end{align}
which counts the coordinates that would change from $1$ to $0$ when column $j$ is flipped.
\item\label{step:better-block-stop} If no $j$ satisfies $\sigma_k(j)>5d/8$, stop the repetition and go to Step~\ref{step:better-block-output}.
\item\label{step:better-block-flip} Otherwise choose the smallest eligible $j$ and set
\begin{align}
 \widehat x^{(k+1)}&\coloneqq\widehat x^{(k)}\oplus\varepsilon_j
 \qquad\text{and}\qquad
 r^{(k+1)}\coloneqq r^{(k)}\oplus M\varepsilon_j,
\end{align}
 where $\varepsilon_j$ denotes the $j$-th standard basis vector of $\bit^n$.
Increase $k$ by one.
\end{enumerate}
\item\label{step:better-block-output} Output $\widehat x^{(k)}$.
\end{enumerate}
\end{construction}

We construct the full recovery algorithm by applying the block recovery algorithm to the two blocks.

\begin{construction}[Full phase recovery]\label{const:better-phase-recovery}
We construct an algorithm $\Recover$ as follows.
\begin{enumerate}
    \item On input a graph $G$ on $[2n]$ and $e\in\bit^{2n}$, if $G$ is not a bipartite graph as in \cref{def:better-bipartite} such that $\max_{i\in[2n]}\deg_G(i)\le d$, output $0^{2n}$.

    \item 
    Otherwise, parse $e=(e_L,e_R)$ and compute
\begin{align}
 \widehat u_R&\coloneqq\algo{BlockRecover}^{t,d}(B,e_L)
 \qquad\text{and}\qquad
 \widehat u_L\coloneqq\algo{BlockRecover}^{t,d}(B^\top,e_R).
\end{align}
Then, compute the two estimates and output
\begin{align}
 \Recover(G,e)\coloneqq(\widehat u_L,\widehat u_R),
 \label{eq:better-full-recovery}
\end{align}
\end{enumerate}
\end{construction}

From \cref{const:better-phase-recovery,const:better-block-recovery}, it is easy to see the following lemma.

\begin{lemma}[Recovery runtime]\label{lem:better-recovery-runtime}
If $t$ is polynomially bounded in $n$, then $\Recover$ runs in deterministic polynomial time.
\end{lemma}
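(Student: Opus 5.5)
The plan is to bound the running time of each stage of $\Recover$ separately and check that every stage is polynomial in $n$.

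\textbf{Preprocessing.} The first step checks that the input graph $G$ on $[2n]$ is bipartite in the sense of \cref{def:better-bipartite} with maximum degree at most $d$. This requires scanning the $2n\times 2n$ adjacency matrix $A$ to verify that the two diagonal $n\times n$ blocks vanish, and computing each row sum. It also requires extracting the biadjacency matrix $B$ and its transpose. All of this takes $O(n^2)$ bit operations. Parsing $e=(e_L,e_R)$ is linear. Since $d\le 2n$ whenever it is meaningful, comparisons against $d$ cost only $O(\log n)$ each.

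\textbf{Block recovery.} Next, $\Recover$ calls $\algo{BlockRecover}^{t,d}$ twice, on $(B,e_L)$ and on $(B^\top,e_R)$, so it suffices to bound one call. By Step~\ref{step:better-block-repeat}, the while loop runs at most $8t$ times, because $k$ increases by one in each non-terminating iteration and the loop exits once $k=8t$ or the stopping condition holds. One iteration does the following:
\begin{itemize}
\item It computes all scores $\sigma_k(j)$ for $j\in[n]$, which is a matrix-vector style count in $O(n^2)$ time.
\item It tests $\sigma_k(j)>5d/8$ and selects the smallest eligible $j$, in $O(n\log n)$ time.
\item It updates $\widehat x^{(k)}$ and $r^{(k)}$ by XORing a standard basis vector and column $j$ of $M$, in $O(n)$ time.
\end{itemize}
The total cost of one call is therefore $O(t\cdot n^2)$ up to logarithmic factors. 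Assembling the output is linear.

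\textbf{Conclusion.} Because $t$ is polynomially bounded in $n$, the total $O(n^2+t n^2)$ is polynomial in $n$, and hence polynomial in the input length, which is $\Theta(n^2)$ for the graph description. Every step is deterministic: the tie-breaking rule picks the smallest eligible index, and there is no randomness anywhere. So $\Recover$ is a deterministic polynomial-time algorithm.

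There is no real obstacle. The only point that needs care is noting that the loop bound $8t$ is explicit in the construction, so termination does not depend on the good-graph conditions. Termination and the time bound hold for arbitrary inputs, not just for good graphs.
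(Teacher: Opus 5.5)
Your proposal is correct and follows the same route as the paper, which gives no explicit proof and only remarks that the lemma is easy to see from the block-recovery and full-recovery constructions. Your accounting is the routine verification behind that remark: an $O(n^2)$ bipartiteness and degree check, then two calls to $\algo{BlockRecover}^{t,d}$, each capped at $8t$ iterations of polynomial work, with deterministic smallest-index tie-breaking. You are also right that the explicit $8t$ cap gives termination on arbitrary inputs, not just good graphs.
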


The following lemma gives the three score properties used in the correctness proof: stopping at the correct estimate, finding a candidate before recovery, and decreasing the signal weight after each flip.

\begin{lemma}[Score properties]\label{lem:better-score-properties}
Assume that $G$ is $(t,d)$-good for $v$.
For the left block, set $M=B$, $\widehat x^{(k)}=\widehat u_R^{(k)}$, and $r^{(k)}=r_L^{(k)}$.
Let $z_R^{(k)}=u_R\oplus\widehat u_R^{(k)}$ denote the estimation error after $k$ flips.
The residual is $r_L^{(k)}=e_L\oplus B\widehat u_R^{(k)}=Bz_R^{(k)}\oplus v_L$.
\begin{itemize}
\item[\textnormal{(i) Stopping at the correct estimate.}] If $z_R^{(k)}=0$, then $\sigma_k(j)\le d/16$ for every $j\in[n]$.
\item[\textnormal{(ii) Existence of a candidate.}] If $0<\weight(z_R^{(k)})\le2t$, then $\max_{j\in[n]}\sigma_k(j)>5d/8$.
\item[\textnormal{(iii) Decrease in signal weight.}] For every $j\in[n]$, if $\sigma_k(j)>5d/8$, then
\begin{align}
 \weight\bigl(B(z_R^{(k)}\oplus\varepsilon_j)\bigr)-\weight\bigl(Bz_R^{(k)}\bigr)<-d/8.
 \label{eq:better-potential-decrease}
\end{align}
\end{itemize}
The same three properties hold for the right block after replacing $(B,u_R,v_L)$ with $(B^\top,u_L,v_R)$.
\end{lemma}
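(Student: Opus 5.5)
The plan is to decompose the score and apply the three conditions of \cref{def:better-good-for-v} directly. Write $s^{(k)}\coloneqq Bz_R^{(k)}$, so $r_L^{(k)}=s^{(k)}\oplus v_L$. For a column $j$, let $N(j)\coloneqq\{i\in[n]:B_{i,j}=1\}$; by the degree bound, $|N(j)|\le d$. Then
\begin{align}
 \sigma_k(j)=|\{i\in N(j):s^{(k)}_i\oplus (v_L)_i=1\}|.
\end{align}
Every $i\in N(j)$ with $(v_L)_i=1$ is a neighbor in $G$ of the vertex $n+j\in R$ carrying $v$-weight. The local-overlap condition (iii), applied at vertex $n+j$, therefore says that at most $d/16$ positions of $N(j)$ are corrupted by $v_L$. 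Consequently
\begin{align}
 \bigl|\sigma_k(j)-|\{i\in N(j):s^{(k)}_i=1\}|\bigr|\le d/16,
\end{align}
and all three items reduce to statements about $s^{(k)}$.

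For (i), $z_R^{(k)}=0$ gives $s^{(k)}=0$, so $\sigma_k(j)$ counts only $v_L$-ones in $N(j)$, which is at most $d/16$.

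For (ii), let $S\coloneqq\Support(z_R^{(k)})$, viewed as a subset of $R$, with $0<|S|\le 2t$. The right-hand unique-neighbor bound (the $\Gamma_L(T)$ half of (ii)) gives $|\Gamma_L(S)|\ge\frac34 d|S|$. Each unique neighbor $i\in\Gamma_L(S)$ has $s^{(k)}_i=1$, and is adjacent to exactly one $j\in S$. By pigeonhole, some $j\in S$ owns at least $\frac34 d$ of these unique neighbors, so $|\{i\in N(j):s^{(k)}_i=1\}|\ge \frac34 d$. Subtracting the $v$-corruption gives $\sigma_k(j)\ge \frac34 d-\frac{d}{16}=\frac{11}{16}d>\frac58 d$.

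For (iii), flipping column $j$ changes $s$ exactly on $N(j)$. The ones there become zeros and the zeros become ones, so
\begin{align}
 \Delta=|\{i\in N(j):s_i=0\}|-|\{i\in N(j):s_i=1\}|\le |N(j)|-2\,|\{i\in N(j):s_i=1\}|.
\end{align}
If $\sigma_k(j)>5d/8$, then $|\{i\in N(j):s_i=1\}|>5d/8-d/16=9d/16$. Hence $\Delta<d-9d/8=-d/8$, which is exactly \cref{eq:better-potential-decrease}.

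The right block follows symmetrically: replace $B$ by $B^\top$ and use the $\Gamma_R(S)$ half of condition (ii) for $S\subseteq L$. The main point to handle carefully is the bookkeeping in (ii) and (iii). I need to check that the local-overlap condition is applied at the correct vertex (the column vertex $n+j$, whose neighbors index the rows of $N(j)$). I also need the pigeonhole step to use unique neighbors rather than all neighbors, so that cancellations mod $2$ in $Bz$ cannot destroy the ones being counted. The rest is routine arithmetic with the constants.
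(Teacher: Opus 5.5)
Your proposal is correct and follows the paper's proof essentially step for step. Item (i) uses local overlap at the column vertex $n+j$. Item (ii) uses the unique neighbors of the error support, averaging to a single column, and subtracting $d/16$. Item (iii) uses the lower bound $9d/16$ on the ones of $Bz_R^{(k)}$ in column $j$ together with the degree bound. The only cosmetic difference is that you isolate the $d/16$ perturbation bound once at the start instead of re-deriving it inside each item.
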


\begin{proof}[Proof of \cref{lem:better-score-properties}]
We prove only the left-block properties because the right-block proof is symmetric.
For item (i), we use the local-overlap bound.
If $z_R^{(k)}=0$, then $r_L^{(k)}=v_L$.
Fix $j\in[n]$.
Since $A_{n+j,x}=B_{x,j}$ for every $x\in[n]$, the local-overlap bound at $i=n+j$ gives
\begin{align}
 \sigma_k(j)=\left|\left\{x\in[n]:B_{x,j}=1,\ (v_L)_x=1\right\}\right|\le d/16.
\end{align}

For item (ii), we use the unique-neighbor and local-overlap bounds.
Assume $0<\weight(z_R^{(k)})\le2t$ and let $T\coloneqq\{n+j:j\in[n],\ (z_R^{(k)})_j=1\}$, which represents the right vertices where the current estimate differs from $u_R$.
Then $T\subseteq R$ and $0<|T|=\weight(z_R^{(k)})\le2t$.
Thus, the unique-neighbor bound in \cref{def:better-good-for-v} gives $|\Gamma_L(T)|\ge3d|T|/4$.
By the definition of $\Gamma_L(T)$, each $x\in\Gamma_L(T)$ is adjacent to exactly one vertex of $T$.
Thus, we have
\begin{align}
 \sum_{y\in T}|\{x\in\Gamma_L(T):B_{x,y-n}=1\}|=|\Gamma_L(T)|\ge3d|T|/4.
\end{align}
From this and the standard average argument, we can show that there exists some $n+j\in T$ such that $|W_j|\ge3d/4$, where 
\begin{align}
 W_j\coloneqq\{x\in\Gamma_L(T):B_{x,j}=1\}
\end{align}
is the set of left vertices whose only neighbor in $T$ is $n+j$.
For every $x\in W_j$, the definition of $\Gamma_L(T)$ and $n+j\in T$ imply
\begin{align}
 (Bz_R^{(k)})_x=1,
 \qquad\text{and}\qquad
 (r_L^{(k)})_x=1\oplus(v_L)_x.
\end{align}
Consequently,
\begin{align}
 W_j\setminus\Support(v_L)
 \subseteq\{x\in[n]:B_{x,j}=1,\ (r_L^{(k)})_x=1\}.
\end{align}
The set on the right has cardinality $\sigma_k(j)$.
Applying the local-overlap bound at the right vertex $n+j$ gives $|W_j\cap\Support(v_L)|\le d/16$.
Therefore
\begin{align}
 \sigma_k(j)\ge|W_j\setminus\Support(v_L)|
 =|W_j|-|W_j\cap\Support(v_L)|
 \ge3d/4-d/16>5d/8.
\end{align}

For item (iii), we use the degree and local-overlap bounds.
Fix $j\in[n]$ with $\sigma_k(j)>5d/8$ and write $C_j\coloneqq\{x\in[n]:B_{x,j}=1\}$.
Since $r_L^{(k)}=Bz_R^{(k)}\oplus v_L$, we have
\begin{align}
 (C_j\cap\Support(r_L^{(k)}))\setminus\Support(v_L)
 \subseteq C_j\cap\Support(Bz_R^{(k)}).
\end{align}
The score definition with $M=B$ and $r=r_L^{(k)}$ gives $\sigma_k(j)=|C_j\cap\Support(r_L^{(k)})|.$
Thus, we have
\begin{align}
 |C_j\cap\Support(Bz_R^{(k)})|
 \ge|(C_j\cap\Support(r_L^{(k)}))\setminus\Support(v_L)|,
\end{align}
which implies
\begin{align}
    |C_j\cap\Support(Bz_R^{(k)})|
    &\ge\sigma_k(j)-|C_j\cap\Support(r_L^{(k)})\cap\Support(v_L)|\\
    &\ge\sigma_k(j)-|C_j\cap\Support(v_L)|.
\end{align}
Since $\sigma_k(j)>5d/8$ from the assumption and $|C_j\cap\Support(v_L)|\le d/16$ from the local-overlap bound at $i=n+j$, we have
\begin{align}
    |C_j\cap\Support(Bz_R^{(k)})|
    >\frac{5d}{8}-\frac{d}{16}=\frac{9d}{16}.
\end{align}
The degree bound in \cref{def:better-good-for-v} gives $|C_j|\le d$.
Flipping $j$ changes only the coordinates in $C_j$, changing each $1$ to $0$ and each $0$ to $1$.
Therefore
\begin{align}
 \weight\bigl(B(z_R^{(k)}\oplus\varepsilon_j)\bigr)-\weight\bigl(Bz_R^{(k)}\bigr)
 =|C_j|-2|C_j\cap\Support(Bz_R^{(k)})|
 <d-\frac{9d}{8}
 =-\frac d8.
\end{align}
\end{proof}

The following gives the correctness of the recovery algorithm.

\begin{theorem}[Deterministic full recovery]\label{thm:better-full-recovery}
If $G$ is $(t,d)$-good for $v$, then for every $u\in\bit^{2n}$ with $\weight(u)\le t$,
\begin{align}
 \Recover(G,v\oplus Au)=u.
 \label{eq:better-full-correctness}
\end{align}
\end{theorem}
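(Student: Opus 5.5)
We reduce the theorem to the two block recoveries and argue about each using \cref{lem:better-score-properties}. Since $G$ is $(t,d)$-good for $v$, it is bipartite with maximum degree at most $d$, so $\Recover$ does not take the early exit and instead outputs $(\algo{BlockRecover}^{t,d}(B^\top,e_R),\algo{BlockRecover}^{t,d}(B,e_L))$. By \cref{eq:better-syndrome-decomposition}, $e_L=Bu_R\oplus v_L$ and $e_R=B^\top u_L\oplus v_R$, and \cref{lem:better-score-properties} covers both blocks symmetrically. It therefore suffices to show that $\algo{BlockRecover}^{t,d}(B,e_L)=u_R$ whenever $\weight(u_R)\le t$; the other block is identical.

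We run a potential argument on the block iteration with potential $\Phi_k\coloneqq\weight(Bz_R^{(k)})$, where $z_R^{(k)}=u_R\oplus\widehat u_R^{(k)}$. Each flip is performed only at a $j$ with $\sigma_k(j)>5d/8$, so \cref{lem:better-score-properties}(iii) gives $\Phi_{k+1}<\Phi_k-d/8$. Initially $\Phi_0=\weight(Bu_R)\le d\,\weight(u_R)\le dt$ by the degree bound, and $\Phi_k\ge0$ always. Hence at most $8t$ flips can ever occur; more precisely the number of flips $K$ satisfies $K<8t$ when $u_R\neq0$ (strict inequality because each step decreases $\Phi$ by strictly more than $d/8$), and $K=0$ when $u_R=0$ by item (i). We also need $\weight(z_R^{(k)})\le2t$ throughout, so that item (ii) applies. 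This holds because $\weight(z_R^{(k)})\le\weight(u_R)+k$, but that bound reaches $2t$ only if $k\le t$. This is the delicate step.

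To handle it, we argue by contradiction with the unique-neighbor bound. Take the first $k$ at which $\weight(z_R^{(k)})=2t+1$. At step $k-1$ the weight was exactly $2t$, so the unique-neighbor bound applies to $T=\Support(z_R^{(k-1)})$ and gives $\Phi_{k-1}\ge|\Gamma_L(T)|\ge\frac34 d\cdot2t=\frac32dt$. But $\Phi_{k-1}\le\Phi_0\le dt$ by monotonicity, a contradiction. Thus $\weight(z_R^{(k)})\le2t$ for every $k$, and the same reasoning shows the potential bound controls the weight at all times.

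With the weight bound in place, we conclude. The loop either stops by the score test or reaches $k=8t$, and the latter is impossible since $K<8t$ (or $K=0$ if $u_R=0$). So it stops at some $k$ where no score exceeds $5d/8$. Since $0<\weight(z_R^{(k)})\le2t$ would yield a candidate by item (ii), we must have $z_R^{(k)}=0$, that is, $\widehat u_R^{(k)}=u_R$. Conversely, once $z_R=0$, item (i) guarantees the algorithm stops there, consistent with this. The step we expect to be the main obstacle is keeping the estimation error inside the $2t$ unique-neighbor radius, and it is resolved by the first-crossing contradiction above.
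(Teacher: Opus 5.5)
Your proposal is correct and follows essentially the same route as the paper: strict decrease of $\weight(Bz_R^{(k)})$ at every flip via item (iii), a first-crossing contradiction with the unique-neighbor bound that keeps the error weight within the $2t$ radius, and the flip-count bound that rules out reaching the $8t$ limit. The differences are cosmetic. You cross at weight $2t+1$ rather than $2t$, and you conclude $z_R^{(k)}=0$ at the stopping point via item (ii), whereas the paper argues that the run reaches $z_R=0$ and then stops there by item (i).
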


Before proving \cref{thm:better-full-recovery}, we outline the argument.
Flipping the selected $j$-th bit changes $Bz_R^{(k)}$ to $Bz_R^{(k)}\oplus B\varepsilon_j$, and item (iii) of \cref{lem:better-score-properties} shows that its weight decreases.
This decrease bounds the number of flips and keeps the difference support below $2t$.
Within this range, \cref{lem:better-score-properties} supplies a candidate while the difference is nonzero and forces the algorithm to stop when it is zero.

\begin{proof}[Proof of \cref{thm:better-full-recovery}]
We prove the left block and use symmetry for the right block.
Fix $u=(u_L,u_R)$ with $\weight(u)\le t$.
We analyze $\algo{BlockRecover}^{t,d}(B,e_L)$, writing $\widehat x^{(k)}=\widehat u_R^{(k)}$ and $r^{(k)}=r_L^{(k)}$, where $\widehat u_R^{(k)}$ is the current estimate of $u_R$ after $k$ flips.
Let $z_R^{(k)}=u_R\oplus\widehat u_R^{(k)}$.
At Step~\ref{step:better-block-initialize}, $\widehat u_R^{(0)}=0^n$, so $z_R^{(0)}=u_R$, $r_L^{(0)}=e_L=Bz_R^{(0)}\oplus v_L$ and
\begin{align}
 \weight(Bz_R^{(0)})\le d\weight(u_R)\le dt
\end{align}
from the degree bound in \cref{def:better-good-for-v}.
We show that $r_L^{(k)}=Bz_R^{(k)}\oplus v_L$ holds after every number $k$ of flips performed by the algorithm.
The initialization above establishes this relation for $k=0$.
Suppose that it holds after $k$ flips and that the algorithm performs another flip.
Let $j$ be the coordinate selected at Step~\ref{step:better-block-flip}, so that $\sigma_k(j)>5d/8$.
The update of the estimate gives
\begin{align}
 z_R^{(k+1)}
 =u_R\oplus\widehat u_R^{(k+1)}
 =u_R\oplus\widehat u_R^{(k)}\oplus\varepsilon_j
 =z_R^{(k)}\oplus\varepsilon_j.
\end{align}
Using the residual update and the induction hypothesis, we obtain
\begin{align}
 r_L^{(k+1)}
 =r_L^{(k)}\oplus B\varepsilon_j
 =Bz_R^{(k)}\oplus v_L\oplus B\varepsilon_j
 =B(z_R^{(k)}\oplus\varepsilon_j)\oplus v_L
 =Bz_R^{(k+1)}\oplus v_L.
\end{align}
Thus the relation also holds after $k+1$ flips, completing the induction.

We show that $\weight(z_R^{(k)})<2t$ at every state reached by the algorithm.
To prove this bound, we show that $\weight(Bz_R^{(k)})$ decreases at each flip.
For any performed flip, the selected coordinate $j$ satisfies $\sigma_k(j)>5d/8$, so item (iii) of \cref{lem:better-score-properties} gives
\begin{align}
 \weight(Bz_R^{(k+1)})
 <\weight(Bz_R^{(k)})-\frac d8.
 \label{eq:better-recovery-weight-decrease}
\end{align}
For a contradiction, assume that $\weight(z_R^{(k)})\ge 2t$ for some $k$.
Each flip changes $\weight(z_R^{(k)})$ by exactly one, and the initial value is at most $t$.
Let $k$ be the first iteration at which $\weight(z_R^{(k)})=2t$, and let
$T=\{n+j:j\in[n],\ (z_R^{(k)})_j=1\}$ at that iteration.
Each $x\in\Gamma_L(T)$ has exactly one edge from the error support, so $(Bz_R^{(k)})_x=1$.
The unique-neighbor bound in \cref{def:better-good-for-v} would therefore imply
\begin{align}
    \weight(Bz_R^{(k)})
    \ge|\Gamma_L(T)|
    \ge\frac34d|T|
    =\frac32dt
    >dt
    \ge\weight(Bz_R^{(0)}),
\end{align}
which contradicts the decrease in \cref{eq:better-recovery-weight-decrease}.

Next, we show that the algorithm reaches $z_R^{(k)}=0$ for some $k<8t$.
When $z_R^{(k)}\ne0$, item (ii) of \cref{lem:better-score-properties} supplies a candidate, so Step~\ref{step:better-block-stop} cannot occur.
These bounds also rule out $8t$ flips: if the algorithm performed $8t$ flips, we would have
\begin{align}
 \weight(Bz_R^{(8t)})
 < \weight(Bz_R^{(0)})-8t\cdot\frac d8
 \le dt-dt=0,
\end{align}
contradicting the nonnegativity of Hamming weight.
Hence it reaches $z_R^{(k)}=0$, equivalently $\widehat u_R^{(k)}=u_R$, before reaching the repetition limit in Step~\ref{step:better-block-repeat}.

When $z_R^{(k)}=0$, item (i) of \cref{lem:better-score-properties} makes every score at most $d/16$, and the output at Step~\ref{step:better-block-output} returns $\widehat u_R=u_R$.
The right block is symmetric under replacing $(B,u_R,v_L)$ with $(B^\top,u_L,v_R)$.
Therefore $\widehat u_L=u_L$ and the output is $(\widehat u_L,\widehat u_R)=u$.
\end{proof}

\subsection{Construction of $\GraphSample$}
\label{subsec:better-sampling}
We construct $\GraphSample$ and prove that its output satisfies the good condition with overwhelming probability.
Fix a positive rational constant $\gamma\le\min\{2^{-32},p/2\}$ and choose before sampling
\begin{align}
 d(n)&\coloneqq16\left\lfloor\frac1{16}\min\left\{\lceil\log_2(2n)\rceil^2,\frac{\gamma n}{t(n)}\right\}\right\rfloor.
 \label{eq:better-asymptotic-degree}
\end{align}
We choose this degree so that the output of the graph sampler constructed below satisfies $(d(n)+1)t(n)\le pn$ for sufficiently large $n$, which yields the induced-weight bound because every sampled graph has degree at most $d(n)$.

\begin{construction}[Graph sampler]\label{const:better-graph-sampler}
On input $1^n$, $\GraphSample$ samples a graph as follows.
\begin{enumerate}
\item Choose independent uniform permutations $\pi_1,\ldots,\pi_d:[n]\to[n]$.
\item Let $(P_a)_{i,j}=1$ exactly when $j=\pi_a(i)$ and define
\begin{align}
 B&\coloneqq P_1\oplus\cdots\oplus P_d,
 \qquad A\coloneqq\begin{pmatrix}0&B\\B^\top&0\end{pmatrix}.
 \label{eq:better-matching-sampler}
\end{align}
\item Output the graph $G$ with adjacency matrix $A$.
\end{enumerate}
\end{construction}

By construction, we immediately obtain the following lemma.
Since the proof is straightforward, we omit it.

\begin{lemma}\label{lem:better-sampler-runtime}
    $\GraphSample$ is PPT.
\end{lemma}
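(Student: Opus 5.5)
The claim is that $\GraphSample$ runs in probabilistic polynomial time. The argument is a direct accounting of the three steps of \cref{const:better-graph-sampler}.

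First I bound the degree parameter. By \cref{eq:better-asymptotic-degree}, $d(n)\le\lceil\log_2(2n)\rceil^2=O(\log^2 n)$. I would also confirm that $d(n)$ itself is computable in time $\poly(n)$. Since $\gamma$ is a fixed positive rational, this reduces to evaluating $t(n)$ and comparing $\lceil\log_2(2n)\rceil^2$ with $\gamma n/t(n)$, then rounding down to a multiple of $16$. This needs $t$ to be polynomial-time computable, which I would state as a standing convention on the radius function. Alternatively, $d(n)$ can be fixed in advance as part of the algorithm's description, since it depends only on $n$ and not on the random coins.

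Next I handle the sampling. Each uniform permutation $\pi_a:[n]\to[n]$ is drawn by the Fisher--Yates shuffle, which uses $O(n\log n)$ random bits and $\poly(n)$ time when exact uniform integers in $[i]$ are drawn. For non-power-of-two ranges this uses rejection sampling, which is expected polynomial time. If strict worst-case PPT is required, I would truncate after $\poly(n)$ trials, which changes the output distribution only by $2^{-\Omega(n)}$.

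Finally I handle the assembly. Over $d=O(\log^2 n)$ permutations, the total sampling cost is $\poly(n)$. Forming each permutation matrix $P_a$ takes $O(n^2)$ time, and XORing $d$ of them gives $B$ in $O(dn^2)$ time. Writing out the $2n\times 2n$ matrix $A$ then takes $O(n^2)$ time.

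There is no real obstacle here. The only subtle points are the computability of $d(n)$ and the exactness of uniform sampling, and both are handled as described above.
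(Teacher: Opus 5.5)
Your accounting is correct and is exactly the argument the paper has in mind; the paper omits the proof as immediate from the construction. The argument is that there are $d(n)\le\lceil\log_2(2n)\rceil^2=O(\log^2 n)$ independent permutations, each sampled and XORed into $B$ in $\poly(n)$ time. Your caveats are legitimate points the paper leaves implicit: $t$ (and hence $d(n)$) must be efficiently computable, and exactly uniform permutations require rejection sampling or truncation.
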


We also obtain the following lemma.

\begin{lemma}[Sampler degree and induced-weight bound]\label{lem:better-sampler-degree}
Every output $G$ of $\GraphSample(1^n)$ satisfies
\begin{align}
 \max_{i\in[2n]}\deg_G(i)\le d(n).
\end{align}
For all sufficiently large $n$ and all $u,v\in\bit^{2n}$ satisfying $|\Support(u)\cup\Support(v)|\le t(n)$, every output graph satisfies
\begin{align}
 \weight(v\oplus Au)\le pn.
\end{align}
\end{lemma}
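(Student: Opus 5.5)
The lemma has two parts, and I will prove them in order. The first part (the degree bound) holds for every output graph. The second part (the induced-weight bound) then follows from the first by a support count together with the choice of $d(n)$ in \cref{eq:better-asymptotic-degree}.

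\emph{Degree bound.} Each $P_a$ is a permutation matrix, so every row and every column of $P_a$ contains exactly one $1$. Over $\mathbb F_2$, row $i$ of $B=P_1\oplus\cdots\oplus P_d$ is the XOR of $d$ standard basis vectors. Its support is contained in $\{\pi_1(i),\ldots,\pi_d(i)\}$, so it has weight at most $d$; coincident matchings cancel and can only lower the weight. The same holds for every column. Under the block form of $A$, the degree of a left vertex $i$ is the weight of row $i$ of $B$, and the degree of a right vertex $n+j$ is the weight of column $j$. Hence $\max_i\deg_G(i)\le d(n)$. Since $A$ has zero diagonal blocks, $G$ is also a simple bipartite graph.

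\emph{Induced-weight bound.} Fix $u,v$ with $|\Support(u)\cup\Support(v)|\le t(n)$. Then
\begin{align}
\weight(v\oplus Au)\le\weight(v)+\weight(Au)\le t(n)+\sum_{i\in\Support(u)}\weight(Ae_i)\le t(n)+d(n)\,t(n)=(d(n)+1)t(n).
\end{align}
Here $Ae_i$ is column $i$ of $A$, which has weight $\deg_G(i)\le d(n)$.

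It remains to show $(d(n)+1)t(n)\le pn$ for large $n$, using the definition of $d(n)$ from \cref{eq:better-asymptotic-degree}. We have $d(n)\le\gamma n/t(n)$, so $d(n)t(n)\le\gamma n\le pn/2$, since $\gamma\le p/2$. For the remaining $t(n)$ term, note that $t(n)=o(n\log\log n/\log n)=o(n)$. Hence $t(n)\le pn/2$ for sufficiently large $n$, and the two bounds add to $(d(n)+1)t(n)\le pn$. The case $t(n)=0$ is trivial, and the argument above never divides by $t(n)$, so it needs no separate treatment.

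The only mild subtlety is that $d(n)$ is rounded down to a multiple of $16$ and could be $0$ for small $n$. This only helps both bounds, so no step here is a real obstacle; the content is simply that $d(n)t(n)\le\gamma n$ by construction.
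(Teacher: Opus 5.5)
Your proof is correct and follows the paper's argument. Rows and columns of $B=P_1\oplus\cdots\oplus P_d$ have weight at most $d(n)$, which gives the degree bound, and then $\weight(v\oplus Au)\le(d(n)+1)t(n)\le pn$ follows from $d(n)t(n)\le\gamma n\le pn/2$ together with $t(n)=o(n)$.
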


\begin{proof}[Proof of \cref{lem:better-sampler-degree}]
Each permutation matrix $P_a$ has one $1$ in every row and column.
Therefore every row and column of $B=P_1\oplus\cdots\oplus P_d$ has weight at most $d(n)$, so every output graph has maximum degree at most $d(n)$.
For the induced-weight bound, if $|\Support(u)\cup\Support(v)|\le t(n)$, then
\begin{align}
 \weight(v\oplus Au)
 \le\weight(v)+\max_{i\in[2n]}\deg_G(i)\weight(u)
 \le(d(n)+1)t(n)\le pn,
 \label{eq:better-degree-budget}
\end{align}
where the last inequality follows for sufficiently large $n$ from $d(n)t(n)\le\gamma n$, $t(n)=o(n)$, and $\gamma\le p/2$.
\end{proof}

Next, we show that this outputs a good graph with high probability.

\begin{theorem}[Sampling a graph good for a fixed perturbation]\label{thm:better-sampler-good}
For all sufficiently large $n$ and every $v\in\bit^{2n}$ with $\weight(v)\le t(n)$,
\begin{align}
 \Pr_{G\gets\GraphSample(1^n)}[G\text{ is }(t(n),d(n))\text{-good for }v]\ge1-\negl(n).
 \label{eq:better-good-sampling}
\end{align}
\end{theorem}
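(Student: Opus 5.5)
The plan is to check the three conditions of \cref{def:better-good-for-v} separately for the graph produced by \cref{const:better-graph-sampler}, and then take a union bound over the (polynomially many) failure events. Bipartiteness and the degree bound (i) hold for every output, by construction and by \cref{lem:better-sampler-degree}. So the work is to show that (ii) and (iii) each fail with probability $\negl(n)$.

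Throughout I will use two consequences of \cref{eq:better-asymptotic-degree} and \cref{eq:better-radius-growth}, valid for all large $n$:
\begin{align}
d(n)\le\lceil\log_2(2n)\rceil^2, \qquad d(n)\,t(n)\le\gamma n, \qquad d(n)=\omega\!\left(\frac{\log n}{\log\log n}\right).
\end{align}
The last one holds because $\gamma n/t(n)=\omega(\log n/\log\log n)$. In particular $d\to\infty$ and $d\log d=\omega(\log n)$.

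A key observation is that the support of $B=P_1\oplus\cdots\oplus P_d$ is contained in $\{(x,\pi_a(x))\}_{x,a}$. Hence XOR cancellations can only delete edges. So it suffices to control the $d$ "candidate" edges per vertex.

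\emph{Local overlap (iii).} Fix $v$ with $\weight(v)\le t$, and fix a left vertex $i$ (right vertices are symmetric, using $\pi_a^{-1}$). Every neighbor of $i$ is among $\pi_1(i),\dots,\pi_d(i)$. These are independent and uniform, and each lands in $\Support(v_R)$ with probability at most $t/n\le\gamma/d$. Hence
\begin{align}
\Pr\bigl[\text{at least } d/16 \text{ of them land in } \Support(v_R)\bigr]\le\binom{d}{d/16}\Bigl(\frac{\gamma}{d}\Bigr)^{d/16}\le\Bigl(\frac{16e\gamma}{d}\Bigr)^{d/16}=2^{-\Omega(d\log d)}=n^{-\omega(1)}.
\end{align}
A union bound over the $2n$ vertices keeps this negligible.

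\emph{Unique neighbors (ii).} Fix $S\subseteq L$ with $|S|=s\le 2t$; the case $T\subseteq R$ is symmetric, using the permutations $\pi_a^{-1}$. I would expose the $ds$ candidate endpoints $\pi_a(x)$, $x\in S$, one at a time. Call an endpoint a collision if it hits an already-used right vertex. Conditioned on the past, each endpoint is a collision with probability at most $ds/(n-s)\le 2ds/n$.

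A right vertex hit by exactly one pair $(x,a)$ is a genuine unique neighbor in $B$. If there are at most $c$ collisions, then at least $ds-2c$ vertices are hit exactly once. So it suffices to have $c\le ds/8$, which gives $|\Gamma_R(S)|\ge\frac34 ds$. Therefore
\begin{align}
\Pr[\text{$S$ bad}]\le\binom{ds}{ds/8}\Bigl(\frac{2ds}{n}\Bigr)^{ds/8}\le\Bigl(\frac{16e\,ds}{n}\Bigr)^{ds/8}.
\end{align}
Summing over the at most $(en/s)^s$ sets of size $s$, and writing $q=n/s$, the contribution of size $s$ is at most
\begin{align}
\exp\Bigl(s\Bigl[\log(eq)-\frac d8\log\frac{q}{16ed}\Bigr]\Bigr).
\end{align}

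This union bound is the main obstacle, since $d$ is only $\omega(\log n/\log\log n)$ rather than $\Omega(\log n)$. I would split on $q$.

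For $q\ge d^2$: we have $\log(q/16ed)\ge\frac12\log q-O(1)$. The bracket is then at most $-\Omega(d\log q)$, and $\log q\ge 2\log d$, so the bracket is $-\Omega(d\log d)=-\omega(\log n)$.

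For $q<d^2$, i.e.\ $s>n/d^2$ (still $s\le 2t$): we have $q\ge 1/(2\gamma')$ with $ds\le 2\gamma n$, so $\log(q/16ed)\ge\log(1/(32e\gamma))$, which is a large constant because $\gamma\le 2^{-32}$. Meanwhile $\log(eq)=O(\log d)=O(\log\log n)$. The bracket is therefore $-\Omega(d)$, and the total is $\exp(-\Omega(sd))$ with $s\ge n/\log^4 n$, which is superpolynomially small.

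Summing over the $\le 2t$ values of $s$ and over the two sides stays negligible. Combining (i)--(iii) gives \cref{eq:better-good-sampling}.
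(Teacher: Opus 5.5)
Your proof is correct. Its outer structure matches the paper's: bipartiteness and (i) hold for every output, and the local-overlap bound is the same binomial tail over $d/16$-subsets of permutation indices. (You use $t/n\le\gamma/d$ to get $2^{-\Omega(d\log d)}$; the paper keeps $(16\mathrm e\,t/n)^{d/16}$ and uses $\log(n/t)=\Omega(\log\log n)$.) Your observation that a right vertex hit by exactly one pair $(x,a)$ survives the XOR is the paper's inclusion $U_1(S)\subseteq\Gamma_R(S)$. So, as in the paper, a unique-neighbor failure means more than $ds/8$ collisions, i.e.\ $|\mathcal N(S)|<\frac78 ds$.

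You differ from the paper in how you bound that event. The paper union-bounds over pairs $(S,W)$ with $|W|=\frac78 ds$, using $\Pr[\pi_a(S)\subseteq W-n]\le(7ds/(8n))^{s}$. This keeps the factor $\log\frac{n}{ds}$ in the exponent, giving $-\frac{ds}{16}\log\frac{n}{ds}$. Since $s\log\frac{n}{ds}$ is increasing in $s$, every size then collapses to the single bound $(d/n)^{d/16}$.

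Your sequential-exposure (birthday-type) bound $\binom{ds}{ds/8}(2ds/n)^{ds/8}$ is more elementary and avoids enumerating target sets. However, in the regime $q<d^2$ you weaken $\log\frac{q}{16\mathrm e d}$ to a constant. The per-size bound is then only $\exp(-\Omega(sd))$, which is not negligible by itself when $d$ is only slightly above $\log n/\log\log n$. That is why you need $s>n/d^2$ and hence your case split.

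The split is not actually needed. Since $q/d=n/(ds)\ge 1/(2\gamma)$ is a large constant, your bracket is at most $-c\,d\log\frac{n}{ds}$ for an absolute $c>0$. The paper's monotonicity step then finishes uniformly in $s$.

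Minor slip: ``$q\ge 1/(2\gamma')$'' should read $q/d\ge 1/(2\gamma)$, which is what your next inequality uses.
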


\begin{proof}[Proof of \cref{thm:better-sampler-good}]
Throughout this proof, probabilities are over $G\gets\GraphSample(1^n)$, and we omit this subscript.
Note that, for sufficiently large $n$, $d$ is a multiple of $16$, $d\ge128$, and $2dt\le2^{-31}n$.
For each $a\in[d]$, the edges $\{x,n+\pi_a(x)\}$ for $x\in[n]$ form a perfect matching: every vertex belongs to exactly one of these edges.
Keeping one copy of each edge for every permutation that generates it gives a multigraph, in which two vertices may be joined by more than one edge.
We bound the failure probabilities of the unique-neighbor and local-overlap conditions separately, and then take a union bound.

First, we show
\begin{align}
 \Pr[\text{unique-neighbor condition fails}]
 \le 4t(d/n)^{d/16}.
 \label{eq:better-expansion-error}
\end{align}
To prove \cref{eq:better-expansion-error}, we first establish the implication from unique-neighbor failure to neighborhood concentration: for each nonempty set $S\subseteq L$ with $s=|S|\le2t$, we show that $|\Gamma_R(S)|<\frac34ds$ implies $|\mathcal N(S)|<\frac78ds$, where
\begin{align}
 \mathcal N(S)\coloneqq\{n+\pi_a(x):x\in S,\ a\in[d]\}
\end{align}
is the neighborhood in the multigraph obtained by superposing the $d$ matchings.
Define
\begin{align}
 U_1(S)\coloneqq
 \left\{y\in R:
 \left|\left\{(x,a)\in S\times[d]:n+\pi_a(x)=y\right\}\right|=1
 \right\}.
\end{align}
Each pair $(x,a)$ specifies one edge, so the inner cardinality counts parallel edges with their multiplicity.
Each of the $s$ vertices in $S$ contributes one edge for each of the $d$ permutations, giving $ds$ edges counted with multiplicity.
Counting these edges by their endpoints in $R$, each vertex in $U_1(S)$ receives exactly one edge from $S$, whereas each vertex in $\mathcal N(S)\setminus U_1(S)$ receives at least two: it receives at least one by the definition of $\mathcal N(S)$, but not exactly one by the definition of $U_1(S)$.
Therefore, we have
\begin{align}
 ds\ge|U_1(S)|+2\bigl(|\mathcal N(S)|-|U_1(S)|\bigr)
\end{align}
which implies
\begin{align}
    |U_1(S)|\ge2|\mathcal N(S)|-ds.
\end{align}
For each $y\in U_1(S)$, exactly one edge connects a vertex in $S$ to $y$.
This edge occurs in exactly one of the $d$ matchings, so it remains after the XOR reduction.
Therefore $U_1(S)\subseteq\Gamma_R(S)$ in the resulting simple graph, and
\begin{align}
 |\Gamma_R(S)|
 \ge |U_1(S)|
 \ge 2|\mathcal N(S)|-ds,
\end{align}
which gives $|\mathcal N(S)|<\frac78ds$ if $|\Gamma_R(S)|<\frac34ds$.

The following claim bounds the probability of this concentration.

\begin{claim}\label{claim:better-neighborhood-concentration}
For $1\le s\le2t$,
\begin{align}
 \Pr[\exists S\subseteq L:|S|=s,\ |\mathcal N(S)|<7ds/8]
 \le\left(\frac dn\right)^{d/16}.
 \label{eq:better-expansion-union}
\end{align}
\end{claim}

For each $1\le s\le2t$, by \cref{claim:better-neighborhood-concentration}, we have
\begin{align}
 \Pr
 \bigg[\exists S\subseteq L:|S|=s,|\Gamma_R(S)|<\frac34ds
 \bigg]
 \le\Pr
 \bigg[\exists S\subseteq L:|S|=s,\ |\mathcal N(S)|<\frac78ds
 \bigg]
 \le\left(\frac dn\right)^{d/16}.
\end{align}
Thus, by the union bound, 
\begin{align}
    \Pr[\exists S\subseteq L:0<|S|\le2t,\ |\Gamma_R(S)|<\tfrac34d|S|]
    \le 2t\left(\frac dn\right)^{d/16}.
\end{align}
Since the same bound also holds for the right case, the union bound gives \cref{eq:better-expansion-error}.

For the local-overlap condition, fix $i\in L$ and set $k=d/16$.
The edge generated by $\pi_a$ connects $i$ to $n+\pi_a(i)$.
Define
\begin{align}
 X\coloneqq\bigl|\{a\in[d]:v_{n+\pi_a(i)}=1\}\bigr|.
\end{align}
The variable $X$ counts permutation indices $a$, so repeated occurrences of the same neighbor are counted with multiplicity.
We first show that a violation of the local-overlap condition at $i$ implies $X>k$.
If $B_{i,j}=1$, then the number of $a$ satisfying $\pi_a(i)=j$ is odd and therefore at least one.
Consequently,
\begin{align}
 \bigl|\{j\in[n]:B_{i,j}=1,\ v_{n+j}=1\}\bigr|
 \le\sum_{\substack{j\in[n]\\v_{n+j}=1}}
       \bigl|\{a\in[d]:\pi_a(i)=j\}\bigr|\notag
 =X.
\end{align}
Thus, if the local-overlap condition fails at $i$, then $X>k$.

The following lemma gives the upper bound of the probability $X>k$.
\begin{claim}\label{claim:better-local-overlap-tail}
For $X$ and $k$ defined above,
\begin{align}
 \Pr[X>k]\le(16\mathrm e\,t/n)^{d/16}.
\end{align}
\end{claim}
Therefore, \cref{claim:better-local-overlap-tail} upper-bounds the probability of a local-overlap violation for the fixed left vertex $i$.
For a right vertex $n+j$, the neighbor generated by $\pi_a$ is $\pi_a^{-1}(j)$.
The inverse permutations are independent and uniform, and we obtain the same upper bound.
Thus, a union bound over all $2n$ vertices gives
\begin{align}
 \Pr[\text{local-overlap condition fails}]
 \le 2n(16\mathrm e\,t/n)^{d/16}.
 \label{eq:better-local-error}
\end{align}

Combining the two failure bounds and the union bound gives
\begin{align}
 &\Pr[G\text{ is }(t(n),d(n))\text{-good for }v]
 \ge1-4t(d/n)^{d/16}-2n(16\mathrm e\,t/n)^{d/16}.
 \label{eq:better-finite-error}
\end{align}
By \cref{eq:better-radius-growth,eq:better-asymptotic-degree}, we have $\omega(\log n/\log\log n)\le d \le O(\log^2 n)$, and $\log(n/t)=\Omega(\log\log n)$.
Consequently, we have $d\log(n/d)=\omega(\log n)$ and $d\log\bigl(n/(16\mathrm e\,t)\bigr)=\omega(\log n)$.
Both error terms in \cref{eq:better-finite-error} are therefore $n^{-\omega(1)}$.
Together with the degree bound in \cref{lem:better-sampler-degree}, this proves \cref{eq:better-good-sampling}.
\end{proof}

To complete the proof, we prove \cref{claim:better-neighborhood-concentration,claim:better-local-overlap-tail}.

\begin{proof}[Proof of \cref{claim:better-neighborhood-concentration}]
Fix $1\le s\le2t$ and set $\ell=7ds/8$.
Since $d$ is a multiple of $16$, $\ell$ is an integer, and the condition $2dt\le2^{-31}n$ gives $s\le\ell<n$.

For fixed sets $S\subseteq L$ and $W\subseteq R$ of sizes $s$ and $\ell$, respectively, we first show that $\Pr[\mathcal N(S)\subseteq W]\le(\ell/n)^{ds}$.
Write $W-n=\{y-n:y\in W\}\subseteq[n]$ for the corresponding set of right-vertex indices.
For one uniform permutation $\pi_a$,
\begin{align}
 \Pr[\pi_a(S)\subseteq W-n]
 =\frac{\binom{\ell}{s}}{\binom{n}{s}}
 =\prod_{h=0}^{s-1}\frac{\ell-h}{n-h}
 \le\left(\frac \ell n\right)^s.
\end{align}
The $d$ permutations are independent, so
\begin{align}
\Pr[\mathcal N(S)\subseteq W]
 =\Pr[\pi_a(S)\subseteq W-n\text{ for every }a\in[d]]
 \le (\ell/n)^{ds}.
\end{align}

We now show that allowing all choices of $S$ and $W$ increases this bound by at most the factor $\binom ns\binom n\ell$.
Thus, taking a union bound over all pairs $(S,W)$ gives
\begin{align}
 \Pr[\exists S\subseteq L:|S|=s,\ |\mathcal N(S)|<\ell]
 \le
 \sum_{\substack{S\subseteq L\\|S|=s}}
 \sum_{\substack{W\subseteq R\\|W|=\ell}}
 \Pr[\mathcal N(S)\subseteq W]
 \le\binom ns\binom n\ell
 \left(\frac{\ell}{n}\right)^{ds}.
 \label{eq:better-expansion-counting}
\end{align}
The last inequality uses the bound for each fixed pair: there are $\binom ns$ ways to choose $s$ vertices from the $n$ vertices of $L$, and $\binom n\ell$ ways to choose $\ell$ vertices from the $n$ vertices of $R$.

We next show that this upper bound is at most $\exp[-(ds/16)\log(n/(ds))]$.
Set $a=\log(n/(ds))\ge31\ln2>20$.
Using $\binom Nm\le(\mathrm eN/m)^m$ and $\ell=7ds/8$, we obtain
\begin{align}
 \log\left[\binom ns\binom n\ell\left(\frac \ell n\right)^{ds}\right]
 \le ds\left[\frac{1+\log d+a}{d}+\frac78-\frac a8\right]
 \le ds\left(1-\frac{15a}{128}\right)
 \le-\frac{ds}{16}\log\frac{n}{ds},
 \label{eq:better-expansion-exponent}
\end{align}
The first inequality drops the negative term $\frac18\log\frac78$.
Here the second inequality uses $(1+\log d)/d\le1/8$ and $1/d\le1/128$, while the last uses $a>20$.

Finally, we show that $\exp[-(ds/16)\log(n/(ds))]\le(d/n)^{d/16}$ uniformly for $1\le s\le2t$.
The function $f(s)=s\log(n/(ds))$ is increasing on $1\le s\le2t$ because $f'(s)=\log(n/(ds))-1>0$.
Thus $f(s)\ge f(1)$, so
\begin{align}
 \exp\left[-\frac{ds}{16}\log\frac{n}{ds}\right]
 \le\exp\left[-\frac d{16}\log\frac nd\right]
 =\left(\frac dn\right)^{d/16}.
 \label{eq:better-expansion-uniform}
\end{align}

From \cref{eq:better-expansion-counting,eq:better-expansion-exponent,eq:better-expansion-uniform}, we obtain \cref{eq:better-expansion-union}, which concludes the proof.
\end{proof}

\begin{proof}[Proof of \cref{claim:better-local-overlap-tail}]
Fix the left vertex $i$ and the variables $X$ and $k=d/16$ defined above.
It suffices to bound $\Pr[X\ge k]$.
Since $\pi_a(i)$ is uniform on $[n]$,
\begin{align}
 \Pr[v_{n+\pi_a(i)}=1]
 =\frac{|\Support(v)\cap R|}{n}
 \le\frac tn,
\end{align}
where $R$ denotes the set of right vertices.
The event $X\ge k$ means that one can choose $k$ successful permutation indices.
More precisely,
\begin{align}
 \{X\ge k\}
 =\bigcup_{\substack{J\subseteq[d]\\|J|=k}}
   \bigcap_{a\in J}\{v_{n+\pi_a(i)}=1\}.
\end{align}
Here $J$ is a $k$-element set of successful permutation indices, not a set of graph vertices.
For fixed $v$ and $i$, the events in the intersection depend on distinct independent permutations, so for every fixed $J$,
\begin{align}
 \Pr\left[\bigcap_{a\in J}\{v_{n+\pi_a(i)}=1\}\right]
 =\prod_{a\in J}\Pr[v_{n+\pi_a(i)}=1]
 \le\left(\frac tn\right)^k.
\end{align}
Since there are $\binom dk$ choices for $J$, taking a union bound and using $\binom dk\le(\mathrm e d/k)^k$, with $k=d/16$, gives
\begin{align}
 \Pr[X>k]\le\Pr[X\ge k]
 \le\binom dk(t/n)^k
 \le(16\mathrm e\,t/n)^{d/16},
 \label{eq:better-local-tail}
\end{align}
which completes the proof.
\end{proof}

\subsection{Proof of the Main Theorem}
\label{subsec:better-main-proof}
This subsection combines the sampling guarantee with deterministic recovery and verifies the induced-weight bound.

\SampleRecover*
\begin{proof}[Proof of \cref{thm:better-graph}]
For any admissible pair, both $u$ and $v$ have weight at most $t(n)$.
By \cref{lem:better-recovery-runtime,lem:better-sampler-runtime}, the algorithms in \cref{const:better-graph-sampler,const:better-phase-recovery} run in polynomial time.
They satisfy the induced-weight bound and recovery guarantee by \cref{lem:better-sampler-degree,thm:better-sampler-good}.
This completes the proof.
\end{proof}

\subsection{Application}
\label{subsec:better-application}

We apply \cref{thm:better-graph} to construct efficiently decodable keyed CWS codes from classical codes, without assuming linearity.
Fix a constant $p>0$.
Let $n_\phys$ range over even positive integers, and let $n_\logic=n_\logic(n_\phys)$.
Let $t:\N\to\N$ satisfy
\begin{align}
 t(n_\phys)=o\left(\frac{n_\phys\log\log n_\phys}{\log n_\phys}\right).
 \label{eq:better-cws-radius}
\end{align}
Let $C:\bit^{n_\logic}\to\bit^{n_\phys}$ be an injective classical encoder with a deterministic decoder $D:\bit^{n_\phys}\to\bit^{n_\logic}$, both computable in time polynomial in $n_\phys$, such that
\begin{align}
 D(C(x)\oplus e)=x
 \quad\text{for all }x\in\bit^{n_\logic}
 \text{ and }e\in\bit^{n_\phys}\text{ with }\weight(e)\le p n_\phys/2.
 \label{eq:better-classical-decoding}
\end{align}
We suppress the dependence of $C$ and $D$ on $n_\phys$.
If the original decoder has a failure output, replace it by $0^{n_\logic}$ to define $D$ on all inputs.
Use $\GraphSample$ and $\Recover$ from \cref{thm:better-graph} with constant $p$ and radius function $m\mapsto t(2m)$.
First, we give the construction.

\begin{construction}[Keyed CWS code]\label{const:better-keyed-cws}
Define a tuple of QPT algorithms $(\KeyGen,\Enc,\Dec)$ as follows.
\begin{itemize}
\item $\KeyGen(1^{n_\phys})$: Sample $G\gets\GraphSample(1^{n_\phys/2})$ and an independent $P\gets\cP_{n_\phys}$, and output $\sk=(G,P)$.
\item $\Enc_\sk$: Apply the isometry
\begin{align}
 V_C:\ket{x}\mapsto\ket{C(x)}
\end{align}
and then apply $PU_G$, where $U_G$ is defined in \cref{def:CWS_encoding_unitary}.
Thus $\Enc_\sk=PU_GV_C$.
\item $\Dec_\sk$: Perform the following steps.
\begin{enumerate}
\item Apply $U_G^\dagger P^\dagger$ to the received register $\regB$.
\item\label{step:better-cws-bit-decode} Append a register $\regA$ of $n_\logic$ qubits initialized to zero and coherently apply
\begin{align}
 \ket{y}_\regB\ket{0^{n_\logic}}_\regA
 \longmapsto
 \ket{y\oplus C(D(y))}_\regB\ket{D(y)}_\regA.
 \label{eq:better-cws-bit-decoding}
\end{align}
\item Measure $\regB$ in the computational basis to obtain $e$, and compute $\widehat u=\Recover(G,e)$.
\item\label{step:better-cws-phase-correct} Apply $\ket{x}_\regA\mapsto(-1)^{\widehat u\cdot C(x)}\ket{x}_\regA$ and output $\regA$.
\end{enumerate}
\end{itemize}
\end{construction}

We show that this construction satisfies the following robustness guarantee.

\begin{theorem}\label{thm:better-keyed-cws}
Under the assumptions above, \cref{const:better-keyed-cws} encodes $n_\logic$ qubits into $n_\phys$ qubits and corrects $t(n_\phys)$-local quantum noise on average over the key.
More precisely, for every family $\{\cE_{n_\phys}\}$ of $t(n_\phys)$-local noise channels independent of the key,
\begin{align}
 \left\|
 \Exp_{\sk\gets\KeyGen(1^{n_\phys})}
 \bigl[\Dec_\sk\circ\cE_{n_\phys}\circ\Enc_\sk\bigr]
 -\identitymap_{n_\logic}
 \right\|_\diamond
 \le\negl(n_\phys).
 \label{eq:better-cws-average-recovery}
\end{align}
\end{theorem}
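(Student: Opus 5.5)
The plan mirrors the robustness proof of \cref{lem:robustness_of_PRIC}, with the PRFC replaced by the deterministic pair $(C,D)$. This makes the argument simpler: decoding failure is impossible within the radius, so no Markov or gentle-measurement step is needed.

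\emph{Step 1: reduce to a fixed Pauli error.} Replace $\cE_{n_\phys}$ by an exact $t$-local channel $\widetilde\cE$ at negligible diamond cost; contractivity of the encoding and decoding maps controls this. Since $P$ is uniform and independent of $G$, averaging over $P$ and applying \cref{lem:Pauli_twirling_for_local_noise} turns $P^\dagger\circ\widetilde\cE\circ P$ into a mixture of Pauli errors $E=Z^vX^u$ with $\weight(E)\le t(n_\phys)$. By convexity of the diamond norm, it suffices to show, uniformly over such $E$ and any input $\ket{\psi}_{\regA'\regA}=\sum_x\ket{\psi_x}_{\regA'}\ket{x}_\regA$, that the $G$-averaged output of the remaining decoder on $U_G^\dagger EU_G V_C\ket{\psi}$ is negligibly close to $\ket{\psi}$. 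Here $|\Support(u)\cup\Support(v)|\le t(n_\phys)=t(2\cdot n_\phys/2)$, which is exactly the radius function with which $\GraphSample(1^{n_\phys/2})$ was instantiated.

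\emph{Step 2: exact analysis for a fixed graph.} By \cref{lem:error_propagation},
\[
U_G^\dagger Z^vX^uU_G\ket{C(x)}=\pm(-1)^{u\cdot C(x)}\ket{C(x)\oplus e},\qquad e=v\oplus Au,
\]
with a global sign $(-1)^{q_G(u)}$ independent of $x$. By \cref{thm:better-graph}(i) (the ``$pn$'' there is $p n_\phys/2$), every sampled $G$ satisfies $\weight(e)\le pn_\phys/2$. Hence \cref{eq:better-classical-decoding} gives $D(C(x)\oplus e)=x$ for every $x$, so step~\ref{step:better-cws-bit-decode} maps the state exactly to
\[
\pm\ket{e}_\regB\otimes\sum_x(-1)^{u\cdot C(x)}\ket{\psi_x}_{\regA'}\ket{x}_\regA .
\]
Measuring $\regB$ therefore yields $e$ deterministically and does not disturb $\regA'\regA$.

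\emph{Step 3: phase correction on good graphs.} If $\Recover(G,e)=u$, step~\ref{step:better-cws-phase-correct} cancels the phase $(-1)^{u\cdot C(x)}$ exactly and the output is $\ketbra{\psi}{\psi}$. By \cref{thm:better-graph}(ii), this event has probability at least $1-\negl(n_\phys)$ over $G$; it depends only on $(u,v,G)$ and not on the input state. On the complementary event the output is some state at trace distance at most $1$ from the target. Averaging over $G$ therefore gives trace distance $\negl(n_\phys)$, uniformly in $E$ and $\ket{\psi}$, and adding back the approximation error from Step~1 proves \cref{eq:better-cws-average-recovery}.

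The main obstacle is essentially already handled by \cref{thm:better-graph}. What remains is careful bookkeeping: matching parameters, namely $t(2m)$ versus $t(n_\phys)$ and the induced-weight bound $p\cdot n_\phys/2$ against the decoding radius in \cref{eq:better-classical-decoding}. One also has to confirm that twirling is legitimate because $P$ is independent of $G$ and $\cE$ is key-independent, and that the global sign from $q_G(u)$ is harmless.
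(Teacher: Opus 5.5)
Your proposal is correct and follows essentially the same route as the paper. Both proofs twirl over the independent Pauli key $P$ to reduce to a fixed $Z^vX^u$, and both do the exact analysis via \cref{lem:error_propagation} and the deterministic decoder $D$ within the radius $pn_\phys/2$ guaranteed by \cref{thm:better-graph}(i); both then average the negligible failure probability of $\Recover$ over $G$ and restore the approximation error by contractivity. The only difference is cosmetic: you pass to the exact local channel at the start rather than at the end.
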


\begin{proof}[Proof of \cref{thm:better-keyed-cws}]
Fix a sufficiently large even $n_\phys$.
We first reduce to fixed Pauli errors.
Suppose initially that $\cE_{n_\phys}$ has a Kraus representation whose operators have weight at most $t(n_\phys)$.
By \cref{lem:Pauli_twirling_for_local_noise}, averaging over $P$ replaces $\cE_{n_\phys}$ by a mixture of Pauli errors $Z^vX^u$ of weight at most $t(n_\phys)$.
This mixture is independent of $G$ because the noise is independent of the key.
It therefore suffices to bound the decoding error uniformly for each fixed $Z^vX^u$.

We next show exact recovery for a fixed $Z^vX^u$ whenever $\Recover(G,v\oplus Au)=u$.
Set $e=v\oplus Au$.
By \cref{thm:better-graph}, applied at input length $n_\phys/2$, every sampled graph satisfies $\weight(e)\le p n_\phys/2$.
For an arbitrary logical state $\ket{\psi}=\sum_x\alpha_x\ket{x}$, \cref{lem:error_propagation} gives, up to a global phase,
\begin{align}
 U_G^\dagger Z^vX^u U_G V_C\ket{\psi}
 =\sum_x\alpha_x(-1)^{u\cdot C(x)}\ket{C(x)\oplus e}.
\end{align}
By \cref{eq:better-classical-decoding,eq:better-cws-bit-decoding}, coherent classical decoding maps this state to
\begin{align}
 \ket{e}_\regB\otimes
 \sum_x\alpha_x(-1)^{u\cdot C(x)}\ket{x}_\regA.
\end{align}
Thus measuring $\regB$ reveals only $e$ and does not disturb the logical superposition.
If $\widehat u=u$, the final phase correction cancels every factor $(-1)^{u\cdot C(x)}$ and returns $\ket{\psi}$.
The same calculation holds with an arbitrary untouched reference register, so the recovered channel is exactly the identity for every such graph.

We now average over $G$.
Let $\cD_G$ denote Steps~\ref{step:better-cws-bit-decode} to \ref{step:better-cws-phase-correct} of the decoder, and write
$\cT_{G,u,v}=\cD_G\circ(U_G^\dagger Z^vX^u U_G)\circ V_C$,
identifying isometries and unitaries with their induced channels.
By \cref{thm:better-graph}, applied at input length $n_\phys/2$, the probability that $\Recover(G,e)\ne u$ is at most $\epsilon_G(n_\phys)=\negl(n_\phys)$, uniformly over the fixed errors under consideration.
On the complementary event $\cT_{G,u,v}=\identitymap_{n_\logic}$, and otherwise its diamond distance from the identity is at most one under our convention.
Hence convexity gives
\begin{align}
  \left\|\Exp_G\cT_{G,u,v}-\identitymap_{n_\logic}\right\|_\diamond
  \le\epsilon_G(n_\phys).
\end{align}
Averaging this bound over the Pauli mixture proves \cref{eq:better-cws-average-recovery} for the exact local channel.

Finally, for local noise as defined in \cref{def:local_noise}, choose an exact local channel $\widetilde{\cE}_{n_\phys}$ within diamond distance $\eta(n_\phys)=\negl(n_\phys)$ of $\cE_{n_\phys}$.
Contractivity under encoding and decoding, followed by the triangle inequality, bounds the left-hand side of \cref{eq:better-cws-average-recovery} by $\epsilon_G(n_\phys)+\eta(n_\phys)=\negl(n_\phys)$.
\end{proof}

\begin{remark}
We assume that the classical decoder in \cref{eq:better-classical-decoding} recovers every message from every bit error within the stated radius with certainty.
This exact guarantee simplifies the proof.
If classical decoding were guaranteed only with high probability for each fixed message, we would need more careful analysis, as in the robustness proof of PRICs in \cref{subsec:PRIC_robustness}.
\end{remark}

\ifnum\anonymous=1
\else
\paragraph{Acknowledgments.}
The authors thank Tomoyuki Morimae for helpful comments on the early draft.
%TM is supported by JST CREST JPMJCR23I3, JST Moonshot JPMJMS2061-5-1-1, JST FOREST, MEXT QLEAP, the Grant-in Aid for Transformative Research Areas (A) 21H05183, and the Grant-in-Aid for Scientific Research (A) No.22H00522. 
%Part of this work was done by TM at Columbia University as a visiting scientist.
\fi

\paragraph{AI usage statement.}
The authors developed the main ideas behind the constructions and their security and robustness proofs.
A weaker form of the graph-theoretic result was obtained through interactions with ChatGPT 5.5 Thinking, and its present form was obtained through interactions with ChatGPT 6 Astra.
Codex and Claude Code assisted with drafting and editing the manuscript.
The authors take full responsibility for the correctness of this work.

\ifnum\submission=0
\bibliographystyle{alpha} 
\else
\bibliographystyle{splncs04}
\fi
\bibliography{abbrev3,crypto,reference}

\appendix
\crefalias{section}{appendix}
\crefname{appendix}{Appendix}{Appendices}
\Crefname{appendix}{Appendix}{Appendices}

\section{Omitted Proofs}
\label{sec:omitted_proof}

In this section, we provide the proofs omitted in the main body.

\paragraph{Proof of \cref{lem:Pauli_twirling_for_local_noise}.}

\PauliTwirling*

\begin{proof}[Proof of \cref{lem:Pauli_twirling_for_local_noise}]
    By assumption, $\cE$ has a Kraus representation
    \begin{align}
        \cE(\cdot)=\sum_i E_i(\cdot)E_i^\dag
    \end{align}
    such that $\weight(E_i)\le t$ for all $i$.
    Fix $i$.
    Within this proof, let $\cP_m^0\coloneqq\{X^xZ^z:x,z\in\bit^m\}$ be the set of phase-free Pauli representatives.
    By the definition of $\weight(E_i)$, the Pauli expansion of $E_i$ contains only Pauli operators of weight at most $t$.
    Hence, we can write
    \begin{align}
        E_i=\sum_{Q\in\cP_m^0:\,\weight(Q)\le t}\alpha_{i,Q}Q,
    \end{align}
    for some coefficients $\alpha_{i,Q}\in\C$.
    For any $P\in\cP_m$, conjugation by a Pauli operator maps each
    canonical Pauli operator to itself up to a phase.
    This motivates us to define, for each $P\in\cP_m$, a map $\chi_P:\cP_m^0\to\{\pm1\}$ such that
    \begin{align}
        \chi_P(Q)\,Q\coloneqq P^\dag QP.
    \end{align}
    Therefore,
    \begin{align}
        P^\dag E_i P
        =
        \sum_{Q\in\cP_m^0:\,\weight(Q)\le t}\alpha_{i,Q}\chi_P(Q)\,Q.
    \end{align}
    It follows that
    \begin{align}
        (P^\dag\circ\cE\circ P)(\cdot)
        &=
        \sum_i (P^\dag E_i P)(\cdot)(P^\dag E_i P)^\dag \notag\\
        &=
        \sum_i
        \sum_{\substack{Q,R\in\cP_m^0\\ \weight(Q),\weight(R)\le t}}
        \alpha_{i,Q}\overline{\alpha_{i,R}}\,
        \chi_P(Q)\overline{\chi_P(R)}\,
        Q(\cdot)R^\dag.
    \end{align}
    Averaging over $P\gets\cP_m$, we obtain
    \begin{align}
        \cE^{\mathrm{tw}}(\cdot)
        &=
        \sum_i
        \sum_{\substack{Q,R\in\cP_m^0\\ \weight(Q),\weight(R)\le t}}
        \alpha_{i,Q}\overline{\alpha_{i,R}}\,
        \Exp_{P\gets\cP_m}\!\big[\chi_P(Q)\overline{\chi_P(R)}\big]\,
        Q(\cdot)R^\dag.
    \end{align}
    We claim that the following identity holds for $Q,R\in\cP_m^0$:
    \begin{align}
        \Exp_{P\gets\cP_m}\!\big[\chi_P(Q)\overline{\chi_P(R)}\big]
        =
        \begin{cases}
            1 & \text{if } Q=R,\\
            0 & \text{if } Q\neq R.
        \end{cases}\label{eq:Pauli_cahracter}
    \end{align}
    We prove this below.
    From \cref{eq:Pauli_cahracter}, we obtain
    \begin{align}
        \cE^{\rm tw}(\cdot)
        &=
        \sum_{Q\in\cP_m^0:\,\weight(Q)\le t}
        \left(\sum_i |\alpha_{i,Q}|^2\right) Q(\cdot)Q^\dag.
    \end{align}
    By defining $p_Q\coloneqq\sum_i|\alpha_{i,Q}|^2$ for $Q\in\cP_m^0$, we have
    \begin{align}
        \cE^{\rm tw}(\cdot)
        &=
        \sum_{Q\in\cP_m^0:\,\weight(Q)\le t} p_Q\, Q(\cdot) Q^\dag.
    \end{align}
    Since $\cE^{\rm tw}$ is trace preserving and each $Q\in\cP_m^0$ is unitary, the coefficients $p_Q$ form a probability distribution supported on $\cP_m^0\subseteq\cP_m$.
    
    To complete the proof, we prove \cref{eq:Pauli_cahracter}.
    If $Q=R$, then
    \begin{align}
        \chi_P(Q)\overline{\chi_P(R)}
        =
        \chi_P(Q)\overline{\chi_P(Q)}
        =
        1
    \end{align}
    for every $P\in\cP_m$, and hence the expectation is $1$.
    Now assume that $Q\neq R$. Then $QR^\dag$ is a non-identity Pauli operator. 
    Hence, there exists
    a Pauli operator $S\in\cP_m$ such that $S$ anticommutes with $QR^\dag$, i.e.,
    \begin{align}
        S^\dag QR^\dag S=-QR^\dag.
        \label{eq:S_anticomute_w_QR}
    \end{align}
    We argue that
    \begin{align}
        \chi_{SP}(Q)\overline{\chi_{SP}(R)}=-\chi_P(Q)\overline{\chi_P(R)}
        \label{eq:relation_of_chi}
    \end{align}
    for all $P\in\cP_m$.
    To see this, recall that
    \begin{align}
        (SP)^\dag Q (SP)=\chi_{SP}(Q)Q,
        \text{ and }
        (SP)^\dag R^\dag (SP)=\overline{\chi_{SP}(R)}R^\dag.
    \end{align}
    Thus, we have
    \begin{align}
        (SP)^\dag QR^\dag (SP)=\chi_{SP}(Q)\overline{\chi_{SP}(R)}QR^\dag.
        \label{eq:QR_conjugated_by_SP:1}
    \end{align}
    On the other hand,
    \begin{align}
        (SP)^\dag QR^\dag (SP)
        &=P^\dag (S^\dag QR^\dag S) P
        \notag\\
        &=-P^\dag QR^\dag P
        \tag{By \cref{eq:S_anticomute_w_QR}}\\
        &=-\chi_P(Q)\overline{\chi_P(R)}QR^\dag,
        \label{eq:QR_conjugated_by_SP:2}
    \end{align}
    where we used $P^\dag QP=\chi_P(Q)Q$ and $P^\dag R P=\chi_P(R)R$ in the last line.
    From \cref{eq:QR_conjugated_by_SP:1,eq:QR_conjugated_by_SP:2}, we have
    \begin{align}
        \chi_{SP}(Q)\overline{\chi_{SP}(R)}QR^\dag=-\chi_P(Q)\overline{\chi_P(R)}QR^\dag,
    \end{align}
    which shows \cref{eq:relation_of_chi}.
    Since left multiplication by $S$ permutes $\cP_m$, the terms in the average cancel in pairs by \cref{eq:relation_of_chi}.
    Therefore, we obtain
    \begin{align}
        \Exp_{P\gets\cP_m}\!\big[\chi_P(Q)\overline{\chi_P(R)}\big]
        =
        0,
    \end{align}
    which shows \cref{eq:Pauli_cahracter}.
    This concludes the proof.
\end{proof}

\paragraph{Proof of \cref{lem:error_propagation}.}

\ErrorPropagation*

\begin{proof}[Proof of \cref{lem:error_propagation}]
    Write $C_G \coloneqq \prod_{\{i,j\}\in E} \CZ_{ij}$, so that $U_G = C_G H^{\otimes n}$ and $U_G^\dagger = H^{\otimes n} C_G$.
    Since each $\CZ_{ij}$ is diagonal, it commutes with every $Z_k$. Hence
    \begin{align}
        C_G Z^v C_G = Z^v.
    \end{align}
    Now let $r_i\in\mathbb{F}_2^n$ denote the $i$th row of $A$. For each $i\in[n]$, conjugation by the
    controlled-$Z$ gates incident to $i$ gives
    \begin{align}
        C_G X_i C_G = X_i Z^{r_i}.
    \end{align}
    Therefore, since conjugation is multiplicative,
    \begin{align}
        C_G X^u C_G
        =
        \prod_{i:u_i=1} \bigl(C_G X_i C_G\bigr)
        =
        \prod_{i:u_i=1} X_i Z^{r_i}.
    \end{align}
    The factors on the right-hand side commute with each other, because for $i\neq j$ the possible sign
    from commuting $X_i$ past $Z^{r_j}$ is canceled by the sign from commuting $X_j$ past $Z^{r_i}$.
    Hence
    \begin{align}
        C_G X^u C_G
        =
        (-1)^{q_G(u)}X^u Z^{\sum_{i:u_i=1} r_i}
        =
        (-1)^{q_G(u)}X^u Z^{Au}.
    \end{align}
    Combining the above,
    \begin{align}
        C_G P C_G
        =
        C_G Z^v X^u C_G
        =
        Z^v X^u Z^{Au}.
    \end{align}
    Since $A$ is symmetric with zero diagonal, we have $u\cdot Au=0 \pmod 2$, and therefore $X^u Z^{Au} = Z^{Au} X^u.$
    Thus
    \begin{align}
    C_G P C_G = (-1)^{q_G(u)}Z^{v\oplus Au} X^u.
    \end{align}
    Finally, since $H^{\otimes n}$ swaps $X$ and $Z$, we have
    $U_G^\dagger P U_G=(-1)^{q_G(u)}X^{v\oplus Au} Z^u$.
    The factor $(-1)^{q_G(u)}$ is a global phase and therefore disappears at
    the channel level.
    This concludes the proof.
\end{proof}

\paragraph{Proof of \cref{lem:cptp_encoding_no_redundancy}}

\ImpossibilityOfCPTPEncode*

\begin{proof}[Proof of \cref{lem:cptp_encoding_no_redundancy}]
In the following, we omit $\secp$ dependence for simplicity.
Let
$\ket{\Phi}\coloneqq 2^{-n_\logic/2}\sum_{x\in\bit^{n_\logic}}\ket{x}\ket{x}$ and
$\Phi\coloneqq\ketbra{\Phi}{\Phi}$.
Fix $\sk$ and define
\begin{align}
    \mathcal B_\sk(\cdot)&\coloneqq(\Tr_{[t]}\circ\Enc_\sk)(\cdot), &
    \mathcal C_\sk(\cdot)&\coloneqq\Dec_\sk\left(\frac{I^{\otimes t}}{2^t}\otimes(\cdot)\right).
\end{align}
The map $\mathcal B_\sk$ maps $n_\logic$ qubits to
$n_\phys-t$ qubits, and $\mathcal C_\sk$ maps $n_\phys-t$ qubits
to $n_\logic$ qubits.
Moreover,
$\mathcal C_\sk\circ\mathcal B_\sk
=\Dec_\sk\circ\Delta_{[t]}\circ\Enc_\sk$.

Define the (unnormalized) Choi operators
\begin{align}
    X_\sk&\coloneqq 2^{n_\logic}(\mathcal B_\sk\otimes\identitymap)(\Phi), &
    Y_\sk&\coloneqq 2^{n_\logic}(\mathcal C_\sk^\dagger\otimes\identitymap)(\Phi),
\end{align}
where $\mathcal C_\sk^\dagger$ denotes the Hilbert--Schmidt adjoint of
$\mathcal C_\sk$, which is uniquely characterized by
$\Tr[A\mathcal C_\sk(B)]=\Tr[\mathcal C_\sk^\dagger(A)B]$
for every operator $A$ on $n_\logic$ qubits and every operator $B$ on
$n_\phys-t$ qubits.
Since $\mathcal B_\sk$ is trace preserving,
\begin{align}
    \Tr_{[n_\phys-t]}[X_\sk]&=I^{\otimes n_\logic},
\end{align}
where $\Tr_{[n_\phys-t]}$ denotes the partial trace over the first $n_\phys-t$ qubits.
Since $\Tr[A\mathcal C_\sk(B)]=\Tr[\mathcal C_\sk^\dagger(A)B]$
for every operator $A$ on $n_\logic$ qubits and every operator $B$ on
$n_\phys-t$ qubits, we have
\begin{align}
    \Tr[Y_\sk]
    =\Tr\left[\mathcal C_\sk^\dagger(I^{\otimes n_\logic})\right]
    =\Tr\left[\mathcal C_\sk(I^{\otimes (n_\phys-t)})\right]
    =2^{n_\phys-t}.
\end{align}
We claim that $\|X_\sk\|_\infty\le 2^{n_\phys-t}$. If
$\lambda=\|X_\sk\|_\infty$ and $\ket{v}$ is a normalized eigenvector for
$\lambda$, then we have $X_\sk\ge\lambda\ketbra{v}{v}$.
Since $\Tr_{[n_\phys-t]}[X_\sk]=I^{\otimes n_\logic}$, we have
\begin{align}
    I^{\otimes n_\logic}
    \ge\lambda\,\Tr_{[n_\phys-t]}[\ketbra{v}{v}].
\end{align}
The reduced state on the right has rank at most $2^{n_\phys-t}$ and trace
one, so its largest eigenvalue is at least $2^{-(n_\phys-t)}$. Hence
$\lambda\le 2^{n_\phys-t}$, which means $\|X_\sk\|_\infty\le 2^{n_\phys-t}$.

The Choi identity and the definition of the adjoint map give
\begin{align}
    \bra{\Phi}
    ((\mathcal C_\sk\circ\mathcal B_\sk)\otimes\identitymap)(\Phi)
    \ket{\Phi}
    =\Tr\left[
       (\mathcal C_\sk^\dagger\otimes\identitymap)(\Phi)
       (\mathcal B_\sk\otimes\identitymap)(\Phi)
       \right]
    =\frac{\Tr[X_\sk Y_\sk]}{2^{2n_\logic}}
\end{align}
By Hölder's inequality and the bounds established above,
$\Tr[X_\sk Y_\sk]\le\|X_\sk\|_\infty\Tr[Y_\sk]\le 2^{2(n_\phys-t)}$, which implies
\begin{align}
    \bra{\Phi}
    ((\mathcal C_\sk\circ\mathcal B_\sk)\otimes\identitymap)(\Phi)
    \ket{\Phi}
      \le\frac{2^{2(n_\phys-t)}}{2^{2n_\logic}}
      =4^{n_\phys-n_\logic-t}.
\end{align}
Set $\mathcal F\coloneqq\Exp_\sk[\Dec_\sk\circ\Delta_{[t]}\circ\Enc_\sk]$.
By linearity and $\mathcal C_\sk\circ\mathcal B_\sk
=\Dec_\sk\circ\Delta_{[t]}\circ\Enc_\sk$ for each $\sk$,
\begin{align}
    \bra{\Phi}(\mathcal F\otimes\identitymap)(\Phi)\ket{\Phi}
    =\Exp_\sk[\bra{\Phi}((\mathcal C_\sk\circ\mathcal B_\sk)\otimes\identitymap)(\Phi)\ket{\Phi}]\le\min\{1,4^{n_\phys-n_\logic-t}\}.
\end{align}
Since $\frac{1}{2}\|\rho-\sigma\|_1=\max_{P}|\Tr[P(\rho-\sigma)]|$, where the maximization is taken over all projections, we have
\begin{align}
    \frac12\left\|(\mathcal F\otimes\identitymap)(\Phi)-\Phi\right\|_1
    \ge\left|\Tr\left[\Phi\left((\mathcal F\otimes\identitymap)(\Phi)-\Phi\right)\right]\right|
    =1-\bra{\Phi}(\mathcal F\otimes\identitymap)(\Phi)\ket{\Phi}.
\end{align}
By the definition of the diamond norm and the above two inequalities, we have
\begin{align}
    \|\mathcal F-\identitymap\|_\diamond
    \ge\frac12\left\|(\mathcal F\otimes\identitymap)(\Phi)-\Phi\right\|_1
    \ge\max\{0,1-4^{n_\phys-n_\logic-t}\},
\end{align}
which completes the proof.
\end{proof}

\iffalse
\paragraph{Proof of \cref{lem:induced_error_weight}.}

\DegreeVSWeight*

\begin{proof}[Proof of \cref{lem:induced_error_weight}]
    Let $P = Z^v X^u$ be a Pauli operator with $\weight(P) \le w$.
    The support of $P$ is
    \begin{align}
        S \coloneqq \Support(u) \cup \Support(v),
    \end{align}
    where we recall that $\Support(u)=\{i:u_i=1\}$ and $\Support(v)=\{i:v_i=1\}$.
    Since $\weight(P)\le w$, we have $|S| \le w$.
    By \cref{lem:error_propagation}, conjugating $P$ by the graph-basis unitary $U_G$ yields $U_G^\dagger P U_G = (-1)^{q_G(u)}X^{v\oplus Au} Z^u$; the global phase is irrelevant for the weight bound.
    By the triangle inequality for the Hamming weight,
    \begin{align}
        \weight(v\oplus Au)
        \le
        \weight(Au) + \weight(v).
    \end{align}
    Since the maximum degree of $G$ is $\Delta$, each vertex in $\Support(u)$ has at most $\Delta$ neighbors.
    Therefore,
    \begin{align}
        \weight(Au) \le \Delta\,\weight(u).
    \end{align}
    Because both $\Support(u)$ and $\Support(v)$ are subsets of $S$, their individual weights are bounded by $|S| \le w$.
    Thus,
    \begin{align}
        \weight(v\oplus Au)
        \le
        \Delta |S| + |S|
        =
        (\Delta+1)|S|
        \le
        (\Delta+1)w,
    \end{align}
    which concludes the proof.
\end{proof}
\fi

\section{Post-Quantum Security of Classical Constructions}
\label{sec:post_quantum_CG24}

In this section, we show that post-quantum PRCs exist under \cref{assumption:for_PRC}.
In the remainder of this section, we omit the qualifier ``post-quantum'' when referring to PRCs.
The constructions in this section are exactly the constructions of~\cite{C:ChrGun24}.  Our additional argument is only the
lifting of their classical reductions to QPT adversaries.  As in
\cref{def:PKPRC,def:SKPRC}, the adversary is allowed to make only adaptive
classical queries to the encoding oracle.  
Following~\cite{C:ChrGun24}, we first give zero-bit PRCs, then lift them to multi-bit PRCs, and finally make them constant rate. 

\paragraph{Zero-bit construction and security.}
We first recall the zero-bit construction.

\begin{construction}[Zero-bit public-key PRC (\cite{C:ChrGun24}, Definition~3 and Construction~2)]
    \label{const:CG24_zero_bit}
    Fix efficiently computable functions
    $n,g,t,r:\N\to\N$ and $\eta,\zeta:\N\to[0,1/2)$.  On security
    parameter $\secp$, sample a matrix
    $P\in\F_2^{r(\secp)\times n(\secp)}$ whose rows are independent and
    uniform among vectors of Hamming weight $t(\secp)$, and then sample
    $G\in\F_2^{n(\secp)\times g(\secp)}$ uniformly subject to $PG=0$.
    Sample $z\gets\F_2^{n(\secp)}$ and output
    \begin{align*}
        \KeyGen(1^\secp)&:\quad \sk=(P,z),\quad \pk=(G,z),\\
        \Enc(\pk)&:\quad u\gets\F_2^{g(\secp)},\ e\gets\Ber(n(\secp),\eta(\secp)),
        \quad \Enc(\pk)=Gu\oplus z\oplus e,\\
        \Dec(\sk,c)&:\quad
        \begin{cases}
        1 & \text{if }\weight(Pc\oplus Pz)<(\frac12-\zeta(\secp))r(\secp),\\
        \bot & \text{otherwise.}
        \end{cases}
    \end{align*}
    The message space is the singleton zero-bit message space.
\end{construction}

For the planted-XOR case, we use the following lemma.
Since the proof is the same as that of~\cite{C:ChrGun24} except using the post-quantum XOR assumption, we omit the proof.

\begin{lemma}[Generator-matrix pseudorandomness (\cite{C:ChrGun24}, Lemma~8)]
    \label{lem:CG24_generator_pseudorandomness}
    If $g+r\le n-\omega(\log n)$, then under the post-quantum $\XOR_{g+r,t}$ assumption, the marginal distribution on $G$ in \cref{const:CG24_zero_bit} is computationally indistinguishable from the uniform distribution on $\F_2^{n\times g}$ against QPT distinguishers.
\end{lemma}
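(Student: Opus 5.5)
The plan is a hybrid argument that reduces the claim to post-quantum $\XOR_{g+r,t}$ by conditioning on the sparse matrix $P$ and absorbing its extra rows. Following \cite{C:ChrGun24}, the idea is that $G$ is uniform subject to $PG=0$, where $P$ has $r$ independent $t$-sparse rows. To relate this to the single-planted-vector distribution $\cD(n,g+r,t)$, we pass to a larger matrix $G'\in\F_2^{n\times(g+r)}$ and argue that the marginal of its first $g$ columns behaves correctly.

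First I will fix $P$ and note the key linear-algebra fact. Let $K$ be the kernel of $P^\top$ acting on $\F_2^{n}$; that is, $G$ is a matrix whose $g$ columns are independent uniform vectors in $\{w: Pw=0\}$. I then build a hybrid sequence indexed by $i=0,\dots,r$. In hybrid $i$, $G$ is sampled uniformly subject to $p_1^\top G=\dots=p_i^\top G=0$, where $p_j$ are the rows of $P$. Hybrid $r$ is the real distribution and hybrid $0$ is uniform. For the step from hybrid $i-1$ to hybrid $i$, a QPT distinguisher $\cA$ yields a reduction $\cB$. On an input matrix $H\in\F_2^{n\times(g+r)}$ from either the uniform or the planted-XOR distribution (planted vector $s$, so $s^\top H=0$), $\cB$ samples the other sparse rows itself and uses the extra $r$ columns of $H$ as slack for imposing those constraints. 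Concretely, $\cB$ samples $p_1,\dots,p_{i-1}$ and takes an appropriate linear combination of the columns of $H$ lying in the solution space of the sampled constraints. Since any combination of columns still satisfies $s^\top(\cdot)=0$, the planted constraint is preserved, while in the uniform case the result stays uniform in the constrained space.

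An alternative, cleaner route that I would try first avoids this step-by-step slack accounting. I would reduce all at once by noting that $\cD$ with planted $s$ can be reinterpreted: take $H\gets\cD(n,g+r,t)$ and apply a random change of basis $H\mapsto H R$ with $R\in\mathrm{GL}_{g+r}$. This only rerandomizes and leaves both the uniform and planted distributions invariant. The remaining task is then the hybrid over the $r$ planted rows. In that hybrid the condition $g+r\le n-\omega(\log n)$ is used so that, with overwhelming probability, the uniformly sampled constraint rows are linearly independent of each other and of the column structure. This lets me argue statistically that the conditioned distributions coincide up to $\negl(n)$. Because each hybrid step loses only $\negl(n)$ and there are $r=\poly(n)$ steps, the total advantage remains negligible.

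The main obstacle is the embedding in each hybrid step: producing from one $H$ a matrix that satisfies $i-1$ self-sampled sparse constraints plus the one challenge constraint, with exactly the right distribution in both worlds. I would handle it by projecting $H$'s columns onto $\{w:p_j^\top w=0,\ j<i\}$ via a random linear combination using the $r$ spare columns. I would then verify, using the dimension gap $n-(g+r)=\omega(\log n)$, that the resulting $n\times g$ matrix is uniform in the correct subspace except with probability $2^{-\omega(\log n)}$. The quantum aspect adds nothing beyond this: $\cB$ is classical preprocessing followed by running the QPT $\cA$, so the post-quantum $\XOR$ assumption suffices directly.
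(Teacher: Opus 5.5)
Your first route is the right one. The paper gives no proof of its own and defers to \cite{C:ChrGun24}, and your hybrid over the rows $p_1,\dots,p_r$ of $P$ is that argument. In step $i$ the reduction samples $p_1,\dots,p_{i-1}$ itself and outputs $G=HR$ for the challenge $H\in\F_2^{n\times(g+r)}$, so the unknown constraint $s^\top H=0$ is inherited for free.

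However, the step you call the main obstacle is exactly where the proof lives, and as described it fails. Let $Q\in\F_2^{(i-1)\times(g+r)}$ have rows $p_j^\top H$. The reachable columns then span $W\coloneqq H(\ker Q)=\mathrm{colspan}(H)\cap\{w:p_j^\top w=0 \text{ for } j<i\}$. Typically $\dim W=g+r-i+1$, which is only $g+1$ at the last steps.

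Suppose the columns of $R$ are independent uniform vectors of $\ker Q$, the natural reading of ``a random linear combination''. Then $G$ consists of $g$ i.i.d.\ uniform vectors in a $(g+1)$-dimensional space. It is rank-deficient with probability at least $1/4$ (about $0.42$ for large $g$). In both target hybrids, $G$ is rank-deficient with probability at most $2^{g+r-n}$. So the simulation is off by a constant, and the gap $n-(g+r)$ you invoke does nothing to repair it. Likewise, ``the result stays uniform in the constrained space'' is false as stated, since the output always lies in the random $(g+r-i+1)$-dimensional subspace $W$.

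The missing idea is to take the columns of $R$ to be a uniformly random linearly independent $g$-tuple in $\ker Q$, which is possible since $\dim\ker Q\ge g+1$, and then to argue by symmetry. Let $V$ be the constraint space $\{w:p_j^\top w=0 \text{ for } j<i\}$, intersected with $s^\perp$ in the planted case. Three facts combine:
\begin{enumerate}
\item The law of $H$ is invariant under left multiplication by every $T\in\mathrm{GL}_n(\F_2)$ preserving these subspaces.
\item $W$ transforms covariantly, i.e., $TH$ yields $T(W)$.
\item This group acts transitively on linearly independent $g$-tuples in $V$.
\end{enumerate}
Hence, on the event that $H$ has full column rank (probability $1-O(2^{g+r-n})$), $G$ is exactly a uniform linearly independent $g$-tuple in $V$. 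Because $\dim V\ge n-r$, this is $2^{g+r-n}$-close to the hybrid's $g$ i.i.d.\ uniform vectors in $V$. That is precisely where $g+r\le n-\omega(\log n)$ enters.

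Separately, drop the ``alternative, cleaner route''. Right-multiplying by an element of $\mathrm{GL}_{g+r}$ is vacuous, and the rows of $P$ are $t$-sparse, not uniform. More importantly, no step of the $r$-step hybrid can be made statistical. In the intended regime ($g=n^{\varepsilon}$, $t=\Theta(\log n)$), a uniform $n\times g$ matrix has no $t$-sparse vector in its left kernel with overwhelming probability, while the real $G$ has $r$ of them. The two are therefore statistically far, and each step must invoke $\XOR_{g+r,t}$. As a minor point, the columns of $G$ lie in $\ker P$, not $\ker P^\top$.
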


We use the following statistical fact for security.

\begin{lemma}[Generator-matrix transition (\cite{C:ChrGun24}, Lemma~9)]
    \label{lem:CG24_generator_transitions}
    If $r\le0.99n$ and $\omega(\sqrt{\log n})\le t\le O(\log n)$, then for some $g=\Omega(t^2)$ the marginal distribution on $G$ in \cref{const:CG24_zero_bit} is statistically $\negl(n)$-close to uniform in $\F_2^{n\times g}$.
\end{lemma}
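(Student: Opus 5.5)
The plan is to compare the density of the marginal of $G$ with the uniform density $2^{-ng}$ pointwise on a set of "typical" matrices. Given $P$, the matrix $G$ has independent columns, each uniform in $\ker P$. So whenever $P$ has full row rank $r$,
\begin{align}
\Pr[G]=\Exp_P\!\left[\mathbf 1[PG=0]\right]\cdot 2^{-(n-r)g}.
\end{align}
Since the rows $p_1,\dots,p_r$ of $P$ are i.i.d.\ uniform on $\cS_t\coloneqq\{s\in\F_2^n:\weight(s)=t\}$, we get $\Pr_P[PG=0]=(N(G)/|\cS_t|)^r$, where $N(G)\coloneqq|\{s\in\cS_t:s^\top G=0\}|$. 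On the full-rank event the density ratio is therefore
\begin{align}
2^{ng}\Pr[G]=\bigl(2^gN(G)/|\cS_t|\bigr)^r.
\end{align}
The proof then reduces to two things. First, $2^gN(G)/|\cS_t|=1\pm\delta$ with $r\delta=\negl(n)$ for all but a negligible fraction of uniform $G$. Second, the rank-deficient event contributes negligibly.

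\textbf{Concentration of $N(G)$.} For uniform $G$, each $s\in\cS_t$ satisfies $s^\top G=0$ with probability $2^{-g}$. Distinct nonzero $s\neq s'$ give pairwise independent events, because $s^\top G$ and $s'^\top G$ are independent uniform vectors. Hence $\Exp[N]=|\cS_t|2^{-g}$ and $\mathrm{Var}[N]\le|\cS_t|2^{-g}$. Chebyshev gives
\begin{align}
\Pr\bigl[|2^gN/|\cS_t|-1|>\delta\bigr]\le 2^g/(\delta^2|\cS_t|).
\end{align}
Now $|\cS_t|=\binom nt\ge (n/t)^t=2^{\Omega(t\log n)}$. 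Since $t\le O(\log n)$, we have $t^2\le Ct\log n$, so we can choose $g=ct^2$ with $c$ small enough that $2^g\le|\cS_t|^{1/2}$. Setting $\delta=|\cS_t|^{-1/8}=n^{-\Omega(t)}$, both the Chebyshev failure probability and $r\delta\le n\delta$ are $n^{-\omega(1)}$, because $t=\omega(1)$. On the typical set the ratio is then $e^{\pm O(r\delta)}=1\pm\negl(n)$. A standard argument bounds the statistical distance by the sum of the typical-set deviation and the masses of the atypical set under both distributions. The mass under the marginal is controlled by the same ratio bound on the complement, or equivalently by $1-$ the mass of the typical set under the marginal.

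\textbf{Rank deficiency (main obstacle).} I expect the hardest step to be showing that $P$ has full row rank $r$ except with negligible probability. This is where $r\le0.99n$ and $t=\omega(\sqrt{\log n})$ enter: we also need $2^{-g}=2^{-\Omega(t^2)}=\negl(n)$ to absorb lower-order terms, for instance the loss from conditioning on full rank. I would try a union bound over dependency sets, following \cite{C:ChrGun24}. For each $k\le r$, bound the probability that some $k$ rows of $P$ sum to zero by $\binom rk\cdot\Pr[p_1\oplus\cdots\oplus p_k=0]$, and bound the latter by a random-walk/collision estimate on sums of random $t$-sparse vectors. For small $k$, sparsity makes cancellation costly, roughly $(kt/n)^{\Omega(kt)}$. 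For large $k$, the sum is close to uniform, so the probability is about $2^{-n}$, and $\binom rk2^{-n}$ stays small because $r\le0.99n$ leaves slack. If the direct union bound is too lossy in the middle range of $k$, I would instead bound the expected rank deficiency, $\Exp[2^{(r-\mathrm{rk}P)}]-1$, via the standard second-moment bound for sparse random matrices. On the deficient event, $G$'s density can be at most $2^{(r-\mathrm{rk}P)g}$ times larger, so a negligible bound on the probability of this event, weighted appropriately, finishes the argument.
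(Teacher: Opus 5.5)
The paper does not prove this lemma; it imports it from \cite{C:ChrGun24}, so your argument has to stand on its own. The first half does. The identity $\Pr_P[PG=0]=(N(G)/|\mathcal S_t|)^r$ is correct, as is the pairwise-independence/Chebyshev concentration of $N(G)$ under uniform $G$, and so is the choice $g=ct^2$ with $2^g\le|\mathcal S_t|^{1/2}$.

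\textbf{Gap 1: the step from concentration to a statistical-distance bound is circular.} Write $\rho(G)\coloneqq(2^gN(G)/|\mathcal S_t|)^r$. The exact formula is $2^{ng}\Pr[G]=\Exp_P[\mathbf 1[PG=0]\,2^{\mathrm{rk}(P)g}]$, and without further work you only get the upper bound $2^{ng}\Pr[G]\le\rho(G)$. Your plan to bound the marginal mass of the atypical set does not go through. On atypical $G$ the ratio can be enormous (e.g.\ $\rho(0)=2^{rg}$). And lower-bounding the marginal mass of the typical set requires a lower bound on $\Pr[G]$, which is exactly the rank control you postpone. A clean way to close is
\[
\mathrm{SD}\le\Exp_{G\sim U}[(\rho-1)^+]=(\Exp_U[\rho]-1)+\Exp_U[(1-\rho)^+].
\]
The second term needs only the lower tail of $\rho$, which your Chebyshev bound supplies. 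For the first term, $\Exp_U[\rho]=2^{rg}\Exp_P\Pr_U[PG=0]=\Exp_P[2^{(r-\mathrm{rk}P)g}]$. So the lemma reduces exactly to showing $\Exp_P[2^{(r-\mathrm{rk}P)g}]\le1+\negl(n)$.

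\textbf{Gap 2: that rank estimate is not proved, and your plan aims at the wrong quantity.} This estimate is the real content of the lemma; it is where $r\le0.99n$ enters. Showing $\Pr[\mathrm{rk}P<r]=\negl(n)$, or a Calkin-type bound on $\Exp[2^{r-\mathrm{rk}P}]-1$, is not enough. Each independent left-kernel vector of $P$ is amplified by $2^g$, which is superpolynomial when $g=\Theta(\log^2 n)$.

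Your remark that on the deficient event the density is ``$2^{(r-\mathrm{rk}P)g}$ times larger'' also has the direction reversed. Each consistent $G$ has conditional probability smaller by that factor, but the set of consistent $G$ is larger by it. That is precisely why $\rho$ overcounts.

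What you need is a first-moment bound of the form
\[
\sum_{w\ge1}2^{wg}\,\Exp\bigl[\#\{w\text{-dimensional subspaces of }\ker P^\top\}\bigr]=\negl(n).
\]
This requires two regimes:
\begin{itemize}
\item Small dependency patterns. For example, two equal rows contribute about $r^2\,2^g/\binom nt$, which is negligible only because $2^g\le\binom nt^{1/2}$.
\item Medium and large supports. Use the Fourier (Krawtchouk) expression for $\Pr[p_1\oplus\cdots\oplus p_k=0]$, with the slack $2^{r-n}\le2^{-0.01n}$ absorbing the factor $2^{wg}$.
\end{itemize}

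Finally, the hypothesis $t=\omega(\sqrt{\log n})$ plays no role in either step. Projecting onto the first $g'$ columns shows that statistical closeness only gets easier as $g$ decreases. The hypothesis serves only to make $g=\Omega(t^2)$ superlogarithmic, which is needed elsewhere in the PRC construction.
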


Combining these facts gives the following.

\begin{lemma}[Zero-bit PRCs]
    \label{lem:CG24_zero_bit_guarantee}
    Under \cref{assumption:for_PRC}, \cref{const:CG24_zero_bit} is a $(0,n)$ public-key PRC robust to $p$-bounded channels for the following parameters and assumptions.
    \begin{enumerate}
        \item For constants
        $p,\eta\in(0,1/2)$ and $\varepsilon\in(0,1)$, take
        $(g,t,r,\zeta)=(n^\varepsilon,\Theta(\log n),n^\varepsilon,
        n^{-\varepsilon/4})$ under the
        $\LPN_{n^\varepsilon,\eta}$ and
        $\XOR_{2n^\varepsilon,t}$ assumptions.  
        \item For constants
        $p,\eta\in(0,1/2)$, take
        $(g,t,r,\zeta)=(\Omega(\log^2 n),\Theta(\log n),0.99n,
        (0.99n)^{-1/4})$ under the
        $\LPN_{g,\eta}$ assumption.  
    \end{enumerate}
\end{lemma}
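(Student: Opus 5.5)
This lemma is the post-quantum analogue of the zero-bit guarantee of \cite{C:ChrGun24}. My plan is to check robustness and soundness directly, since they are purely statistical and independent of the adversary model. Pseudorandomness then follows from a short hybrid chain in which each step is either statistical or a direct reduction to the quantum hardness in \cref{assumption:for_PRC}.

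\emph{Robustness.} Fix a $p$-bounded channel and let $c=Gu\oplus z\oplus e$. Then
\[
Pc\oplus Pz=P(e\oplus e'),
\]
where $e'$ is the channel's flip pattern. By \cref{def:p-bounded_noise} and the fact that $Gu\oplus z\oplus e$ is uniform given $\pk$, we have $\weight(e\oplus e')\le (\eta+p+o(1))n$ except with negligible probability.

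Conditioned on $e\oplus e'$ (which is independent of $P$ up to the channel seeing $G$), each row of $P$ has weight $t=\Theta(\log n)$. Each parity check is therefore biased toward $0$ by roughly $(1-2\eta')^t/2$, where $\eta'=\eta+p<1/2$ is taken with constants chosen so that this bias exceeds $2\zeta$. This holds with $\zeta=n^{-\varepsilon/4}$ or $(0.99n)^{-1/4}$ and $t=c\log n$ for small $c$.

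A Chernoff bound over the $r$ rows then gives $\weight(P(e\oplus e'))<(1/2-\zeta)r$ with probability $1-\exp(-\Omega(\zeta^2 r))=1-\negl$. I expect the main subtlety here: the channel may depend on $G$, which is correlated with $P$. As in \cite{C:ChrGun24}, I would handle this by first conditioning on $G$ and arguing over $P$ given $G$, or by invoking their robustness lemma verbatim, since it is an information-theoretic statement.

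\emph{Soundness.} For a fixed $c$, the vector $c\oplus z$ is uniform because $z$ is uniform, so $P(c\oplus z)$ is uniform in $\F_2^r$. Hoeffding's inequality shows that its weight falls below $(1/2-\zeta)r$ with probability only $\exp(-2\zeta^2 r)=\negl$.

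\emph{Pseudorandomness.} The adversary receives $\pk=(G,z)$ and a polynomial number $q$ of classical samples $Gu_i\oplus z\oplus e_i$. I would use the following hybrids.
\begin{itemize}
  \item[(1)] Replace $G$ by a uniform matrix. In case 1 this uses \cref{lem:CG24_generator_pseudorandomness}: the reduction samples $z$, $u_i$, $e_i$ itself, which is a QPT reduction to $\XOR_{2n^\varepsilon,t}$ with $g+r=2n^\varepsilon$. In case 2 it uses \cref{lem:CG24_generator_transitions} statistically.
  \item[(2)] Replace the samples $Gu_i\oplus e_i$ by uniform vectors, using $\LPN$ with secret length $g$. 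The $q$ samples can be stacked into one LPN instance with $qn$ rows, secret $u_i$ per block. Equivalently, apply a hybrid over $i$, each step a QPT reduction to $\LPN_{g,\eta}$ with a polynomial sample count, reindexing $n$ polynomially so that $g=\Omega(\log^2 n)$ or $n^\varepsilon$ respectively still matches the assumption's form.
  \item[(3)] Once the samples $Gu_i\oplus e_i$ are uniform, XORing with the fixed $z$ keeps them uniform and independent of $\pk$. This is exactly the oracle $\cU$.
\end{itemize}

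Since every reduction simulates the classical oracle itself, the reductions carry over unchanged from \cite{C:ChrGun24} with QPT distinguishers. The only real care point is the parameter bookkeeping in step (2), namely the polynomial-time reindexing of $n$ in the LPN assumption.
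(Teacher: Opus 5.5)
Your pseudorandomness argument is the paper's. First replace $G$ by a uniform matrix: in item~1 this uses \cref{lem:CG24_generator_pseudorandomness} under $\XOR_{2n^\varepsilon,t}$, and in item~2 it holds statistically by \cref{lem:CG24_generator_transitions}. Then replace the encoder responses by uniform strings under $\LPN_{g,\eta}$. The reindexing you worry about is unnecessary. In a hybrid over the $q$ queries, the reduction embeds the LPN matrix as $G$ and simulates the other responses itself, so it uses a single $\LPN_{g,\eta}$ instance with exactly $n$ samples. Stacking blocks with different secrets $u_i$ would not be one LPN instance anyway.

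The paper proves nothing about robustness and soundness here; it simply cites \cite{C:ChrGun24}, which is your fallback option. Your direct sketch, however, does not work as written, for three reasons.
\begin{enumerate}
\item The statement that $Gu\oplus z\oplus e$ is uniform given $\pk$ is false, since it is $z$ plus an LPN sample. What is true, and all you need, is this: because of the mask $z$, the codeword $c$ is exactly uniform and independent of $(P,G,u,e)$. The channel here is a fixed, key-independent algorithm, so its flip pattern $e'$ is then independent of $(P,e)$. The ``channel sees $G$'' subtlety therefore does not arise. Conditioning on $G$ would only hurt: given $G$, the rows of $P$ are neither independent nor uniform weight-$t$ vectors, which breaks your Chernoff step.
\item The triangle-inequality bound $\weight(e\oplus e')\le(\eta+p)n$ gives a positive bias only when $\eta+p<1/2$. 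The lemma allows arbitrary constants $p,\eta\in(0,1/2)$, for example $\eta=p=0.3$. The fix uses the independence of $e$ and $e'$: $\weight(e\oplus e')$ concentrates at $\eta n+(1-2\eta)\weight(e')\le(\eta+p-2\eta p+o(1))n$. This is bounded away from $n/2$ for all $p,\eta<1/2$. Each row's parity is then biased by about $\bigl((1-2\eta)(1-2p)\bigr)^t/2$, and taking $t=c\log n$ with small $c$ beats $\zeta$.
\item For soundness, $P(c\oplus z)$ is uniform only on the image of $P$. You need $P$ to have full row rank with overwhelming probability. Alternatively, use the bound $2^{\,r-\mathrm{rank}(P)}e^{-2\zeta^2 r}$ together with a bound on the rank deficiency. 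Neither is automatic, especially when $r=0.99n$.
\end{enumerate}
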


\begin{proof}[Proof of \cref{lem:CG24_zero_bit_guarantee}]
    Robustness and soundness are given in~\cite{C:ChrGun24}.
    For the first item, \cref{lem:CG24_generator_pseudorandomness} replaces the public matrix $G$ by a uniform matrix under the $\XOR_{2n^\varepsilon,t}$ assumption.  
    The $\LPN_{n^\varepsilon,\eta}$ assumption then replaces the classical encoder responses by independent uniform strings, which implies security against QPT adversaries.
    For the second item, the generator-matrix transition is statistical by \cref{lem:CG24_generator_transitions}, and the same conclusion holds under the $\LPN_{g,\eta}$ assumption.
\end{proof}

\paragraph{Multi-bit conversion.}
We next convert the zero-bit primitive into a public-key PRC with a
positive polynomial message length.

\begin{construction}[Multi-bit public-key PRC (\cite{C:ChrGun24}, Construction~4)]
    \label{const:CG24_multi_bit}
    Let $\mathrm{PRC}_0=(\KeyGen_0,\Enc_0,\Dec_0)$ be a $(0,n)$
    public-key PRC, and let $\ell=\poly(\secp)$.
    On input $1^\secp$, sample $(\sk_0,\pk_0)\gets\KeyGen_0(1^\secp)$ and
    a uniform permutation $\pi:[n(\ell+1)]\to[n(\ell+1)]$, and output
    $\sk=(\sk_0,\pi)$ and $\pk=(\pk_0,\pi)$.

    On message $b=(b_1,\ldots,b_\ell)\in\bit^\ell$, form blocks
    $c_1,\ldots,c_{\ell+1}\in\bit^n$ by
    \begin{align*}
        c_i&\gets\bit^n &&\text{if }i\le\ell\text{ and }b_i=0,\\
        c_i&\gets\Enc_0(\pk_0) &&\text{if }i\le\ell\text{ and }b_i=1,\\
        c_{\ell+1}&\gets\Enc_0(\pk_0).&&
    \end{align*}
    Let $y=c_1\|\cdots\|c_{\ell+1}$ and output the permuted string
    $x=y_{\pi(1)}\|\cdots\|y_{\pi(n(\ell+1))}$.  To decode, apply
    $\pi^{-1}$ and parse the result into blocks
    $\widehat y_1,\ldots,\widehat y_{\ell+1}$ of length $n$.  Set
    $\widehat b_i=1$ exactly when
    $\Dec_0(\sk_0,\widehat y_i)=1$, and set it to $0$ otherwise.  If
    $\widehat b_{\ell+1}\ne1$, output $\bot$; otherwise output
    $(\widehat b_1,\ldots,\widehat b_\ell)$.
\end{construction}

The cited conversion result gives the robustness and soundness
guarantee for this construction.

\begin{lemma}[Multi-bit PRCs]
    \label{lem:CG24_multi_bit_guarantee}
    Let $\mathrm{PRC}_0$ be any $(0,n)$ public-key PRC robust to $p$-bounded channels.
    Then \cref{const:CG24_multi_bit} is an $(\ell,n\cdot(\ell+1))$ public-key PRC robust to $(p-\varepsilon)$-bounded channels for every $\varepsilon\in(0,p)$.
\end{lemma}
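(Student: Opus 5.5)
We must prove \cref{lem:CG24_multi_bit_guarantee}: multi-bit conversion is robust to $(p-\varepsilon)$-bounded channels, and presumably also pseudorandom and sound. We will treat robustness, soundness and pseudorandomness in turn.

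\emph{Pseudorandomness.} This is a hybrid argument over the encoder calls. Since adversaries make only classical queries, each oracle answer is a permutation of $\ell+1$ blocks. Each block is either uniform or a fresh output of $\Enc_0(\pk_0)$. We first replace every call to $\Enc_0(\pk_0)$, across all queries, by a fresh uniform $n$-bit string. This step is a direct reduction to the pseudorandomness of $\mathrm{PRC}_0$. The reduction holds $\pk_0$, samples $\pi$ itself, and answers each query by using its own $\mathrm{PRC}_0$ oracle for the blocks with $b_i=1$ and for block $\ell+1$. After this replacement, $y$ is uniform, so its permutation under $\pi$ is uniform as well, and this matches the ideal oracle $\cU$. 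The reduction is QPT and uses only classical queries, so post-quantum security of $\mathrm{PRC}_0$ suffices.

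\emph{Soundness.} Fix $c$. The decoder outputs something other than $\bot$ only if $\Dec_0(\sk_0,\widehat y_{\ell+1})=1$, where $\widehat y_{\ell+1}$ is determined by $\pi$ and $c$. Conditioning on $\pi$ turns this into a fixed string independent of $\sk_0$, since $\pi$ is sampled independently. Zero-bit soundness then gives that the decoder outputs $\bot$ except with negligible probability.

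\emph{Robustness.} This is the main step. Fix $b$ and a $(p-\varepsilon)$-bounded channel $\cE$. We first handle the problem that boundedness is defined only for uniform inputs, whereas codewords are merely pseudorandom. Consider the event that $\cE$ flips more than $(p-\varepsilon)N$ bits, where $N=n(\ell+1)$. This event is efficiently testable, provided $\cE$ can be simulated. If $\cE$ is not efficient, we instead argue as Christ and Gunn do: they use the bounded-noise definition against the codeword distribution, and we will follow their analysis here. Conditioned on at most $(p-\varepsilon)N$ flips, the error pattern is independent of $\pi$ given the unpermuted codeword. The subtle point is that the error may depend on the permuted codeword, which in turn depends on $\pi$.

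To get around this, we plan to use that the permuted codeword reveals nothing about $\pi$. Since the blocks are pseudorandom, the codeword is close to uniform and independent of $\pi$. We therefore analyse a hybrid in which the codeword is replaced by a uniform string $x$. There the error $e=\cE(x)\oplus x$ is independent of $\pi$, and each block receives a hypergeometric share of the errors. A Chernoff-type bound then shows that, with probability $1-(\ell+1)e^{-\Omega(\varepsilon^2 n)}$, every block has at most $pn$ errors. Each block that was encoded is then decoded correctly by zero-bit robustness, applied through a union bound over the $\ell+1$ blocks. Each uniform block decodes to $\bot$ by soundness, since within a block the error is independent of $\sk_0$.

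Transferring this conclusion from the hybrid to the real codeword is the main obstacle. The event we need, correct decoding, depends on $\sk_0$, and a computational hybrid cannot check it. Here we would rely on the cited conversion result of \cite{C:ChrGun24}, which is purely statistical in $\pi$. Being classical and information-theoretic, it applies unchanged in the post-quantum setting.
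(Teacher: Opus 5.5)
Your proposal follows the paper's proof. The pseudorandomness reduction is the same: the reduction holds $\pk_0$, samples $\pi$ itself, fills the $b_i=0$ blocks with uniform strings, and queries the $\mathrm{PRC}_0$ oracle for the $b_i=1$ blocks and the sentinel block. Robustness is taken from \cite{C:ChrGun24}, exactly as the paper does for both robustness and soundness.

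Your soundness argument is a correct addition: once $\pi$ is fixed, the sentinel block is a fixed string independent of $\sk_0$. Your hybrid sketch for robustness does not stand on its own. The uniform hybrid contains no encoded blocks to apply zero-bit robustness to, and in the real codeword the error on each block depends on $\pk_0$ through the other blocks. So, as in the paper, the citation carries that part.
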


\begin{proof}[Proof of \cref{lem:CG24_multi_bit_guarantee}]
    Robustness and soundness are given in~\cite{C:ChrGun24}.
    Let $\cA$ distinguish the multi-bit real oracle from its ideal oracle.
    The two games are:
    \begin{itemize}
        \item $H_0$ is the real multi-bit oracle, with blocks of
        message bit $0$ sampled uniformly and the remaining blocks
        obtained from the real base encoder.
        \item $H_1$ is the ideal multi-bit oracle, in which every
        response is a fresh uniform string.
    \end{itemize}
    The reduction $\cB$ receives the base public key $\pk_0$, samples a
    uniform permutation $\pi$, and gives $(\pk_0,\pi)$ to $\cA$.  For each
    classical query $b=(b_1,\ldots,b_\ell)$, it samples the blocks with
    $b_i=0$ itself, queries the base encoder for every block with $b_i=1$
    and for the sentinel block, concatenates the blocks, and applies
    $\pi$.
    If the base oracle is real, $\cB$ simulates $H_0$ exactly; if it is ideal, all blocks are independent uniform strings, so $\cB$ simulates $H_1$ exactly.
    Since $\cA$ makes only classical queries, $\cB$ runs $\cA$ as a QPT subroutine and makes only polynomially many base-oracle queries.
    Thus a QPT distinguisher between $H_0$ and $H_1$ would distinguish the real and ideal zero-bit oracles.
\end{proof}

\paragraph{Constant-rate construction.}
We finally give PRCs with a constant encoding rate.
To this end, we need pseudorandom generators.

\begin{definition}[Pseudorandom generator]
    Let $\ell,s:\N\to\N$ be polynomially bounded functions with $s(\secp)>\ell(\secp)$.
    A deterministic polynomial-time family $\operatorname{PRG}:\bit^{\ell(\secp)}\to\bit^{s(\secp)}$ is a pseudorandom generator (PRG) if for every QPT distinguisher $\cD$,
    \[
        \left|\Pr_{r\gets\bit^{\ell(\secp)}}[\cD(\operatorname{PRG}(r))=1]
        -\Pr_{y\gets\bit^{s(\secp)}}[\cD(y)=1]\right|
        \le\negl(\secp).
    \]
\end{definition}

The resulting constant-rate public-key construction is as follows.

\begin{construction}[Constant-rate public-key PRC (\cite{C:ChrGun24}, Construction~5)]
    \label{const:CG24_constant_rate}
    Let $\secp$ be the security parameter and $\ell=\ell(\secp)$ the seed length, and let $\mathrm{PRC}_\ell$ be an $(\ell,n')$ public-key PRC.
    Let $(\Enc,\Dec)$ be a classical error-correcting code with $\Enc:\bit^n\to\bit^{n''}$, where $n''>\ell$, and let $\operatorname{PRG}:\bit^\ell\to\bit^{n''}$ be a pseudorandom generator.
    Key generation samples $(\sk',\pk')\gets\mathrm{PRC}_\ell.\KeyGen(1^\secp)$ and a uniform permutation $\pi:[n'+n'']\to[n'+n'']$, and outputs $\sk=(\sk',\pi)$ and $\pk=(\pk',\pi)$.
    On message $x\in\bit^n$, sample $r\gets\bit^\ell$ and set
    \begin{align*}
        y&=\mathrm{PRC}_\ell.\Enc(\pk',r)
        \|\bigl(\operatorname{PRG}(r)\oplus\Enc(x)\bigr).
    \end{align*}
    Output the permutation of $y$ induced by $\pi$.
    To decode, apply $\pi^{-1}$, parse the result as $y_1\|y_2$ with $|y_1|=n'$ and $|y_2|=n''$, and compute $r\gets\mathrm{PRC}_\ell.\Dec(\sk',y_1)$.
    If $r=\bot$, output $\bot$; otherwise output
    $\Dec(\operatorname{PRG}(r)\oplus y_2)$.
    The resulting construction encodes $n$-bit messages into $(n'+n'')$-bit codewords.
\end{construction}

\begin{lemma}[Constant-rate PRCs]
    \label{lem:CG24_constant_rate_guarantee}
    For constants $0<\varepsilon<p<1/2$, if $\mathrm{PRC}_\ell$ and $(\Enc,\Dec)$ are robust to noise that flips at most a $p$ fraction of bits at random locations, then \cref{const:CG24_constant_rate} is an $(n,n'+n'')$ public-key PRC robust to every $(p-\varepsilon)$-bounded channel.
\end{lemma}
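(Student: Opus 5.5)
The statement to prove is \cref{lem:CG24_constant_rate_guarantee}: robustness and pseudorandomness of the constant-rate PRC. Soundness is inherited directly from $\mathrm{PRC}_\ell$, because decoding outputs $\bot$ whenever $\mathrm{PRC}_\ell.\Dec$ does. Throughout, the random permutation $\pi$ is part of both keys, so I will use it to spread the noise.

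\textbf{Robustness.} I will use the random permutation $\pi$ to turn any $(p-\varepsilon)$-bounded channel into noise at random locations on each part. Fix a message $x$. The codeword $\pi(y)$ is marginally uniform, which is approximately so computationally, though I would rather argue statistically. The channel flips at most $(p-\varepsilon)(n'+n'')$ positions with overwhelming probability.

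I will then apply $\pi^{-1}$. Since $\pi$ is uniform and independent of the flip pattern's positions relative to $y$, the flipped set inside the blocks $y_1$ and $y_2$ is a uniformly random subset of its size. This needs care, because the channel sees $\pi(y)$ and may correlate its choices with it. I plan to handle this as \cite{C:ChrGun24} does. Conditioned on the codeword value, the permutation is still uniform over those consistent with it. The flipped positions, pulled back, are then hypergeometrically distributed within each block.

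A Chernoff/hypergeometric tail bound then gives at most a $p$ fraction of flips in each of $y_1$ and $y_2$, with overwhelming probability, provided $n',n''=\omega(\log\secp)$. Robustness of $\mathrm{PRC}_\ell$ recovers $r$. Then $\operatorname{PRG}(r)\oplus y_2$ is $\Enc(x)$ plus random-location noise of rate at most $p$, and the code decodes it to $x$.

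\textbf{Pseudorandomness.} I plan a three-step hybrid argument against a QPT adversary with classical queries and $\pk=(\pk',\pi)$.

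The first step replaces the responses of $\mathrm{PRC}_\ell.\Enc(\pk',\cdot)$ with fresh uniform strings. The reduction receives $\pk'$, samples $\pi$ itself, and answers each classical query $x$ as follows: it samples $r$, queries its oracle on $r$, and computes the rest itself. This is valid by the security of $\mathrm{PRC}_\ell$.

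Now $r$ appears only in $\operatorname{PRG}(r)$. The second step replaces $\operatorname{PRG}(r)$ with a uniform string, one query at a time, via a standard hybrid over polynomially many queries using PRG security against QPT distinguishers. The reduction simulates all other queries itself.

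After these two steps every response is uniform, $\pi$ of a uniform string is uniform, and the final game equals the ideal oracle $\cU$.

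The main obstacle is the robustness step. The required decoupling claim is that the pulled-back noise is random-location even though an adaptive channel sees the codeword. This claim rests on a fixed channel being applied to a string whose distribution the permutation symmetrizes. I will try first to prove it by conditioning on the permuted codeword value and using the symmetry of $\pi$ restricted to positions carrying equal bits. If that fails, I will fall back on the pseudorandomness of the codeword together with a statistical argument.
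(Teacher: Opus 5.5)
Your pseudorandomness argument is exactly the paper's three-hybrid proof. First you replace $\mathrm{PRC}_\ell.\Enc(\pk',r)$ by fresh uniform blocks via the security of $\mathrm{PRC}_\ell$, with the reduction sampling $\pi$ itself. Then you replace $\operatorname{PRG}(r)$ query by query, which ends at the ideal oracle. Your soundness remark is also fine. The paper, however, does not prove robustness here at all; it cites \cite{C:ChrGun24}. Your own robustness sketch has a genuine gap at exactly the decoupling step you flagged.

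\textbf{The decoupling step.} Condition on $y$ and on the channel's input $c=\pi(y)$. Then $\pi$ is uniform only among permutations that send the $0$-positions of $y$ onto those of $c$ and the $1$-positions onto those of $c$. So the pulled-back error is a uniform $a$-subset of the zeros of $y$ together with a uniform $b$-subset of its ones, where $(a,b)$ is chosen by the channel from $c$.

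Your per-block count bound survives, given that $y_1$ and $y_2$ are nearly balanced. What fails is the claim that the error is random-location noise independent of the codeword. For example, take $N=n'+n''$ and the $(p-\varepsilon)$-bounded channel that flips $\lfloor(p-\varepsilon)N\rfloor$ random $1$-positions of its input. Whatever $\pi$ is, it induces only $1\to0$ errors on $y$.

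\textbf{The hypothesis cannot cover this.} No argument from robustness to codeword-independent random-location noise alone can handle it. Modify $\mathrm{PRC}_\ell.\Dec$ to also output $\bot$ on words of weight below $(\frac12-\frac{p-\varepsilon}{2})n'$.
\begin{itemize}
\item This keeps pseudorandomness and soundness.
\item It keeps robustness to codeword-independent noise, because pseudorandom codewords have weight about $n'/2$ and such noise barely changes it.
\item Under the channel above, $y_1$ loses about $(p-\varepsilon)n'$ ones and is rejected, so the composite decoder outputs $\bot$.
\end{itemize}
What is really needed is robustness of $\mathrm{PRC}_\ell$ against codeword-dependent bounded noise, as in \cite{C:ChrGun24}.

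Your fallback to pseudorandomness cannot substitute on $y_1$, because testing whether $\mathrm{PRC}_\ell.\Dec$ succeeds requires $\sk'$. On $y_2$ the fallback does work. After replacing $\operatorname{PRG}(r)$ by a uniform pad, $y_2$ is uniform and independent of $x$, and permutation symmetry makes the error on $\Enc(x)$ genuinely random-location. This transfers back because $(\Enc,\Dec)$ is efficient.

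\textbf{The bounded-flips step.} Your first step also needs more than ``$\pi(y)$ is marginally uniform,'' which is false: $\pi(y)$ is uniform over strings of weight $\weight(y)$. You must show $\weight(y)$ is statistically close to binomial before invoking $p$-boundedness. This matters because $p$-boundedness is defined only for uniform inputs and allows inefficient channels. One way is via pseudorandomness, since the weight distribution can be estimated from encoder samples.
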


\begin{proof}[Proof of \cref{lem:CG24_constant_rate_guarantee}]
    Robustness and soundness are given in~\cite{C:ChrGun24}.
    Let $\cA$ be a QPT distinguisher.
    Define the hybrids as follows:
    \begin{itemize}
        \item $H_0$ is the real encoder oracle of \cref{const:CG24_constant_rate}.
        \item $H_1$ replaces $\mathrm{PRC}_\ell.\Enc(\pk',r)$ by an independent uniform block in each encoding response.
        \item $H_2$ replaces $\operatorname{PRG}(r)$ by an independent uniform string in each encoding response.
    \end{itemize}
    The pseudorandomness of $\mathrm{PRC}_\ell$ against QPT adversaries implies that $H_0$ and $H_1$ are computationally indistinguishable.

    Replace $\operatorname{PRG}(r)$ with an independent uniform string in the first encoding response, then in the second, and so on.
    Each replacement is computationally indistinguishable under the PRG assumption against QPT distinguishers.
    Thus $H_1$ and $H_2$ are computationally indistinguishable.
    At the end, the resulting oracle outputs uniform bit strings for each query, which is the ideal oracle.
\end{proof}

Now we show \cref{thm:post-quantum_PRC}.

\PostQuantumPRC*

\begin{proof}[Proof of \cref{thm:post-quantum_PRC}]
    The case $p=0$ follows from any positive noise rate, so fix $p\in(0,1/2)$ and $\varepsilon>0$ with $p+2\varepsilon<1/2$.
    For sufficiently large $\alpha$, choose the zero-bit block length and the seed length $\ell$ as sufficiently small positive powers of $\secp$ so that $n'=o(m(\secp))$.
    By \cref{lem:CG24_zero_bit_guarantee,lem:CG24_multi_bit_guarantee}, there are $(\ell,n')$ public-key PRCs robust to $(p+\varepsilon)$-bounded noise.
    The same constructions at scale $\ell$, with the public key included in the secret key, imply an $\ell$-bit-seeded PRG by \cref{coro:PRC_imply_PRF} and the implication of PRGs from PRFs.
    Choose $n''$ satisfying $n''=m-n'$.
    Then, \cref{lem:CG24_constant_rate_guarantee} gives $(n,m)$ public-key PRCs robust to $p$-bounded noise.
    These also imply $(n,m)$ secret-key PRCs robust to $p$-bounded noise, which completes the proof.
\end{proof}

\end{CJK}
\end{document}